\def\lb{\label}
\def\ba{\begin{eqnarray}}
\def\ea{\end{eqnarray}}
\newcommand{\nc}{\newcommand}
\nc{\be}{\begin{equation}} \nc{\ee}{\end{equation}}
\nc{\bea}{\begin{eqnarray}} \nc{\eea}{\end{eqnarray}}
\nc{\disp}{\displaystyle} \nc{\ade}{\mbox{$A$-$D$-$E$}}
\nc{\calN}{{\cal N}} \nc{\calC}{{\cal C}} \nc{\calM}{{\cal M}}
\nc{\calS}{{\cal S}} \nc{\phit}{\hat{\varphi}}
\nc{\chit}{\hat{\chi}} \nc{\hcalN}{\hat{\calN}}
\nc{\hcalS}{\hat{\calS}} \nc{\hS}{\hat{S}}
\nc{\sigmad}{\sigma^\dagger} \nc{\psid}{\psi^\dagger}
\def\R{\tilde{R}}
\definecolor{IndianRed}{rgb}{0.8,0.36,0.36}
\newtheorem{lemma}{Lemma}
\newtheorem{theorem}{Theorem}
\newtheorem{defn}{Definition}
\newtheorem{prop}{Proposition}
\newtheorem{cor}{Corollary}
\newtheorem{obs}{Observation}
\newtheorem{conj}{Conjecture}
\theoremstyle{definition}
\newtheorem{example}{Example}
\newtheorem{remark}{Remark}
\font\tenmsb=msbm10\font\sevenmsb=msbm7 \font\fivemsb=msbm5
\def\ket#1{|#1\rangle}
\def\e{{\rm e}}
\def\i{{\rm i}}
\def\H{\mathcal{H}}
\def\T{\mathcal{T}}
\def\D{\mathcal{D}}
\def\B{\mathcal{B}}
\numberwithin{equation}{section}
\begin{document}
\begin{titlepage}
\title{Factorised solutions of Temperley-Lieb $q$KZ equations on a segment}

\author{Jan de Gier}
\address{Department of Mathematics and Statistics, The University
of Melbourne, VIC 3010, Australia}
\email{degier@ms.unimelb.edu.au}

\author{Pavel Pyatov}
\address{Bogoliubov Laboratory of Theoretical Physics, Joint Institute
for Nuclear Research, 141980 Dubna, Moscow Region, Russia}
\email{pyatov@theor.jinr.ru}

\date{\today}

\begin{abstract}
We study the $q$-deformed Knizhnik-Zamolodchikov equation in path representations of the Temperley-Lieb algebras. We consider two types of open boundary conditions, and in both cases we derive factorised expressions for the solutions of the qKZ equation in terms of Baxterised Demazurre-Lusztig operators. These expressions are alternative to known integral solutions for tensor product representations. The factorised expressions reveal the algebraic structure within the qKZ equation, and effectively reduce it to a set of truncation conditions on a single scalar function. The factorised expressions allow for an efficient computation of the full solution once this single scalar function is known.
We further study particular polynomial solutions for which certain additional factorised expressions give weighted sums over components of the solution. In the homogeneous limit, we formulate positivity conjectures in the spirit of Di Francesco and Zinn-Justin. We further conjecture relations between weighted sums and individual components of the solutions for larger system sizes. 
\end{abstract}

\maketitle

\end{titlepage}

{\footnotesize\tableofcontents}
\vfill\newpage

\section{Introduction}
The $q$-deformed Knizhnik-Zamolodchikov equations (qKZ) are widely recognized as
important tools in the computation of form factors in integrable quantum field theories
\cite{S2} and correlation functions in conformal field theory and
solvable lattice models \cite{JM}. They can be derived using the
representation theory of affine quantum groups \cite{FR} or,
equivalently and using a dual setup, from the affine and double affine
Hecke algebra \cite{Ch}. The qKZ
equations have been extensively studied in tensor product modules
of affine quantum groups or Hecke algebras, and much is known about their
solutions in the case of cyclic boundary conditions \cite{V,TV}.
We refer to \cite{EFK} and references
therein for extensive literature on the qKZ equations.

Recently interest has arisen in the qKZ equation in the context of
the Razumov-Stroganov conjectures. These relate the integrable
spin-1/2 quantum XXZ spin chain \cite{S00,RazuS00} in condensed
matter physics and the O(1) loop model \cite{BatchGN01,RazuS01} in
statistical mechanics, to alternating sign matrices and plane
partitions \cite{Bress}. Further developments surrounding the
Razumov-Stroganov conjectures include progress on loop models
\cite{Jan,MNGB,DFZJ04,DFZJZ03,DFZJZ06,ZJ07,DFZJ07} and quantum XXZ
spin chains \cite{Magic,RS06,RSPZJ07},  the stochastic raise and
peel models \cite{GNPR, GNPR03, PRGN02, Pavel, APR}, lattice
supersymmetry \cite{BA05,FNS,VW,YangF}, higher spin and higher
rank cases \cite{ZJ06,DFZJ05,ShigechiU}, as well as connections to
the Brauer algebra and (multi)degrees of certain algebraic
varieties \cite{GN,DFZJ04b,KZJ}.

The connection to the qKZ equation was realised by Pasquier
\cite{P05} and Di Francesco and Zinn-Justin \cite{DFZJ05}, by
generalising the Razumov-Stroganov conjectures to include an extra
parameter $q$ or $\tau=-(q+q^{-1})$. In particular, the polynomial
solutions for level one qKZ equations display intriguing
positivity properties and are conjectured to be related to
weighted plane partitions and alternating sign matrices
\cite{P05,DFZJ05,KP,DF06,DFZJ07}.

In the
Razumov-Stroganov context one considers the qKZ equation in a path
representation for $SL(k)$ quotients of the Hecke algebra, using cyclic as well
as open (non-affine) boundary conditions. In the case $k=2$ this
quotient corresponds to the Temperley-Lieb algebra, for which
there is a well known and simple equivalence between the path
representation and its graphical loop, or link pattern,
representation. In this paper we study the qKZ equation for $k=2$
in the path representation and for the two types of open boundary
conditions also considered in \cite{DF06,ZJ07,DFZJ07}.

The solution of the qKZ equation is a function in $N$ variables
$x_i$ $i=1,\ldots, N$ taking values in the path representation.
The components of this vector valued function can be expressed in
a single scalar function which we call the base function. We
derive factorised expressions for the components of the solution
of the qKZ equations for the Temperley-Lieb algebra (referred to
as type A) and the one-boundary Temperley-Lieb algebra (type B)
with arbitrary parameters. The factorised expressions are given in
terms of Baxterised Demazurre-Lusztig operators, acting on the
base function which we assume to be known. The formula for type A
was already derived for Kazhdan-Lusztig elements of Grasmannians
by Kirillov and Lascoux \cite{KiriL}. We further reduce the qKZ
equation to a set of truncation relations that determine the base
function. These relations also appear in a factorised form. We
conjecture that the truncation relations can be recast entirely in
terms of Baxterised elements of the symmetric group.

Restricting to polynomial solutions, the factorised expressions
provide an efficient way for computing explicit solutions. We note
here that polynomial solutions may also be obtained from Macdonald
polynomials \cite{Ch,P05,KP,KT}. Using the factorised expressions,
we compute explicit polynomial solutions of the level one qKZ
equations, recovering and extending the results of Di Francesco
\cite{DF07} in the case of type A, and of Zinn-Justin \cite{ZJ07}
in the case of type B. We would like to emphasize the importance
of such explicit solutions as a basis for experimentation and
discovery of novel results. Based on the explicit solutions, and
in analogy with Di Francesco \cite{DF06,DF07} and Kasatani and
Pasquier \cite{KP}, we formulate new positivity conjectures in the
case of type B for two-variable polynomials in the homogeneous
limit ($x_i \rightarrow 0$) of the solutions of the qKZ equations.
In the inhomogeneous case, the factorised expressions furthermore
suggest to define linear combinations (weighted partial sums) of
the components of the solution in a very natural way. Special
cases of these partial sums are also considered in the homogeneous
limit by Razumov, Stroganov and Zinn-Justin \cite{RS06,RSPZJ07}
and by Di Francesco and Zinn-Justin \cite{DFZJ07}. We conjecture
identities between the partial sums and individual components of
solutions for larger system sizes.

The first three sections of this paper are a review of known
results. We define the Hecke and Temperley-Lieb algebras of type A
and B, the path representations and explain the qKZ equation in
these representations. In Section~4 we state our main theorems
concerning factorised solutions and truncation conditions for the
qKZ equation of type A and B. These results are proved in
Section~6 and Appendix~A. The fifth section contains a list of
conjectures regarding the explicit polynomial solutions of the qKZ
equation. These conjectures relate to the positivity of solutions
in the homogeneous limit, and to natural partial sums over
components of the solution. Our observations are based on explicit
solutions for type A and B which are listed in Appendices B and C.

Throughout the following we will use the notation $[x]_q$ for the usual q-number
\[
[x]_q = \frac{q^x-q^{-x}}{q-q^{-1}}.
\]
The notation $[x]$ will always refer to base $q$.

\subsection*{Acknowledgments}
Our warm thanks go to Arun Ram, Ole Warnaar and Keiichi Shigechi
for useful discussions. JdG thanks the Australian Research Council for
financial assistance. PP was supported by RFBR grant No.
05-01-01086-a and by the DFG-RFBR grant (436 RUS 113/909/0- 1(R)
and 07-02-91561-a)

\section{Relevant algebras and their representations}

\subsection{Iwahori-Hecke algebras}
\subsubsection{Type A}
\begin{defn}
\label{def:HeckeA} The Iwahori-Hecke algebra of type $A_N$,
denoted by $\H_N^{\rm A}(q)$, is the unital algebra defined in
terms of generators $g_i$, $i=1,\ldots,N-1$, and relations
\begin{align}
&(g_i-q)(g_i+q^{-1}) = 0,\qquad g_ig_j = g_jg_i \quad\forall\, i,j:\, |i-j| >1,\nonumber\\
&g_ig_{i+1}g_i = g_{i+1}g_ig_{i+1}. \label{Hdef_typeA}
\end{align}
\end{defn}

Hereafter we always assume
\be
\lb{cond-q}
q\in {\mathbb C}\setminus \{0\},\quad \mbox{ and}\quad
[k]\neq 0\quad \forall\, k=2,3,\dots ,N,
\ee
in which case the algebra $\H_N^{\rm A}(q)$ is semisimple.
It is isomorphic to
the group algebra of the symmetric group ${\mathbb C}[S_N]\simeq \H_N^{\rm A}(1)$.

It is sometimes convenient to use two other presentations of the
algebra $\H_N^{\rm A}(q)$ in terms of the elements $a_i$ and
$s_i$
\[ a_i := q-g_i, \qquad s_i := q^{-1}+g_i\, ,
\qquad i=1,\dots ,N-1.
\]
For each particular value of index $i$
the elements $a_i$ and $s_i$ are mutually orthogonal unnormalised
projectors
\[
a_i s_i = s_i a_i =0,\qquad a_i+s_i=[2]\, ,
\]
generating the subalgebra ${\H}^{\rm A}_2(q)\hookrightarrow
{\H}^{\rm A}_N(q)$. Traditionally they are called
the {\em antisymmetrizer} and the {\em symmetrizer} and
associated, respectively, with the two possible partitions of the
number 2: $\{1^2\}$ and $\{2\}$.

The $\H^{\rm A}_N(q)$ defining relations  (\ref{Hdef_typeA}) in
terms of generators $a_i$, $i=1,\ldots,N-1$, read
\begin{align}
\lb{Hdef_typeA_a}
& a_i^2 = [2]\, a_i,\qquad a_ia_j = a_ja_i \quad\forall\, i,j:\, |i-j| >1,\nonumber \\
& a_ia_{i+1}a_i -a_i = a_{i+1}a_ia_{i+1}-a_{i+1},
\end{align}
and  in terms of $s_i$, $i=1,\ldots,N-1$, they read
\begin{align*}
& s_i^2 = [2]\, s_i,\qquad s_is_j = s_js_i \quad\forall\, i,j:\, |i-j| >1,\nonumber \\
& s_is_{i+1}s_i -s_i = s_{i+1}s_is_{i+1}-s_{i+1}.
\end{align*}

\subsubsection{Type B}

\begin{defn}
\label{def:HeckeB} The Iwahori-Hecke algebra of type $B_N$,
denoted by $\H^{\rm B}_N(q,\omega)$, is the unital algebra
defined in terms of generators $g_i$, $i=0,\ldots,N-1$,
satisfying, besides \eqref{Hdef_typeA}, relations
\begin{align}
& (g_0 + q^\omega)(g_0 + q^{-\omega}) =0, \qquad
g_0 g_i = g_i g_0 \quad\forall\, i>1,\nonumber  \\
& g_0 g_1 g_0 g_1 = g_1 g_0 g_1 g_0 . \lb{Hdef_typeB}
\end{align}
\end{defn}
If in addition to \eqref{cond-q} we assume
$
[\omega \pm k]\neq 0\;\; \forall\,
k\in 0,1,\dots ,N-1,
$
then the algebra $\H^{\rm B}_N(q,w)$ becomes semisimple. Hereafter we do not need the semisimplicity and we
only assume that $[\omega +1]\neq 0$.
%

It is sometimes convenient to use the presentations of $\H^{\rm
B}_N(q,\omega)$ in terms of either the antisymmetrizers $a_i$,
or the symmetrizers $s_i$, supplemented, respectively, by the {\em
boundary generators}
\be \lb{def_e0_s0} a_0 :=
\frac{-q^{-\omega}-g_0}{q^{\omega +1}-q^{-\omega -1}},
\qquad\mbox{or}\qquad s_0 :=\frac{q^{\omega}+g_0}{q^{\omega
+1}-q^{-\omega -1}}. \ee The generators $a_0$ and $s_0$ are
mutually orthogonal unnormalised projectors,
\[
a_0 s_0 = s_0 a_0 =0,\qquad
a_0+s_0=\frac{[\omega]}{[\omega+1]}.
\]
The defining relations
(\ref{Hdef_typeB}) written in terms $a_0$ and $a_i$ read
\begin{align}
&a_0^2= \frac{[\omega]}{[\omega +1]}a_0, \qquad
a_0 a_i = a_i a_0 \quad\forall\, i>1,\nonumber  \\
& a_0 a_1 a_0 a_1 - a_0 a_1 = a_1 a_0 a_1 a_0 -a_1 a_0,
\lb{Hdef_typeB_a}
\end{align}
and written in terms of $s_0$ and $s_i$ they read
\begin{align*}
&s_0^2=  \frac{[\omega]}{[\omega +1]}s_0, \qquad
s_0 s_i = s_i s_0 \quad\forall\, i>1,\nonumber  \\
& s_0 s_1 s_0 s_1 - s_0 s_1 = s_1 s_0 s_1 s_0 -s_1 s_0.
\end{align*}

\subsection{Baxterised elements and their graphical presentation}

It is well known that the defining relations of the Iwahori-Hecke
algebras \eqref{Hdef_typeA}, \eqref{Hdef_typeB} can be generalised
to include a so-called spectral parameter  (see \cite{Jo,IO} and
references therein). This generalisation is sometimes called
Baxterisation and is relevant both in the representation theory of
these algebras as well as in their applications to the theory of
integrable systems.

\subsubsection{Type A}

For the algebra $\H_N^{\rm A}(q)$, the Baxterised elements
$g_i(u)$, $i=1,\dots ,N-1,$ are defined as
\[
g_i(u) :=\, q^{-2u}\, \frac{g_i - q^{2u-1}}{g_i-q^{-2u-1}},
\]
which we can write alternatively as
\be g_i(u)=\,
\frac{q^{u}-[u]\, g_i}{[u+1]}\, =\, \frac{[1-u] + [u]\,
a_i}{[1+u]}\, =\, 1- \frac{[u]}{[u+1]} s_i. \lb{g_u}
\ee
Here $u\in {\mathbb C}\setminus \{-1\}$ is the spectral parameter. It
can be shown that the following relations hold
\begin{align}
&g_i(u)g_i(-u) = 1,\qquad \forall\, u\in {\mathbb C}\setminus \{-1,1\}, \lb{unitarity}\\
&g_i(u)g_{i+1}(u+v)g_i(v) = g_{i+1}(v) g_i(u+v) g_{i+1}(u),
\label{ybe}\\
&g_i(u) g_j(v) = g_j(v)g_i(u) \quad\forall\, i,j:\, |i-j| >1,
\label{commu}
\end{align}
The relations \eqref{g_u}--\eqref{commu} are equivalent to the
defining set of conditions \eqref{Hdef_typeA}. The relations
\eqref{unitarity} and \eqref{ybe} are called, respectively, the
{\em unitarity condition} and the {\em Yang-Baxter equation}.

Notice that the unitarity condition is not valid at the degenerate
points $u= \pm 1$. It is therefore
 useful in certain cases to use a different normalisation for the Baxterised elements,
\be h_i(u) :=\, \frac{[1-u]}{[u]}g_i(-u)\, =\, \frac{[u+1]}{[u]} -
a_i \, =\, s_i -\frac{[u-1]}{[u]}   \lb{h_u}. \ee
Note that now the elements $h_i(u)$ are ill-defined at $u=0$. In this
normalisation we have
\[ h_i(u)h_i(-u) = 1  - \frac{1}{[u]^2},\qquad h_i(1)=s_i, \qquad h_i(-1)= -a_i,
\]
and $h_i(u)$ still satisfies the Yang-Baxter and the
commutativity equations \eqref{ybe} and \eqref{commu}. We also
note that $h_i$ satisfies the simple but very useful identity
\be h_i(u) = h_i(v) + \frac{[v-u]}{[u][v]}. \label{huv} \ee

\subsubsection{Type B} In this case we additionally define
\[
g_0(u) :=\, q^{-2u} \frac{[\tfrac{\omega+\nu}{2}+u](g_0 +
q^{2u-\nu  })}{[\tfrac{\omega+\nu}{2}-u](g_0 + q^{-2u-\nu  })},
\]
which alternatively can be written as
\be g_0(u) =\,\frac{k(u,\nu)  - [2u] [\omega+1] a_0}{k(-u,\nu)}
=\, 1 + \frac{[2u][\omega +1]}{k(-u,\nu)} s_0, \lb{g0_u}
\ee
where $k(u,\nu):= [\tfrac{\omega+\nu}{2}+u][\tfrac{\omega-\nu}{2}+u]$,
and  $\nu$ is an additional arbitrary parameter.

The boundary Baxterised element $g_0(u)$ satisfies relations
\begin{align}
g_0(u) g_0(-u) &= 1,\lb{unitarity2}\\
g_0(v) g_1(u+v) g_0(u) g_1(u-v)  &=  g_1(u-v) g_0(u) g_1(u+v) g_0(v),\lb{rea} \\
g_0(u) g_i(v) &= g_i(v) g_0(u) \quad\forall\, i>1,\nonumber
\end{align}
which are equivalent to the defining relations \eqref{Hdef_typeB}.
Relations \eqref{unitarity2} and \eqref{rea} are called,
respectively, the {\em unitarity condition} and the {\em
reflection equation} \cite{Sk}. An alternative normalisation for the
boundary Baxterised element is
\be h_0(u) :=\, -
\frac{k(u/2,\nu)}{[u][\omega +1]} g_0(-u/2) \, =\,
-\frac{k(-u/2,\nu)}{[u][\omega +1]}-a_0\, =\,
s_0 -\frac{k(u/2,\nu)}{[u][\omega +1]}. \lb{h0_u}
\ee
In this
normalisation we find
\begin{align}
h_0(\omega\pm\nu)\, =\,  -a_0,& \qquad h_0(-\omega\pm\nu)\,=\,s_0, \nonumber
\\
h_0(v-u) h_1(v) h_0(u+v) h_1(u)\,  &=\,   h_1(u) h_0(u+v) h_1(v)
h_0(v-u) . \lb{rea_h}
\end{align}

\subsubsection{Graphical presentation}
For our purposes it is convenient to represent the Baxterised
elements graphically as tiles and the boundary Baxterised element
as a half-tile
\[
\begin{picture}(0,30)(0,21)
\qbezier[25](43,0)(43,25)(43,50)
\qbezier[5](58,40)(58,45)(58,50)
\qbezier[25](73,0)(73,25)(73,50)
\qbezier[6](159,0)(159,6)(159,12)
\qbezier[5](159,40)(159,45)(159,50)
\qbezier[25](174,0)(174,25)(174,50)
\put(34,-10){$\scriptstyle i-\mathit 1$}
\put(70,-10){$\scriptstyle i+\mathit 1$}
\put(157,-11){$\scriptstyle\mathit 0$}
\put(172,-11){$\scriptstyle\mathit 1$}
\end{picture}
g_i(u) =\
\raisebox{-32pt}{\includegraphics[height=50pt]{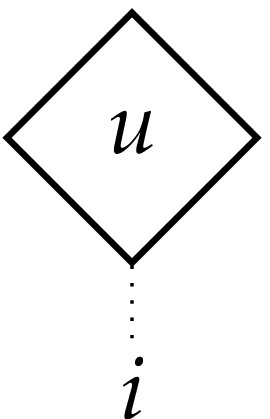}}\
,\quad\qquad g_0(u) =\
\raisebox{-14pt}{\includegraphics[height=34pt]{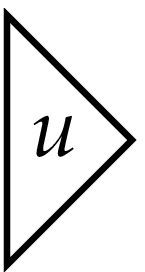}}\ .
\]
The (half-)tiles  are placed on labeled vertical lines and they
can move freely along the lines unless they meet other
(half-)tiles. Multiplication in the algebra corresponds to a
simultaneous placement of several (half-)tiles on the same picture
and a rightwards order of terms in the product corresponds to a
downwards order of (half-)tiles on the picture. In this way, the
Yang-Baxter equation \eqref{ybe} can be depicted as
\be
\raisebox{-40pt}{\includegraphics[height=80pt]{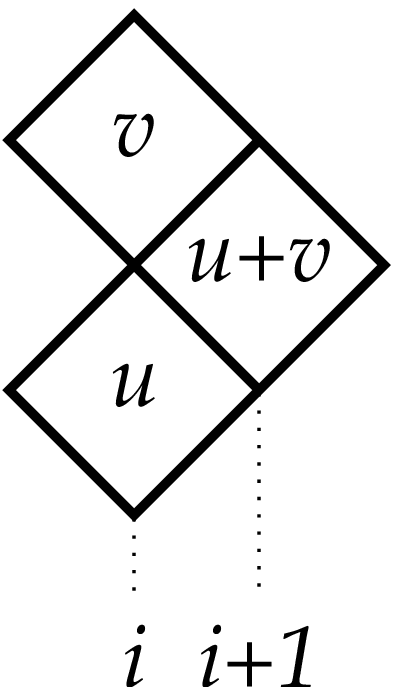}}\;\;\;
=\;\;\; \raisebox{-40pt}{\includegraphics[height=80pt]{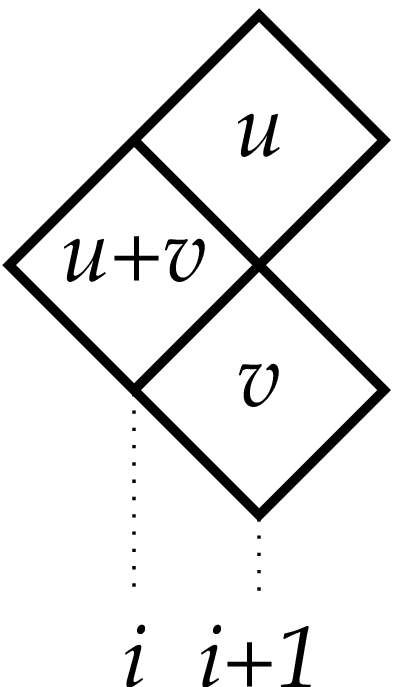}}
\ . \label{ybegraphic} \ee
and the reflection equation can be depicted as
\[
\begin{picture}(0,0)
\qbezier[10](0,-45)(0,-32.5)(0,-20)
\qbezier[2](16,-45)(16,-42.5)(16,-40)
%
\qbezier[2](70,-45)(70,-42.5)(70,-40)
\qbezier[8](86,-45)(86,-35)(86,-25)
%
\put(-3,-55){$\scriptstyle\mathit 0$}
\put(13,-55){$\scriptstyle\mathit 1$}
%
\put(67,-55){$\scriptstyle\mathit 0$}
\put(83,-55){$\scriptstyle\mathit 1$}
\end{picture}
\raisebox{-40pt}{\includegraphics[height=80pt]{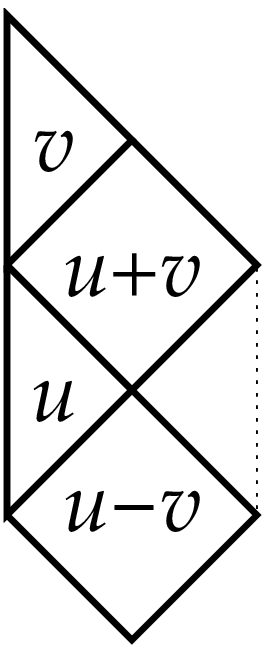}}\quad
=\quad
\raisebox{-40pt}{\includegraphics[height=80pt]{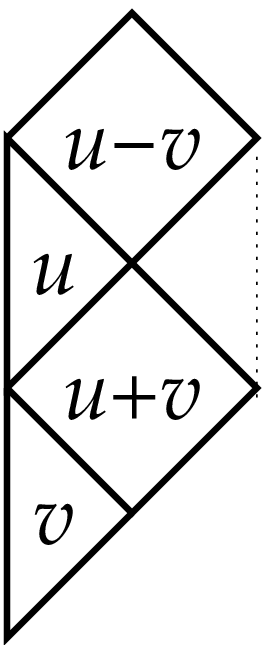}} \
.
\]

For the alternative set of Baxterised elements we will use dashed
pictures,
\be
\label{h_pic}
\phantom{a} \hspace{-55pt}
\begin{picture}(0,30)(0,21)
\qbezier[4](61,0)(61,4)(61,8)
\qbezier[6](159,0)(159,6)(159,12)
{\linethickness{1pt}
\multiput(159,10)(0,8){4}{\qbezier[10](0,0)(0,1.5)(0,3)}
\multiput(159,10)(6,6){3}{\qbezier[10](0,0)(1,1)(2,2)}
\multiput(159,39)(6,-6){3}{\qbezier[10](0,0)(1,-1)(2,-2)} }
\put(58,-11){$\scriptstyle i$}
\put(157,-11){$\scriptstyle\mathit 0$}
\put(162,22){$u$}
\end{picture}
h_i(u) =\
\raisebox{-15pt}{\includegraphics[height=36pt]{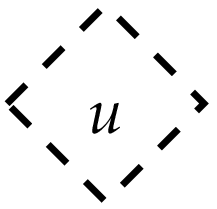}}\, ,
\qquad\quad h_0(u) =\ \vspace{20pt} \ee The picture of the
Yang-Baxter equation for the dashed tiles is the same as
(\ref{ybegraphic}), but the picture of the reflection equation
(\ref{rea_h}) has a different arrangement of the spectral parameters
\be
\lb{rea-h-graphic}
\begin{picture}(100,60)(0,00)
\qbezier[10](0,-20)(0,-10)(0,0)
\qbezier[2](16,-20)(16,-18.5)(16,-17)
%
\qbezier[2](70,-20)(70,-18.5)(70,-17)
\qbezier[10](86,-20)(86,-10)(86,0)
{\linethickness{1pt}
\multiput(0,-1)(0,8){9}{\qbezier[10](0,0)(0,1.5)(0,3)}
\multiput(0,0)(6,6){6}{\qbezier[10](0,0)(1,1)(2,2)}
\multiput(0,0)(6,-6){3}{\qbezier[10](0,0)(1,-1)(2,-2)}
\multiput(0,33.5)(6,6){3}{\qbezier[10](0,0)(1,1)(2,2)}
\multiput(0,33.5)(6,-6){6}{\qbezier[10](0,0)(1,-1)(2,-2)}
\multiput(0,66)(6,-6){6}{\qbezier[10](0,0)(1,-1)(2,-2)}
\multiput(17.8,-14.8)(6,6){3}{\qbezier[10](0,0)(1,1)(2,2)}
\multiput(70,-15)(0,8){9}{\qbezier[10](0,0)(0,1.5)(0,3)}
\multiput(70,18)(6,6){6}{\qbezier[10](0,0)(1,1)(2,2)}
\multiput(70,18)(6,-6){3}{\qbezier[10](0,0)(1,-1)(2,-2)}
\multiput(70,51.5)(6,6){3}{\qbezier[10](0,0)(1,1)(2,2)}
\multiput(70,51.5)(6,-6){6}{\qbezier[10](0,0)(1,-1)(2,-2)}
\multiput(87,66)(6,-6){3}{\qbezier[10](0,0)(1,-1)(2,-2)}
\multiput(70,-15)(6,6){6}{\qbezier[10](0,0)(1,1)(2,2)} }
\put(0,-30){$\scriptstyle\mathit 0$}
\put(15,-30){$\scriptstyle\mathit 1$}
\put(70,-30){$\scriptstyle\mathit 0$}
\put(85,-30){$\scriptstyle\mathit 1$}
%
\put(47,23){$=$} \put(14,-3){$u$} \put(14,30){$v$}
\put(84,14){$v$} \put(84,48){$u$} \put(1.5,14){$\scriptstyle
u\!+\!v$} \put(1.5,47.2){$\scriptstyle v\!-\!u$}
\put(71.5,-1.5){$\scriptstyle v\!-\!u$}
\put(71.5,32.2){$\scriptstyle u\!+\!v$}
\end{picture}
\vspace{30pt} \ee

Let us remark that expressions for the boundary half-tile $g_0(u)$
and the dashed half-tile $h_0(u)$ depend on an arbitrary parameter
$\nu$ which is not shown in the pictures. For the dashed
half-tile we shall exploit this degree of freedom in Section
\ref{sec4.2}, see \eqref{half-tile}.

\subsection{Temperley-Lieb algebras}

The Iwahori-Hecke algebras have a well known series of $SL(2)$
type, or Temperley-Lieb, quotients whose irreducible
representations are classified in the semisimple case by
partitions into one or two parts (i.e., by the Young diagrams
containing at most two rows). The Temperley-Lieb algebra can be
described in terms of equivalence classes of the
generators $a_i$ (different generators belong to different
equivalence classes). Below we use the notation $e_i$ for the
equivalence class of $-a_i$.

\begin{defn}
\label{def:TL} The Temperley-Lieb algebra of type $A_N$, denoted
by $\T^{\rm A}_N(q)$, is the unital algebra defined in terms of
generators $e_i$, $i=1,\ldots,N-1,\,$ satisfying the relations
\begin{align}
&e_i^2 = -[2] e_i, \quad e_ie_j = e_je_i \quad\forall\, i,j:\, |i-j| >1,\nonumber \\
&e_ie_{i\pm1}e_i = e_i. \label{TLdef}
\end{align}
\end{defn}

\begin{defn}
\label{def:1BTL} The Temperley-Lieb algebra of type $B_N$,  $\T^{\rm B}_N(q,\omega)$
(also called the blob algebra \cite{MS}),
is the unital algebra defined in
terms of generators $e_i$, $i=0,\ldots,N-1,\,$  satisfying,
besides \eqref{TLdef}, the relations
\begin{align}
&e_0^2 = -\frac{[\omega]}{[\omega+1]}e_0,
\quad e_0e_i = e_ie_0 \quad\forall\, i >1,\nonumber \\
&e_1e_0e_1 = e_1. \label{Bg0}
\end{align}
\end{defn}

\subsubsection{Graphical presentation} We reserve empty tiles and half-tiles for the generators $e_i$ and $e_0$

\be \label{e_pic}
\begin{picture}(0,10)(16,25)
\qbezier[4](61,0)(61,4)(61,8)
\qbezier[6](158,0)(158,6)(158,12)
%
\put(58,-11){$\scriptstyle i$}
\put(157,-11){$\scriptstyle\mathit 0$}
\end{picture}
e_i =\ \raisebox{-18pt}{\includegraphics[height=36pt]{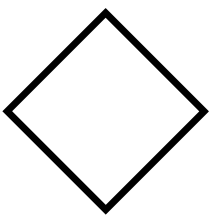}}\, ,
\qquad\qquad e_0 =\
\raisebox{-18pt}{\includegraphics[height=36pt]{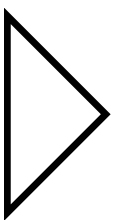}}\, .
\vspace{23pt} \ee The defining relations (\ref{TLdef}) and
(\ref{Bg0}) are depicted, respectively, as
\be \lb{rule-1}
\raisebox{30pt}{$\displaystyle -\frac{1}{[2]}$}\!\!
\begin{picture}(40,72)(0,-15)
{\thicklines \put(10,35){\line(1,1){15}}
\put(10,35){\line(1,-1){15}} \put(40,35){\line(-1,1){15}}
\put(40,35){\line(-1,-1){15}}
\put(10,5){\line(1,1){15}} \put(10,5){\line(1,-1){15}}
\put(40,5){\line(-1,1){15}} \put(40,5){\line(-1,-1){15}} }
\qbezier[2](25,-15)(25,-12.5)(25,-10)
%
\put(24,-23){$\scriptstyle i$}
\end{picture}
\quad \raisebox{30pt}{=} \quad\;\,
\begin{picture}(40,67)(0,-15)
{\thicklines \put(-5,20){\line(1,1){30}}
\put(10,35){\line(1,-1){15}}
\put(40,35){\line(-1,1){15}}
\put(40,35){\line(-1,-1){15}}
\put(10,5){\line(1,1){15}}
\put(-5,20){\line(1,-1){30}}
\put(40,5){\line(-1,1){15}}
\put(40,5){\line(-1,-1){15}} }
\qbezier[10](10,-15)(10,-5)(10,5)
\qbezier[2](25,-15)(25,-12.5)(25,-10)
%
\put(5,-23){$\scriptstyle i-\mathit 1$}
\put(25,-23){$\scriptstyle i$}
\end{picture}
\quad \raisebox{30pt}{=} \quad\;\,
\begin{picture}(40,67)(0,-15)
{\thicklines \put(-5,35){\line(1,1){15}}
\put(-5,35){\line(1,-1){30}}
\put(40,20){\line(-1,1){30}}
\put(-5,5){\line(1,1){30}}
\put(-5,5){\line(1,-1){15}}
\put(40,20){\line(-1,-1){30}} }
\qbezier[10](25,-15)(25,-5)(25,5)
\qbezier[2](10,-15)(10,-12.5)(10,-10)
%
\put(9,-23){$\scriptstyle i$}
\put(19,-23){$\scriptstyle i+\mathit 1$}
\end{picture}
\quad \raisebox{30pt}{=}
\begin{picture}(40,67)(0,-15)
{\thicklines \put(10,20){\line(1,1){15}}
\put(10,20){\line(1,-1){15}}
\put(40,20){\line(-1,1){15}}
\put(40,20){\line(-1,-1){15}} }
\qbezier[10](25,-15)(25,-5)(25,5)
%
\put(24,-23){$\scriptstyle i$}
\end{picture}\;\;\raisebox{30pt}{,}
\vspace{5pt} \ee
and
\be
\lb{rule-2}
\phantom{a}\hspace{20mm}
\begin{picture}(35,65)(0,-15)
{\thicklines \put(5,-15){\line(0,1){30}}
\put(20,0){\line(-1,1){15}}
\put(20,0){\line(-1,-1){15}}
\put(5,15){\line(0,1){30}}
\put(20,30){\line(-1,1){15}}
\put(20,30){\line(-1,-1){15}}
}
\qbezier[3](5,-22)(5,-18.5)(5,-15)
\put(3,-31){$\scriptstyle\mathit 0$}
\end{picture}
\raisebox{25pt}{$\displaystyle =\;-\frac{[\omega ]}{[\omega +1]}$}\,
\begin{picture}(30,65)(0,-15)
{\thicklines \put(5,0){\line(0,1){30}}
\put(20,15){\line(-1,1){15}}
\put(20,15){\line(-1,-1){15}}
}
\qbezier[9](5,-22)(5,-11)(5,0)
\put(3,-31){$\scriptstyle\mathit 0$}
\end{picture}
\raisebox{25pt}{,}\quad\quad
\begin{picture}(35,65)(0,-15)
{\thicklines \put(5,0){\line(0,1){30}} \put(5,0){\line(1,-1){15}}

\put(20,-15){\line(1,1){15}} \put(5,30){\line(1,1){15}}
\put(20,45){\line(1,-1){15}} \put(5,0){\line(1,1){30}}
\put(5,30){\line(1,-1){30}}
}
\qbezier[9](5,-22)(5,-11)(5,0)
\qbezier[3](20,-22)(20,-18.5)(20,-15)
\put(3,-31){$\scriptstyle\mathit 0$}
\put(18,-31){$\scriptstyle\mathit 1$}
\end{picture}
\quad\;\raisebox{25pt}{=}\;\;\,
\begin{picture}(35,65)(0,-15)
{\thicklines \put(5,15){\line(1,-1){15}}
\put(5,15){\line(1,1){15}} \put(20,30){\line(1,-1){15}}
\put(20,0){\line(1,1){15}}
}
\qbezier[9](20,-22)(20,-11)(20,0)
\put(18,-31){$\scriptstyle\mathit 1$}
\end{picture}\;\;\; \raisebox{25pt}{.}
\quad\;\; \vspace{5mm} \ee

\subsubsection{Baxterisation}
Obviously, one can adapt all formulas for the Baxterised elements
from the previous subsection to the case of Temperley-Lieb
algebras by the substitution $a_i\mapsto -e_i$. We shall follow
tradition and will use a special notation --- $R_i(u)$ and $K_0(u)$
--- for the Baxterised elements and the boundary Baxterised element
of the Temperley-Lieb algebras. In the same normalisation used for
$g_i(u)$ (see \eqref{g_u}, \eqref{g0_u}), we have
\begin{align}
R_i(u)& :=\,
\frac{[1-u] - [u]\, e_i}{[1+u]}\, , \nonumber\\
\lb{K_u} K_0(u)& :=\, \frac{k(u,\delta)  + [2u] [\omega+1]
e_0}{k(-u,\delta)},\quad k(u,\delta):=[\tfrac{\omega
+\delta}{2}+u][\tfrac{\omega -\delta}{2}+u].
\end{align}
Here we have intentionally used a different notation to denote an
arbitrary additional parameter, $\delta$ instead of $\nu$ which
was used in the Iwahori-Hecke case, see \eqref{g0_u}. The two
parameters $\delta$ and $\nu$ will play different roles in what
follows below (see the comment after \eqref{rep-s0}).

The elements $R_i(u)$ and $K_0(u)$ satisfy the unitarity
conditions \eqref{unitarity}, \eqref{unitarity2}, the Yang-Baxter
equation \eqref{ybe} and the reflection equation \eqref{rea}. They
are usually called the {\em R-matrix} and the {\em reflection
matrix}. This notation comes from the theory of integrable quantum
spin chains. The path representations of the Temperley-Lieb
algebras, which are introduced in the next subsection and which
are used later on in the qKZ equations, are invariant subspaces of
the state space of certain quantum spin-1/2 XXZ chains, see, e.g.,
\cite{Magic}.

\begin{remark}
\label{rem:notation}
In order to make contact with other notations in the literature,
we note that our notation here correspond to those in \cite{Magic}
if we identify $q=\e^{\i\gamma}$, $q^\omega \rightarrow
\e^{\i\omega}$, $q^\delta \rightarrow-\e^{\i\delta}$, and in \cite{ZJ07} to $q^{\delta}=-\zeta$. Further
useful notations in \cite{Magic} that we shall employ later are:
\[ \tau =-[2],\qquad \tau' = \sqrt{2+[2]} = [2]_{q^{1/2}}\ ,\qquad a =
-\frac{[\omega+1]}{[\frac{\omega-\delta}{2}][\frac{\omega+\delta}{2}]}.
\]
\end{remark}

\subsection{Representations on paths}

We will now decribe an important and well-known representation of
the Temperley-Lieb algebras of type A and type B on Dyck and
Ballot paths respectively.

\subsubsection{Dyck path representation}

\begin{defn}
A Dyck path $\alpha$ of length $N$ is a vector of
$\,(N+1)\,$ local integer heights
\[ \alpha=(\alpha_0,\alpha_1,\ldots,\alpha_N),
\]
such that $\alpha_0=0$, $\alpha_N=0$ for $N$ even and $\alpha_N=1$
for $N$ odd, and the heights are subject to the constraints
$\alpha_i\geq 0$ and $\alpha_{i+1} - \alpha_i =\pm 1$.

By $\D_N$ we denote the set of all Dyck paths of length $N$.
\end{defn}
Each Dyck path $\alpha$ of length $N$ corresponds uniquely to
a word in $w_\alpha\in \T^{\rm A}_N(q)$, represented pictorially
as
\ba
\begin{picture}(0,20)
\linethickness{0.3pt} \put(101,1){\line(0,-1){23}}
\put(235,1){\line(0,-1){23}}
\multiput(97,-14.5)(10.5,0){13}{\linethickness{0.3pt}\put(5,15){\line(1,0){5}}}
\end{picture}
\mbox{N even:}\qquad w_{\alpha}&=&
\raisebox{-33pt}{\includegraphics[height=80pt]{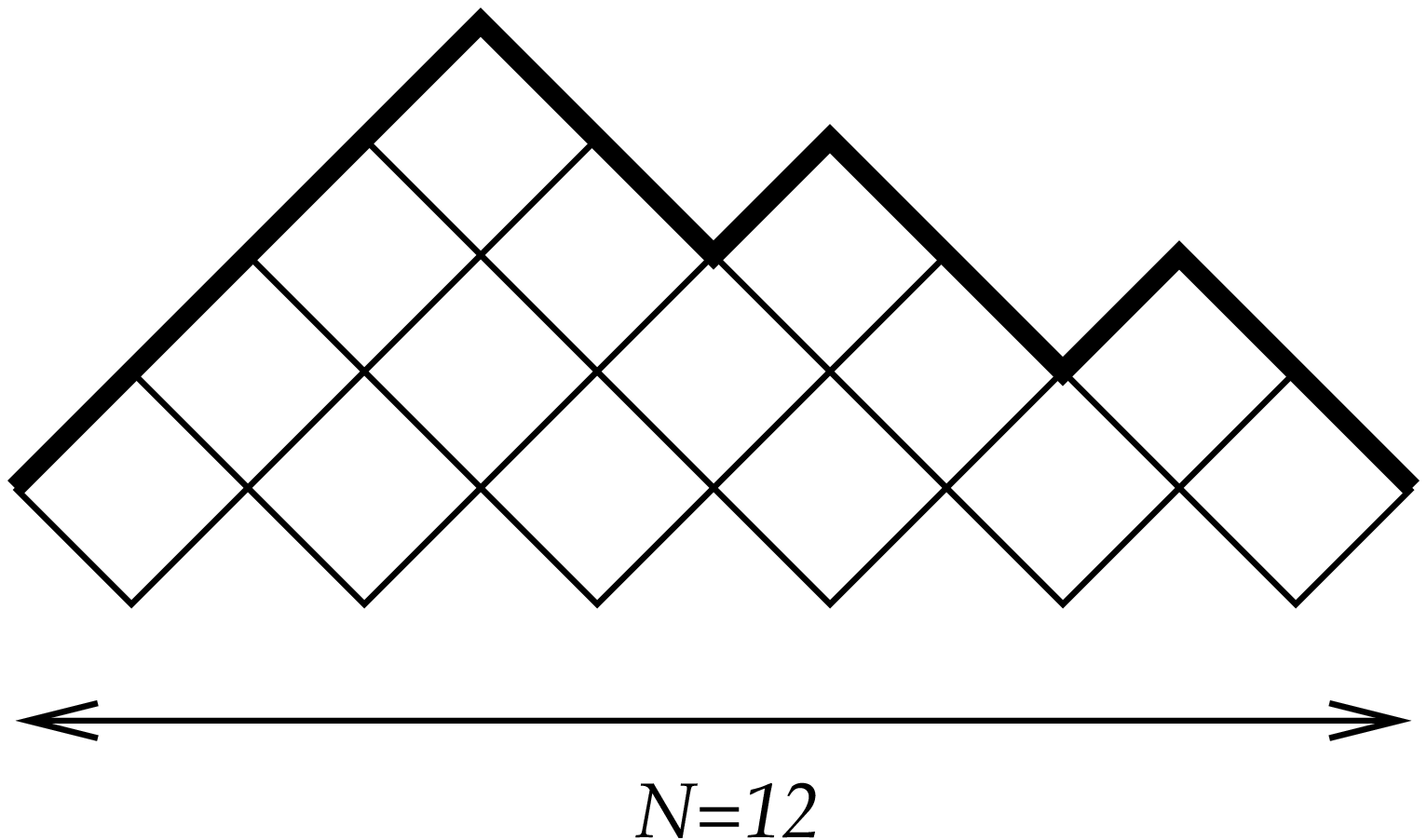}}\; ,
\nonumber
\\[5mm]
\begin{picture}(0,60)(0,-30)
\linethickness{0.3pt} \put(98,0){\line(0,-1){23}}
\put(219,11){\line(0,-1){35}} \put(219,11){\line(1,0){15}}
\put(219,0){\line(1,0){15}} \put(240,4){$\scriptstyle
\alpha_N=\mathit 1$} \put(232,1){\vector(0,1){10}}
\put(232,10){\vector(0,-1){10}} \put(147,-32){$\scriptstyle
N=\mathit 1\mathit 1$} \put(150,-22){\vector(1,0){69}}
\put(160,-22){\vector(-1,0){62}}
\multiput(93,-15)(10.5,0){12}{\linethickness{0.3pt}\put(5,15){\line(1,0){5}}}
\multiput(109,-11)(22,0){5}{\line(1,1){11}}
\multiput(98,0)(22,0){5}{\line(1,-1){11}}
\multiput(98,0)(22,0){6}{\line(1,1){11}}
\multiput(109,11)(22,0){5}{\line(1,-1){11}}
\multiput(109,11)(22,0){5}{\line(1,1){11}}
\multiput(120,22)(22,0){5}{\line(1,-1){11}}
\multiput(120,22)(22,0){3}{\line(1,1){11}}
\multiput(131,33)(22,0){3}{\line(1,-1){11}}
\multiput(131,33)(22,0){1}{\line(1,1){11}}
\multiput(142,44)(22,0){1}{\line(1,-1){11}}
{\linethickness{1.3pt} \qbezier[400](98,0)(120,22)(142,44)
\qbezier[200](142,44)(153,33)(164,22)
\qbezier[100](164,22)(169.5,27.5)(175,33)
\qbezier[200](175,33)(186,22)(197,11)
\qbezier[200](197,11)(202.5,16.5)(208,22)
\qbezier[200](208,22)(213.5,16.5)(219,11) }
\end{picture}
\raisebox{9pt}{$\mbox{N odd:}\qquad
w_{\alpha}$}&\raisebox{9pt}{$=$}&
\hspace{170pt} \raisebox{9pt}{,}
\nonumber
\ea
\noindent where the empty tiles at horizontal
position $i$ are the generators $e_i$, see (\ref{e_pic}).

We now define an action of the algebra $\T^{\rm A}_N(q)$ on a
space which is spanned linearly by states $\ket{\alpha}$ labeled by the Dyck paths, identifying the
states $\ket\alpha$ with the corresponding words $w_\alpha\in
\T^{\rm A}_N(q)$. This action is given by a set of elementary
transformations of pictures shown in (\ref{rule-1}). A typical
example of such an action is given in Figure~\ref{fig:eonSlope}.
\begin{figure}[h]
\includegraphics[height=60pt]{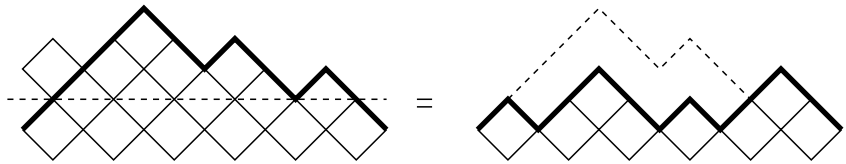}
\caption{The result of $e_i\ket\alpha$ if $\alpha$ has
  a slope at $i$. If $i+r$ is the first position to the right of an
  upward slope at $i$ whose height is equal to that at $i$,
  i.e. $\alpha_{i+r}=\alpha_i>0$, then a layer of
  tiles between $i$ and $i+r$ is peeled off the original path and the
  result is again a Dyck path. A similar peeling mechanism to the left
  works for downward slopes.}
\label{fig:eonSlope}
\end{figure}

In doing so we find the following representation of the algebra
$\T^{\rm A}_N(q)$:
\begin{prop}
\label{prop:ActionOfGeneratorsA} The action of $e_i$ for
$i=1,\ldots,N-1$ on Dyck paths is explicitly given by
\begin{align}
\bullet\ &\text{Local Minimum:}\nonumber\\
&e_i \ket{\ldots,\alpha_i+1,\alpha_i,\alpha_i+1,\ldots} =
\ket{\ldots,\alpha_i+1,\alpha_i+2,\alpha_i+1,\ldots},
\nonumber\\[3mm]
\bullet\ &\text{Local Maximum:}\nonumber\\
&e_i \ket{\ldots,\alpha_i-1,\alpha_i,\alpha_i-1,\ldots} = -[2]\ \ket{\ldots,\alpha_i-1,\alpha_i,\alpha_i-1,\ldots},\nonumber\\[3mm]
\bullet\ &\text{Uphill Slope: $\alpha_{i-1}<\alpha_i<\alpha_{i+1}$.}\nonumber\\
&\text{Let $j>i$ be such that $\alpha_j=\alpha_i$ and $\alpha_l > \alpha_i\;\;\forall\, l:\; i<l<j$, then}\nonumber\\
&e_i \ket{\ldots,\alpha_i-1,\alpha_i,\alpha_i+1,\alpha_{i+2},\ldots,\alpha_j,\ldots} = \nonumber\\
&\hphantom{\text{Uphill Slope:}} \ket{\ldots,\alpha_i-1,\alpha_i,\alpha_i-1,\alpha_{i+2}-2,\ldots,\alpha_{j-1}-2,\alpha_j,\alpha_{j+1},\ldots},\nonumber\\[3mm]
\bullet\ &\text{Downhill Slope: $\alpha_{i-1}>\alpha_i>\alpha_{i+1}$.}\nonumber\\
&\text{Let $k<i$ be such that $\alpha_k=\alpha_i$ and $\alpha_l > \alpha_i\;\;\forall\, l:\; k<l<i$, then}\nonumber\\
&e_i \ket{\ldots,\alpha_k,\ldots,\alpha_{i-2},\alpha_i+1,\alpha_i,\alpha_i-1,\ldots} =\nonumber\\
&\hphantom{\text{Uphill Slope:}}
 \ket{\ldots,\alpha_{k},\alpha_{k+1}-2,\ldots,\alpha_{i-2}-2,\alpha_i-1,\alpha_i,\alpha_i-1,\ldots}
 \nonumber
\end{align}
\end{prop}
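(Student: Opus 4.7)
The plan is to argue entirely graphically. By construction the basis state $\ket\alpha$ is identified with the tile diagram $w_\alpha$ that fills the region below the Dyck path $\alpha$ with empty tiles, each representing some generator $e_j$. Multiplication on the left by $e_i$ corresponds pictorially to adjoining one extra empty tile at horizontal position $i$ on top of this diagram. The task is therefore to reduce the resulting stacked diagram to a single (possibly scaled) diagram of the form $w_\beta$, using only the three local reduction rules displayed in \eqref{rule-1}, namely $e_i^2=-[2]e_i$ and $e_ie_{i\pm1}e_i=e_i$.

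I would then analyse the four local shapes of $\alpha$ at position $i$ separately, each reducing to a finite graphical calculation. In the \emph{local minimum} case there is no tile at column $i$ at the relevant height in $w_\alpha$, so the new tile simply fills the valley and produces the tile diagram of the path with a local maximum at $i$; no relations are required. In the \emph{local maximum} case there is already a tile at position $i$ immediately below where the new $e_i$ lands, and the first rule of \eqref{rule-1} collapses the two stacked empty tiles into one tile with a factor $-[2]$, leaving $\alpha$ unchanged.

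I would handle the uphill and downhill cases by a short induction on the distance to the matching height. On an \emph{uphill slope} at $i$ the added $e_i$-tile meets a neighbouring $e_{i+1}$-tile of $w_\alpha$; the middle rule of \eqref{rule-1}, read from right to left, absorbs this neighbour and re-emits a single tile one step to the right, effectively peeling off one layer of the staircase. Iterating, the peeled region grows to the right until the process reaches the first index $j>i$ with $\alpha_j=\alpha_i$, at which point no further $e_{i+r}$-neighbour is present and the reduction terminates in the tile diagram $w_\beta$ of the reflected path described in the statement. The downhill case is the mirror argument using the third rule of \eqref{rule-1}.

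The only real subtlety is bookkeeping: at each inductive step one must verify that the local tile pattern demanded by the reduction rule is indeed present in $w_\alpha$, and that the peeled heights $\alpha_{i+2}-2,\dots ,\alpha_{j-1}-2$ remain non-negative so that the output is still a genuine Dyck path. Both facts follow from the choice of $j$ as the \emph{first} return to height $\alpha_i$: the strict inequality $\alpha_l>\alpha_i$ for $i<l<j$ both supplies the required neighbouring tiles and guarantees $\alpha_l-2\geq\alpha_i\geq 0$. Apart from this verification, the argument is a mechanical application of the local rules collected in \eqref{rule-1}, as illustrated in Figure~\ref{fig:eonSlope}.
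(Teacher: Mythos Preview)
Your proposal is correct and follows the same graphical approach the paper takes. In fact, the paper does not spell out a proof of this proposition at all: it states that the action is obtained by identifying $\ket\alpha$ with $w_\alpha$ and applying the local rules \eqref{rule-1}, gives the peeling mechanism in the caption to Figure~\ref{fig:eonSlope}, and then simply records the outcome as Proposition~\ref{prop:ActionOfGeneratorsA}. Your case-by-case analysis and the short induction on the distance to the first return height are exactly the details one would fill in to make that sketch rigorous, and your check that $\alpha_l-2\geq 0$ via the strict inequality $\alpha_l>\alpha_i$ is the right way to confirm the output is again a Dyck path.
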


\begin{remark}
For generic values of $q$ the Dyck path representation is the
irreducible representation of the Temperley-Lieb algebra $\T^{\rm
A}_N(q)$ corresponding in the conventional classification to the
partition $\{\lfloor\tfrac{ N+1}{2}\rfloor,\lfloor\tfrac{ N}{2}\rfloor\}$.
\end{remark}

\begin{defn}\lb{rem3}
We call the unique Dyck path without local minima in the bulk the
{\em maximal Dyck path} and denote it $\Omega^{\rm A}$.
Explicitly this path reads:
\[ \Omega^{\rm A} =
(0,1,2,\dots ,\lfloor\tfrac{N-1}{2}\rfloor,
\lfloor\tfrac{N+1}{2}\rfloor,\lfloor\tfrac{N-1}{2}\rfloor,\dots
,\epsilon_N), \]
where $\epsilon_N:= N \bmod 2$ is the parity of $N$.
\end{defn}
It is clear from Proposition \ref{prop:ActionOfGeneratorsA} that
the maximal Dyck path plays a role of a highest weight element of
the Dyck path representation.

\subsubsection{Ballot path representation}

\begin{defn}
A Ballot path $\alpha$ of length $N$ is a vector of  $(N+1)$ local
integer heights
\[ \alpha=(\alpha_0,\alpha_1,\ldots,\alpha_N),
\]
such that $\alpha_i \geq 0$, $\alpha_{i+1}-\alpha_i = \pm 1$ and
$\alpha_N=0$.

We denote the set of all Ballot paths of the length $N$ by $\B_N$
.
\end{defn}

Each Ballot path $\alpha$ of length $N$ corresponds uniquely
to a word $w_\alpha\in \T^{\rm B}_N(q,w)$, represented
pictorially as
\ba
\begin{picture}(0,0)
\linethickness{0.3pt} \put(131.5,-7){\line(0,-1){23}}

\put(265.5,-7){\line(0,-1){23}}
\multiput(127,-21)(10.5,0){13}{\linethickness{0.3pt}\put(5,15){\line(1,0){5}}}
\put(131.5,-6){\line(-1,0){15}} \put(131.5,38){\line(-1,0){15}}
\put(118.5,13){\vector(0,1){25}} \put(118.5,19){\vector(0,-1){25}}
\put(95,14){$\scriptstyle \alpha_0=\mathit 4$}
\end{picture}
\mbox{N even:}\qquad w_{\alpha}&=&\hspace{30pt}
\raisebox{-40pt}{\includegraphics[height=80pt]{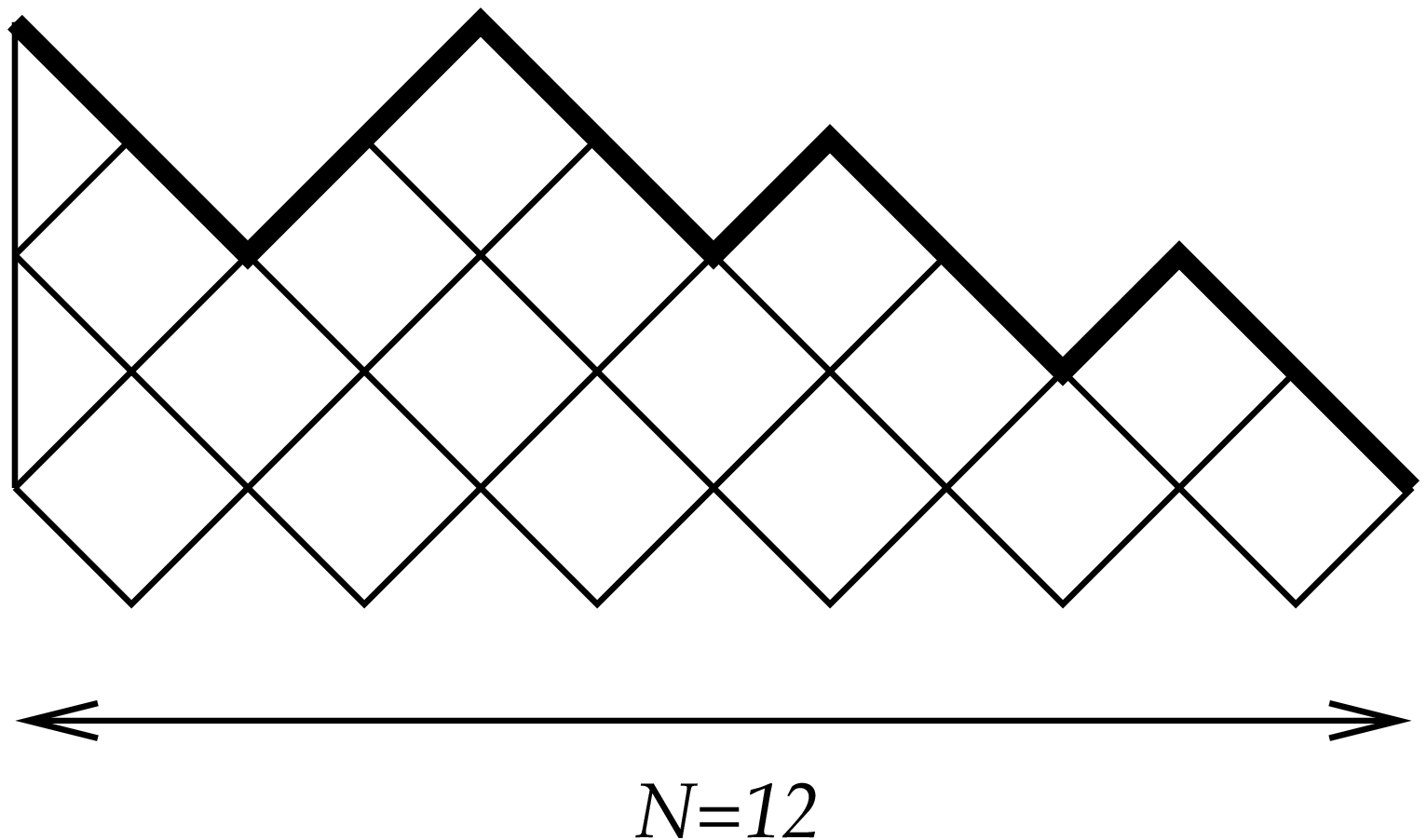}}\;
,\qquad\qquad
\nonumber
\\
\begin{picture}(0,70)(0,-30)
\linethickness{0.3pt} \put(131.5,27){\line(0,-1){57}}
\put(252.5,-7){\line(0,-1){23}}
\multiput(127,-22)(10.5,0){12}{\linethickness{0.3pt}\put(5,15){\line(1,0){5}}}
\put(131.5,-7){\line(-1,0){15}} \put(131.5,27){\line(-1,0){15}}
\put(118.5,2){\vector(0,1){25}} \put(118.5,18){\vector(0,-1){25}}
\put(182.5,-28){\vector(1,0){70}}
\put(201.5,-28){\vector(-1,0){70}} \put(95,7){$\scriptstyle
\alpha_0=\mathit 3$} \put(180,-38){$\scriptstyle\mathit N=\mathit
1\mathit 1$}
\multiput(131.5,-18)(22,0){6}{\line(1,1){11}}
\multiput(142.5,-7)(22,0){5}{\line(1,-1){11}}
\multiput(142.5,-7)(22,0){5}{\line(1,1){11}}
\multiput(131.5,4)(22,0){6}{\line(1,-1){11}}
\multiput(131.5,4)(22,0){5}{\line(1,1){11}}
\multiput(142.5,15)(22,0){5}{\line(1,-1){11}}
\multiput(164.5,15)(22,0){2}{\line(1,1){11}}
\multiput(175.5,26)(22,0){2}{\line(1,-1){11}}
\multiput(131.5,26)(22,0){1}{\line(1,-1){11}}
{\linethickness{1.3pt} \qbezier[200](131.5,26)(142.5,15)(153.5,4)
\qbezier[200](153.5,4)(164.5,15)(176.5,26)
\qbezier[100](176.5,26)(182,20.5)(187.5,15)
\qbezier[100](187.5,15)(193,20.5)(198.5,26)
\qbezier[200](198.5,26)(209.5,15)(220.5,4)
\qbezier[100](220.5,4)(226,9.5)(231.5,15)
\qbezier[200](231.5,15)(242.5,4)(253.5,-7) }
\end{picture}
\raisebox{15pt}{$\mbox{N odd:}\qquad
w_{\alpha}$}&\raisebox{15pt}{$=$}&\hspace{163pt}\raisebox{15pt}{,}
\nonumber
\ea
\noindent where the empty (half-)tiles at horizontal position $i$
($0$) are the generators $e_i$ ($e_0$), see (\ref{e_pic}).

Now we can define an action of the algebra $\T^{\rm
B}_N(q,\omega)$ on the space which is spanned linearly by states $\ket{\alpha}$ labeled by the
Ballot paths, identifying the states $\ket\alpha$ with the
corresponding words $w_\alpha\in \T^{\rm B}_N(q,\omega)$. We
thus find the following representation of the algebra $\T^{\rm
B}_N(q,\omega)$:

\begin{prop}
\label{prop:ActionOfGeneratorsB} The action of $e_i$ for
$i=1,\ldots,N-1$ on Ballot paths is explicitly given by
Proposition~\ref{prop:ActionOfGeneratorsA} in the case of a local
extremum or an uphill slope. In the remaining cases we find
\begin{align}
\bullet\ &\text{Downhill Slope, Type I:}\nonumber\\
&\text{If there exists
$k<i$ such that $\alpha_k=\alpha_i$ and $\alpha_l > \alpha_i\; \forall\, l:\, k<l<i$, then}\nonumber\\
&e_i \ket{\ldots,\alpha_k,\ldots,\alpha_{i-2},\alpha_i+1,\alpha_i,\alpha_i-1,\ldots} =\nonumber\\[1mm]
&\hphantom{\hspace{30mm}}
 \ket{\ldots,\alpha_{k},\alpha_{k+1}-2,\ldots,\alpha_{i-2}-2,\alpha_i-1,\alpha_i,\alpha_i-1,\ldots}
 \nonumber
 \\[3mm]
\bullet\ &\text{Downhill Slope, Type IIa:}\nonumber\\
&\text{If $i$ is odd and $\alpha_k > \alpha_i\; \forall\, k<i$, then}\nonumber\\
&e_i \ket{\alpha_0, \ldots,\alpha_{i-2},\alpha_i+1,\alpha_i,\alpha_i-1,\ldots} =\nonumber\\[1mm]
&\hphantom{\hspace{30mm}}
\ket{\alpha_0-2,\ldots,\alpha_{i-2}-2,\alpha_i-1,\alpha_i,\alpha_i-1,\ldots}.
\label{eq:DownHillIIa}
 \\[3mm]
\bullet\ &\text{Downhill Slope, Type IIb:}\nonumber\\
&\text{If $i$ is even and $\alpha_k > \alpha_i\; \forall\, k<i$, then}\nonumber\\
&e_i \ket{\alpha_0,\ldots,\alpha_{i-2},\alpha_i+1,\alpha_i,\alpha_i-1,\ldots} =\nonumber\\[1mm]
&\hphantom{\hspace{30mm}}
 -\frac{[\omega]}{[\omega+1]}\ \ket{\alpha_0-2,\ldots,\alpha_{i-2}-2,\alpha_i-1,\alpha_i,\alpha_i-1,\ldots}.
 \label{eq:DownHillIIb}
\end{align}
The action of the boundary generator $e_0$ is given by
\begin{align}
\bullet\ &\text{Uphill Slope at $i=0$:}\hspace{97mm}\nonumber\\
&e_0 \ket{\alpha_0,\alpha_0+1,\alpha_{2},\ldots} =
\ket{\alpha_0+2,\alpha_0+1,\alpha_{2},\ldots}
\nonumber
\\[3mm]
\bullet\ &\text{Downhill Slope at $i=0$:}\nonumber\\
&e_0 \ket{\alpha_0,\alpha_0-1,\alpha_{2},\ldots} =
- \frac{[\omega]}{[\omega+1]}\ket{\alpha_0,\alpha_0-1,\alpha_{2},\ldots}
\nonumber
\end{align}
\end{prop}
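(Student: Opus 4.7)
The plan is to compute $e_i\ket\alpha$ graphically: since $\ket\alpha$ is identified with the canonical word $w_\alpha$ depicted as a tile configuration, I stack the tile (or half-tile, for $e_0$) representing $e_i$ on top of $w_\alpha$ at column $i$, and then apply the local reduction rules in \eqref{rule-1} and \eqref{rule-2} repeatedly until the diagram is brought back into canonical form $w_\beta$ for some ballot path $\beta$, possibly with a scalar prefactor. Reading off the prefactor and $\beta$ from the reduced diagram then yields the action.

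For $i\geq 1$, the local-minimum, local-maximum and uphill-slope cases involve only bulk tiles of $w_\alpha$ in a neighbourhood of column $i$. The only reductions needed are those of \eqref{rule-1}, which are identical to the rules used in the Dyck-path case, so these statements follow verbatim from Proposition~\ref{prop:ActionOfGeneratorsA}. The downhill-slope Type~I case is likewise reduced to the type~A calculation: the hypothesis on $k$ guarantees that the horizontal layer peeled off by iterated applications of the middle and right rules of \eqref{rule-1} terminates at an interior valley at column $k$, so the entire procedure stays in the bulk.

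The genuinely new analysis is for Types IIa and IIb, where no interior match $k<i$ exists and the peeled layer propagates all the way to the boundary column $i=0$. Along the left boundary, the canonical word $w_\alpha$ arranges bulk tiles and boundary half-tiles $e_0$ in a brick-like pattern whose parity is controlled by $\alpha_0$ and the row index. A case-by-case check shows that when $i$ is odd the peeled layer terminates on its left end at a bulk $e_1$-tile, so only the rules of \eqref{rule-1} are invoked and no prefactor appears (subcase IIa); when $i$ is even it terminates on an $e_0$ half-tile whose final collapse uses the left relation of \eqref{rule-2}, namely $e_0^2=-[\omega]/[\omega+1]\,e_0$, producing the boundary prefactor (subcase IIb). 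The height shifts $\alpha_l\mapsto\alpha_l-2$ for $l\leq i-2$ in both subcases are then read off the peeled diagram, exactly as in the type~A bulk case.

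For the action of $e_0$ itself, the half-tile is stacked on top of the left boundary column of $w_\alpha$. If the first step is upward ($\alpha_1=\alpha_0+1$), the new half-tile extends the column without meeting another $e_0$, so $\alpha_0$ jumps up by $2$ with no scalar factor. If the first step is downward ($\alpha_1=\alpha_0-1$), the stacked half-tile immediately meets the topmost $e_0$ already present in $w_\alpha$, and the relation $e_0^2=-[\omega]/[\omega+1]\,e_0$ produces the stated factor while leaving $\alpha$ unchanged. The main obstacle throughout is the parity bookkeeping in subcases IIa and IIb: one must justify that the leftmost cell of the peeled layer is a half-tile precisely when $i$ is even. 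Once the canonical placement of boundary half-tiles in $w_\alpha$ is made explicit in terms of $\alpha_0$ and row index, this reduces to a routine but careful parity check, which is the step I expect to require the most attention.
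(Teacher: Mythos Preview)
Your proposal is correct and follows the same approach as the paper: the paper does not write out a formal proof of this proposition but simply states it as the outcome of applying the graphical reduction rules \eqref{rule-1} and \eqref{rule-2} to the tile picture of $e_i w_\alpha$, exactly as you describe. Your account fills in the details the paper leaves implicit, including the parity analysis distinguishing subcases IIa and IIb, and matches the paper's intended reasoning.
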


\begin{remark}
For generic values of $q$ and $\omega$ the Ballot path representation is the
irreducible representation of the Temperley-Lieb algebra $\T^{\rm
B}_N(q,\omega)$ corresponding to
bi-partition $\{\lfloor\tfrac{ N+1}{2}\rfloor\},\, \{\lfloor\tfrac{ N}{2}\rfloor\}$.
\end{remark}

\begin{defn}\lb{rem4}
We call the unique Ballot path without local minima in the bulk
the {\em maximal} Ballot path and denote it by $\Omega^{\rm B}$.
Explicitly this path reads:
\[
\Omega^{\rm B} = (N,N-1,\dots ,2,1,0).
\]
\end{defn}
As follows from the Proposition \ref{prop:ActionOfGeneratorsB} the
maximal Ballot path plays a role of a highest weight element of
the Ballot path representation.

\section{$q$-deformed  Knizhnik-Zamolodchikov equation}

\subsection{Definition}

Let us consider a linear combination $\ket\Psi$ of states $\ket \alpha$ with coefficients
$\psi_\alpha$ taking values in the ring of formal series in $N$ variables $q^{\pm x_i}, \; i=1,2,\dots ,N$:
\[
\ket{\Psi(x_1,\ldots,x_{N})} = \sum_{\alpha}
\psi_\alpha(x_1,\ldots,x_{N}) \ket{\alpha}.
\]
Here $\alpha$ runs over the set of either Dyck (type A), or Ballot (type B)
paths of length $N$.

The {\em qKZ equation}
in the Temperley-Lieb algebra setting
is a system of finite difference equations on the vector $\ket\Psi$.
Actually, we consider the qKZ equation in an alternative
form with permutations in place of finite differences. This is historically the first form it
appeared in literature \cite{S}.
In both types A and B
the qKZ equation reads universally  \cite{ZJ07},
\begin{align}
R_i(x_i-x_{i+1}) \ket\Psi &= \pi_i \ket\Psi,\qquad \forall\, i=1,\ldots,N-1,
\label{qKZTL_TypeB1}\\
K_0(-x_1) \ket\Psi  &= \pi_0 \ket\Psi,
\label{qKZTL_TypeB2}\\
\ket\Psi &= \pi_N\ket\Psi.
\label{qKZTL_TypeB3}
\end{align}
Here $R_i$  are the Baxterised elements of the Temperley-Lieb
algebra, $K_0$ is the boundary Baxterised element in the type B case
and $K_0$ is the identity operator in type A.
The operators $R_i(x_i-x_{i+1})$ and $K_0(-x_1)$ act on states $\ket{\alpha}$, whereas the
operators $\pi_i$ permute or reflect  arguments of
the coefficient functions
\begin{align}
\pi_i \psi_\alpha(\ldots,x_i,x_{i+1},\ldots) &= \psi_\alpha(\ldots,x_{i+1},x_{i},\ldots),
\nonumber\\
\pi_0\psi_\alpha(x_1,\ldots)&= \psi_\alpha(-x_1,\ldots),\label{pi0}\\
\pi_N\psi_\alpha(\ldots,x_N)&= \psi_\alpha(\ldots,-\lambda-x_N).\label{piN}
\end{align}
Here $\lambda\in {\mathbb C}$ is a parameter related to the level of the qKZ equation, see \cite{EFK}.

\begin{remark}
Clearly, the elementary permutations $\pi_i$, $i=1,\dots , N-1,$
are the generators of the symmetric group $S_N$, whereas
$\pi_0$ and $\pi_N$ are left and right boundary reflections. They satisfy the relations
\begin{align*}
&\pi_i^2= 1, \qquad\pi_i\pi_{i+1}\pi_i=\pi_{i+1}\pi_i\pi_{i+1},\qquad\qquad\qquad\qquad\;\;
\pi_i\pi_j =\pi_j\pi_i\quad \forall\, i,j: |i-j|>1,\\
&\pi_0^2=1, \qquad\pi_0\pi_1\pi_0\pi_1 =\pi_1\pi_0\pi_1\pi_0,\qquad\qquad\qquad\qquad\;\;\;
\pi_0\pi_i =\pi_i\pi_0\quad \forall\, i>1,\\
&\pi_N^2=1, \qquad\pi_N\pi_{N-1}\pi_N\pi_{N-1} =\pi_{N-1}\pi_N\pi_{N-1}\pi_N,\qquad
\pi_N\pi_i =\pi_i\pi_N\;\; \forall\, i<N-1.
\end{align*}
Therefore the unitarity conditions \eqref{unitarity}, \eqref{unitarity2},
the Yang-Baxter relation \eqref{ybe} and the reflection equation \eqref{rea} for
the operators $R_i$ and $K_0$
are consistency conditions for the qKZ equation.
\end{remark}

\subsection{Algebraic interpretation}
In this subsection we consider the algebraic content of the qKZ equation.
We follow the lines of the paper \cite{P05}.

\subsubsection{Type A}
Consider equation \eqref{qKZTL_TypeB1}.
Here the R-matrix $R_i(x_i-x_j)$ acts on the states $\ket\alpha$, $\alpha \in \D_{N}$, while the operator $\pi_i$ acts on
the functions $\psi_\alpha(x_1,\ldots,x_{N})$. In other words, the qKZ equation
\eqref{qKZTL_TypeB1} written out
in components becomes
\[
\sum_{\alpha\in \D_{N}}
\psi_\alpha(x_1,\ldots,x_{N}) \, (R_i(x_i-x_{i+1}) \ket\alpha) =
\sum_{\alpha\in \D_{N}} (\pi_i \psi_\alpha)(x_1,\ldots,x_{N})\ket\alpha,
\]
which can be rewritten as
\begin{multline}
\sum_{\alpha\in \D_{N}}
\psi_\alpha(x_1,\ldots,x_{N})\, (-e_i \ket{\alpha}) \\ =  \sum_{\alpha\in \D_{N}}
\left( \frac{[x_{i}-x_{i+1}+1]}{[x_{i}-x_{i+1}]}\pi_i +
\frac{[x_{i}-x_{i+1}-1]}{[x_{i}-x_{i+1}]}\right)
\psi_\alpha(x_1,\ldots,x_{N}) \ket\alpha \\
= \sum_{\alpha\in \D_{N}}
\left(a_i
\psi_\alpha\right)(x_1,\ldots,x_{N}) \ket\alpha.
\label{qKZ2TL_bulk}
\end{multline}
Here we have used notation
\begin{align}
\lb{a-rep}
a_i &:=\, \frac{1}{[x_i-x_{i+1}]}(\pi_i - 1)[x_{i+1}-x_{i}+1]\, \nonumber\\
&\,=\,(\pi_i+1)\frac{[x_i-x_{i+1}-1]}{[x_i-x_{i+1}]}, \qquad i=1,2,\dots ,N-1,
\end{align}
for {\em symmetrising operators} acting on functions in the variables $x_i$ \cite{DKLLST}.
These operators satisfy the Hecke relations \eqref{Hdef_typeA_a}. Moreover, they
generate a {\em faithful}  representation
of the algebra $\H_N^{\rm A}(q)$ in the space of functions in $N$
variables $x_i\; (i=1,\dots ,N)$, thus justifying the use of the identical notation
$a_i$ in \eqref{Hdef_typeA_a} and in \eqref{a-rep}. The generator $g_i=q-a_i$ in this representation is known as the Demazurre-Lusztig operator \cite{Ch}.
The alternative set of generators $s_i$ and the Baxterised elements $h_i(u)$ \eqref{h_u}
in this particular representation read
\begin{align}
s_i &= \frac{[x_{i}-x_{i+1}+1]}{[x_{i}-x_{i+1}]} \,(1-\pi_i),
\nonumber\\
h_i(u) &= \frac{[x_{i}-x_{i+1}+u]}{[u][x_{i}-x_{i+1}]}\, -\, \frac{[x_{i}-x_{i+1}+1]}{[x_{i}-x_{i+1}]}\pi_i .
\lb{h-rep}
\end{align}

Looking back at the equations \eqref{qKZ2TL_bulk} we notice that their solution amounts
to constructing an explicit homomorphism from the Dyck path representation
of the Temperley-Lieb algebra $\T^{\rm A}_N(q)$  into
the functional representation \eqref{a-rep} of the Iwahori-Hecke algebra
$\H^{\rm A}_N(q)$:
\be
\ket{\alpha} \mapsto \psi_\alpha\quad \forall \alpha\in \D_N,
\ee
where $\psi_\alpha$ are the components of the solution $\ket\Psi$ of the qKZ equation
of type A.
\subsubsection{Type B}

In this case we additionally have a non-trivial operator $K_0$
affecting equation \eqref{qKZTL_TypeB2}, which reads in components:
\[
\sum_{\alpha\in \B_{N}}
\psi_\alpha(x_1,\ldots,x_{L}) \, (K_0(-x_1) \ket\alpha) =
\sum_{\alpha\in \B_{N}} (\pi_0 \psi_\alpha)(x_1,\ldots,x_{L})\ket\alpha,
\]
where the summation is taken now over all Ballot paths. Recalling the definition
\[
k(u,\delta)=[u+\frac{\omega+\delta}{2}][u+\frac{\omega-\delta}{2}],
\]
from \eqref{K_u}, this can be rewritten as
\begin{multline}
\sum_{\alpha\in \B_{N}}
\psi_\alpha(x_1,\ldots,x_N)\, (-e_0 \ket{\alpha}) \\ = \sum_{\alpha\in \B_{N}}
\left( \frac{k(x_1,\delta)}{[2x_1][\omega+1]}\pi_0 -
\frac{k(-x_1,\delta)}{[2x_1][\omega+1]}
\right)
\psi_\alpha(x_1,\ldots,x_{N}) \ket\alpha \\
= \sum_{\alpha\in \B_{N}}
\left(a_0
\psi_\alpha\right)(x_1,\ldots,x_{N}) \ket\alpha.
\label{qKZ2TL_bound}
\end{multline}
where we have denoted
\be
\lb{rep-a0}
a_0 := -(\pi_0+1)\frac{k(-x_1,\delta)}{[2x_1][\omega+1]}\, =\, \frac{1}{[2x_1][\omega+1]}(\pi_0-1)k(-x_1,\delta)
\ee
The operator $a_0$ and the operators $a_i$ from \eqref{a-rep}
satisfy the defining relations \eqref{Hdef_typeB_a} for the type B
Iwahori-Hecke algebra. Moreover, the realisation \eqref{rep-a0}, \eqref{a-rep}
gives a {\em faithful} representation of $\H^{B}_N(q,\omega)$.
The generator $s_0$ defined in \eqref{def_e0_s0} and the boundary Baxterised element $h_0(u)$
from \eqref{h0_u} in this realisation read
\be
\lb{rep-s0}
s_0 = \frac{k(x_1,\delta)}{[2x_1][\omega+1]}(1-\pi_0),\qquad
h_0(u) = s_0 - \frac{k(u/2,\nu)}{[u][\omega+1]} .
\ee

Let us stress here the difference between the parameters $\delta$ and $\nu$.
The ``physical" parameter $\delta$ appears in the definition of the boundary Baxterised
element $K_0(u)$ and therefore enters the qKZ equation and the boundary conditions of related
integrable models, see e.g. \cite{Magic,ZJ07}. The parameter $\nu$ is introduced here for the first
time in this section. It plays an auxilliary role and we will fix it later to obtain a convenient
presentation of the solution of the qKZ equation of type B, see \eqref{half-tile} below.

Finally, the relation \eqref{qKZ2TL_bound} states  that
the solution $\ket\Psi$ of the  qKZ equation of type B
encodes an explicit homomorphism from the Ballot path representation
of the Temperley-Lieb algebra $\T^{\rm B}_N(q,\omega)$  into
the functional representation \eqref{a-rep}, \eqref{rep-a0}
 of the Iwahori-Hecke algebra
$\H^{\rm B}_N(q,\omega)$:
\[
\ket{\alpha} \mapsto \psi_\alpha\quad \forall\, \alpha\in \B_N.
\]

\subsection{Preliminary analysis}

\subsubsection{Type A}
\label{subsec3.3.1}
Equation \eqref{qKZ2TL_bulk} breaks up into two cases, depending whether or
not a path $\alpha$ in the sum on the right-hand side  of
\eqref{qKZ2TL_bulk} has a local maximum at $i$, i.e. whether or not it is of the form
$\alpha=(\ldots,\alpha_i-1,\alpha_i,\alpha_i-1,\ldots)$.
We will first look at the case in which it does not.
\bigskip

\noindent\textbf{Case i)}:\ $\alpha$ does not have a local maximum at $i$.

As each term in the left-hand side  of \eqref{qKZ2TL_bulk}
is of the form $e_i\ket\alpha$, and hence corresponds to a local maximum at $i$,

the coefficient of $\ket\alpha$ in the right-hand side of \eqref{qKZ2TL_bulk}
has to equal zero. Hence we obtain
\be
-(a_i \psi_\alpha)(x_1,\ldots,x_{N}) \equiv (h_i(-1)
\psi_\alpha) (x_1,\ldots,x_{N}) =0,
\label{eq:hmonpsi}
\ee
which can be rewritten as
\[
(\pi_i -1) \left\{ [x_{i}-x_{i+1}-1]
\psi_\alpha(x_1,\ldots,x_{N})\right\} =0\quad\text{for}\;
\ket\alpha\not\propto e_i \ket{\alpha'}.
\]
Hence, if $\ket\alpha\not\propto e_i \ket{\alpha'}$ the function
\[
[x_{i}-x_{i+1}-1] \psi_\alpha(x_1,\ldots,x_{N})
\]
is symmetric in $x_i$ and $x_{i+1}$ which implies that
$[x_{i+1}-x_{i}-1]$ divides $\psi_\alpha(x_1,\ldots,x_{N})$ and the ratio is symmetric
with respect to $x_i$ and $x_{i+1}$.

Iterating \eqref{qKZTL_TypeB1} we find
\begin{multline}
\Psi(x_1,\ldots,x_{k-1},x_m,x_{k},\ldots,x_{m-1}, x_{m+1},\ldots x_N) = \\R_{k}(x_{k}-x_m)
\cdots R_{m-1}(x_{m-1}-x_m) \Psi(x_1,\ldots,x_N).
\label{ijperm}
\end{multline}
Consider now the component $\psi_\alpha$ on the LHS of \eqref{ijperm},
where $\alpha$ does not have a local maximum at any $i$ for $k\leq i \leq
m-1$. This component can only arise from the same component on the RHS
of \eqref{ijperm}, on which the R-matrices have acted as multiples of
the identity. Hence, if $\alpha$ is a path which does not have a local maximum for any $k\leq i \leq m-1$, we find
\be
\psi_\alpha(x_1,\ldots,x_{k-1},x_m,x_{k},\ldots,x_{m-1},x_{m+1},\ldots,x_N) =
\prod_{i=k}^{m-1} \frac{[1-x_i+x_m]}{[1+x_i-x_m]}
  \psi_\alpha(x_1,\ldots,x_N).
\label{ijperm2}
\ee
It follows from \eqref{ijperm2} that if $\alpha$ does not have a local maximum between $k$
and $m-1$, then $\psi_\alpha(x_1,\ldots,x_N)$ contains a factor $\prod_{k\leq i < j\leq m}
[1+x_i-x_j]$ and the ratio is symmetric in the variables $x_i$, $k\leq i\leq m$.
An analogous argument can be given
when considering the boundary equations \eqref{qKZTL_TypeB2}, with $K_0=1$, and
\eqref{qKZTL_TypeB3}. In summarising the effects of these considerations, it will be
convenient to introduce the following notation:
\begin{defn}
\label{def:Delta} We denote by $\Delta_\mu^{\pm}$ the following functions:
\[
\Delta_\mu^{\pm} (x_k,\ldots,x_m) := \prod_{k\leq i<j\leq m}
    [\mu+x_i\pm x_j],
\]
where $\mu$ is a parameter.
\end{defn}

\begin{lemma}
\label{lem:slope_factors}

The following hold:
\begin{itemize}
\item[$\bullet$] If $\alpha$ does not have a local maximum between $k$ and $m-1$,
then $\psi_\alpha(x_1,\ldots,x_N)$ contains a factor $\Delta^-_1 (x_k,\ldots,x_m)$
and the ratio is symmetric in the variables
$x_i, \;\; k\leq i\leq m$.\medskip

\item[$\bullet$] If $\alpha$ does not have a local maximum between $1$ and $m-1$,
then $\psi_\alpha(x_1,\ldots,x_N)$ contains a factor
$\Delta_1^-(x_1,\ldots,x_m)\Delta^+_{-1}(x_1,\ldots,x_m)$
and the ratio is an even symmetric function in the variables
$x_i, \;\; 1\leq i\leq m$.

\item[$\bullet$] If $\alpha$ does not have a local maximum between $k$ and $N-1$,
then $\psi_\alpha(x_1,\ldots,x_N)$ contains a factor
$\Delta^-_1(x_k,\ldots,x_N)\Delta^+_{\lambda +1}(x_k,\ldots,x_N)$
and the ratio is an even symmetric function in the variables
$(x_i+\lambda/2), \;\; k\leq i\leq N$.
\end{itemize}
\end{lemma}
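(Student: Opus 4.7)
The first bullet is essentially the statement already derived in the paragraph immediately preceding the lemma, rewritten in the $\Delta^\pm_\mu$ notation. Iterating \eqref{ijperm2} with each sub-range $[i,j]\subseteq[k,m]$ yields divisibility of $\psi_\alpha$ by the individual factors $[1+x_i-x_j]$, hence by $\Delta^-_1(x_k,\ldots,x_m)$, and the symmetry of the ratio under the adjacent transpositions $\pi_i$ ($k\leq i<m$) from \eqref{eq:hmonpsi} extends by $S_{m-k+1}$ to full symmetry in $x_k,\ldots,x_m$. I would simply summarise this at the start.

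For the second bullet the plan is to feed the $\pi_0$ relation \eqref{qKZTL_TypeB2} at $K_0=1$, namely $\psi_\alpha(-x_1,x_2,\ldots,x_N)=\psi_\alpha(x_1,\ldots,x_N)$, into the ratio $\tilde\psi_\alpha:=\psi_\alpha/\Delta^-_1(x_1,\ldots,x_m)$, which by bullet one is symmetric in $x_1,\ldots,x_m$. Only factors of $\Delta^-_1$ indexed by pairs $(1,b)$ are affected by $x_1\to-x_1$, and the identity $[1-x_1-x_b]=-[-1+x_1+x_b]$ gives
\[
\tilde\psi_\alpha(-x_1,\ldots)/\tilde\psi_\alpha(x_1,\ldots)
=(-1)^{m-1}\prod_{b=2}^{m}\frac{[1+x_1-x_b]}{[-1+x_1+x_b]}.
\]
Clearing denominators and using the pairwise coprimality of $[-1+x_1+x_b]$ with the factors $[1+x_1-x_{b'}]$ forces $\prod_{b=2}^m[-1+x_1+x_b]$ to divide $\tilde\psi_\alpha$; symmetry of $\tilde\psi_\alpha$ in $x_1,\ldots,x_m$ then promotes this to divisibility by the full $\Delta^+_{-1}(x_1,\ldots,x_m)$. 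The last step is to check that the transformation factor of $\Delta^+_{-1}$ under $x_1\to-x_1$ exactly cancels that of $\tilde\psi_\alpha$ — an elementary calculation — so $\hat\psi_\alpha:=\tilde\psi_\alpha/\Delta^+_{-1}$ is even in $x_1$; symmetry then spreads evenness to every $x_i$, $1\leq i\leq m$.

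The third bullet I would prove by exactly the same mechanism, using \eqref{qKZTL_TypeB3} in place of \eqref{qKZTL_TypeB2}. The relation $\psi_\alpha(x_1,\ldots,-\lambda-x_N)=\psi_\alpha(x_1,\ldots,x_N)$ becomes ordinary evenness in $y_N$ after setting $y_i:=x_i+\lambda/2$, and in these variables $\Delta^-_1(x_k,\ldots,x_N)=\prod_{k\leq a<b\leq N}[1+y_a-y_b]$ and $\Delta^+_{\lambda+1}(x_k,\ldots,x_N)=\prod_{k\leq a<b\leq N}[1+y_a+y_b]$, so the argument of bullet two translates verbatim (in fact more cleanly, with no $(-1)^{m-1}$ to track because the reflected variable sits in the second rather than the first slot of each factor).

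The one fiddly step throughout is the sign-and-coprimality bookkeeping: keeping the identity $[-z]=-[z]$ consistent, observing that only the factors of $\Delta^-_1$ actually touching the reflected variable change, and verifying that the change factors of $\tilde\psi_\alpha$ and $\Delta^+_\bullet$ really are reciprocal once the $(-1)^{m-1}$ signs are collected. Nothing is deep, so I would write bullet two in full and handle bullet three by the substitution $x_i\mapsto y_i=x_i+\lambda/2$, $\pi_0\mapsto\pi_N$, $x_1\mapsto x_N$.
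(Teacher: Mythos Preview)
Your proposal is correct and follows the same route the paper sketches: the first bullet is exactly the content of \eqref{ijperm2} and the surrounding paragraph, while bullets two and three are the ``analogous argument'' the paper alludes to, obtained by combining the $\Delta^-_1$ factorisation with the boundary reflections $\pi_0$ and $\pi_N$ respectively. The paper gives no further detail than this, so your write-up is in fact more explicit than the original; the sign-and-divisibility bookkeeping you flag as fiddly checks out as written.
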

\begin{cor}
\label{cor:psi0}
The {\em base} coefficient function $\psi_{\Omega}^{\rm A}$ corresponding to the maximal Dyck path
$\Omega^{\rm A}$ (see Definition~\ref{rem3})
in the solution of the qKZ equation of type A  has the following form:
\begin{multline}
\psi_{\Omega}^{\rm A} (x_1,\ldots,x_N) =
\Delta^-_1 (x_1,\ldots,x_{n})\,
\Delta^+_{-1}(x_1,\ldots,x_{n})\,
\Delta^-_1(x_{n+1},\ldots,x_{N})\\
\Delta^+_{\lambda +1}(x_{n+1},\ldots,x_{N})\,\,
\xi^{\rm A}\Bigl(x_1,\ldots,x_{n}|
x_{n+1}+\tfrac{\lambda}{2},
\ldots ,x_{N}+\tfrac{\lambda}{2}\Bigr),
\label{psi_alpha_D}
\end{multline}
where $n=\lfloor (N+1)/2\rfloor$ and $\xi^{\rm A}(x_1,\dots x_n|x_{n+1},\ldots ,x_N)$ is an even symmetric function separately in the variables
$x_i,\;\; 1\leq i\leq n$ and
$x_j,\;\;n+1\leq j\leq N$.
\end{cor}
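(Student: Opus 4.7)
The plan is to apply Lemma~\ref{lem:slope_factors} twice to the maximal Dyck path $\Omega^{\rm A}$ and then combine the resulting factorisations, since they involve disjoint sets of variables.

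First I would note that, by Definition~\ref{rem3}, the path $\Omega^{\rm A}$ has no local minima in the bulk and exactly one local maximum, situated at position $n=\lfloor (N+1)/2\rfloor$. Consequently $\Omega^{\rm A}$ has no local maximum for positions $1\le i\le n-1$ (an uphill strict slope region) and no local maximum for positions $n+1\le i\le N-1$ (a downhill strict slope region). These are precisely the hypotheses required to apply the second and third bullets of Lemma~\ref{lem:slope_factors}.

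Applying the second bullet with $m=n$ yields that $\Delta^-_1(x_1,\ldots,x_n)\Delta^+_{-1}(x_1,\ldots,x_n)$ divides $\psi_{\Omega}^{\rm A}$ with an even symmetric ratio in $x_1,\ldots,x_n$. Applying the third bullet with $k=n+1$ yields that $\Delta^-_1(x_{n+1},\ldots,x_N)\Delta^+_{\lambda+1}(x_{n+1},\ldots,x_N)$ also divides $\psi_{\Omega}^{\rm A}$ with an even symmetric ratio in the shifted variables $x_i+\lambda/2$, $n+1\le i\le N$.

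The combination step is essentially automatic: the two divisor polynomials are built from mutually disjoint sets of variables and are therefore coprime in the ring of Laurent polynomials in the $q^{\pm x_i}$. Hence their product divides $\psi_{\Omega}^{\rm A}$, and the resulting quotient, which I would call $\xi^{\rm A}$, must inherit both symmetry statements: it is an even symmetric function of $x_1,\ldots,x_n$ (from the first application) and, after the shift $x_j\mapsto x_j+\lambda/2$ for $n+1\le j\le N$, an even symmetric function of $x_{n+1}+\lambda/2,\ldots,x_N+\lambda/2$ (from the second application). Substituting $\xi^{\rm A}$ back gives the factorisation \eqref{psi_alpha_D}.

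The only subtle point I anticipate is justifying coprimality cleanly enough that one can assert both factorisations combine into a single product without introducing spurious denominators; once this is handled (it follows from the disjointness of variable sets appearing in the two $\Delta$-products), the rest is bookkeeping.
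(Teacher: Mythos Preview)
Your argument is correct and follows the same route as the paper: observe that $\Omega^{\rm A}$ has its unique bulk maximum at position $n$, apply the second bullet of Lemma~\ref{lem:slope_factors} with $m=n$ and the third bullet with $k=n+1$, and combine. The paper's proof is the one-line version of exactly this; your added remark on coprimality (via disjoint variable sets) simply makes explicit what the paper leaves as ``immediate''.
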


\begin{proof}
The path $\Omega^{\rm A}$ does not have a local maximum between $1$ and
$n$ and neither between $n+1$ and $N$, and the result follows immediately from Lemma~\ref{lem:slope_factors}.
\end{proof}

In the sequel we use the following picture to represent
$\psi_{\Omega}^{\rm A}$:
\be
\psi_{\Omega}^{\rm A}\, =\;
\raisebox{-25pt}{
\begin{picture}(230,50)
\put(0,10){\includegraphics[width=100pt]{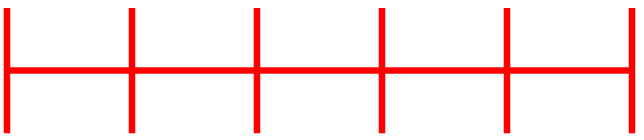}}
\put(120,10){\includegraphics[width=100pt]{Delta.eps}}
\put(-2,35){\small $x_1$}
\put(17,35){\small $x_2$}
\put(53,35){\small $\dots$}
\put(96,35){\small $x_n$}
\put(116,35){\small $x_{n+1}$}
\put(164,35){\small $\dots$}
\put(216,35){\small $x_{N}$}
\end{picture}
}
\label{psi-D}
\ee
Relation \eqref{eq:hmonpsi} for $\psi^{\rm A}_{\Omega}$ can then be displayed
pictorially as
\be
\label{eq:hmonpsi2}
\begin{picture}(300,60)(8,-20)
{\linethickness{1pt}
\multiput(129,-20)(6,6){3}{\qbezier[10](0,0)(1,1)(2,2)}
\multiput(129,9)(6,-6){3}{\qbezier[10](0,0)(1,-1)(2,-2)}
\multiput(129,-20)(-6,6){3}{\qbezier[10](0,0)(-1,1)(-2,2)}
\multiput(129,9)(-6,-6){3}{\qbezier[10](0,0)(-1,-1)(-2,-2)} }
\put(78,25){\small $x_1$}
\put(96,25){\small $\dots$}
\put(116,25){\small $x_i$}
\put(132,25){\small $x_{i+1}$}
\put(156,25){\small $\dots$}
\put(176,25){\small $x_n$}
\put(196,25){\small $x_{n+1}$}
\put(244,25){\small $\dots$}
\put(296,25){\small $x_{N}$}
\put(121,-9.7){$-1$}
\put(-50,8){$(h_i(-1)\psi^{\rm A}_{\Omega})(x_1,\ldots,x_N)=$}
\put(80,0){\includegraphics[width=100pt]{Delta.eps}}
\put(200,0){\includegraphics[width=100pt]{Delta.eps}}
\put(310,8){$=0,$}
\end{picture}
\ee
where $1\leq i<N$, $i\neq n$, and we use the graphical notation \eqref{h_pic} to represent $h_i(-1)$.
\bigskip

\noindent \textbf{Case ii)}: $\alpha$ has a local maximum at $i$.

Now \eqref{qKZ2TL_bulk} gives,
\be
([2]-a_i)\, \psi_\alpha \equiv s_i\, \psi_\alpha \equiv
 h_i(1)\, \psi_\alpha =
 \sum_{\genfrac{}{}{0pt}{}{\beta\neq\alpha}{e_i\beta=\alpha}} \psi_{\beta}\, =\,
 \psi_{\alpha^-}\, +\, \sum_{k=1,2,\dots}\, \psi_{\alpha^{+k}}.
\label{TLei}
\ee
A pictorial definition of the paths $\alpha^-$ and $\alpha^{+k}$,
$k=1,2,\dots $, is given in Figure \ref{alphapm}.
In the example of Figure \ref{alphapm}, the heights
of the paths $\alpha^{+k}$ to the right of the point $i$ are higher than those of $\alpha$.
This happens in case if the path $\alpha$ has a local minimum at the point $i+1$,
i.e., if $\alpha_i=\alpha_{i+1}+1=\alpha_{i+2}$.
Equally well, if the path $\alpha$ has a local minimum at the point $i-1$
($\alpha_i=\alpha_{i-1}+1=\alpha_{i-2}$), the sum in \eqref{TLei} contains
one or several paths $\alpha^{+k}$ whose heights are higher than those of $\alpha$ to the left of the point $i$. The number of paths $\alpha^{+k}$ appearing in the sum \eqref{TLei} depends
on the shape of $\alpha$ and varies from 0 to $\lfloor(N-1)/2\rfloor$.

\begin{figure}[h]
\begin{picture}(200,120)(0,0)
\put(30,100){\line(1,-1){10}}
\put(40,110){\line(1,-1){10}}
\put(30,100){\line(1,1){10}}
\put(40,90){\line(1,1){10}}
\put(37,98){$e_i$}
{\thicklines \put(40,90){\vector(0,-1){10}}}
\put(0,30){\line(1,-1){10}}
\put(10,20){\line(1,1){30}}
\put(40,50){\line(1,-1){10}}
\put(50,40){\line(1,1){30}}
\put(80,70){\line(1,-1){30}}
\put(110,40){\line(1,1){30}}
\put(140,70){\line(1,-1){10}}
\put(150,60){\line(1,1){20}}
\put(170,80){\line(1,-1){40}}
\put(210,40){\line(1,1){20}}
\put(230,60){\line(1,-1){30}}
\put(260,30){\line(1,1){25}}
\qbezier[10](30,40)(35,35)(40,30)
\qbezier[10](40,30)(45,35)(50,40)
\qbezier[40](40,50)(60,70)(80,90)
\qbezier[10](50,60)(55,55)(60,50)
\qbezier[40](80,90)(100,70)(120,50)
\qbezier[30](110,60)(125,75)(140,90)
\qbezier[10](140,90)(145,85)(150,80)
\qbezier[20](150,80)(160,90)(170,100)
\qbezier[50](170,100)(195,75)(220,50)
\linethickness{0.4pt}
\multiput(20,40)(6,0){40}{\line(1,0){4}}
\put(3,40){\vector(0,-1){10}}
\put(0,44){$\scriptstyle\alpha$}
\put(55,77){\vector(0,-1){20}}
\put(49,81){$\scriptstyle \alpha^{+1}$}
\put(96,85){\vector(0,-1){10}}
\put(93,89){$\scriptstyle\alpha^{+2}$}
\put(196,85){\vector(0,-1){10}}
\put(193,89){$\scriptstyle\alpha^{+3}$}
\put(38,20){\vector(0,1){10}}
\put(35,10){$\scriptstyle\alpha^-$}
\end{picture}
\caption{Definition of the Dyck paths $\alpha^-$ and $\alpha^{+k}$, $k\geq 1$.}
\label{alphapm}
\end{figure}
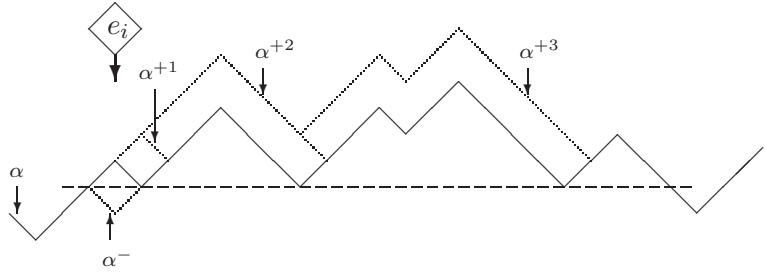

\begin{remark}\lb{important-remark}
An important observation is that the path $\alpha^-$ is absent
in Figure~\ref{alphapm} in case $\alpha_i=1$. In this case
the condition $\alpha^-_i=\alpha_i-2=-1$ implies that the path $\alpha^-$ is no
longer a Dyck path, and that consequently the term $\psi_{\alpha^-}$ in \eqref{TLei} has to vanish.
\end{remark}

A further analysis of equation \eqref{TLei} will be made in
sections \ref{sec4} and \ref{sec6}.

\subsubsection{Type B}
\label{subsec3.3.2}
The analysis of the bulk qKZ equation \eqref{qKZ2TL_bulk}
in case i) is identical to that of type A and we conclude:

\begin{cor}
\label{cor:psi0B}
The {\em base} coefficient function $\psi_{\Omega}^{\rm B}$ corresponding to the maximal Ballot path
$\Omega^{\rm B}$ (see Definition \ref{rem4})
in the solution of the qKZ equation of type B
has the following form:
\be
\lb{psi_alpha_B}
\psi_{\Omega}^{\rm B} (x_1,\ldots,x_N)\, =\, \Delta^-_1(x_1,\ldots,x_{N})\,
\Delta^+_{\lambda +1}(x_{1},\ldots,x_{N})\,\,
\xi^{\rm B}\Bigl(x_1+\tfrac{\lambda}{2},\dots ,x_N+\tfrac{\lambda}{2}\Bigr),
\ee
where $\xi^{\rm B}(x_1,\dots ,x_N)$ is an even symmetric function in all of its variables
$x_i,\;\; 1\leq i\leq N$.
\end{cor}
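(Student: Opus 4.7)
The plan is to reduce directly to the third bullet of Lemma~\ref{lem:slope_factors}, exploiting the monotonicity of the maximal Ballot path. As noted at the start of Section~\ref{subsec3.3.2}, the bulk qKZ equation \eqref{qKZ2TL_bulk} in case i) takes exactly the same form in type B as in type A. The derivations of the first and third bullets of Lemma~\ref{lem:slope_factors} only use this bulk form together with the right boundary equation \eqref{qKZTL_TypeB3}, neither of which is affected by the nontrivial $K_0$ of type B; hence these two bullets carry over verbatim to Ballot paths.

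Since $\Omega^{\rm B} = (N, N-1, \ldots, 1, 0)$ is strictly decreasing, it has no local maximum at any interior position $1 \leq i \leq N-1$, so the third bullet of Lemma~\ref{lem:slope_factors} applies with $k=1$. This yields
\[
\psi_{\Omega}^{\rm B}(x_1,\ldots,x_N) = \Delta^-_1(x_1,\ldots,x_N)\,\Delta^+_{\lambda+1}(x_1,\ldots,x_N)\,\tilde\xi(x_1,\ldots,x_N),
\]
where $\tilde\xi$ is an even symmetric function of $(x_i + \lambda/2)$. Setting $\xi^{\rm B}(y_1,\ldots,y_N) := \tilde\xi(y_1 - \lambda/2,\ldots,y_N - \lambda/2)$ then gives the claim.

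I do not expect a genuine obstacle. The one point worth flagging is that the left boundary condition \eqref{qKZTL_TypeB2}, which in type A was essential for producing the factor $\Delta^+_{-1}$ on the left half of the Dyck interval in Corollary~\ref{cor:psi0}, is not used here at all: evenness of $\xi^{\rm B}$ in $x_1 + \lambda/2$ follows by bulk symmetry from the evenness in $x_N + \lambda/2$ supplied by $\pi_N$. The nontrivial content of $K_0$ in type B is not needed to establish the factorisation itself; it will instead surface later as truncation conditions on $\xi^{\rm B}$.
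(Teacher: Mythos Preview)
Your proof is correct and follows essentially the same approach as the paper's own proof: observe that $\Omega^{\rm B}$ has no local maximum at any interior site and invoke the third bullet of Lemma~\ref{lem:slope_factors} with $k=1$. The paper's proof is terser (one sentence), while you make explicit why the lemma, stated in the type A context, still applies in type B and add the remark about the left boundary $K_0$ being unused here---useful commentary, but not a different argument.
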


\begin{proof}
The path $\Omega^{\rm B}$ does not have a local maximum between $1$ and $N$,
so the result follows immediately from Lemma~\ref{lem:slope_factors}.
\end{proof}

In the sequel we use the following picture to represent
$\psi_{\Omega}^{\rm B}$:
\be
\psi_{\Omega}^{\rm B}\, =\;
\raisebox{-25pt}{
\begin{picture}(130,50)
\put(0,10){\includegraphics[width=100pt]{Delta.eps}}
\put(-2,35){\small $x_1$}
\put(17,35){\small $x_2$}
\put(53,35){\small $\dots$}
\put(96,35){\small $x_N$}
\end{picture}
}
\label{psi-B}
\ee

\noindent \textbf{Case ii)}: $\alpha$ has a local maximum at $i$.

For type B, \eqref{qKZ2TL_bulk} gives,
\be
 h_i(1)\, \psi_\alpha =
 \sum_{\genfrac{}{}{0pt}{}{\beta\neq\alpha}{e_i\beta=\alpha}} c_\beta\, \psi_{\beta}\, =\,
\psi_{\alpha^-}\,+\,c_0(i)\, \psi_{\alpha^{+0}}\, +\, \sum_{k=1,2,\dots}  \psi_{\alpha^{+k}}\, ,
\label{TLeiB}
\ee
where a pictorial definition of the paths $\alpha^-$ and $\alpha^{+k}$,
$k\geq 0$,
is given in Figure \ref{alphapm0}. For $\psi_{\alpha^-}$ and $\psi_{\alpha^{+k}}$
the coefficients $c_\beta$ are all equal to $1$,
but $c_{0}$ may be different from $1$. This coefficient is defined by the following rules:
$c_0=0$ if the path $\alpha^{+0}$ is not in the preimage of $\alpha$ under $e_i$,
that is, $\exists\, j<i :\; \alpha_j<\alpha_i-1$.
Otherwise, $c_0(i)=1$ if $i$ is odd and $c_0(i)=-[\omega]/[\omega +1]$ if $i$ is even (this follows from
the rules \eqref{eq:DownHillIIa} and \eqref{eq:DownHillIIb}). Further analysis of case ii) is
postponed to sections \ref{sec4} and \ref{sec6}.

\begin{figure}[h]
\begin{picture}(200,120)(0,0)
\put(170,105){\line(1,-1){10}}
\put(180,115){\line(1,-1){10}}
\put(170,105){\line(1,1){10}}
\put(180,95){\line(1,1){10}}
\put(177,103){$e_i$}
{\thicklines \put(180,95){\vector(0,-1){10}}
\put(0,10){\line(0,1){90}}}
\put(0,70){\line(1,-1){30}}
\put(30,40){\line(1,1){30}}
\put(60,70){\line(1,-1){10}}
\put(70,60){\line(1,1){20}}
\put(90,80){\line(1,-1){40}}
\put(130,40){\line(1,1){20}}
\put(150,60){\line(1,-1){20}}
\put(170,40){\line(1,1){10}}
\put(180,50){\line(1,-1){40}}
\qbezier[10](170,40)(175,35)(180,30)
\qbezier[10](180,30)(185,35)(190,40)
\qbezier[10](160,50)(165,55)(170,60)
\qbezier[30](150,80)(165,65)(180,50)
\qbezier[40](90,100)(110,80)(130,60)
\qbezier[20](70,80)(80,90)(90,100)
\qbezier[30](120,50)(135,65)(150,80)
\qbezier[10](60,90)(65,85)(70,80)
\qbezier[30](0,90)(15,75)(30,60)
\qbezier[40](20,50)(40,70)(60,90)
\linethickness{0.4pt}
\multiput(0,40)(6,0){38}{\line(1,0){4}}
\put(216,26){\vector(0,-1){10}}
\put(213,29){$\scriptstyle\alpha$}
\put(15,91){\vector(0,-1){15}}
\put(9,95){$\scriptstyle \alpha^{+0}$}
\put(76,100){\vector(0,-1){13}}
\put(66,104){$\scriptstyle\alpha^{+3}$}
\put(136,82){\vector(0,-1){15}}
\put(126,86){$\scriptstyle\alpha^{+2}$}
\put(172,75){\vector(0,-1){15}}
\put(158,79){$\scriptstyle\alpha^{+1}$}
\put(178,20){\vector(0,1){10}}
\put(175,10){$\scriptstyle\alpha^-$}
\end{picture}
\caption{Definition of the paths $\alpha^-$ and $\alpha^{+k}$, $k\geq 0$.}
\label{alphapm0}
\end{figure}
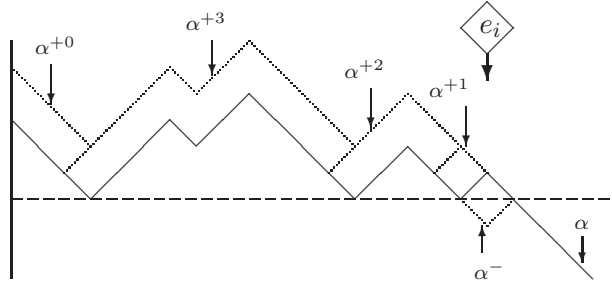

\medskip
Now consider the nontrivial type B boundary qKZ equation \eqref{qKZ2TL_bound}.
As before, the analysis breaks up into two cases, depending whether or
not a path $\alpha$ in the sum on the right-hand side  of
\eqref{qKZ2TL_bound} has a maximum at $0$, i.e. whether or not it is of the form
$\alpha=(\alpha_0,\alpha_0-1,\ldots)$. We will first look at the case
in which it does not.
\bigskip

\noindent \textbf{Case i)}: $\alpha$ does not have a local maximum at $0$.

As each term in the left-hand side  of \eqref{qKZ2TL_bound} is of the form
$e_0\ket\alpha$, and hence corresponds to a maximum at $0$, the coefficient of $\ket\alpha$
in the right-hand side  of \eqref{qKZ2TL_bound} has to equal
zero. We thus obtain
\be
(a_0 \psi_\alpha)(x_1,\ldots,x_{N}) =0,
\label{eq:hmonpsi0}
\ee
which can be rewritten as
\[
(\pi_0-1) \left\{[x_1-\tfrac{\omega + \delta}{2}][x_1-\tfrac{\omega - \delta}{2}]
\psi_\alpha(x_1,\ldots,x_{L})\right\} =0\quad\text{for}\;
\ket\alpha\not\propto e_0 \ket{\alpha'}.
\]
Hence, if $\ket\alpha\not\propto e_0 \ket{\alpha'}$ the function
\[
[x_1-\tfrac{\omega + \delta}{2}][x_1-\tfrac{\omega -
\delta}{2}] \psi_\alpha(x_1,\ldots,x_{N})
\]
is an even function in $x_1$ which implies that
$[x_1+\tfrac{\omega - \delta}{2}][x_1+\tfrac{\omega + \delta}{2}]$ divides
$\psi_\alpha(x_1,\ldots,x_{N})$, and the ratio is even in $x_1$.
\bigskip

\noindent\textbf{Case ii)}: $\alpha$ has a local maximum at $0$.

Now \eqref{qKZ2TL_bound} gives,
\be
 \bigl(\tfrac{[\omega]}{[\omega +1]} -a_0\bigr) \psi_\alpha\, =\, s_0 \psi_\alpha\, =\,
 \psi_{\alpha^{-0}}\, .
\label{TLe0}
\ee
Here by $\alpha^{-0}$ we denote the path coinciding with $\alpha$
everywhere except at the left boundary,
where one has $\alpha^{-0}_0=\alpha_0 -2$, so that $e_0\ket{\alpha^{-0}}=\ket\alpha$.
Relation \eqref{TLe0} in fact implies condition \eqref{eq:hmonpsi0}
as any path with a local maximum at $0$ is  the $\alpha^{-0}$ path for a certain
path $\alpha$.
Therefore, the type B boundary qKZ equation \eqref{qKZ2TL_bound}
is equivalent to the relation \eqref{TLe0}.

\section{Factorised solutions}
\lb{sec4}

We now present factorised formulas, in terms of the Baxterised elements $h_i(u)$, $i=0,1,\ldots,N-1$, for the
coefficients $\psi_{\alpha}$ of the solution $\ket\Psi$ to the qKZ
equation \eqref{qKZTL_TypeB1}-\eqref{qKZTL_TypeB3}.
For type A, such formulas were obtained earlier by Kirillov and Lascoux \cite{KiriL} who
considered factorisation of Kazhdan-Lusztig elements for Grassmanians.

\subsection{Type A}
\label{sec:facsolA}
The factorised expression for $\psi_\alpha$ is most easily
expressed in the following pictorial way. Complement the Dyck path
$\alpha$ with tiles to fill up the
triangle corresponding to the maximal Dyck path $\Omega^{\rm A}$,
as in Figure \ref{fig:qKZsol}.
To each added tile at horizontal position $i$ and height $j$
assign a positive integer number $u_{i,j}$
according to a following rule:
\begin{itemize}\lb{rule-A}
\item
put $u_{i,j}=1$
if in the list of added tiles there are no elements with the coordinates
$(i\pm1,j-1)$;
\item
otherwise, put
$
u_{i,j} = \max\{u_{i+1,j-1},u_{i-1,j-1}\}+1.
$
\end{itemize}
Algorithmically this rule works as follows. First, observe that the added tiles taken
together form a Young diagram $Y_\alpha$ (see Figure \ref{fig:qKZsol}).
In other words, the Young diagram $Y_\alpha$ is the difference
of the maximal Dyck path $\Omega^{\rm A}$ and the Dyck path $\alpha$.
Then, act in the following way:
\begin{itemize}
\item
Assign the integer $1$ to all corner tiles of the Young diagram $Y_\alpha$ and then
remove the corner tiles from the diagram.
\item
Assign the integer $2$ to all corner tiles of the reduced diagram and, again,
remove the filled tiles from the diagram.
\item Continue to repeat the procedure, increasing the integer by 1 at each
consecutive step, until all tiles are removed.
\end{itemize}
Once the assignment of integers is done, define an
ordered product of operators $h_i(u_{ij})$
\be
\lb{H_a}
H_\alpha\, :=\, \prod_{i,j}^{\nearrow u} h_i(u_{ij})\, ,
\ee
where the product is taken over all added tiles and the factors of the product
are ordered in such a way that their arguments $u_{i,j}$
do not decrease when reading from left to right (note that factors with the same argument
commute).

\begin{figure}[h]
\centerline{
\begin{picture}(330,180)
\put(0,0){\includegraphics[width=330pt]{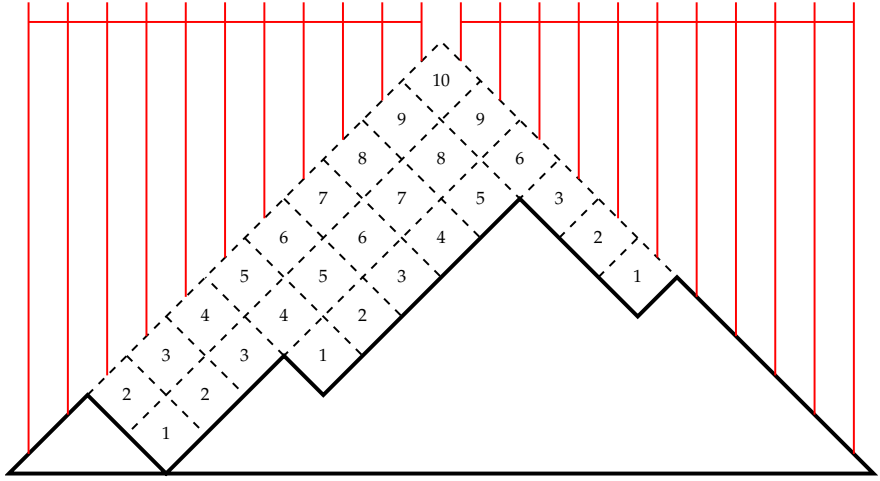}}
\end{picture}
}
\caption{Solution of the qKZ equations of type A.
We use the graphical notation \eqref{h_pic} and \eqref{psi-D} to represent
the Baxterised elements $h_i(k)$, $k=1,2,\dots$, and the coefficient
$\psi_{\Omega}^{\rm A}$. The associated Young diagram $Y_\alpha$ corresponds to the partition
$\{9^2,6,1^3\}$.}
\label{fig:qKZsol}
\end{figure}

\begin{theorem}
\label{th:facsol}
Let $\alpha$ be  a Dyck path of length $N$.
The corresponding coefficient function $\psi_\alpha$ in the
solution of the qKZ equations \eqref{qKZTL_TypeB1}-\eqref{qKZTL_TypeB3}
of type A  is given by
\be
\lb{psi_a-A}
\psi_{\alpha} = H_\alpha \psi_{\Omega}^{\rm A},
\ee
where $\psi_{\Omega}^{\rm A}$ is the base  function corresponding to the maximal Dyck path $\Omega^{\rm A}$
of length $N$ and the factorised operator $H_\alpha$ is defined in (\ref{H_a})
(see also Figure \ref{fig:qKZsol}).
\end{theorem}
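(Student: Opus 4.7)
My plan is to verify that the ansatz $\psi_\alpha := H_\alpha\, \psi^{\rm A}_\Omega$ satisfies the reduced qKZ relations derived in Section 3.3.1: the vanishing condition \eqref{eq:hmonpsi} for paths without a local maximum at $i$, and the eigenvalue-type equation \eqref{TLei} for paths with a local maximum. These recursions, together with the right-boundary condition \eqref{qKZTL_TypeB3}, uniquely determine every component of $\ket{\Psi}$ in terms of $\psi^{\rm A}_\Omega$, so checking them is sufficient.

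\textbf{Induction on $|Y_\alpha|$.} I would proceed by induction on the number of tiles in the Young diagram $Y_\alpha$. The base case $\alpha = \Omega^{\rm A}$ has $Y_\alpha = \emptyset$ and $H_\alpha = 1$, giving \eqref{psi_a-A} trivially. For the inductive step, every non-maximal Dyck path $\alpha'$ has at least one bulk local minimum (at height $\geq 0$), so it can be written as $\alpha' = \alpha^-$ for a unique path $\alpha$ with a local maximum at $i$. Equation \eqref{TLei} then reads
\[
\psi_{\alpha^-} \,=\, h_i(1)\, \psi_\alpha \,-\, \sum_{k\geq 1} \psi_{\alpha^{+k}},
\]
and each path $\alpha$ or $\alpha^{+k}$ on the right has $|Y| < |Y_{\alpha^-}|$ by inspection of Figure \ref{alphapm}. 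Applying the induction hypothesis, the inductive step reduces to the operator identity
\[
H_{\alpha^-}\, \psi^{\rm A}_\Omega \,=\, h_i(1)\, H_\alpha\, \psi^{\rm A}_\Omega \,-\, \sum_{k\geq 1} H_{\alpha^{+k}}\, \psi^{\rm A}_\Omega.
\]

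\textbf{Proof of the reduction identity.} The Young diagram $Y_{\alpha^-}$ is obtained from $Y_\alpha$ by adding a single tile directly above the local maximum at $i$, carrying the label $1$. The task is to move $h_i(1)$ into its correct position as the leftmost factor of $H_{\alpha^-}$. The non-triviality is that, because of the recursion $u_{i,j} = \max\{u_{i\pm 1,j-1}\} + 1$, tiles on the diagonals emanating from the new position in $Y_\alpha$ acquire larger labels in $Y_{\alpha^-}$. Each such label shift is implemented by the identity \eqref{huv}, $h_j(u) = h_j(v) + [v-u]/([u][v])$, with the Yang--Baxter equation \eqref{ybe} and commutativity \eqref{commu} then used to reshuffle the resulting ordered product. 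The ``main'' term collects into $H_{\alpha^-}\, \psi^{\rm A}_\Omega$, while the scalar remainders reorganise into the operators $H_{\alpha^{+k}}\, \psi^{\rm A}_\Omega$. The essential property of $\psi^{\rm A}_\Omega$ invoked along the way is the annihilation condition \eqref{eq:hmonpsi2}, which kills factors of the form $h_j(-1)$ that arrive at the right end of the product. The vanishing relation \eqref{eq:hmonpsi} for paths without a local maximum at $i$ is verified analogously: $h_i(-1)$ is propagated through $H_\alpha$ by YBE until it hits $\psi^{\rm A}_\Omega$. The right-boundary equation \eqref{qKZTL_TypeB3} then follows from the manifest evenness of the $\xi^{\rm A}$-factor in Corollary \ref{cor:psi0} in the variables $x_j + \lambda/2$ for $j > n$, combined with YBE compatibility of $H_\alpha$ with the reflection $\pi_N$.

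\textbf{Main obstacle.} The principal difficulty is the combinatorial bookkeeping in the reduction identity. Each application of \eqref{huv} produces a scalar correction that, together with the remaining tile configuration, must be identified as exactly one of the operators $H_{\alpha^{+k}}$ after further YBE reorganisation. The paths $\alpha^{+k}$ correspond precisely to the distinct ways one can propagate the local maximum outward by peeling successive layers of tiles, and the tile-labelling rule is calibrated so that this peeling is tracked by a single cascade of scalar corrections. Making this correspondence explicit --- that is, showing at each stage that the right tiles get their labels bumped and that the Remark \ref{important-remark} exclusion is respected at heights near the baseline --- is the central work of the proof, and the dashed graphical calculus of Section 2 is tailored specifically to render it manageable.
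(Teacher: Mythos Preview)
Your approach is essentially the paper's: verify \eqref{eq:hmonpsi} by YBE-propagating $h_i(-1)$ through $H_\alpha$ onto $\psi^{\rm A}_\Omega$ (with the local-minimum sub-case handled by $h_i(-1)h_i(1)=0$), and verify \eqref{TLei} via exactly the operator identity you isolate, using \eqref{huv}, Yang--Baxter, and the annihilation property \eqref{eq:hmonpsi2}. The paper organises your ``main obstacle'' around a single strip-shift lemma, $H_{i,i+r}(u)\equiv H_{i,i+r}(u+v)+\tfrac{[v]}{[u][u+v]}H_{i+1,i+r}(1)$ modulo annihilated terms (Proposition~\ref{prop:honpsi}), which is then iterated through increasingly complex local configurations (Corollaries~\ref{cor3}, \ref{cor:honpsi2}, Lemma~\ref{lem4}, Proposition~\ref{prop:honpsi3}) to peel off the $\psi_{\alpha^{+k}}$ one at a time; the induction on $|Y_\alpha|$ is not actually needed, since the identity $h_i(1)H_\alpha\psi^{\rm A}_\Omega=H_{\alpha^-}\psi^{\rm A}_\Omega+\sum_k H_{\alpha^{+k}}\psi^{\rm A}_\Omega$ is established directly for every $\alpha$ with a local maximum at $i$, and the boundary equations follow simply because no $h_1$ or $h_{N-1}$ ever appears in $H_\alpha$.
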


The proof of Theorem~\ref{th:facsol} is given in Section~\ref{sec:ProofFacSolA}.
\medskip

\subsubsection{Truncation conditions}
\label{subsec4.1.1}

For $k=1,2,\dots ,\lfloor N/2\rfloor$, let $\beta (k)$ denote the  path of length $N$ which
has only one minimum, occuring at the point $2k-1$, with ${\beta(k)}_{2k-1}=-1$. Note that
$\beta(k)$ is therefore not a Dyck path. The associated Young diagram $Y_{\beta{(k)}}$ is a
$(n-k+1) \times k$ rectangle, $n=\lfloor \tfrac{N+1}{2}\rfloor$. We introduce notation $H^{\rm A}_{k}:=H_{\beta{(k)}}$
for the corresponding factorised operator. An example of $\beta(k)$ and its corresponding
operator $H^{\rm A}_k$ is given
in Figure~\ref{betapath} for $k=3$ and $N=12$.

\begin{figure}[h]
\begin{picture}(0,0)
\qbezier[6](-74,-95)(-74,-92)(-74,-89)
\qbezier[6](-12.5,-108)(-12.5,-105)(-12.5,-102)
\qbezier[6](75,-95)(75,-92)(75,-89)
\put(-77,-103){$\scriptstyle\mathit 0$}
\put(-10,-111){$\scriptstyle\mathit 2k-\mathit 1=\mathit 5$}
\put(56,-103){$\scriptstyle N=\mathit 1\mathit 2$}
\end{picture}
\centerline{\includegraphics[width=150pt]{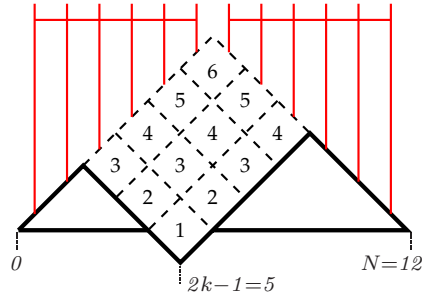}}
\vspace{-2pt}
\caption{The path $\beta(3)$ of the length $N=12$
is drawn in bold. The corresponding factorised operator $H^{\rm A}_3$
is an ordered product of the Baxterised elements $h_i(u_{ij})$ represented by dashed tiles on the
picture. The additional conditions \eqref{alpha_0-add} require that this picture vanishes.}
\label{betapath}
\end{figure}

\begin{prop}
The base coefficient function $\psi_{\Omega}^{\rm A}$
for the solution of the type A  qKZ equation is subject to truncation relations
\be
\lb{alpha_0-add}
H^{\rm A}_{k}\, \psi_{\Omega}^{\rm A} = 0 \quad \forall\, k=1,2,\dots , \lfloor N/2\rfloor.
\ee
Conditions \eqref{psi_alpha_D}, \eqref{H_a}, \eqref{psi_a-A}, \eqref{alpha_0-add}
together are equivalent to the qKZ equations
\eqref{qKZTL_TypeB1}-\eqref{qKZTL_TypeB3} of type A.
\end{prop}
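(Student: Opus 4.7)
The plan is to prove both the truncation relations \eqref{alpha_0-add} and the equivalence statement, proceeding in two directions.

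I would first establish the necessity of the truncation. Assuming $\ket{\Psi}$ solves the qKZ equations, Theorem~\ref{th:facsol} yields $\psi_\alpha = H_\alpha\,\psi_\Omega^{\rm A}$ for every Dyck path $\alpha$. To extract the truncation, apply the recurrence \eqref{TLei} to a Dyck path $\alpha$ possessing a local maximum at some position $i$ of height $\alpha_i = 1$. By Remark~\ref{important-remark}, the component $\psi_{\alpha^-}$ must vanish, since $\alpha^-$ reaches height $-1$ and is not a Dyck path. Choosing $\alpha$ such that $\alpha^-$ coincides with $\beta(k)$, the region between $\alpha^-$ and $\Omega^{\rm A}$ becomes the $(n-k+1)\times k$ rectangle $Y_{\beta(k)}$, and the formally extended factorisation produces $H^{\rm A}_k\,\psi_\Omega^{\rm A}$, which therefore vanishes.

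For the reverse implication, suppose $\psi_\Omega^{\rm A}$ has the form of Corollary~\ref{cor:psi0} and satisfies \eqref{alpha_0-add}. Define $\psi_\alpha := H_\alpha\,\psi_\Omega^{\rm A}$ and verify the qKZ equations via the case analysis of Section~\ref{subsec3.3.1}. For a Dyck path $\alpha$ without a local maximum at $i$, I would derive $h_i(-1)\psi_\alpha = 0$ by pushing $h_i(-1)$ through $H_\alpha$ using the Yang-Baxter equation \eqref{ybe} until it lands adjacent to $\psi_\Omega^{\rm A}$, where the symmetry properties of the $\Delta^\pm$ factors encoded in Corollary~\ref{cor:psi0} (see \eqref{eq:hmonpsi2}) annihilate it. For $\alpha$ with a local maximum at $i$, I would verify the recurrence \eqref{TLei}: expanding $h_i(1)\,H_\alpha$ using \eqref{ybe} together with the telescoping identity \eqref{huv} gives the decomposition $H_{\alpha^-} + \sum_k H_{\alpha^{+k}}$ when $\alpha^-$ is a Dyck path, while when $\alpha_i = 1$, the formally-defined $H_{\alpha^-}$ contains $H^{\rm A}_k$ as a sub-factor for the appropriate $k$, so that the truncation condition \eqref{alpha_0-add} precisely removes the forbidden contribution.

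The main obstacle will be the combinatorial bookkeeping in the sufficiency argument for the local-maximum case: matching the tile-reduction rule on Dyck paths with the algebraic rewriting of $h_i(1)\,H_\alpha$ as a sum of factorised operators, and identifying $H^{\rm A}_k$ as a sub-factor of $H_{\alpha^-}$ in the overflow subcase. The product ordering convention in \eqref{H_a} must either be preserved or explicitly rearranged through the Yang-Baxter braidings, and the identity \eqref{huv} is the essential algebraic mechanism converting adjacent $h_i$-factors into sums involving simpler operators.
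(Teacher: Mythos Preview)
Your proposal is correct and follows essentially the same approach as the paper, which itself gives only a one-sentence sketch deferring to the machinery of Section~\ref{sec:ProofFacSolA}. You have actually filled in more structure than the paper does: you separate necessity from sufficiency, you identify the specific Dyck path $\alpha$ with $\alpha^-=\beta(k)$ that produces each truncation condition, and you correctly locate the role of Remark~\ref{important-remark}.

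One point deserves a caveat. Your sufficiency argument rests on the claim that ``$H_{\alpha^-}$ contains $H^{\rm A}_k$ as a sub-factor'' whenever $\alpha^-$ dips to height $-1$ at position $2k-1$. This is plausible and checks out in small examples: the rectangle $Y_{\beta(k)}$ sits inside $Y_{\alpha^-}$, its bottom corner tile at position $2k-1$ remains a corner of the larger diagram (since $\alpha^-$ and $\beta(k)$ agree at positions $2k-2,\,2k-1,\,2k$), and one can verify that the contents assigned to the rectangle tiles by the rule on page~\pageref{rule-A} are unchanged by the presence of the extra tiles. The extra-tile operators can then be commuted to the left of $H^{\rm A}_k$ using $|i-j|>1$ commutativity. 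But turning this into a clean general argument requires tracking the content assignments carefully, and neither you nor the paper spells this out. The paper simply asserts that ``the techniques described in the proof of Theorem~\ref{th:facsol}'' suffice, which is no more explicit than your sub-factor claim. So this is not a gap relative to the paper's own standard of rigour, but if you want a self-contained proof you should either verify the sub-factor decomposition directly or, alternatively, argue inductively on $|Y_{\alpha^-}\setminus Y_{\beta(k)}|$ using the recursion of Proposition~\ref{prop:honpsi3} applied to non-Dyck paths.
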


\begin{proof}
Using the techniques described in the proof of Theorem~\ref{th:facsol}, see

Section~\ref{sec:ProofFacSolA}, it follows that the relations \eqref{alpha_0-add} ensure the
vanishing of the contributions $\psi_{\alpha^-}$ in \eqref{TLei} if $\alpha^-$ is
not a Dyck path, see Remark \ref{important-remark}. Explicit
examples are given in subsection \ref{separate}.
\end{proof}

For a particular value of the boundary parameter
$\lambda$ a simple polynomial solution of the conditions
\eqref{alpha_0-add} was found in \cite{DF05}:
\begin{prop}
\lb{prop4}
For $\lambda = -3$, the conditions \eqref{alpha_0-add} admit the simple solution $\xi^{\rm A}
=1$. In this case the coeficients of the qKZ equation, when properly normalised, are
polynomials in variables $z_i = q^{x_i}, \;\; i=1,\dots ,N$.
\end{prop}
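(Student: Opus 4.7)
The plan is to verify the two claims of the proposition in turn: the truncation conditions \eqref{alpha_0-add} for $\xi^{\rm A}=1$ at $\lambda=-3$, and the polynomiality of the components $\psi_\alpha$ in the variables $z_i=q^{x_i}$.

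With $\xi^{\rm A}=1$ and $\lambda=-3$, the base function reduces to
\[
\psi_\Omega^{\rm A}\,=\,\Delta_1^-(x_1,\ldots,x_n)\,\Delta_{-1}^+(x_1,\ldots,x_n)\,\Delta_1^-(x_{n+1},\ldots,x_N)\,\Delta_{-2}^+(x_{n+1},\ldots,x_N).
\]
A direct calculation shows that for every $i$ in the bulk of one of the two blocks (i.e.\ $1\le i\le n-1$ or $n+1\le i\le N-1$), one has $s_i\psi_\Omega^{\rm A}=[2]\psi_\Omega^{\rm A}$: indeed $\psi_\Omega^{\rm A}$ factors as $[x_i-x_{i+1}+1]g$ with $g$ symmetric in $x_i,x_{i+1}$, and the identity $[a+1]+[a-1]=[2][a]$ gives $(1-\pi_i)([x_i-x_{i+1}+1]g)=[2][x_i-x_{i+1}]g$, cancelling the $[x_i-x_{i+1}]^{-1}$ in $s_i$. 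By \eqref{huv} this yields
\[
h_i(u)\,\psi_\Omega^{\rm A}\,=\,\frac{[u+1]}{[u]}\,\psi_\Omega^{\rm A}\qquad(i\ne n),
\]
so every bulk factor of $H^{\rm A}_k$ acts diagonally on the base function.

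The main step is then to handle the non-bulk contributions, that is the tiles in $Y_{\beta(k)}$ sitting at horizontal position $i=n$. Using the Yang-Baxter equation \eqref{ybe} together with \eqref{huv} to commute the bulk factors outside the product, one reduces $H^{\rm A}_k\psi_\Omega^{\rm A}$ to a scalar prefactor (produced by the bulk action) times a residual product of $h_n(u_{ij})$-operators acting on $\psi_\Omega^{\rm A}$. The claim is that for $\lambda=-3$ this residual product vanishes: the arguments $u_{ij}$ along the crossing column form a sequence of consecutive integers, and the specialisation $\lambda=-3$ aligns these values with the zeros introduced by the factor $\Delta_{-2}^+(x_{n+1},\ldots,x_N)$, so that a $q$-number in the scalar prefactor becomes identically zero. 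Carrying out this alignment rigorously is the main obstacle of the proof, and the value $\lambda=-3$ enters essentially: for generic $\lambda$ the prefactor would not vanish.

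For polynomiality, observe that each $q$-number $[x_i\pm x_j+c]$ becomes a polynomial in $z_i,z_j$ after multiplication by $z_iz_j$, so $\psi_\Omega^{\rm A}$ itself is polynomial up to an overall monomial in the $z_i$. The operator $h_i(u)$ maps functions $f$ satisfying $f|_{x_i=x_{i+1}}=\pi_if|_{x_i=x_{i+1}}$ to polynomials of the same type, because the apparent pole $[x_i-x_{i+1}]^{-1}$ is cancelled by the vanishing of the $(1-\pi_i)$-combination at $x_i=x_{i+1}$. Since $\psi_\Omega^{\rm A}$ enjoys this symmetry at every adjacent pair, and since by Theorem~\ref{th:facsol} the full factorised solution satisfies the rational system \eqref{qKZTL_TypeB1}--\eqref{qKZTL_TypeB3}, polynomiality propagates from the base to each component via \eqref{qKZ2TL_bulk}--\eqref{TLei}, so every $\psi_\alpha$ is polynomial in the $z_i$ after the same monomial normalisation, matching the solution identified in \cite{DF05}.
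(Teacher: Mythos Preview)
The paper does not itself prove Proposition~\ref{prop4}: it is stated with attribution to \cite{DF05}, so there is no ``paper's own proof'' to compare to. I can therefore only assess whether your argument stands on its own.

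Your setup is correct: the diagonal action $h_i(u)\psi_\Omega^{\rm A}=\tfrac{[u+1]}{[u]}\psi_\Omega^{\rm A}$ for $i\neq n$ is exactly Lemma~\ref{lem:HonDelta}, and your derivation of it via $s_i\psi_\Omega^{\rm A}=[2]\psi_\Omega^{\rm A}$ is fine. The polynomiality argument is also reasonable in outline.

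The genuine gap is the ``main step''. You assert that via the Yang--Baxter equation one can commute all bulk factors of $H_k^{\rm A}$ outside, leaving a scalar prefactor times a residual product of $h_n$-operators. This is not how YBE works: it is a three-term relation that reshuffles adjacent tiles, not a commutation device, and in the rectangle $Y_{\beta(k)}$ the tiles at position $n$ are interleaved with bulk tiles in a way that does not reduce to a clean tensor product of ``scalar part'' and ``$h_n$-part''. Moreover, even granting some reduction, your proposed mechanism for the vanishing (``a $q$-number in the scalar prefactor becomes identically zero'' because of an ``alignment'' with zeros of $\Delta_{-2}^+$) is neither stated precisely nor verified; you explicitly say that carrying this out is ``the main obstacle of the proof''. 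That is not a proof but an outline with the key computation missing.

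The standard route, going back to \cite{DF05}, is not a commutation argument but a recursion/specialisation argument: one shows that under suitable specialisations of adjacent variables (e.g.\ $x_{j+1}=x_j+1$) the base function $\psi_\Omega^{\rm A}$ at size $N$ factorises into the base function at size $N-2$ times explicit prefactors, compatibly with the corresponding recursion for the truncation operators $H_k^{\rm A}$. The value $\lambda=-3$ is precisely what makes these two recursions match, so that the truncation conditions reduce by induction to a trivial base case. If you want to complete your proof you should implement this recursion rather than attempt to commute out bulk factors.
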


Using the factorised formulas we have calculated these solutions for system sizes up to $N=10$.
In the homogeneous limit $x_i \rightarrow 0$ the coefficients
$\psi_\alpha(x_1,\ldots,x_N)$ become polynomials in $\tau=-[2]$. In fact, up to an overall
factor, each $\psi_\alpha$ becomes a polynomial in $\tau^2$ with positive integer
coefficients. These polynomials were considered in \cite{DF07}, where their intriguing
combinatorial content was described. In Appendix~\ref{se:solutions} we present a table of
these polynomials up to $N=10$, and we shall further discuss them in Section 5.
\medskip

\subsection{Type B}\lb{sec4.2}

We will now formulate a factorised solution for type B. This
result was deduced from some exercises we made for small size systems.
As we found these instructive, we have given these in Appendix~\ref{facsolB}.
Analysing the expressions for the coefficient functions in the cases $N=2,3$
we see that in order to write them as a product of the Baxterised elements $h_i(u)$, $i\geq 0$,
we have to fix in a special way the auxiliary parameter
$\nu$ in the definition \eqref{rep-s0} of $h_0(u)$. From now on we therefore specify
\be
\lb{half-tile}
\raisebox{-18pt}{\includegraphics[height=36pt]{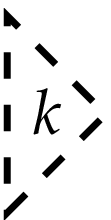}}\, =\, \bar h_0(k)\, :=\,
h_0(k)|_{\nu=\omega+p_k}\, =\,  s_0 -
\frac{\left[\lfloor{k/2}\rfloor\right]
\left[\omega +\lfloor{(k+1)/2}\rfloor\right]}{[k][\omega+1]},
\ee
where $p_k = k \mbox{~mod 2}$. Note that the Baxterised boundary element ${\bar h}_0(u)$ is defined for integer values of its spectral parameter $u\in {\mathbb Z}$ as only such values appear in our considerations.

Now we can repeat the procedure described in the beginning of Section
\ref{sec:facsolA} but with the set of Ballot paths instead of
Dyck paths. For each Ballot path
$\alpha$ we consider its complement to the maximal Ballot path $\Omega^{\rm B}$.
This complement may be thought of as half a transpose symmetric Young diagram, cut along its symmetry
axis. We  fill the complement with (half-)tiles  corresponding to the
(boundary) Baxterised elements $h_i(u_{i,j})$ and ${\bar h}_0(u_{i,j})$
as shown in Figure \ref{fig:qKZsolB}.
\begin{figure}[h]
\centerline{
\begin{picture}(180,190)
\put(0,0){\includegraphics[width=180pt]{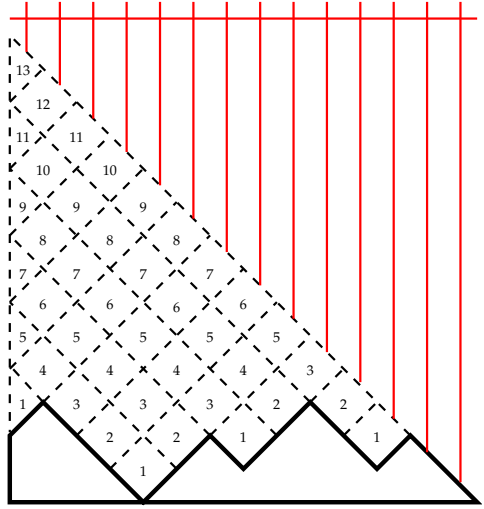}}
\end{picture}
}
\caption{Solution of the qKZ equation of type B.
We use the graphical notation \eqref{h_pic}, \eqref{half-tile} and
\eqref{psi-B}  to represent
the Baxterised elements $h_i(k)$,
the boundary Baxterised elements ${\bar h}_0(k)$
and the  coefficient
$\psi_{\Omega}^{\rm B}$.
}
\label{fig:qKZsolB}
\end{figure}
The rule for assigning integers $u_{i,j}$ to the tiles remains exactly the same as for type A. For the
half-tiles the rule is:
\begin{itemize}
\item put $u_{0,j}=1$ if there is no adjacent tile $h_1$ with the coordinate $(1,j-1)$;
\item otherwise, put  $u_{0,j}=u_{1,j-1}+1$.
\end{itemize}

The operator $H_\alpha$ corresponding to the Ballot path $\alpha$
is given by the same formula \eqref{H_a} as for type A, where now index $i$ may
take also value $0$, thus allowing the boundary operators ${\bar h}_0(u_{0,j})$
enter the product.

\begin{theorem}

\label{th:facsolB}
Let $\alpha$ be a Ballot path of length $N$.
The corresponding coefficient $\psi_\alpha$ of the
solution of type B qKZ equation is given by
\be
\lb{psi_a-B}
\psi_{\alpha} = H_\alpha \psi_{\Omega}^{\rm B},
\ee
where $\psi_{\Omega}^{\rm B}$ is the base function corresponding to
 the maximal Ballot path $\Omega^{\rm B}$
of length $N$ and the factorised operator $H_\alpha$ is defined in (\ref{H_a}), see also Figure \ref{fig:qKZsolB}.
\end{theorem}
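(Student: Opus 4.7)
The plan is to mirror the proof of Theorem~\ref{th:facsol} (Type A), working by induction on the number of (half-)tiles in the complementary diagram $Y_\alpha$ of the Ballot path $\alpha$ inside the maximal Ballot path $\Omega^{\rm B}$. The base case $\alpha=\Omega^{\rm B}$ has $H_\alpha=1$ and $\psi_\alpha=\psi_\Omega^{\rm B}$, which satisfies the ``non-maximum'' equations by construction (Corollary~\ref{cor:psi0B}) and Section~\ref{subsec3.3.2}. Since the solution $\ket\Psi$ of the qKZ equations is uniquely determined by its base component $\psi_\Omega^{\rm B}$ together with the recursion \eqref{TLeiB} at every bulk local maximum and \eqref{TLe0} at every boundary local maximum, it suffices to check that the operator prescription \eqref{H_a}, with the parameter assignment given above, reproduces exactly these recursions.

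For the inductive step, I would pick a corner (half-)tile in $Y_\alpha$ carrying the minimal label $u_{i,j}=1$; by construction this corresponds to the location of a local maximum of $\alpha$. If the corner sits in the bulk (at horizontal position $i\geq 1$), I would write $H_\alpha = H_{\tilde\alpha}'\, h_i(1)$ where $\tilde\alpha$ is the path with that one (half-)tile removed and $H_{\tilde\alpha}'$ is the remaining product of $h_j(u)$ and $\bar h_0(u)$ factors acting above. Applying the functional action of $h_i(1)$ to $\psi_{\tilde\alpha}$ and using the identity \eqref{huv} to split each $h_j(u)$ factor as $h_j(1) + [1-u]/([u][1])$ cascades upward through $H_{\tilde\alpha}'$, generating the sum $\psi_{\alpha^-}+c_0(i)\psi_{\alpha^{+0}}+\sum_k\psi_{\alpha^{+k}}$ of \eqref{TLeiB}. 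The commutation step relies on the Yang-Baxter equation \eqref{ybe}, which, together with the algebraic relation \eqref{huv}, is what converts a single-tile insertion into a sum of paths $\alpha^{+k}$ indexed by the downhill slope profile of $\alpha$ to either side of the maximum, exactly as in the Type A argument of Section~\ref{sec:ProofFacSolA}.

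The new feature in Type B is the boundary. If the corner lies on the left edge (half-tile at $i=0$), the same reduction $H_\alpha = H_{\tilde\alpha}'\, \bar h_0(1)$ applies, but the cascade of \eqref{huv}-identities now has to pass through the dashed reflection equation \eqref{rea_h}, rather than through a bulk Yang-Baxter. Here one uses that $\bar h_0(1)$, which by \eqref{half-tile} equals $s_0-\tfrac{1}{[\omega+1]}\cdot 0 = s_0$ when $k=1$, acts on $\psi_{\tilde\alpha}$ to produce the single term $\psi_{\alpha^{-0}}$ in accordance with \eqref{TLe0}. For a boundary corner with $k>1$, the relevant identity is that $\bar h_0(k)-\bar h_0(k')$ is a scalar (as in \eqref{huv}), allowing one to peel off $\bar h_0(1)$ and transfer the difference onto the adjacent $h_1$ factor via \eqref{rea_h}, producing precisely the correction required at a Type IIa/IIb downhill slope.

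The main obstacle will be verifying that the specific choice $\nu=\omega+p_k$ in \eqref{half-tile} is the one, and essentially the only one, which makes the reflection-equation bookkeeping close up consistently. Concretely, one must show that the parity-dependent coefficient $c_0(i)$ of \eqref{TLeiB}, which equals $1$ for odd $i$ and $-[\omega]/[\omega+1]$ for even $i$, emerges automatically from the $\nu$-dependent prefactor in $\bar h_0(u)$ after pushing it through the correct number of bulk tiles; this is where the contrast between the Type IIa and Type IIb slopes \eqref{eq:DownHillIIa}--\eqref{eq:DownHillIIb} encodes itself into the formula. Once this delicate calculation is performed for a corner half-tile, the general inductive step reduces to the Type A argument applied to $\tilde\alpha$, and the theorem follows. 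The small explicit verifications for $N=2,3$ collected in Appendix~\ref{facsolB} provide the indispensable sanity check that fixes $\nu$ in the first place.
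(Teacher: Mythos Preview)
Your high-level strategy is right---mirror the Type~A argument and handle the left boundary via the reflection equation \eqref{rea_h}---and you correctly note that $\bar h_0(1)=s_0$ gives \eqref{TLe0} on the nose, and that the parity-dependent $c_0(i)$ is where the choice of $\nu$ in \eqref{half-tile} earns its keep.

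But your inductive setup is garbled. A content-$1$ corner tile of $Y_\alpha$ sits at a local \emph{minimum} of $\alpha$, not a maximum: the corners of the complementary region are the cells adjacent to $\alpha$ from above. And in the ordering $\prod^{\nearrow u}$ of \eqref{H_a} the content-$1$ factors are leftmost (applied last to $\psi_\Omega^{\rm B}$), so your factorisation $H_\alpha = H'_{\tilde\alpha}\,h_i(1)$ has $h_i(1)$ on the wrong side. Worse, removing a single corner tile from $Y_\alpha$ shifts the contents of its neighbours, so the remaining product is not $H_{\tilde\alpha}$ in any direct sense; your ``cascade via \eqref{huv}'' is therefore not well-defined as stated.

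The paper does not induct on tiles. For each $\alpha$ with a local maximum at $i$, it computes $h_i(1)H_\alpha\psi_\Omega^{\rm B}$ directly and manipulates the product into the right-hand side of \eqref{TLeiB}. The bulk part is literally Proposition~\ref{prop:honpsi3} from Type~A; the genuinely new boundary case (a minimum at $i-1$ with no lower preceding minima) rests on two identities collected in Lemma~\ref{lem5}: one evaluates $h_i(1)H_{i-1}^0(1)$ and produces exactly the parity factor $c_0(i)$, and the other, \eqref{triangle-up}, raises contents inside a boundary triangle $T_{0,i}(u)$. Multi-minimum configurations are then handled by an iterated peeling procedure with explicit tracking of residual terms. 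Case~1 ($h_i(-1)\psi_\alpha=0$ at non-maxima) also needs its own boundary step, pushing $h_1(-1)$ through via \eqref{rea_h} before it annihilates $\psi_\Omega^{\rm B}$. None of these concrete mechanisms appear in your sketch, and without them the argument does not close.
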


The proof of Theorem~\ref{th:facsolB} is given in Section~\ref{sec:ProofFacSolB}.

\subsubsection{Truncation conditions}
For $k=1,\dots ,n=\lfloor \tfrac{N+1}{2}\rfloor$, let $\gamma{(k)}$ denote the path of the
length $N$ with only one minimum, occuring at
the point
$\, 2k-\epsilon_N-1\,$ with ${\gamma(k)}_{2k-\epsilon_N-1}=-1\,$
(recall that $\epsilon_N= N\bmod 2$). Note that $\gamma(k)$ is therefore not a Ballot path. The
associated half-Young diagram has a shape of trapezium.
We introduce the notation $H^{\rm B}_{k}:=H_{\gamma{(k)}}$ for the corresponding
factorised operator. An example of a path $\gamma{(k)}$ with $N=10$ and $k=3$ is
shown in Figure~\ref{gammapath}.

\begin{figure}[h]
\begin{picture}(0,0)
\qbezier[6](-74,-179)(-74,-176)(-74,-173)
\qbezier[6](0,-194)(0,-191)(0,-188)
\qbezier[6](75,-179)(75,-176)(75,-173)
\put(-77,-190){$\scriptstyle\mathit 0$}
\put(2.5,-194){$\scriptstyle\mathit 2k-\mathit 1=\mathit 5$}
\put(56,-190){$\scriptstyle N=\mathit 1\mathit 0$}
\end{picture}
\centerline{\includegraphics[width=150pt]{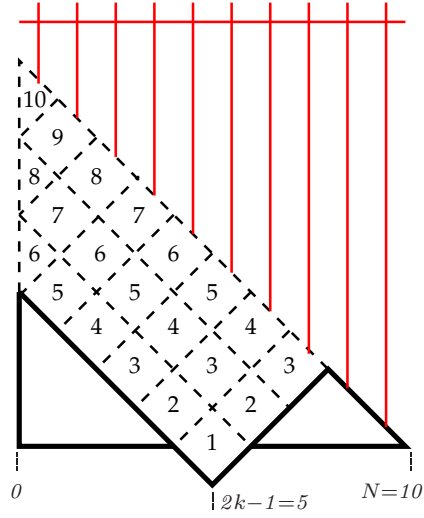}}
\vspace{-2pt}
\caption{The path $\gamma(3)$ of the length $N=10$
is drawn in bold. The corresponding factorised operator $H^{\rm B}_3$
is an ordered product of the Baxterised elements $h_i(u_{ij})$ represented by dashed tiles on the
picture. The additional conditions \eqref{alpha_0-addB} require that this picture vanishes.}
\label{gammapath}
\end{figure}

\begin{prop}
The base coefficient function $\psi_{\Omega}^{\rm B}$
for the solution of the type B qKZ equation satisfies the truncation conditions
\be
\lb{alpha_0-addB}
H^{\rm B}_{k} \,\psi_{\Omega}^{\rm B}\, =\, 0 \quad \forall\, k=1,2,\dots , n.
\ee
Conditions \eqref{psi_alpha_B}, \eqref{psi_a-B} and \eqref{alpha_0-addB}
together are equivalent to the  qKZ equations
\eqref{qKZTL_TypeB1}-\eqref{qKZTL_TypeB3} of type B.
\end{prop}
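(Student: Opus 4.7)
The plan is to mirror the proof of the analogous type A Proposition of Section~\ref{subsec4.1.1}, adapted to incorporate the boundary Baxterised element $\bar h_0(u)$ and the subtleties of the Ballot path recursions. By Theorem~\ref{th:facsolB} we already have $\psi_\alpha = H_\alpha \psi_\Omega^{\rm B}$ for every Ballot path $\alpha$, and by Section~\ref{subsec3.3.2} the type B qKZ equations are equivalent to the shape conditions of Corollary~\ref{cor:psi0B} together with the bulk recursions \eqref{TLeiB} and the boundary recursion \eqref{TLe0}. The only obstruction to the recursion \eqref{TLeiB} being fully consistent on the set $\B_N$ of Ballot paths is that, for certain $\alpha$ with a local maximum at $i$, the candidate path $\alpha^-$ on the right-hand side leaves $\B_N$ because its height drops to $-1$ at some point (cf.\ Remark~\ref{important-remark} and the distinction between downhill slopes of Types I, IIa, IIb in Proposition~\ref{prop:ActionOfGeneratorsB}). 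The truncation conditions \eqref{alpha_0-addB} are precisely what is needed to force the would-be $\psi_{\alpha^-}$ contributions to vanish.

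To derive \eqref{alpha_0-addB}, I would formally extend the tile-assignment algorithm of Section~\ref{sec4.2} from Ballot paths to the distinguished non-Ballot paths $\gamma(k)$. The complement of $\gamma(k)$ inside the triangle of $\Omega^{\rm B}$ is the trapezium depicted in Figure~\ref{gammapath}, and the tile procedure applied to it produces exactly the operator $H^{\rm B}_k := H_{\gamma(k)}$. Since the factorised formula is a formal consequence of the bulk and boundary recursions, but the path $\gamma(k)$ visits the forbidden height $-1$, consistency of the qKZ equation forces $H^{\rm B}_k \psi_\Omega^{\rm B} = 0$.

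For the converse, I would show that whenever a non-Ballot path $\alpha^-$ (with $\alpha^-_j = -1$ for some $j$) arises in the right-hand side of \eqref{TLeiB} or \eqref{TLe0}, the operator $H_{\alpha^-}$ can be rewritten in the form $H' \cdot H^{\rm B}_k$, where $k$ is the unique index with $2k - \epsilon_N - 1 = j$ and $H'$ is some product of Baxterised elements. Graphically this is a \emph{sliding} of the rectangular/trapezoidal dashed block representing $H^{\rm B}_k$ through the remainder of the picture, using the Yang--Baxter relation \eqref{ybe}, the commutation relations \eqref{commu}, and the reflection equation \eqref{rea-h-graphic} for the half-tile $\bar h_0$. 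Once this rewriting is established, the truncation conditions kill $H_{\alpha^-}\psi_\Omega^{\rm B}$ and the offending terms in the recursion disappear. Together with Theorem~\ref{th:facsolB} and Corollary~\ref{cor:psi0B}, this yields the claimed equivalence in both directions.

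The main obstacle I anticipate lies in this last step: verifying the slide move $H_{\alpha^-} = H' \cdot H^{\rm B}_k$ uniformly across the three bulk downhill types and the boundary case requires careful bookkeeping of the integer labels $u_{i,j}$ under the Yang--Baxter and reflection rewrites. The delicate ingredient is the reflection equation \eqref{rea_h} for $\bar h_0$: it is only with the particular tuning $\nu = \omega + p_k$ of \eqref{half-tile} that the parameters match correctly when the block is pushed against the boundary line. Thus the very reason the auxiliary parameter $\nu$ was fixed as in \eqref{half-tile} is what makes the truncation argument go through on the boundary side, and checking this compatibility is the technical heart of the proof.
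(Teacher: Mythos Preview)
Your proposal is correct and follows the same route as the paper's own argument, which is extremely brief: the paper simply states that, just as for type A, the relations \eqref{alpha_0-addB} ensure the vanishing of the spurious $\psi_{\alpha^-}$ in \eqref{TLeiB} whenever $\alpha^-$ fails to be a Ballot path, and points to the techniques of Section~\ref{sec:ProofFacSolB} and the worked examples in Appendix~\ref{facsolB}. Your elaboration via the sliding/factorisation $H_{\alpha^-}\rightsquigarrow H'\cdot H^{\rm B}_k$ through Yang--Baxter and reflection moves is exactly the content hidden behind that one-line reference; note only that the identity holds modulo the ideals $A_{i,j}$ (which annihilate $\psi_\Omega^{\rm B}$), not as a literal product in the Hecke algebra, since the integer contents $u_{i,j}$ on the trapezium depend on the ambient complement.
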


\begin{proof}
Just as for type A, the relations \eqref{alpha_0-addB} ensure
the absence of $\psi_{\alpha^-}$ in \eqref{TLeiB} if $\alpha^-$ is not a Ballot path.
Explicit examples are considered
in  Appendix \ref{facsolB}.
\end{proof}

For particular values of the boundary parameter
$\lambda$ and the algebra parameter $\omega$
the simplest polynomial solution of the conditions
\eqref{alpha_0-addB} was found in \cite{{ZJ07}}:
\begin{prop}
\lb{prop6}
In case $\lambda = -3/2$ and $\omega =-1/2$ the conditions \eqref{alpha_0-addB}
admit the simple solution $\xi^{\rm B} =1$. In this case the coeficients of the qKZ equation,
when properly normalised,
become polynomials in variables $z_i = q^{x_i}, \;\; i=1,\dots ,N$.
\end{prop}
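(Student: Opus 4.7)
The plan is to verify the truncation conditions \eqref{alpha_0-addB} directly by specialising $\xi^{\rm B} = 1$ and the parameters. With $\lambda = -3/2$ and $\omega = -1/2$, the base function from Corollary~\ref{cor:psi0B} reduces to
\[
\psi_{\Omega}^{\rm B}(x_1,\ldots,x_N) = \Delta^-_1(x_1,\ldots,x_N)\,\Delta^+_{-1/2}(x_1,\ldots,x_N),
\]
which, up to an overall scalar, is a polynomial in $z_i = q^{x_i}$. The task is to show that the trapezium-shaped factorised operator $H^{\rm B}_k$ annihilates this explicit polynomial for each $k = 1,\ldots, n$.

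First I would use the Yang-Baxter equation \eqref{ybe} and the reflection equation \eqref{rea-h-graphic} to rearrange $H^{\rm B}_k$ so that a distinguished factor can be extracted either to the innermost position or adjacent to the boundary column. The key feature of the parameter choice is the coincidence $\lambda+1 = \omega = -1/2$, which aligns the shift structure of the factor $\Delta^+_{-1/2}$ with that of the boundary element $\bar h_0$: in the form \eqref{half-tile} with $\omega = -1/2$, the scalar correction inside $\bar h_0(k)$ becomes tied to the zeros of the boundary polynomial in $x_1$.

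For each $k$ I would identify a factor in the ordered product $H^{\rm B}_k$ that, when brought into position, produces zero on $\psi_{\Omega}^{\rm B}$. Two mechanisms are available: an interior factor $h_i(u)$ becomes proportional to an antisymmetriser $-a_i$ at $u=-1$, killing any function whose dependence on $(x_i, x_{i+1})$ is symmetric modulo $[x_i - x_{i+1} - 1]$; and the boundary factor $\bar h_0(u)$ at the tuned value picks up a zero from the $\Delta^+_{-1/2}$ prefactor. The base function with $\xi^{\rm B} = 1$ carries precisely the matching symmetries, and the integer assignment rule for the $u_{i,j}$ guarantees that at least one such zero-producing factor occurs inside $H^{\rm B}_k$.

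The main obstacle I anticipate is the combinatorial bookkeeping: each application of the identity \eqref{huv} and each Yang-Baxter or reflection move generates scalar corrections, and one must verify that only the single vanishing term survives while all scalar tails either cancel or reassemble into a lower-order instance of the same truncation. A practical route is induction on $k$ (equivalently on the width of the trapezium associated to $\gamma(k)$), with the small cases $N=2,3$ worked out in Appendix~\ref{facsolB} serving as the base, and a single braid/reflection reduction step effecting the passage from $k$ to $k-1$. This mirrors the structure of the proof of Theorem~\ref{th:facsolB} in Section~\ref{sec:ProofFacSolB}, which supplies the required rearrangement lemmas.
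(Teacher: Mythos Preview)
The paper does not actually prove this proposition; it attributes the result to Zinn-Justin \cite{ZJ07} and states it without argument. So there is no in-paper proof to compare against.

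That said, your proposal has a genuine gap in the mechanism you invoke. You claim the vanishing will come from ``an interior factor $h_i(u)$ becom[ing] proportional to an antisymmetriser $-a_i$ at $u=-1$'' or from ``the boundary factor $\bar h_0(u)$ at the tuned value pick[ing] up a zero.'' But look at the construction of $H^{\rm B}_k$: the integers $u_{i,j}$ filling the trapezium of $\gamma(k)$ are all \emph{positive} (they start at $1$ at the corners and increase inward; see Figure~\ref{gammapath}). No factor $h_i(-1)$ is present, and Yang-Baxter or reflection moves preserve the set of spectral parameters, so you cannot manufacture a $-1$ by rearrangement alone. Likewise the boundary half-tiles $\bar h_0(k)$ with $k\ge 1$ are not zero operators at $\omega=-1/2$; they are generic invertible combinations of $s_0$ and a scalar. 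Your assertion that ``the integer assignment rule for the $u_{i,j}$ guarantees that at least one such zero-producing factor occurs'' is therefore unsupported.

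The point of the proposition is precisely that the vanishing is \emph{not} structural in $H^{\rm B}_k$ but depends on the specific form of $\psi_\Omega^{\rm B}$ at the tuned parameters. A viable route (and the one taken in \cite{ZJ07}) is different in character: one shows that the explicit product $\Delta_1^-\,\Delta_{-1/2}^+$ satisfies wheel-type vanishing conditions under suitable specialisations of the $x_i$, and that these specialisations are exactly what $H^{\rm B}_k$ probes; alternatively one uses a degree bound plus a recursion in $N$ to pin down $\xi^{\rm B}$ uniquely. Your inductive scheme on $k$ could be made to work, but only after you replace the nonexistent ``annihilating factor'' with an honest computation exploiting the coincidence $\lambda+1=\omega$ at the level of the function $\psi_\Omega^{\rm B}$, not at the level of the operator.
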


\subsection{Separation of truncation conditions}
\label{separate}
\subsubsection{Type A}
The equations \eqref{alpha_0-add} impose restrictions on the otherwise arbitrary
symmetric functions $\xi^{\rm A}$ of the ansatz \eqref{psi_alpha_D}.
Based on experience with calculations for small $N$ we observe that
these restrictions can be written in a more explicit way.
Namely, one can separate the functional part (depending on variables $x_i$)
and the permutation part (depending on the permutations $\pi_i$)
in the operators $H^{\rm A}_{k}$ in \eqref{alpha_0-add}.

To formulate this separation in a precise way, let us first define the following set of
Baxterised elements in the group algebra of the symmetric group ${\mathbb C}[S_{N}]\simeq \H^{\rm A}_{N}(1)$:
\[
\pi_i(u) := u^{-1} - \pi_i \quad i=1,\dots ,N-1.
\]
Denote furthermore by $\Pi^{\rm A}_{k}$ the permutation operators obtained from the
operators $H^{\rm A}_{k}$  by substituting
$h_i(u_{i,j})\mapsto \pi_i(u_{i,j})$.

\begin{conj}\lb{prop3}
The left hand side of the truncation condition \eqref{alpha_0-add} for Type A can be written in the
following separated form (remind that $n =\lfloor \tfrac{N+1}{2}\rfloor $)
\be
\lb{alpha_0-add-red}
H^{\rm A}_{k} \psi_{\Omega}^{\rm A} = \frac{\Delta^-_1(x_{k},\dots ,
x_{n+k})}{\Delta^-_0(x_{k},\dots ,
x_{n+k})}\, \Pi^{\rm A}_{k}\,
\frac{\Delta^-_0(x_{k},\dots ,
x_{n})\Delta^-_0(x_{n+1},\dots ,
x_{n +k})}{\Delta^-_1(x_{k},\dots ,x_{n})\Delta^-_1(x_{n+1},\dots ,
x_{n+k})}\,\psi_{\Omega}^{\rm A}.
\ee
\end{conj}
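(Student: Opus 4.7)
The plan is to prove the identity by processing the tiles in $H^{\rm A}_k$ one at a time, converting each $h_i(u_{ij})$ into a conjugate of $\pi_i(u_{ij})$ by rational factors, and then showing that the accumulated factors telescope into the claimed ratio of $\Delta^-_1$ and $\Delta^-_0$ products. Concretely, I would use the explicit representation \eqref{h-rep}
\[
h_i(u) = \frac{[x_i-x_{i+1}+u]}{[u][x_i-x_{i+1}]} - \frac{[x_i-x_{i+1}+1]}{[x_i-x_{i+1}]}\pi_i,
\]
and compare it with $\pi_i(u) = u^{-1} - \pi_i$. Since $H^{\rm A}_k$ corresponds to the rectangular Young diagram $(n-k+1)\times k$ for the path $\beta(k)$, and its tiles are ordered by non-decreasing $u_{ij}$, one can hope to peel off a tile from the right (largest $u$) and reduce to a smaller rectangle via induction.

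The heart of the argument should be an elementary local identity of the form
\[
h_i(u)\cdot\frac{[x_i-x_{i+1}]}{[x_i-x_{i+1}+1]}\cdot F(x) \;=\; \frac{[x_i-x_{i+1}+1]}{[x_i-x_{i+1}]}\cdot \pi_i(u)\cdot F(x) + (\text{correction}),
\]
where the correction vanishes once one takes into account the particular symmetries of $\psi^{\rm A}_\Omega$ and the factorised structure of $\Delta^\pm_\mu$. The correction terms arise because the bare conjugation is not exact---a direct check already shows a discrepancy in the coefficient of $\pi_i$ of order $[x_i-x_{i+1}\pm1]$---so the identity can only hold after the prefactor is multiplied by further factors supplied by the $\Delta^-_1$ products appearing in $\psi^{\rm A}_\Omega$ (see Corollary~\ref{cor:psi0}). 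Once this local rewriting is established, I would combine neighbouring tiles row by row. Within one row, the $h_i(u_{ij})$ at common height $j$ commute and have identical spectral arguments; across rows, consecutive tiles of heights $j$ and $j+1$ interact via Yang--Baxter, which on both the $h$-side and the $\pi$-side takes the same shape \eqref{ybe}, making the passage from one to the other compatible with the reordering.

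The main obstacle is the bookkeeping. When a factor such as $[x_i-x_{i+1}+1]/[x_i-x_{i+1}]$ is moved past an operator $\pi_j(u_{j,\ell})$ with $j\neq i$, one picks up shifts of the variables that a priori spoil the form of the target ratios $\Delta^-_1/\Delta^-_0$. The conjecture asserts a very clean cancellation across the whole tableau of width $n+1-k$ in both the left group $x_k,\dots,x_n$ and the right group $x_{n+1},\dots,x_{n+k}$, split at position $n$, which strongly suggests a hidden organising principle---possibly a factorisation of $H^{\rm A}_k$ in terms of Baxterised Demazurre--Lusztig operators and a dual factorisation of $\Pi^{\rm A}_k$ that are intertwined by the very factor $\Delta^-_1(x_k,\dots,x_{n+k})/\Delta^-_0(x_k,\dots,x_{n+k})$. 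I would therefore try, as an alternative to the tile-by-tile reduction, to exhibit such an explicit intertwiner identity at the level of the full operator, checking it first in the smallest cases $N=2,3,4$ (which reproduce the verification done in Section~\ref{separate}) and then attempting an induction on $k$, reducing $H^{\rm A}_k$ to $H^{\rm A}_{k-1}$ multiplied by the outermost layer of tiles. The difficulty of controlling this outer layer and matching it against the change in the $\Delta^-$-prefactors between the $k$ and $k-1$ identities is, I expect, the decisive step that has so far kept the statement at the level of a conjecture.
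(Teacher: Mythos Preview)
The statement you are addressing is presented in the paper as a \emph{conjecture} (Conjecture~\ref{prop3}), not as a proven result. The paper offers no proof; it only illustrates the claim via explicit checks in the $N=5$ example and records the related observation (Proposition~\ref{prop-another}) about the form of $\Pi^{\rm A}_k$. So there is nothing on the paper's side to compare your argument against.

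Your proposal is accordingly not a proof but a proof \emph{plan}, and you yourself acknowledge as much in the final paragraph. The two strategies you sketch---tile-by-tile conjugation with telescoping of $[x_i-x_{i+1}+1]/[x_i-x_{i+1}]$ factors, and an intertwiner identity at the level of the full rectangular operator with induction on $k$---are reasonable starting points and are in the spirit of the computations the paper does elsewhere (e.g.\ Lemmas~\ref{le:Hshift} and~\ref{lem:HonDelta}). But you have not closed either route: the ``correction'' term in your local identity is left unspecified, and the commutation of rational prefactors past $\pi_j(u)$ is precisely where the difficulty lies, as you note. Until that bookkeeping is actually carried out and the cancellations exhibited, what you have is a heuristic, not a proof---which is exactly the status the paper assigns to the statement.
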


\begin{example}{\rm
In the case $N=5$
there are two truncation conditions on the base function $\psi_{\Omega}^{\rm A}$:
\begin{itemize}
\item{Condition \eqref{alpha_0-add-red} for $k=1$ reads}
\[
h_1(1) h_2(2) h_3(3) \psi_{\Omega}^{\rm A} =
\frac{\Delta^-_1(x_1,\dots ,x_4)}{\Delta^-_0(x_1,\dots ,x_4)}\, \Pi^{\rm A}_{1}
\frac{\Delta^-_0(x_1,x_2,x_3)}{\Delta^-_1(x_1,x_2,x_3)}\psi_{\Omega}^{\rm A} = 0,
\]
or, in terms of $\xi^{\rm A}$
\ba
\nonumber
\Pi_{1}^{\rm A}\Bigl\{\Delta^-_0(x_1,x_2,x_3) \Delta^+_{-1}(x_1,x_2,x_3)
\Delta^-_1(x_4,x_5)\hspace{47mm}
\\[-2mm]
\label{eshe}\hspace{50mm}
\Delta^+_{\lambda +1}(x_4,x_5)
\xi^{\rm A}(x_1,x_2,x_3|x_4+\tfrac{\lambda}{2},x_5+\tfrac{\lambda}{2})\Bigr\}\, =\, 0,
\ea
where $\Pi^{\rm A}_{1}:= (1-\pi_1)(\tfrac{1}{2}-\pi_2)(\tfrac{1}{3}-\pi_3)$. Since the function in braces in
\eqref{eshe} is antisymmetric in the variables $x_1, x_2, x_3,$
the operator $\Pi_1^{\rm A}$
in this formula can be equivalently substituted by
\be
\lb{ex3}
\Pi^{\rm A}_{1}\; \mapsto \; (1-\pi_3+\pi_2\pi_3-\pi_1\pi_2\pi_3).
\ee
\item{Condition \eqref{alpha_0-add-red} for $k=2$ reads}
\[
\hspace{6mm}
h_3(1) h_2(2) h_4(2) h_3(3) \psi_{\Omega}^{\rm A} =
\frac{\Delta^-_1(x_2,\dots ,x_5)}{\Delta^-_0(x_2,\dots ,x_5)}\, \Pi^{\rm A}_{2}
\frac{\Delta^-_0(x_2,x_3)\Delta^-_0(x_4,x_5)}{\Delta^-_1(x_2,x_3)\Delta^-_1 (x_4,x_5)}\psi_{\Omega}^{\rm A} = 0.
\]
or, in terms of $\xi^{\rm A}$
\ba
\nonumber
\Pi_{2}^{\rm A}\Bigl\{\Delta^-_0(x_1+1, x_2,x_3) \Delta^+_{-1}(x_1,x_2,x_3)
\Delta^-_0(x_4,x_5)\hspace{47mm}
\\[-2mm]
\nonumber\hspace{62mm}
\Delta^+_{\lambda +1}(x_4,x_5)\xi^{\rm A}(x_1,x_2,x_3|x_4+\tfrac{\lambda}{2},x_5+\tfrac{\lambda}{2})\Bigr\}
\, =\, 0.
\ea
The operator
$
\Pi^{\rm A}_{2}:= (1-\pi_3)(\tfrac{1}{2}-\pi_2)(\tfrac{1}{2}-\pi_4)(\tfrac{1}{3}-\pi_3)
$
in this formula can be equivalently substituted by
\be
\lb{ex4}
\Pi^{\rm A}_{2}\; \mapsto\;
(1-\pi_3+\pi_2\pi_3+\pi_4\pi_3-\pi_2\pi_4\pi_3+\pi_3\pi_2\pi_4\pi_3).
\ee
\end{itemize}
Formulas \eqref{ex3}, \eqref{ex4} suggest the following proposition
\begin{prop}\lb{prop-another}
In condition \eqref{alpha_0-add-red}
one can substitute the operator $\Pi^{\rm A}_{k}$
by a polynomial in the permutations $(-\pi_i)$, $i=k,\dots ,n+k-1$, with unit coefficients.
The terms of the polynomial are labeled by the sub-diagrams of the
rectangular Young diagram $\{k^{(n-k+1)}\}$ corresponding to the path $\beta(k)$, see subsection \ref{subsec4.1.1}.
Their form is given by formula  \eqref{H_a}, where one has to substitute the factors $h_i(u_{ij})$
by $-\pi_i$.
\end{prop}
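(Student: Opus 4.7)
The plan is to verify the identity $\Pi^{\rm A}_k f = \widetilde{\Pi}^{\rm A}_k f$, where $f$ denotes the expression inside the braces on the right-hand side of \eqref{alpha_0-add-red} and $\widetilde{\Pi}^{\rm A}_k$ denotes the proposed polynomial in $(-\pi_i)$ indexed by sub-diagrams of the rectangular Young diagram $Y=\{k^{(n-k+1)}\}$. After cancellation of the factors $\Delta^-_1(x_k,\ldots,x_n)$ and $\Delta^-_1(x_{n+1},\ldots,x_{n+k})$ in \eqref{alpha_0-add-red} against the corresponding factors in $\psi^{\rm A}_{\Omega}$, the function $f$ carries the antisymmetric prefactors $\Delta^-_0(x_k,\ldots,x_n)$ and $\Delta^-_0(x_{n+1},\ldots,x_{n+k})$, while the remaining factors are symmetric within each group. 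Hence $f$ is separately antisymmetric in the variables $(x_k,\ldots,x_n)$ and in $(x_{n+1},\ldots,x_{n+k})$, so that $\pi_i f=-f$ for every index $i$ internal to either of these two groups.

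First I would expand, using $\pi_i(u)=u^{-1}-\pi_i$,
\begin{equation*}
\Pi^{\rm A}_k \;=\; \sum_{S\subseteq Y}\Bigl(\prod_{(i,j)\notin S}u_{ij}^{-1}\Bigr)\widetilde{\sigma}_S,\qquad \widetilde{\sigma}_S \;:=\; \prod_{(i,j)\in S}^{\nearrow u}(-\pi_i),
\end{equation*}
summed over all subsets of tiles of $Y$. The proposed operator is $\widetilde{\Pi}^{\rm A}_k=\sum_\mu \widetilde{\sigma}_\mu$ with the sum restricted to sub-diagrams $\mu\subseteq Y$. Next I would classify every subset $S$ by the largest sub-diagram $\mu(S)\subseteq S$ such that $\widetilde{\sigma}_S f$ reduces, modulo the antisymmetry of $f$ and the braid relations in $S_N$, to a scalar multiple $\varepsilon_S\,\widetilde{\sigma}_{\mu(S)}f$ with $\varepsilon_S=\pm 1$. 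Each tile $(i,j)\in S\setminus\mu(S)$ corresponds to a generator $\pi_i$ whose index is internal to one of the two antisymmetric groups and which, by commutativity and the braid relation, can be pushed so that it acts directly on $f$, contributing a factor $-1$; iterating this procedure produces $\mu(S)$ canonically.

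The final step is to verify, for every sub-diagram $\mu\subseteq Y$, the coefficient identity
\begin{equation*}
\sum_{S\,:\,\mu(S)=\mu}\varepsilon_S\prod_{(i,j)\notin S}u_{ij}^{-1}\;=\;1,
\end{equation*}
which then guarantees that the coefficient of $\widetilde{\sigma}_\mu f$ in $\Pi^{\rm A}_k f$ equals the unit coefficient dictated by $\widetilde{\Pi}^{\rm A}_k$. This combinatorial identity on the integer weights $u_{ij}$ is the main obstacle. I would attack it by induction on the number of columns of $Y$, peeling off a corner tile (which always carries $u_{ij}=1$ by the filling rule of Section~\ref{sec:facsolA}) and applying the scalar-shift identity $\pi_i(u)=\pi_i(v)+(u^{-1}-v^{-1})$ to reduce the remaining product to a strictly smaller rectangle, where the induction hypothesis applies. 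The two explicit computations \eqref{ex3} and \eqref{ex4} in Example~1 serve as base cases and calibrate the definitions of $\mu(S)$ and $\varepsilon_S$, confirming that the bookkeeping of signs and scalar weights proceeds as prescribed.
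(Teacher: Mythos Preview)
The paper does not actually supply a proof of this proposition: it is stated inside Example~1, immediately after the motivating computations \eqref{ex3} and \eqref{ex4}, with the words ``Formulas \eqref{ex3}, \eqref{ex4} suggest the following proposition'', and no argument follows before the text turns to type~B. So there is no paper's proof to compare against; your sketch must be judged on its own merits.

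Your structural insight is correct: the function $f$ in braces is separately antisymmetric in $(x_k,\ldots,x_n)$ and in $(x_{n+1},\ldots,x_{n+k})$, so every generator $\pi_i$ with $i\neq n$ acts on $f$ as $-1$, and consequently each $\widetilde{\sigma}_S f$ must land in the span of the $\widetilde{\sigma}_\mu f$ for sub-diagrams $\mu$. But two steps are genuinely missing.

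First, your map $S\mapsto\mu(S)$ is defined circularly (``the largest sub-diagram such that $\widetilde{\sigma}_S f$ reduces to $\varepsilon_S\widetilde{\sigma}_{\mu(S)}f$''). The clean definition is via cosets: $\widetilde{\sigma}_S$ evaluates to $(-1)^{|S|}w_S$ for some permutation $w_S$, and $\mu(S)$ is the sub-diagram indexing the minimal-length representative of the coset $w_S W_0$, where $W_0$ is the Young subgroup generated by $\{\pi_i:i\neq n\}$. The ``pushing'' you describe is then the factorisation $w_S=w_{\mu(S)}w_0$ with $w_0\in W_0$, and $\varepsilon_S=(-1)^{|S|-|\mu(S)|}\operatorname{sign}(w_0)$. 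Without this precise formulation the braid-move reduction is not well-posed, and you have not argued that it terminates canonically.

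Second, and more seriously, the coefficient identity
\[
\sum_{S:\,\mu(S)=\mu}\varepsilon_S\prod_{(i,j)\notin S}u_{ij}^{-1}=1
\]
is the entire content of the proposition, and your induction does not establish it. Peeling a corner tile of content~$1$ and applying the scalar shift $\pi_i(u)=\pi_i(v)+(u^{-1}-v^{-1})$ does not reduce the problem to a smaller \emph{rectangle}: it produces a staircase-shaped region whose content filling is shifted, together with cross-terms whose fibres over~$\mu$ must be tracked separately. You give no mechanism for handling these, and the two worked cases \eqref{ex3}, \eqref{ex4} (a $3\times 1$ and a $2\times 2$ rectangle) do not determine the inductive step for general $(n-k+1)\times k$.
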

}
\end{example}

\subsubsection{Type B}
In this case we found analogues of the expressions \eqref{alpha_0-add-red} for particular
truncation conditions \eqref{alpha_0-addB} only.

Let us supplement the set of Baxterised elements
$\pi_i(u)$ with the boundary Baxterised element
\[
\pi_0(u) := u^{-1} - \pi_0.
\]
The elements $\pi_i(u)$, $i=0,1$ satisfy a reflection equation of the form
\eqref{rea_h}.\smallskip

Denote by $\Pi^{\rm B}_{n}$
the operator obtained from  $H^{\rm B}_{n}$  by the substitutions
\[
{\bar h}_0(u_{i,j})\mapsto \pi_0(1),\quad
h_i(u_{i,j})\mapsto \pi_i(u_{i,j})\;\;\forall\,i\geq 1 .
\]

\begin{conj}\lb{conj2b}
The left hand side of the condition  \eqref{alpha_0-addB} for $k=n$ and arbitrary $N$ can be transformed to
\ba
\nonumber
H^{\rm B}_{n}\, \psi_{\Omega}^{\rm B}\, =\, \Bigl\{\frac{\Theta(x_1,\dots ,x_N)\Delta^-_1(x_1,\dots ,x_N)
}{\Delta^-_0(x_1,\dots ,x_N)}\,
\Pi^{\rm B}_{n}\,
\frac{\Delta^-_0(x_{2},\dots ,x_{N})}{\Theta(x_2,\dots ,x_N)
\Delta^-_1(x_{2},\dots ,x_{N})}\, -\,
\\[2mm]
\frac{[\lfloor N/2\rfloor][\omega+n]}{[\omega+1]}
\Bigr\}\psi_{\Omega}^{\rm B}\, ,\hspace{50mm}
\nonumber
\ea
where $n =\lfloor\tfrac{N+1}{2}\rfloor $ and
$\Theta(x_i,\ldots,x_j) := \prod_{i\leq p\leq j} \frac{k(x_p ,\,\delta)}{[2x_p][\omega+1]}\,$.
\end{conj}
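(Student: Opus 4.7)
The plan is to mirror the approach suggested by Conjecture~\ref{prop3} for type A, extending it to account for the boundary half-tiles. I would first pass to the functional representation in which $h_i(u)$ acts as in \eqref{h-rep} and $\bar h_0(k)$ as in \eqref{rep-s0} with $\nu=\omega+p_k$ prescribed by \eqref{half-tile}. The goal is then to show that, after applying $H^{\rm B}_n$ to $\psi^{\rm B}_\Omega=\Delta^-_1(x_1,\ldots,x_N)\Delta^+_{\lambda+1}(x_1,\ldots,x_N)\xi^{\rm B}$, the result agrees with the sandwich on the right-hand side up to the explicit scalar correction.

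The engine in the bulk is the same conjugation observation that motivates Conjecture~\ref{prop3}: the Baxterised operator $h_i(u)$ can be transformed into the plain Baxterised permutation $\pi_i(u)=u^{-1}-\pi_i$ once a compatible antisymmetric factor is present in the wavefunction. Concretely, the right-hand $\Delta^-_0/\Delta^-_1$ ratio replaces the $\Delta^-_1$ factor of $\psi^{\rm B}_\Omega$ by the antisymmetric $\Delta^-_0$, on which $\pi_i$ acts by a sign change on the adjacent variables. Peeling the tiles of the trapezium $\gamma(n)$ from the outside inward, this conjugation converts each $h_i(u_{ij})$ in $H^{\rm B}_n$ into $\pi_i(u_{ij})$, generating the ratio $\Delta^-_1/\Delta^-_0$ on the left and its inverse on the right. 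Independence of the peeling order follows from the Yang-Baxter equation \eqref{ybe}, satisfied by both $h_i$ and $\pi_i$.

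For the boundary column of $\gamma(n)$, using $s_0=\frac{k(x_1,\delta)}{[2x_1][\omega+1]}(1-\pi_0)$ I would split each half-tile as
\[
\bar h_0(k)\;=\;\frac{k(x_1,\delta)}{[2x_1][\omega+1]}\,\pi_0(1)\;-\;\frac{[\lfloor k/2\rfloor]\,[\omega+\lfloor(k+1)/2\rfloor]}{[k][\omega+1]}.
\]
The reflective piece should conjugate, in analogy with the bulk identity, into $\pi_0(1)$ via conjugation by $\Theta$; this generates $\Theta(x_1,\ldots,x_N)$ on the left and $\Theta(x_2,\ldots,x_N)^{-1}$ on the right, with the asymmetry between the two products encoding the $x_1$-mass absorbed by the boundary reflection. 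The scalar pieces contribute a sum along the boundary column which, by an elementary $q$-number identity over $k=1,\ldots,\lfloor N/2\rfloor$, should assemble into the single scalar $-[\lfloor N/2\rfloor][\omega+n]/[\omega+1]$.

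The main obstacle, and likely the reason this is stated as a conjecture rather than a theorem, is controlling the interaction between the scalar pieces of the $\bar h_0(k)$ and the bulk Baxterised operators. Each scalar piece, when commuted past a bulk $h_i(u_{ij})$, is reweighted and may acquire additional terms through the reflection equation \eqref{rea_h}; that the final answer is an $x$-independent scalar is a strong consistency check. I would attempt the proof by induction on $N$, peeling off $\bar h_0(\lfloor N/2\rfloor)$ last and using the factorised form of Theorem~\ref{th:facsolB} to reduce the size-$N$ case to size $N-2$, thereby isolating at each step the single new scalar contribution that distinguishes successive ranks.
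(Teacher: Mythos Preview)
The paper does not prove this statement. It is explicitly stated as a \emph{Conjecture}, obtained empirically from small-$N$ calculations (see the opening line of Section~\ref{separate}: ``Based on experience with calculations for small $N$\ldots''), and no argument is given anywhere in the paper. Moreover, the type-A analogue (Conjecture~\ref{prop3}) that you take as your template is itself left unproven. There is therefore no paper proof to compare your proposal against; you are sketching an attack on an open problem.

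As a sketch, your proposal identifies the natural structural ingredients: the $\Delta^-_1/\Delta^-_0$ conjugation that is meant to trade each bulk $h_i(u)$ for $\pi_i(u)$, and a $\Theta$-conjugation intended to handle the reflective part of each $\bar h_0(k)$. But there is a genuine gap in your treatment of the boundary column. You propose to write each $\bar h_0(k)$ as a $\pi_0(1)$-part plus a scalar and then claim the scalars ``assemble'' into the single correction $-[\lfloor N/2\rfloor][\omega+n]/[\omega+1]$. This cannot work as stated: the half-tiles in $H^{\rm B}_n$ are not adjacent---the trapezium $\gamma(n)$ has $n$ half-tiles at different heights, separated by full rows of bulk $h_i$'s---so the scalar coming from one $\bar h_0(k)$ does not simply add to the scalar from another; it is multiplied by all intervening bulk and boundary operators. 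Expanding the product of $n$ binomials (reflective${}+{}$scalar) produces $2^n$ cross-terms, and you give no mechanism for why all but two of them vanish or recombine into a single $x$-independent constant. Your inductive remark at the end is likewise underspecified: peeling off the outermost half-tile does not obviously reduce $H^{\rm B}_n$ at size $N$ to $H^{\rm B}_{n-1}$ at size $N-2$, since the trapezium for $\gamma(n)$ is not the trapezium for $\gamma(n-1)$ with one extra layer glued on.
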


For $N$ odd denote by $\Pi^{\rm B}_{1}$ the operator obtained from
$H^{\rm B}_{1}$  by the substitutions
\[
{\bar h}_0(u_{i,j})\mapsto \pi_0(u_{i,j}),\quad
h_i(u_{i,j})\mapsto \pi_i(u_{i,j})\;\;\forall\,i\geq 1 .
\]

\begin{conj}\lb{conj2}
The left hand side of the condition \eqref{alpha_0-addB} for $k=1$ and $N$ odd can be written
in the following separated form
\be
\lb{add-redB1}
H^{\rm B}_{1} \psi_{\Omega}^{\rm B} = \Theta(x_1,\dots ,x_n)\frac{\Delta^-_1(x_1,\dots ,x_n)
\Delta^+_{-1}(x_1,\dots ,x_n)}{\Delta^-_0(x_1,\dots ,x_n)\Delta^+_0(x_1,\dots ,x_n)}\,
\Pi^{\rm B}_{1}\,
\frac{\Delta^{-}_0 (x_{1},\dots ,x_{n})}{\Delta^-_1(x_{1},\dots ,x_{n})}\,\psi_{\Omega}^{\rm B}.
\ee
\end{conj}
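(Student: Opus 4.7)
The plan is to prove the identity by a combination of induction on odd $N$ and direct algebraic manipulation of the factorised product $H^{\rm B}_1$ expanded in the functional representations \eqref{a-rep}, \eqref{rep-a0}. The strategy parallels the one that would be needed to establish the companion type A statement (Conjecture \ref{prop3}): rewrite each bulk element $h_i(u_{i,j})$ and the boundary element $\bar h_0(u_{0,j})$ occurring in $H^{\rm B}_1$ as a sum of a pure-permutation Baxterised operator $\pi_i(u_{i,j})$ conjugated by an explicit multiplicative gauge, and then track all gauge factors through the ordered product to match them against the right-hand side of \eqref{add-redB1}.

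The key tools are the functional forms \eqref{h-rep} and \eqref{rep-s0}; the elementary substitution $\pi_i = u^{-1} - \pi_i(u)$; the pull-through rule $\pi_i\, f(x_1, \ldots, x_N) = f(\pi_i x_1, \ldots, \pi_i x_N)\, \pi_i$; the scalar identity \eqref{huv}; and the Yang-Baxter and reflection equations \eqref{ybe}, \eqref{rea_h} satisfied by $h_i$ and $\bar h_0$. Using the pull-through rule one conjugates each $h_i(u_{i,j})$ by a suitable ratio of factors $[x_a - x_b + 1]/[x_a - x_b]$, chosen from $\Delta^-_1/\Delta^-_0$; after this conjugation the operator collapses to $\pi_i(u_{i,j})$ up to a rational scalar prefactor that should cancel against the $\Delta$-ratios appearing on both sides of \eqref{add-redB1}. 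The special choice $\nu = \omega + p_k$ built into \eqref{half-tile} has been tuned precisely so that the analogous manipulation applied to $\bar h_0$ yields $\pi_0(u_{0,j})$ together with the factor $k(x_1,\delta)/([2x_1][\omega+1])$, which is the $p=1$ contribution to $\Theta$.

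I would first verify the identity for the smallest odd cases ($N=3$ and $N=5$), where the half-Young diagram of $\gamma(1)$ is a small trapezium of explicit shape, in order to fix the combinatorial data (the ordering of the $h_i(u_{i,j})$ factors and their spectral parameters) and the exact form of the intermediate gauge. The inductive step from $N$ to $N+2$ adds one extra row to the half-Young diagram, and the corresponding new bulk factors should produce exactly the new multiplicative terms in $\Delta^-_1(x_1, \ldots, x_n)\Delta^+_{-1}(x_1, \ldots, x_n) / (\Delta^-_0(x_1, \ldots, x_n)\Delta^+_0(x_1, \ldots, x_n))$.

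The main obstacle is the cascade nature of the prefactor $\Theta(x_1, \ldots, x_n)$: it is a product over \emph{all} $n$ variables, whereas $\bar h_0$ only acts directly on $x_1$. The factors $k(x_p, \delta)/([2x_p][\omega+1])$ for $p > 1$ must therefore appear dynamically when the $\pi_0$ inside $\bar h_0$ is transported through the bulk permutations and returns to the boundary, using the reflection relation \eqref{rea_h} repeatedly. Controlling this cascade — showing that no spurious factors are produced and that the final ordering of permutation operators genuinely reproduces $\Pi^{\rm B}_1$ — is the most delicate step. A clean proof will probably require first casting $H^{\rm B}_1$ into a canonical ``boundary-ordered'' form by iterated use of \eqref{ybe} and \eqref{rea_h}, and only then expanding in the functional representation.
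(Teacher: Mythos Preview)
The statement you are attempting to prove is a \emph{conjecture} in the paper (Conjecture~\ref{conj2}), not a theorem: the paper provides no proof whatsoever. The authors state it ``based on experience with calculations for small $N$'' and leave it open. So there is no ``paper's own proof'' to compare against.

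As for your proposal itself: it is a plausible strategy outline, not a proof. You correctly identify the key difficulty --- the prefactor $\Theta(x_1,\dots,x_n)$ involves all $n$ variables while $\bar h_0$ only touches $x_1$ directly, so the remaining $k(x_p,\delta)$ factors must emerge from the repeated boundary reflections encoded in the trapezium shape of $H^{\rm B}_1$. But you do not actually carry out this cascade computation; you only describe what would need to happen. The phrases ``should cancel'', ``should produce exactly'', and ``a clean proof will probably require'' signal that the hard steps are not done. In particular, the gauge-conjugation idea (rewriting each $h_i(u_{i,j})$ as $\pi_i(u_{i,j})$ dressed by $\Delta$-ratios) works cleanly for a single bulk strip as in the type~A case, but here the half-tiles $\bar h_0(u_{0,j})$ carry the nonstandard, $u$-dependent normalisation \eqref{half-tile} with $\nu=\omega+p_k$, and the claim that this choice makes the boundary gauge collapse exactly to $\pi_0(u_{0,j})$ times a clean $\Theta$-factor is precisely what needs to be verified, not assumed.

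If you want to turn this into an actual proof, the minimal honest route is to execute the $N=3$ and $N=5$ cases fully by hand (as you suggest), extract from them an explicit intertwining identity of the form $\bar h_0(k)\cdot G_k = G'_k\cdot \pi_0(k)$ for suitable multiplicative gauges $G_k,G'_k$, and then prove that identity in closed form for all $k$. Without that, what you have is a research plan, not a proof --- which is appropriate, since the authors themselves do not claim more.
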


\begin{remark}
Notice that condition \eqref{add-redB1} does not actually depend on the algebra
parameter $\omega$. If we choose $\xi^{\rm B}=1$, then it is satisfied for
$\lambda \in \{-3/2,-2\}$ only.
\end{remark}

\section{Observations and conjectures}
\lb{sec5}

In this section we consider explicit polynomial solutions of the qKZ equation described in Propositions~\ref{prop4} and \ref{prop6}, and we consider the homogeneous limit $x_i\rightarrow 0$. We would like to emphasize the importance of these explicit solutions for experimentation and for the discovery of many interesting new results. In this section we formulate some of these observations. We present new positivity conjectures and relate partial sums over components to single components for larger system sizes. Furthermore, based on our results we have been able to find a compact expression for generalised partial sums in the inhomogeneous case.

\subsection{Type A}
In Appendix~\ref{se:solutions} we have listed the solutions described in Proposition~\ref{prop4} up
to $N=10$ in the limit $x_i\rightarrow 0$, $i=1,\ldots,N$. These solutions were obtained using the factorised forms of the previous section. 

In the following we will write shorthand $\psi_\alpha$ for the limit $x_i\rightarrow 0$ of $\psi_\alpha(x_1,\ldots,x_N)$.  The complete solution is determined up to an overall normalisation. We will choose $\xi^{\rm A}=(-1)^{n(n-1)/2}$ where $n=\lfloor (N+1)/2\rfloor$ for which we have
\[
\psi_{\Omega}^{\rm A} = \tau^{\lfloor {N/2}\rfloor (\lfloor N/2\rfloor -1)/2}.
\]
An immediate observation was already noted in \cite{DF07}:
\begin{obs}
The components $\psi_{\alpha}(x_1,\ldots,x_N)$ of the polynomial solution of the qKZ equation of type A in the limit $x_i\rightarrow 0$, $i=1,\ldots,N$ are, up to an overall factor which is a power of $\tau$, polynomials in $\tau^2$ with positive integer coefficients. Here $\tau=-[2]$.
\end{obs}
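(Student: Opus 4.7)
The plan is to build on the factorised form of Theorem~\ref{th:facsol} and separate the observation into three pieces: rationality of $\psi_\alpha$ after the homogeneous specialisation, the parity statement (polynomial in $\tau^2$ up to a single overall power of $\tau$), and positivity of the coefficients. First I would check the base case $\alpha = \Omega^{\rm A}$: with the chosen normalisation $\xi^{\rm A} = (-1)^{n(n-1)/2}$, Corollary~\ref{cor:psi0} gives an explicit product of $\Delta^\pm$ factors, and setting $\lambda = -3$ and $x_i \to 0$ these reduce to powers of $[1] = 1$ and $[\lambda+1] = [-2] = \tau$. The net result is
\[
\psi_{\Omega}^{\rm A}\big|_{\mathrm{hom}} = \tau^{\lfloor N/2\rfloor (\lfloor N/2\rfloor -1)/2},
\]
which is manifestly a power of $\tau$ times $1 \in \mathbb{Z}_{\geq 0}[\tau^2]$.

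For a general Dyck path $\alpha$, Theorem~\ref{th:facsol} presents $\psi_\alpha$ as an ordered product $H_\alpha$ of operators $h_i(u_{ij})$, $u_{ij}\in \mathbb{Z}_{>0}$, applied to $\psi_{\Omega}^{\rm A}$. In the functional representation \eqref{h-rep}, each $h_i(k)$ is a rational operator whose apparent poles at $x_i = x_{i+1}$ cancel against the antisymmetry structure of $\psi_{\Omega}^{\rm A}$ imposed by Lemma~\ref{lem:slope_factors} and propagated through the tile ordering in $H_\alpha$. To establish that the specialisation $x_i \to 0$ yields a Laurent polynomial in $q$ with only even (or only odd) powers, I would track the total $q$-degree tile by tile and exploit the $q \leftrightarrow q^{-1}$ symmetry inherited by the qKZ system in the homogeneous limit. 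The rewriting in terms of $\tau = -(q+q^{-1})$ then produces precisely $\tau^{k_\alpha} P_\alpha(\tau^2)$ with $P_\alpha \in \mathbb{Z}[\tau^2]$, which handles rationality and parity.

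The genuinely hard part is to show $P_\alpha \in \mathbb{Z}_{\geq 0}[\tau^2]$. The factorised form alone does not furnish this, because the $h_i(k)$ carry rational coefficients of mixed sign and the inversion of \eqref{TLei} writes $\psi_{\alpha^-} = h_i(1)\psi_\alpha - \sum_k \psi_{\alpha^{+k}}$ as a difference in which positivity is not manifest. The natural route to positivity is via the link-pattern/loop-model interpretation of the solution, in which each $\psi_\alpha$ arises as a weighted enumeration of loop configurations whose statistical weights are non-negative powers of $\tau^2$; this combinatorial content sits at the heart of the Razumov--Stroganov circle of conjectures and is the setting in which positivity was first noted in \cite{DF07}. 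Absent such an interpretation at the required level of generality for arbitrary $N$, the observation is supported by the explicit tabulation in Appendix~\ref{se:solutions} for $N \leq 10$, which is the empirical route taken in \cite{DF07}.
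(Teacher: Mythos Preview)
The paper does not prove this statement: it is labelled an \emph{Observation} and is presented as a conjecture, attributed to \cite{DF07}, whose only support in the paper is the explicit tabulation of homogeneous solutions for $N\leq 10$ in Appendix~\ref{se:solutions}. There is no argument offered for rationality, parity, or positivity.

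Your proposal ultimately lands in the same place --- you acknowledge that positivity is not derivable from the factorised form and rests on the empirical data and the loop-model heuristics of \cite{DF07}. Where you go further than the paper is in sketching arguments for the first two pieces (that the homogeneous limit exists as a Laurent polynomial in $q$, and that $q\leftrightarrow q^{-1}$ symmetry forces a definite $\tau$-parity). These sketches are plausible but not complete as written: the cancellation of the $x_i=x_{i+1}$ poles in $h_i(k)$ against the structure of $\psi_\Omega^{\rm A}$ is asserted rather than shown to persist through an arbitrary product $H_\alpha$, and the parity argument (``track the total $q$-degree tile by tile'') is not carried out. If you intend these as genuine proofs you would need to make both steps precise; if you intend them merely as motivation for why the Observation is reasonable, that is already more than the paper provides.
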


We now conjecture a partial combinatorial interpretation, by considering certain natural sums over subsets of Dyck paths. Let us first define the paths $\Omega(N,p) \in \mathcal{D}_N$ whose local minima lie on the height $\tilde{p}$, where
\[
\tilde{p}=\lfloor{(N-1)/2}\rfloor-p.
\]
Figure~\ref{fig:omegan} illustrates the path $\Omega(12,3)$.

\begin{figure}[h]
\centerline{\includegraphics[width=200pt]{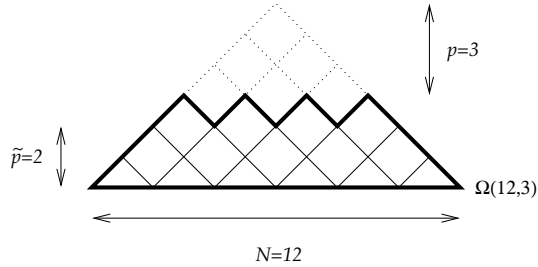}}
\caption{The minimal path $\Omega(12,3)\in \mathcal{D}_{12,3}$ .}
\label{fig:omegan}
\end{figure}

\noindent
For later convenience we also define, in the case of odd $N$, the paths $\widetilde{\Omega}(N,p) \in \mathcal{D}_N$ whose first $p-1$ local minima lie on the height $\tilde{p}$, except for the last minimum which lies at height 0. Figure~\ref{fig:omegat_n} illustrates the path $\widetilde{\Omega}(13,4)$.

\begin{figure}[h]
\centerline{\includegraphics[width=200pt]{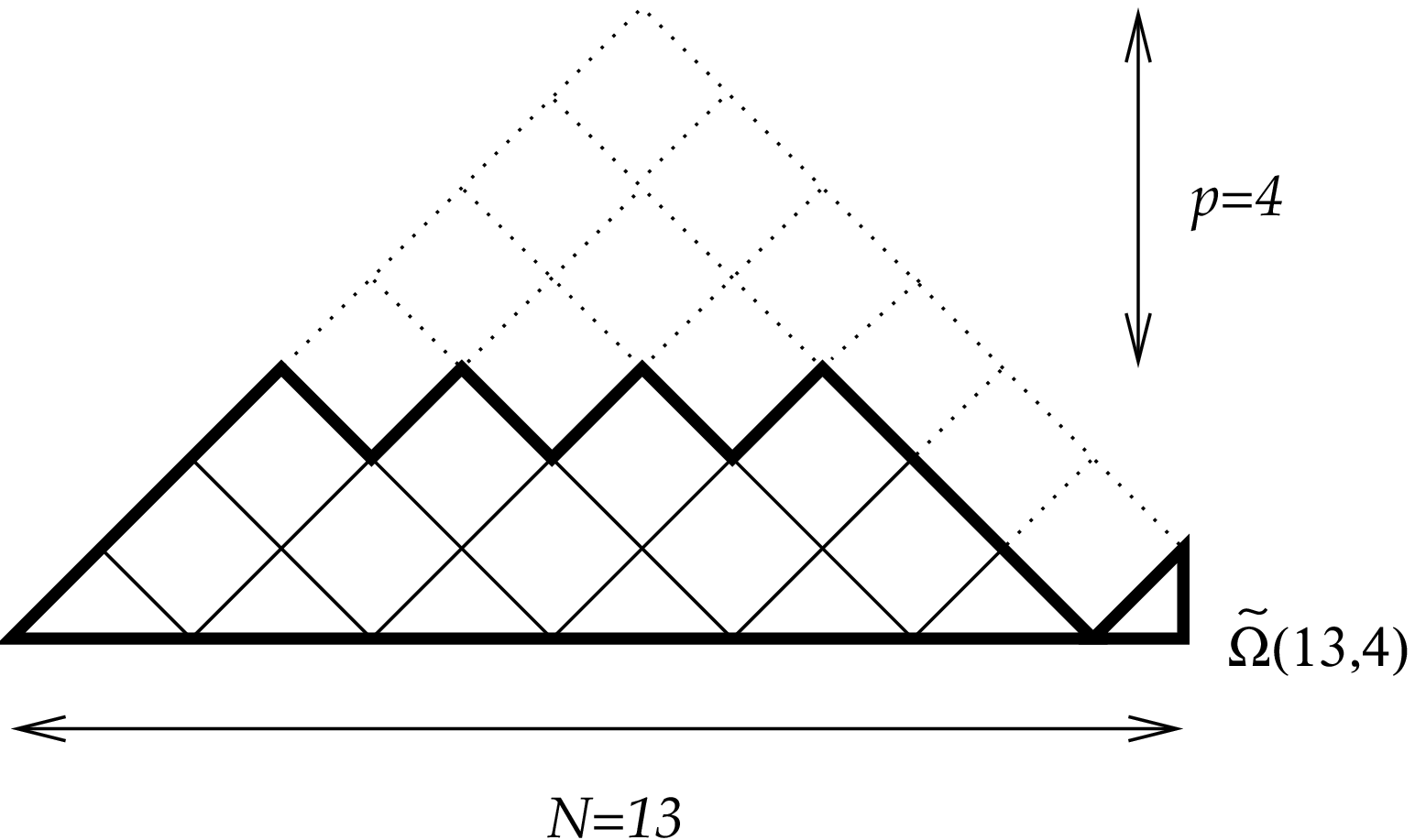}}
\caption{The path $\widetilde{\Omega}(13,4) \in \mathcal{D}_{13,6}$ .}
\label{fig:omegat_n}
\end{figure}

We define the subset $\mathcal{D}_{N,p}$ of Dyck paths of length $N$ which lie above $\Omega(N,p)$, i.e. whose local minima lie on or above height $\tilde{p}$. Formally, this subset is described as
\[
\mathcal{D}_{N,p} = \left\{ \alpha\in\mathcal{D}_N |\ \alpha_i\geq \Omega_i(N,p) = \min(\Omega_i, \tilde{p}) \right\},
\]
where $\Omega_i = \min\{i,N+\epsilon_N-i\}$, $\epsilon_N=N \bmod 2$, are integer heights of
the maximal Dyck path $\Omega^{\rm A} = \Omega(N,0)$. We further define an integer $c_{\alpha,p}$ associated to each Dyck path in the following way (see also Appendix~\ref{se:solutions}). Let $\alpha=(\alpha_0,\alpha_1,\ldots,\alpha_N)\in \mathcal{D}_{N,p}$ be a Dyck path of length $N$ whose minima lie on or above height $\tilde{p}$. Then $c_{\alpha,p}$ is defined as the signed sum of boxes between $\alpha$ and $\Omega(N,p)$, where the boxes at height $\tilde{p}+h$ are assigned $(-1)^{h-1}$ for $h\geq 1$. An example is given in Figure~\ref{fig:cpmdef}, and an explicit expression for $c_{\alpha,p}$ is given by
\[
c_{\alpha,p} = \frac{(-1)^{\tilde{p}}}{2} \left(\sum_{i=1}^{\lfloor N/2\rfloor} (\alpha_{2i}-\Omega_{2i}(N,p))
 -\sum_{i=0}^{\lfloor (N-1)/2\rfloor} (\alpha_{2i+1}-\Omega_{2i+1}(N,p))  \right).
\]

\begin{figure}[h]
\centerline{\includegraphics[width=200pt]{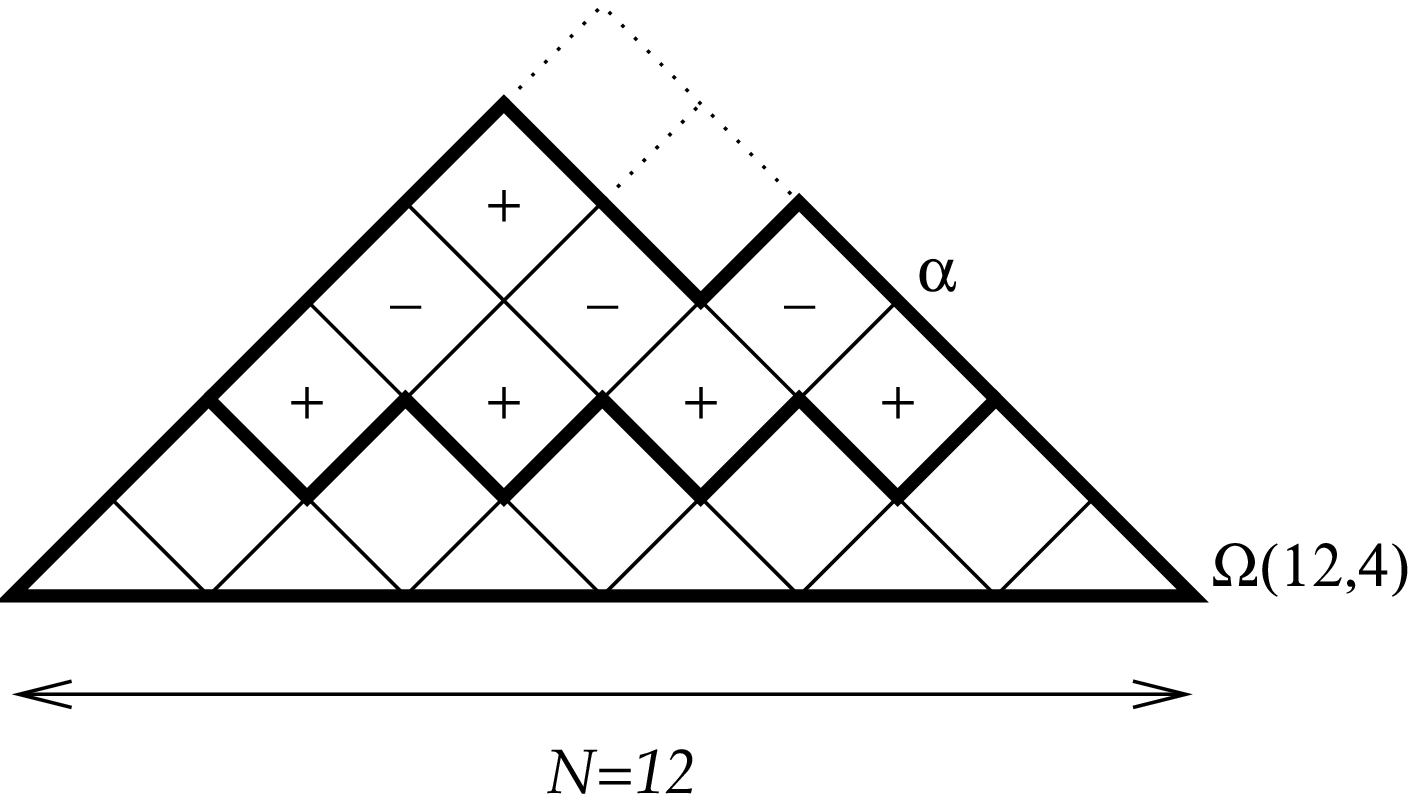}}
\caption{Definition of the number $c_{\alpha,p}$ as the signed sum of boxes between the $\alpha$ and the path $\widetilde{\Omega}(12,4)$. In this figure $N=12$ and $p=4$ and $c_{\alpha,4}=4-3+1=2$.}
\label{fig:cpmdef}
\end{figure}

Consider the partial weighted sums
\be
S_{\pm}(N,p) = \sum_{\alpha\in\mathcal{D}_{N,p}} \tau^{\pm c_{\alpha,p}}\, \psi_\alpha,
\label{eq:Spmdef}.
\ee
It was noted in \cite{Pavel,MNGB} that for $\tau=1$ ($q=\e^{2\pi\i/3}$), these partial sums for system size $N$, correspond to certain individual elements for size $N+1$. Here we observe that this
correspondence holds also for arbitrary $\tau$: the partial sums $S_\pm(N,p)$ are up to an overall normalisation proportional to certain individual components $\psi_{\alpha}$ of the
solution for system size $N+1$:
\[
\renewcommand{\arraystretch}{2}
\begin{array}{|l|l|}
\hline
S_+(4,1) =\psi_{\includegraphics[width=24pt]{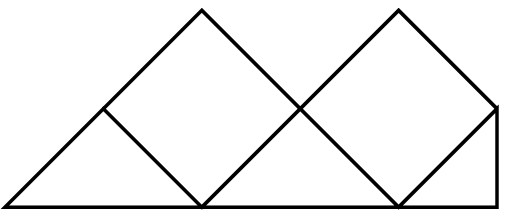}} &
S_-(4,1) =\tau^{-2} \psi_{\includegraphics[width=24pt]{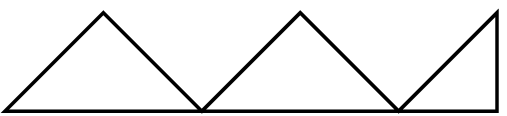}} \\ \hline
& S_-(5,2) =\tau^{-2} \psi_{\,\includegraphics[width=28pt,viewport=0 0 180 40,clip=true]{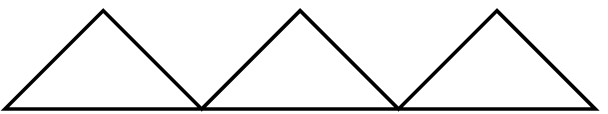}} \\
& S_-(5,1) =\tau^{-2} \psi_{\includegraphics[width=28pt,viewport=0 -10 180 60,clip=true]{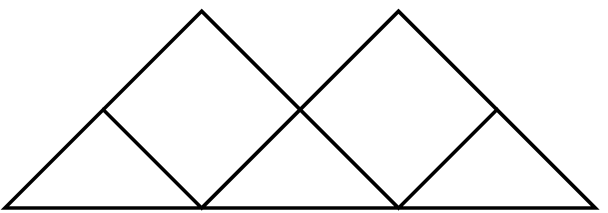}} \\
\hline
S_+(6,2) = \psi_{\includegraphics[width=32pt,viewport=0 -10 210 60,clip=true]{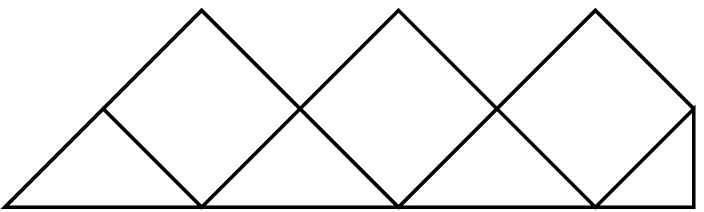}} &
S_-(6,2) =\tau^{-3} \psi_{\includegraphics[width=32pt,viewport=0 -10 210 40,clip=true]{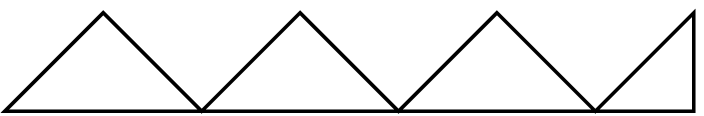}}\\
S_+(6,1) = \psi_{\includegraphics[width=32pt,viewport=0 -10 210 90,clip=true]{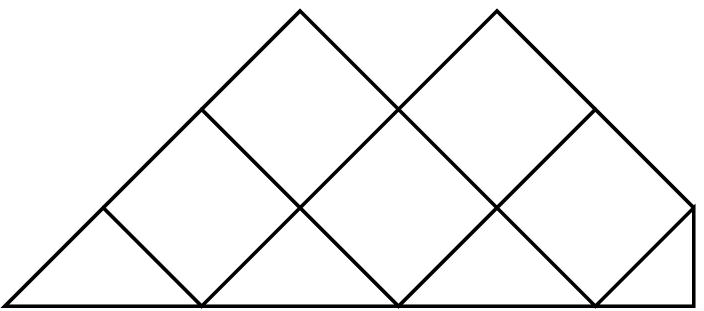}} &
S_-(6,1) =\tau^{-3} \psi_{\includegraphics[width=32pt,viewport=0 -10 210 60,clip=true]{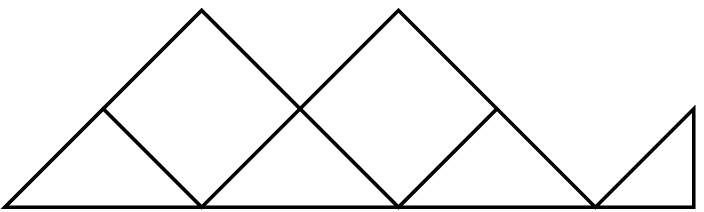}}\\
\hline
& S_-(7,3) =\tau^{-3} \psi_{\includegraphics[width=36pt,viewport=0 -10 240 40,clip=true]{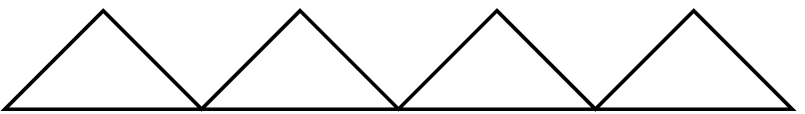}} \\
& S_-(7,2) =\tau^{-3} \psi_{\includegraphics[width=36pt,viewport=0 -10 240 60,clip=true]{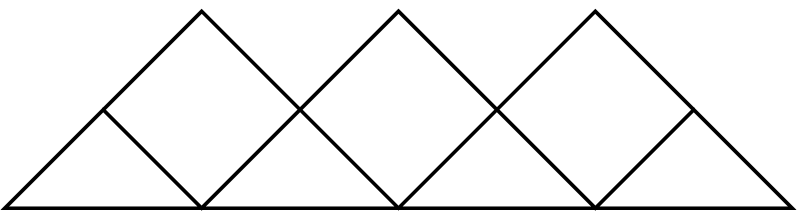}}\\
& S_-(7,1) =\tau^{-3} \psi_{\includegraphics[width=36pt,viewport=0 -10 240 100,clip=true]{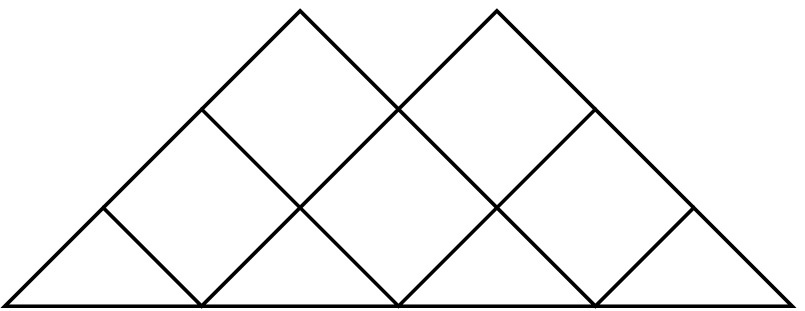}} \\
\hline
S_+(8,3) = \psi_{\includegraphics[width=40pt,viewport=0 -10 270 60,clip=true]{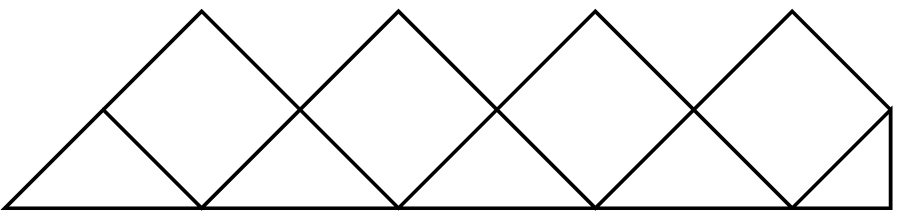}} &
S_-(8,3) = \tau^{-4} \psi_{\includegraphics[width=40pt,viewport=0 -10 270 40,clip=true]{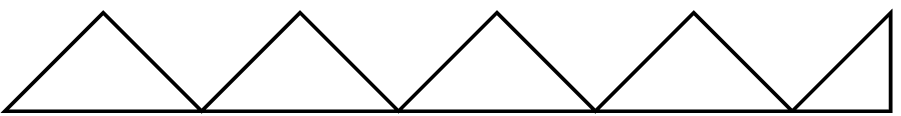}}\\
S_+(8,2) = \psi_{\includegraphics[width=40pt,viewport=0 -10 270 90,clip=true]{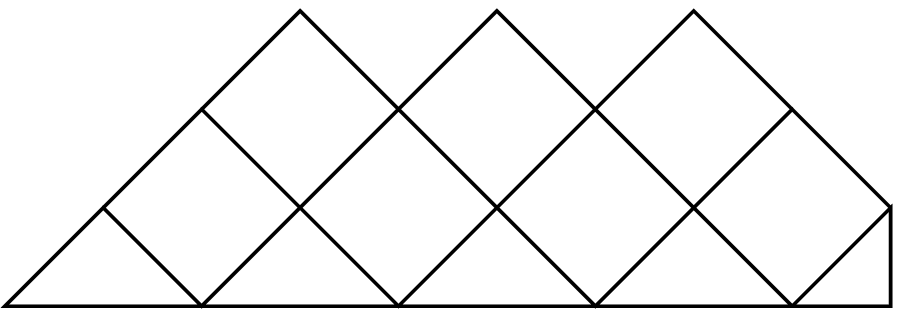}} &
S_-(8,2) = \tau^{-4} \psi_{\includegraphics[width=40pt,viewport=0 -10 270 60,clip=true]{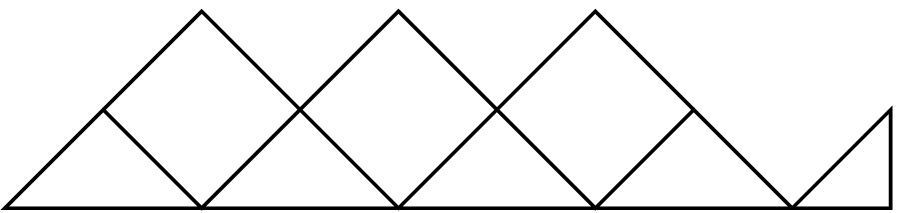}}\\
S_+(8,1) = \psi_{\includegraphics[width=40pt,viewport=0 -10 270 120,clip=true]{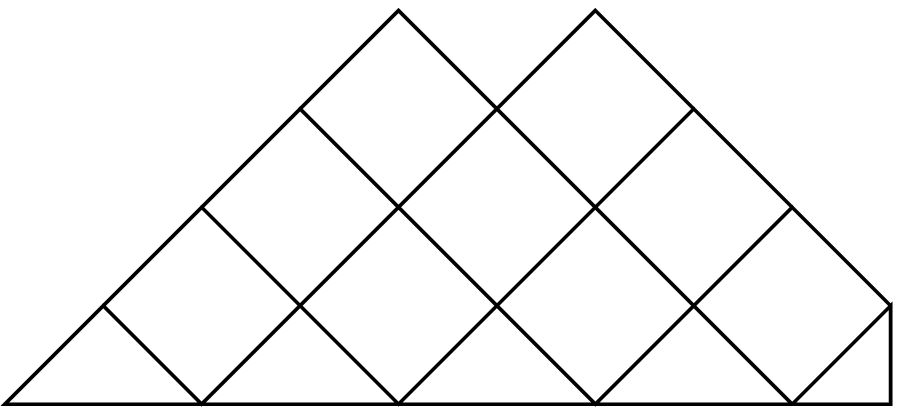}} &
S_-(8,1) = \tau^{-4} \psi_{\includegraphics[width=40pt,viewport=0 -10 270 100,clip=true]{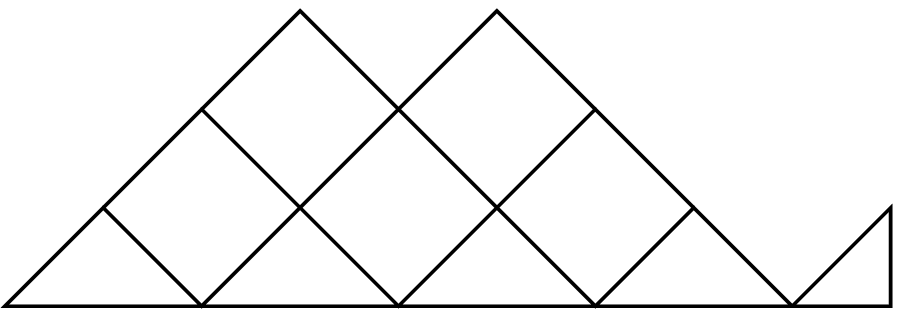}}\\
\hline
\end{array}
\]

\bigskip\noindent
We formalise this observation in the following conjecture:
\begin{obs}
\label{obs:sumsA}
\begin{align*}
S_+(N,p) &= \psi_{\Omega(N+1,p)},\quad N\ {\rm even}\\
S_{-}(N,p) &= \tau^{-N/2}\, \psi_{\widetilde{\Omega}(N+1,p+1)},\quad N\ {\rm even}\\\
S_{-}(N,p) &= \tau^{-(N-1)/2}\, \psi_{\Omega(N+1,p)},\quad N\ {\rm odd}.
\end{align*}
\end{obs}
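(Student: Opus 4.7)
The plan is to reduce both sides of the claimed identities to the same scalar via the factorised formula of Theorem~\ref{th:facsol}. By that theorem, each term $\psi_\alpha$ on the left equals $H_\alpha\psi_\Omega^{\rm A}$, so that
\[
S_{\pm}(N,p) \;=\; \Bigl(\sum_{\alpha\in\mathcal{D}_{N,p}} \tau^{\pm c_{\alpha,p}}\, H_\alpha\Bigr)\,\psi_\Omega^{\rm A},
\]
and on the right $\psi_{\Omega(N+1,p)} = H_{\Omega(N+1,p)}\,\psi_\Omega^{{\rm A},N+1}$ (similarly for $\widetilde{\Omega}$). Pictorially, every $\alpha\in\mathcal{D}_{N,p}$ contains the fixed ``triangular'' sub-diagram $Y_{\Omega(N,p)}$, so $Y_\alpha$ is obtained by adding an arbitrary sub-diagram $Y_\alpha'$ on top; the weighting $\tau^{\pm c_{\alpha,p}}$ is the signed column-by-column counting weight attached to $Y_\alpha'$. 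The target is therefore an identity between two factorised expressions: a weighted sum of operators acting on the size-$N$ base function, and a single operator acting on the size-$(N+1)$ base function, both evaluated in the homogeneous limit at $\lambda=-3$.

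The natural strategy I would follow is the Izergin--Korepin--Di~Francesco--Zinn-Justin style reduction in system size. Combining the bulk qKZ equation \eqref{qKZTL_TypeB1} with the factorised vanishing encoded by Lemma~\ref{lem:slope_factors} forces $\psi_{\Omega(N+1,p)}(x_1,\ldots,x_{N+1})$ to factorise sharply on the hyperplane $x_{N+1}=x_N+1$ (a slope edge of $\Omega(N+1,p)$). On that specialisation, Theorem~\ref{th:facsol} for size $N+1$ reduces to a sum over size-$N$ factorised operators $H_\alpha$, indexed precisely by the Dyck paths $\alpha\in\mathcal{D}_{N,p}$ obtained by removing the last step of $\Omega(N+1,p)$ and relaxing the remaining structure. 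One then takes the homogeneous limit $x_i\to 0$ in this reduced form. Running the same reduction for the partial sum side --- rewriting the outer column of tiles in $H_{\Omega(N+1,p)}$ via the identity $h_i(u)=h_i(v)+\tfrac{[v-u]}{[u][v]}$ from \eqref{huv}, together with Yang--Baxter \eqref{ybe} and the truncation conditions \eqref{alpha_0-add} --- should peel off one column of Baxterised tiles as a manifest scalar factor per term, the remaining operator being exactly $H_\alpha$ for each $\alpha\in\mathcal{D}_{N,p}$.

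The main obstacle is matching the scalar prefactors produced by this peeling to $\tau^{\pm c_{\alpha,p}}$. The column to be peeled has height $\tilde p$ and in the homogeneous limit each tile $h_i(u_{ij})$ at height $h$ above $\Omega(N,p)$ contributes a factor involving $[h]/[h\pm 1]$ after using \eqref{huv}, which, at $\lambda=-3$ and in the $x_i\to 0$ limit, collapses to a signed power of $\tau=-[2]$. The combinatorial claim is then that, tile-by-tile, these contributions accumulate to exactly the signed box count $c_{\alpha,p}$; this should follow from the alternating-sign assignment at height $\tilde p+h$ in the definition of $c_{\alpha,p}$, but making the bookkeeping rigorous for all $\alpha$ simultaneously is the technical heart of the argument. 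I would carry this out by induction on $p$: the base case $p=0$ gives $\mathcal{D}_{N,0}=\{\Omega^{\rm A}\}$, where both sides collapse trivially, and the inductive step propagates the identity as one relaxes the minimum height by one, using the nested inclusions $\mathcal{D}_{N,p}\subset\mathcal{D}_{N,p+1}$ and a parallel recursion on the shape of $\Omega(N+1,p)$.

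Finally, a separate verification is required to distinguish the three cases in the statement: the parity of $N$ determines the boundary behaviour of the last variable under the $\pi_N$ equation \eqref{qKZTL_TypeB3}, and the role of $\widetilde\Omega$ versus $\Omega$ in the odd case arises from whether the peeled column reaches height $0$ or stops at height $\tilde p$. I expect the relative prefactor $\tau^{-\lfloor N/2\rfloor}$ in the $S_{-}$ cases to come directly from the value of $\psi_\Omega^{\rm A}$ given explicitly as $\tau^{\lfloor N/2\rfloor(\lfloor N/2\rfloor-1)/2}$ and its $N+1$ counterpart, once the specialisation and the column peeling are performed correctly.
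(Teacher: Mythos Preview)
The paper does not prove this statement. It is presented explicitly as a conjecture (``We formalise this observation in the following conjecture''), supported only by the computational evidence in the table immediately preceding it and in Appendix~\ref{se:solutions}. There is therefore no proof in the paper to compare your proposal against.

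Your proposal is a proof \emph{strategy}, not a proof, and as such it goes beyond what the paper claims. The broad idea --- relate the size-$(N{+}1)$ component to a sum over size-$N$ components by peeling off a column of tiles via \eqref{huv}, and track the resulting scalars --- is reasonable and in the spirit of the factorised machinery. But several steps are left at the level of ``should follow'' or ``I expect'': the claim that the specialisation $x_{N+1}=x_N+1$ collapses $H_{\Omega(N+1,p)}\psi_\Omega^{{\rm A},N+1}$ into a sum indexed exactly by $\mathcal{D}_{N,p}$ is asserted, not derived; the identification of the per-tile scalar with a signed power of $\tau$, and then of the accumulated exponent with $c_{\alpha,p}$, is precisely the combinatorial content of the conjecture and is not established; and the base functions $\psi_\Omega^{{\rm A},N}$ and $\psi_\Omega^{{\rm A},N+1}$ live in different variable sets, so the passage between them in the homogeneous limit (and the origin of the $\tau^{-\lfloor N/2\rfloor}$ normalisation) needs a concrete computation, not an expectation. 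In short, you have outlined a plausible line of attack on an open conjecture, but the hard steps are still to be done.
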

\bigskip

The weigthed partial sums were defined in \eqref{eq:Spmdef} in an ad-hoc way. This was the way they were discovered when searching for relations as in Observation~\ref{obs:sumsA}. In fact, these partial sums arise in a natural way as we will show now:

\begin{obs}
The partial sums $S_{\pm}(N,p)$ are obtained from factorised expressions. In particular, let
\begin{align*}
P_N^- &= \prod_{j=1}^p \prod_{i=j-1}^{p-1} h_{\tilde{p}+2i+j}(1+j),\\
P_N^+ &= \prod_{j=1}^p \prod_{i=j-1}^{p-1} h_{\tilde{p}+2i+j}(-1),
\end{align*}
where the product is ordered as in Figures~\ref{Spm} and ~\ref{Smp}. Then we have
\be
S_{\pm}(N,p) = \lim_{x_i \rightarrow 0}  P_N^{\pm} \,\psi_{\Omega}^{\rm A}.
\label{eq:Sfact1}
\ee
\end{obs}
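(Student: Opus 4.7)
The plan is to expand $P_N^{\pm}$ using the telescoping identity \eqref{huv}, $h_k(u)=h_k(v)+[v-u]/([u][v])$, and to invoke Theorem~\ref{th:facsol} to identify each surviving term with a component $\psi_\alpha$. Because the correction term in \eqref{huv} is $x$-independent, the entire expansion is an operator identity in the functional realisation, so the homogeneous limit will enter only at the final comparison with the $\tau$-weights defining $S_{\pm}(N,p)$.

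At each tile $(i,j)$ of $P_N^{\pm}$ the relevant factor is $h_i(\varepsilon)$, with $\varepsilon=-1$ in the plus case and $\varepsilon=1+j$ in the minus case. I would rewrite
\[
h_i(\varepsilon)\,=\,h_i(u_{i,j}^\alpha)\,+\,\frac{[u_{i,j}^\alpha-\varepsilon]}{[\varepsilon]\,[u_{i,j}^\alpha]},
\]
where $u_{i,j}^\alpha$ is the integer assigned by the algorithm of Section~\ref{sec:facsolA} to the tile $(i,j)$ inside the Young diagram $Y_\alpha$ of a would-be Dyck path $\alpha\in\D_{N,p}$. Expanding the ordered product tile by tile produces a sum over subregions $Y$ of $Y_{\Omega(N,p)}:=\Omega^{\rm A}\setminus\Omega(N,p)$: for each $Y$, the retained factors, reordered via the Yang--Baxter relation \eqref{ybe} and the commutativity \eqref{commu}, combine into precisely $H_\alpha$ for the Dyck path $\alpha$ whose complement to $\Omega^{\rm A}$ is $Y$, so by Theorem~\ref{th:facsol} the retained part applied to $\psi_\Omega^{\rm A}$ gives $\psi_\alpha$. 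Subregions $Y$ that are not Young shapes must be shown to contribute zero, either via the annihilation property \eqref{eq:hmonpsi} for $\psi_\Omega^{\rm A}$ or via the truncation conditions \eqref{alpha_0-add}.

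It then remains to verify that the scalar coefficient of $\psi_\alpha$ equals $\tau^{\pm c_{\alpha,p}}$ in the homogeneous limit. This coefficient is an explicit product, over the discarded tiles of $Y_{\Omega(N,p)}\setminus Y$, of $q$-number ratios $[u_{i,j}^\alpha-\varepsilon]/([\varepsilon][u_{i,j}^\alpha])$, which become rational functions of $\tau$ at $x_i\to 0$. This telescoping is the main obstacle and I expect it to be most cleanly addressed by induction on $p$: peeling off the outermost diagonal layer of $P_N^{\pm}$ reduces the identity for $(N,p)$ to that for $(N,p-1)$, and the contribution of this outermost layer telescopes row by row to produce exactly one additional $\pm 1$ in the exponent of $\tau$, matching the alternating-row sign convention built into the definition of $c_{\alpha,p}$.
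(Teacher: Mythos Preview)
The paper does not prove this statement: it is listed as an \emph{Observation} in Section~\ref{sec5}, the section entitled ``Observations and conjectures'', and is presented as an empirical fact checked on the explicit data of Appendix~\ref{se:solutions} rather than as a theorem. So there is no paper proof to compare your proposal against; you are attempting to supply a proof the authors did not.

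As a strategy your outline is reasonable in spirit but contains a structural gap. When you write $h_i(\varepsilon)=h_i(u_{i,j}^\alpha)+\text{scalar}$ with $u_{i,j}^\alpha$ chosen to match the content that tile would carry in $H_\alpha$, you are letting the split at a given tile depend on the target path $\alpha$. But a product expansion works the other way round: you fix a decomposition at each factor once and for all, and the sum runs over subsets of factors. Different Dyck paths $\alpha\in\D_{N,p}$ assign \emph{different} integers to the same tile of $Y_{\Omega(N,p)}$ under the corner-peeling rule of Section~\ref{sec:facsolA}, so a single binomial expansion cannot simultaneously produce all the $H_\alpha$ with their correct contents. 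What one actually needs is the machinery of Proposition~\ref{prop:honpsi} and Lemma~\ref{lem:HonDelta} (equations \eqref{6.3}, \eqref{6.5}) to shift contents after the fact, and that rewriting is not free: it generates further terms in $A_{i,j}$ that have to be tracked and shown to vanish on $\psi_\Omega^{\rm A}$. Your sentence ``subregions $Y$ that are not Young shapes must be shown to contribute zero'' is precisely where this accounting would have to happen, and it is not a consequence of \eqref{eq:hmonpsi} or \eqref{alpha_0-add} alone.

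The second gap is the one you flag yourself: matching the resulting scalar prefactors to $\tau^{\pm c_{\alpha,p}}$. Your inductive sketch (peel the outer diagonal, reduce $p\mapsto p-1$) is plausible for $S_+$, where every tile carries the same content $-1$ and $h_i(-1)=h_i(1)+\tau$, but for $S_-$ the contents $1+j$ vary across diagonals and the $q$-number ratios do not collapse to powers of $\tau$ before the homogeneous limit, so the telescoping is genuinely delicate. Neither step is carried out, so what you have is a proof plan rather than a proof.
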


\begin{figure}[h]
\centerline{\includegraphics[width=150pt]{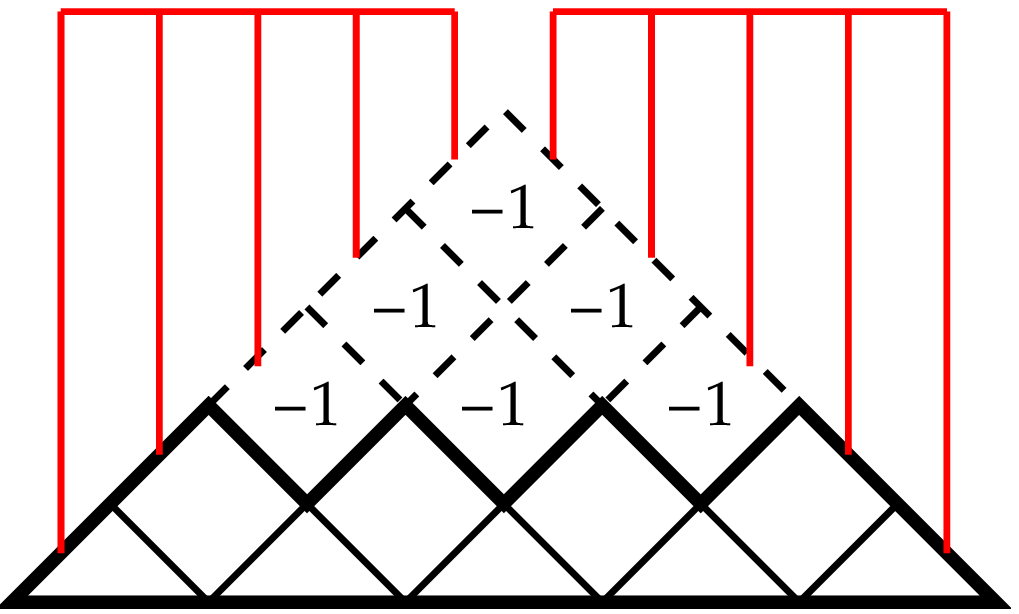}}
\caption{The partial sum $S_+(10,3)$ in factorised form.}
\label{Spm}
\end{figure}

\begin{figure}[h]
\centerline{\includegraphics[width=150pt]{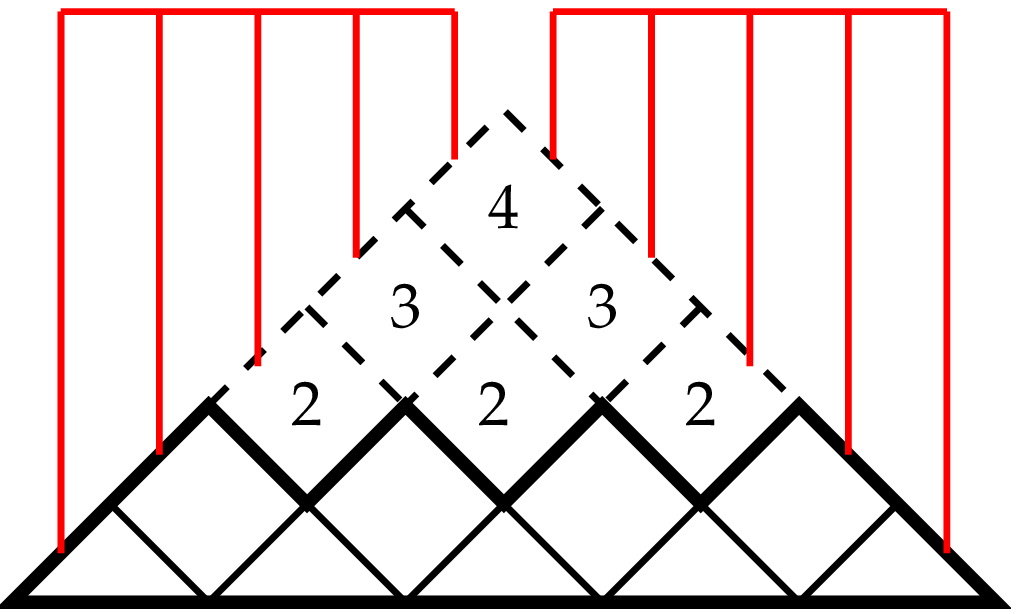}}
\caption{The partial sum $S_-(10,3)$ in factorised form.}
\label{Smp}
\end{figure}

In fact, we conjecture that \eqref{eq:Sfact1} with \eqref{eq:Spmdef} remain valid in the presence of the variables $x_i$:
\begin{obs}
Define
\[
P_N(u) = \prod_{j=1}^p \prod_{i=j-1}^{p-1} h_{\tilde{p}+2i+j}(u+j-1).
\]
The weighted partial sums are special cases of the following identity for polynomials in $x_1,\ldots,x_N$,
\begin{align}
S(N,p,u) := P_N(u)\, \psi_{\Omega}^{\rm A}(x_1,\ldots,x_N)
= \sum_{\alpha\in\mathcal{D}_{N,p}} \left( \frac{[1-u]}{[u]}\right)^{c_{\alpha,p}} \psi_\alpha(x_1,\ldots,x_N).
\label{eq:Su}
\end{align}
\end{obs}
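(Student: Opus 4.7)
The plan is to reduce the identity \eqref{eq:Su} to an operator statement on the base function, and then to establish it by unfolding $P_N(u)$ with the spectral identity \eqref{huv}. By Theorem~\ref{th:facsol}, $\psi_\alpha=H_\alpha\psi_\Omega^{\rm A}$ for every $\alpha\in\mathcal{D}_N$, so it suffices to verify, modulo operators annihilating $\psi_\Omega^{\rm A}$ via \eqref{alpha_0-add}, the operator identity
\[
P_N(u)\,\psi_\Omega^{\rm A}=\Bigl(\sum_{\alpha\in\mathcal{D}_{N,p}}\bigl(\tfrac{[1-u]}{[u]}\bigr)^{c_{\alpha,p}}H_\alpha\Bigr)\psi_\Omega^{\rm A}.
\]
The case $p=0$ is tautological ($P_N\equiv 1$, $\mathcal{D}_{N,0}=\{\Omega^{\rm A}\}$, $c_{\Omega^{\rm A},0}=0$), which anchors an induction on $p$.

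The next step is to apply the identity \eqref{huv} with reference $v=j$ to each of the $p(p+1)/2$ factors in $P_N(u)$, writing
$h_{\tilde p+2i+j}(u+j-1)=h_{\tilde p+2i+j}(j)+[1-u]/\bigl([u+j-1][j]\bigr).$
Expanding the ordered product over the triangular index set $T_p=\{(i,j)\colon 1\le j\le p,\ j-1\le i\le p-1\}$ produces a sum over subsets $S\subseteq T_p$, each contributing a product of Baxterised elements $h_{\tilde p+2i+j}(j)$ at the ``kept'' cells $(i,j)\in S$ together with a scalar prefactor collecting the contributions of the ``removed'' cells. The next step is to match each admissible $S$ to the Young diagram $Y_\alpha$ of a unique $\alpha\in\mathcal{D}_{N,p}$, so that $\prod_{(i,j)\in S}h_{\tilde p+2i+j}(j)=H_\alpha$ by the labelling rule \eqref{H_a}, while non-admissible subsets cancel among themselves via Yang--Baxter moves applied to the $h_i(j)$ inside each row, or are annihilated on $\psi_\Omega^{\rm A}$ by the truncation conditions \eqref{alpha_0-add}. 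The parity-weighted exponent $c_{\alpha,p}$ should emerge from the row-by-row structure of the cells in $T_p\setminus S$.

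The main obstacle will be the scalar bookkeeping. A naïve expansion yields scalars $[1-u]/([u+j-1][j])$ whose denominators depend on $(i,j)$, whereas the target weight $([1-u]/[u])^{c_{\alpha,p}}$ involves only $[u]$. Reconciling the two requires a telescoping argument that exploits the explicit factor $\Delta^-_1$ in $\psi_\Omega^{\rm A}$, which, after commuting past the $h_i(j)$ via the functional representation \eqref{h-rep}, furnishes exactly the $[u+j-1]/[u]$ ratios needed to collapse each row of the triangle. Extracting the alternating signs built into $c_{\alpha,p}$ from this row-by-row cancellation is the most delicate part, and I would approach it by first verifying the claim for $\alpha$ whose Young diagram $Y_\alpha$ is a staircase (where the scalar product reduces to a clean telescoping product), and then interpolating to general $\alpha\in\mathcal{D}_{N,p}$ by induction on $p$ combined with the Hecke relations \eqref{Hdef_typeA_a} for the $h_i(j)$.
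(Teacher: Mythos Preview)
The paper does not prove this statement: it is presented as an \emph{Observation} (i.e.\ a conjecture supported by explicit computation for small $N$), with no accompanying proof. So there is no proof in the paper to compare against; your proposal is an attempt to prove something the authors left open.

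That said, your outline has a genuine gap precisely where you flag the difficulty. Expanding each factor via \eqref{huv} as $h_{\tilde p+2i+j}(u+j-1)=h_{\tilde p+2i+j}(j)+[1-u]/([u+j-1][j])$ produces, for each subset $S\subseteq T_p$, a scalar $\prod_{(i,j)\notin S}[1-u]/([u+j-1][j])$. This is $[1-u]^{|T_p\setminus S|}$ over a product of $q$-integers depending on the multiset of $j$-values in the complement. The target weight, however, is $([1-u]/[u])^{c_{\alpha,p}}$ with $c_{\alpha,p}$ an \emph{alternating-sign} count of boxes; in particular $c_{\alpha,p}$ can be negative, so the weight is not even a polynomial in $[1-u]$. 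No telescoping of denominators alone can convert a nonnegative-exponent monomial in $[1-u]$ into one with a possibly negative exponent. The missing ingredient is that the ``kept'' operator product $\prod_{(i,j)\in S}h_{\tilde p+2i+j}(j)$ does \emph{not} equal $H_\alpha$ for general $S$: when $S$ is not a Young shape the contents $j$ disagree with the rule \eqref{H_a}, and rewriting such a product in the $H_\alpha$ basis generates further scalar corrections. It is this second layer of expansion, not a simple interaction with the $\Delta_1^-$ prefactor, that must supply the sign-alternation and the inverse powers of $[1-u]/[u]$.

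Your claim that non-Young subsets ``cancel via Yang--Baxter moves'' or vanish by \eqref{alpha_0-add} is also not justified: the Yang--Baxter relation \eqref{ybe} requires specific additive relations among the spectral parameters, which the integers $j$ in an arbitrary subset need not satisfy, and the truncation conditions \eqref{alpha_0-add} involve very particular rectangular operators $H_k^{\rm A}$ touching height $-1$, not arbitrary products of $h_i(j)$ inside the triangle $T_p$. Without a concrete mechanism for both the sign-alternating weight and the disposal of non-Young terms, the argument remains a heuristic rather than a proof.
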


\begin{figure}[h]
\centerline{\includegraphics[width=150pt]{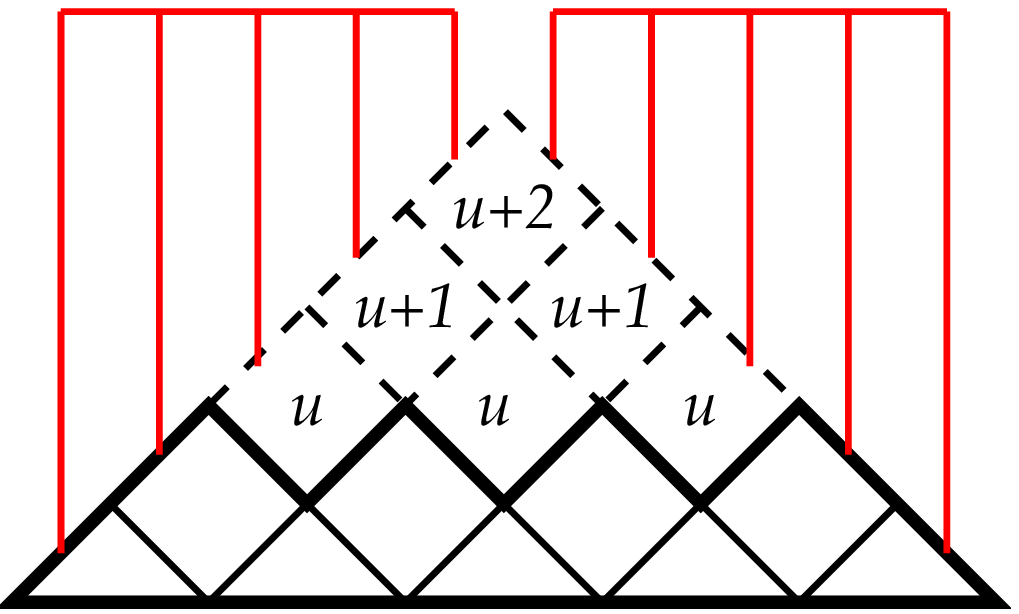}}
\caption{The partial sum $S(10,3,u)$ in factorised form.}
\label{Spu}
\end{figure}
\bigskip

Note that \eqref{eq:Su} has many interesting specialisations, such as $u=0$ and $u=1$ which, when properly normalised, correspond to the single coefficients $\psi_{\Omega(N,\lfloor (N-1)/2\rfloor)}$ and  \;\; $\psi_{\Omega(N,\lfloor (N-3)/2\rfloor)}$ respectively. The standard sum rule where one performs an unweighted sum, corresponds to $u=1/2$. Interestingly, a special case of the generalised sum rule \eqref{eq:Su} is closely related to a result of \cite{DFZJ07}, were a similar generalised sum was considered, based on totally different grounds and in the homogeneous limit $x_i \rightarrow 0$ and for $p=\lfloor (N-1)/2\rfloor$. By computation of a repeated contour integral, it was shown in \cite{DFZJ07} that in this case, $S(N,\lfloor (N-1)/2\rfloor ,u)$ is equal to the generating function of refined $t,\tau$-enumeration of (modified) cyclically symmetric transpose complement plane parititions, where $t=[1-u]/[u]$. Because of the natural way this parameter appears in \eqref{eq:Su}, we hope that this result offers further insights into the precise connection between solutions of the qKZ equation and plane partitions.

\subsection{Type B}
In Appendix~\ref{se:solutions} we have listed solutions of the qKZ equation for type B from Proposition~\ref{prop6} up
to $N=6$ in the limit $x_i\rightarrow 0$, $i=1,\ldots,N$. These solutions were obtained using the factorised forms of the previous section. As in the case of type A, we again find a positivity conjecture, this time in the two variables $\tau'$ and $a$ which are defined by
\[
\tau'^2 =2-\tau=2+[2]=[2]_{q^{1/2}}^2,\qquad a = \frac{[1/2]}{[\frac{2\delta+1}{4}][\frac{2\delta-1}{4}]}.
\]
The complete solution is determined up to an overall normalisation. We will choose $\xi^{\rm B}= a^{\lfloor N/2 \rfloor}(-\tau'^2)^{N(N-1)/2}$, for which we have

\begin{obs}
The solutions $\psi_{\alpha}(x_1,\ldots,x_N)$ of the qKZ equation of type B in the limit $x_i\rightarrow 0$, $i=1,\ldots,N$ are polynomials in $\tau'^2$ and $a$ with positive integer coefficients.
\end{obs}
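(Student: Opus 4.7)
The plan is to build on the factorised form $\psi_\alpha = H_\alpha\, \psi_\Omega^{\rm B}$ from Theorem~\ref{th:facsolB} and to track the analytic structure carefully in the homogeneous limit. First, I would substitute the chosen normalisation $\xi^{\rm B} = a^{\lfloor N/2\rfloor}(-\tau'^2)^{N(N-1)/2}$ into the explicit form \eqref{psi_alpha_B} of the base function and evaluate it at $x_i \to 0$: the $\Delta$-factors produce explicit powers of $[k]$ for various integers $k$, which together with the chosen $\xi^{\rm B}$ should collapse to a polynomial expression in $\tau'^2$ and $a$ alone, with no residual dependence on $q^\omega$ or $q^\delta$ beyond what is packaged into $a$.

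Next, I would analyse the action of the operators $h_i(k)$ for integer $k \geq 1$ in the homogeneous limit, using the functional realisation \eqref{h-rep}. When these operators are applied successively according to the integer labelling rule of Section~\ref{sec4.2}, the denominators $[u_{ij}]$ should cancel systematically against numerator factors $[x_i - x_{i+1} + u_{ij}]$ that either come from the $\Delta$-factors or are generated by preceding applications of the neighbouring $h_j(u_{j,k})$ tiles (this is precisely why the integer labelling was introduced). Tracking these cancellations establishes polynomiality and integrality in $q$, and the even-symmetry of $\xi^{\rm B}$ together with the $q\leftrightarrow q^{-1}$ symmetry of the limit then forces the $q$-dependence to collapse into $\tau'^2$.

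For the boundary half-tiles $\bar h_0(k)$ defined in \eqref{half-tile}, I would exploit that their action contributes a shift $[\lfloor k/2 \rfloor][\omega + \lfloor (k+1)/2 \rfloor]/([k][\omega+1])$ together with the $s_0$-piece that hits the $k(\pm x_1,\delta)$ factors produced at the boundary. The combination $[\omega+m]/[\omega+1]$, when matched with the $k(\pm x_1,\delta)$ factors at $x_1 = 0$, reorganises into the parameter $a$ of Remark~\ref{rem:notation}; the chosen prefactor $a^{\lfloor N/2\rfloor}$ is designed to absorb the total $a$-denominator accumulated over all boundary applications, leaving a polynomial in $a$.

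The genuinely hard part is positivity. The steps above can in principle be pushed through by careful bookkeeping to yield integrality and bivariate polynomial structure in $\tau'^2$ and $a$, but each operator $h_i(k)$ and $\bar h_0(k)$ contains an explicit subtraction, so non-negativity is not preserved term-by-term. A plausible strategy is to induct on the number of tiles in the complement of $\alpha$ inside $\Omega^{\rm B}$, and to group the expansion of $H_\alpha\,\psi_\Omega^{\rm B}$ according to a combinatorial statistic (in analogy with the type~A case treated by Di~Francesco \cite{DF07}, and ultimately in the spirit of the Razumov--Stroganov combinatorics) so that the signs organise themselves into cancellations rather than genuine negative contributions. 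Identifying the right statistic --- presumably a bivariate refinement on link patterns or symmetry-classes of plane partitions that specialises correctly in $a$ at the boundary --- is where the main obstacle lies, and the fact that the paper records this only as an \emph{Observation} reflects that no such statistic has yet been pinned down.
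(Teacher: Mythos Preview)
The paper does not prove this statement at all. It is recorded as an \emph{Observation}, meaning a conjecture supported purely by the explicit computational data for $N\le 6$ listed in Appendix~\ref{B:solutions}; there is no argument in the paper for polynomiality, integrality, or positivity beyond those tables. So there is no ``paper's own proof'' against which to compare your proposal.

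That said, your write-up is honest about this: you correctly flag in the final paragraph that positivity is the genuine obstacle and that the paper's labelling of the statement as an Observation signals it is open. Your sketched route to polynomiality and integrality in $\tau'^2$ and $a$ (tracking denominator cancellations through the factorised operators $H_\alpha$ and absorbing the boundary contributions into the normalisation $a^{\lfloor N/2\rfloor}$) is a reasonable heuristic, but be aware that even this weaker claim is not established in the paper and would require careful work; in particular, the assertion that the $q$-dependence collapses to $\tau'^2$ alone, and that the boundary contributions reorganise cleanly into powers of $a$, are nontrivial and not verified anywhere in the text. The paper simply computes and tabulates.
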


For $a=1$ this conjecture was already observed in \cite{ZJ07}. As was conjectured in \cite{GR04} for $\tau'=1$, the parameter $a$ corresponds to a refined enumeration of vertically and
horizontally symmetric alternating sign matrices. A sum rule for
this value of $\tau'$ was proved in \cite{ZJ07}. We suspect that
the parameter $\tau'$ is related to a simple statistic on plane
partitions, as it is for type A. We thus have an interesting mix
of statistics, one which is natural for ASMs, and one which is
natural for plane partitions. In a forthcoming paper we hope to
formulate some further results concerning the solutions for type B.

\section{Proofs}
\lb{sec6}

\subsection{Proof of  Theorem~1
}
\label{sec:ProofFacSolA}
We have to show that the vector $\ket \Psi$ whose coefficients are given
by the formulas \eqref{psi_alpha_D}, and \eqref{H_a}, \eqref{psi_a-A}
satisfies the qKZ equations \eqref{qKZTL_TypeB1}-\eqref{qKZTL_TypeB3}
of type A.
Following the preliminary analysis of Section~\ref{subsec3.3.1} we
divide the proof of \eqref{qKZTL_TypeB1} into two parts, depending on whether or
not the word corresponding to the path $\alpha$
begins with $e_i$.
\medskip

\noindent
{{\bf  1.}\;\em $\alpha$ does not have a local maximum at $i$.}
We have to show that, see \eqref{eq:hmonpsi},
\be
-(a_i\psi_{\alpha})(x_1,\ldots,x_N) = (h_i(-1) \psi_{\alpha})(x_1,\ldots,x_N) =0
\label{proj}
\ee
for $\psi_\alpha$ given by \eqref{psi_a-A}.

If $\alpha$ does not have a local maximum at $i$, then either $h_i(-1)$ acts on a local minimum,
or on a slope of $\alpha$.
\begin{itemize}
\item $h_i(-1)$ acts on a local minimum of $ \alpha$.

\noindent
In this case $\psi_\alpha$ is divisible by $h_i(1)$ from the left
and \eqref{proj} follows directly from
\[
\begin{picture}(150,60)(0,20)
\put(-30,40){$h_i(-1) h_i(1)\ =\  -a_i\, s_i\ =\ $}
\put(107,10){\includegraphics[width=36pt]{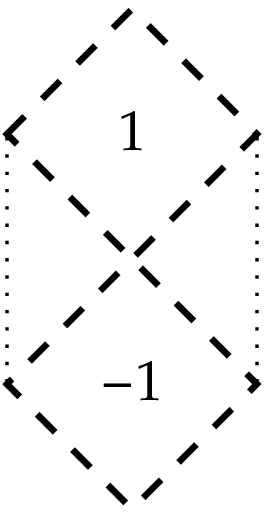}}
\put(157,40){$=\ 0\, .$}
\end{picture}
\]
\item $h_i(-1)$ acts on a slope of $ \alpha$.

\noindent
In this case we use the Yang-Baxter equation \eqref{ybegraphic}  to
push $h_i(-1)$ through the operator $H_\alpha$ \eqref{H_a} in the expression for $\psi_\alpha$
\eqref{psi_a-A}. Then  $h_i(-1)$ vanishes when acting on
$\psi_{\Omega}^{\rm A}$, see \eqref{eq:hmonpsi2}.
This mechanism is illustrated in Figure~\ref{fig:hmonH}.

\begin{figure}[h]
\begin{picture}(340,130)
\put(0,0){\includegraphics[width=135pt]{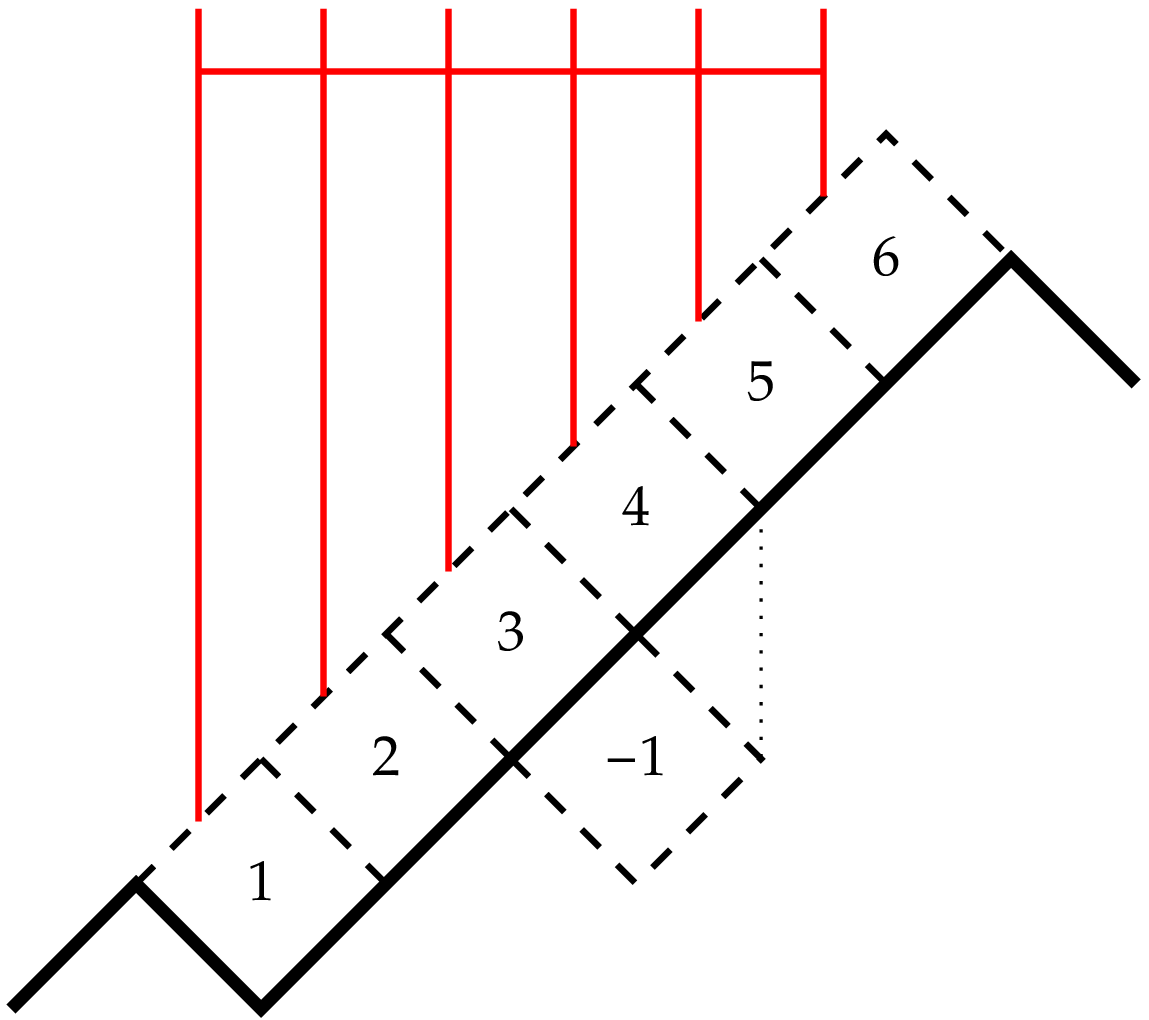}}
\put(150,60){$=$}
\put(170,0){\includegraphics[width=135pt]{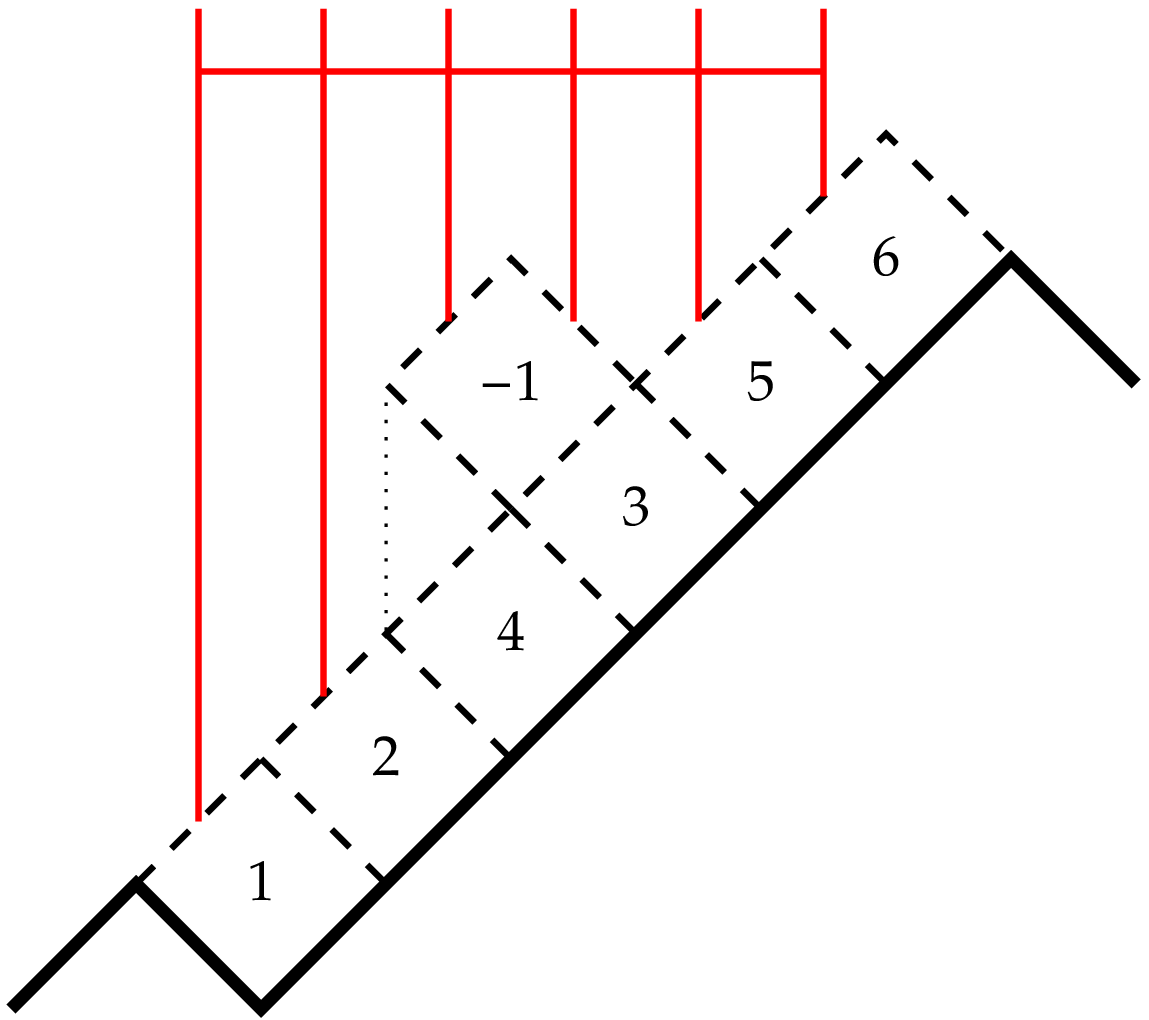}}
\put(320,60){$=0$}
\end{picture}
\caption{Action of $h_i(-1)$ on a slope of a Dyck path $\alpha$. The
first equality follows from the Yang-Baxter equation
\eqref{ybegraphic} and the second
follows from \eqref{eq:hmonpsi2}. For the operators $H_\alpha$ of a more general form
(like, e.g., the one shown in Figure~\ref{fig:qKZsol}) the first part of this transformation
should be repeated until $h_i(-1)$ commutes through all the terms of $H_\alpha$.}

\label{fig:hmonH}
\end{figure}
\end{itemize}

\noindent
{{\bf  2.}\;\em $\alpha$ has a local maximum at $i$.}
The harder part of the proof of Theorem~\ref{th:facsol}, to which we come now, lies in proving
\eqref{TLei} when $h_i(1)$ acts on a component $\psi_\alpha$ where the
path $\alpha$ has local maximum at $i$.
If this maximum at $i$ does
not have a nearest neighbour minimum at $i-1$  or $i+1$ then \eqref{TLei} becomes simply
\[
h_i(1)\psi_{\alpha} = \psi_{\alpha^-}  ,
\]
and the action of $h_i(1)$ is the addition of a tile with content
$1$ at $i$, which is just the prescription of the Theorem \ref{th:facsol},
see Figure~\ref{fig:SimpleMax}.

\begin{figure}[h]
\centerline{\includegraphics[width=0.8\textwidth]{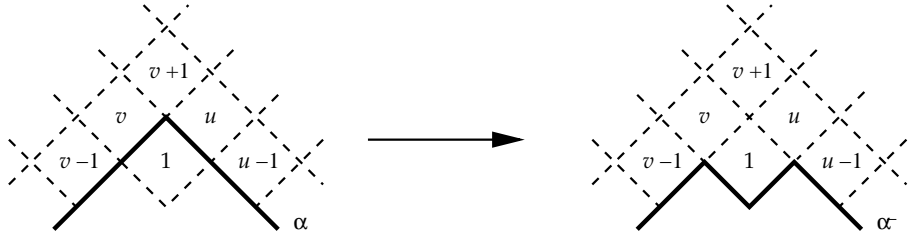}}
\caption{Graphical representation of the equation $h_i(1)
  \psi_{\alpha} = \psi_{\alpha^-}$ corresponding to the addition of a
  tile with content $1$ at a maximum without
  neighbouring minima. In this case $u>1$ and $v>1$.}
\label{fig:SimpleMax}
\end{figure}
\begin{figure}[h]
\includegraphics[width=0.9\textwidth]{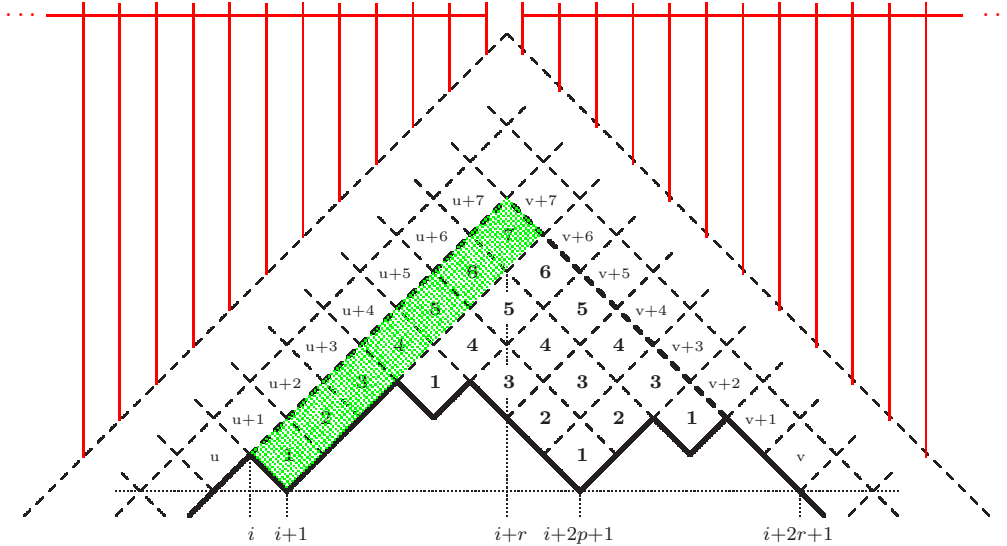}
\caption{
The Dyck path $\alpha$ satisfying conditions a) and b).
Between $i+1$ and $i+2r+1$ this path contains exactly
one local minimum  at $i+2p+1$ which has the same height  as the minimum at $i+1$.
}
\label{NonSimple-H}
\end{figure}

We will now look at the action of $h_i(1)$ on $\psi_\alpha$ \eqref{psi_a-A}, where
the Dyck path $\alpha$ satisfies  conditions (see Figure \ref{NonSimple-H})
\begin{itemize}\label{cond-ab}
\item[{ a)}]
$\alpha$
has a maximum at $i$ with a neighbouring minimum at $i+1$;
\vspace{1mm}
\item[{ b)}]
$\alpha$
crosses the horizontal line at height
$\alpha_{i+1}=\alpha_i -1$, for the first time to the right of $i$, at  $i+2r+1$, $r\geq 1$:
$\alpha_{i+2r}-1=\alpha_{i+2r+1}=\alpha_{i+2r+2}+1=\alpha_{i+1}$.
\end{itemize}
In this case we observe that the factorised expression \eqref{psi_a-A} for $\psi_\alpha$
contains a strip of tiles $H_{i+1,i+r}(1)$,
where
\be
H_{i+1,i+r}(u)\, :=\, h_{i+1}(u)h_{i+2}(u+1)\times\dots \times h_{i+r}(u+r-1).
\label{eq:Hdef}
\ee
This strip is shown shaded on  Figure~\ref{NonSimple-H}.

We are going to rewrite the term $h_i(1)H_{i+1,i+r}(1)$ in the product $h_i(1)\psi_\alpha$
in such a way that we obtain the components $\psi_{\alpha^-}$ and
$\psi_{\alpha^{+k}}$ from the right hand side of \eqref{TLei}, see also Figure~\ref{alphapm}, defined according to the rules
\eqref{H_a}, \eqref{psi_a-A}.
As a first step we prove the following proposition:
\begin{prop}
\label{prop:honpsi}
Let $X$ be an arbitrary element of the algebra $\H^{\rm A}_N(q)$ taken in its faithful representation
\eqref{a-rep}, \eqref{h-rep}.
We denote by $A_{i,j}$, $i\leq j$,  the linear span of terms
$~X a_k = X h_k(-1)~$ $\forall\,k: i\leq k\leq j$. We define additionally
$A_{i,i-1}:=0$,~ $H_{i,i-1}(1):=1$.

The following relation is valid modulo $A_{i,i+r-1}\,$:
\be
\lb{6.3}
H_{i,i+r}(u)
= H_{i,i+r}(u+v)\, +\, \frac{[v]}{[u][u+v]}\, H_{i+1,i+r}(1)\quad \mbox{\rm mod}\; A_{i,i+r-1}\qquad
\forall\,r\geq 0\, .
\ee
\end{prop}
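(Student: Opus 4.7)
\textbf{Proof plan for Proposition \ref{prop:honpsi}.} I would argue by induction on $r\geq 0$, using the basic identity \eqref{huv} together with the observation that $A_{i,j}$ is a left ideal (since its generating elements $Xa_k$ remain of the same form after left multiplication).

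\emph{Base case} $r=0$. The statement collapses to the identity \eqref{huv}: with $v \mapsto u+v$ there, one gets $h_i(u) = h_i(u+v) + \tfrac{[v]}{[u][u+v]}$, which is exactly the claim once the conventions $H_{i+1,i-1}(1):=1$ and $A_{i,i-1}:=0$ are applied.

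\emph{Inductive step} $r\geq 1$. I split $H_{i,i+r}(u) = h_i(u)\,H_{i+1,i+r}(u+1)$ and apply \eqref{huv} to the leftmost factor, giving
\[
H_{i,i+r}(u) \;=\; h_i(u+v)\,H_{i+1,i+r}(u+1) \;+\; \tfrac{[v]}{[u][u+v]}\,H_{i+1,i+r}(u+1).
\]
Next I apply the induction hypothesis (with the starting index shifted to $i+1$) twice: once with spectral parameters $(u+1,v)$ to obtain
\[
H_{i+1,i+r}(u+1) \equiv H_{i+1,i+r}(u+v+1) + \tfrac{[v]}{[u+1][u+v+1]}\,H_{i+2,i+r}(1) \pmod{A_{i+1,i+r-1}},
\]
and once with $(1,u)$, rearranged to yield
\[
H_{i+1,i+r}(u+1) \equiv H_{i+1,i+r}(1) - \tfrac{[u]}{[u+1]}\,H_{i+2,i+r}(1) \pmod{A_{i+1,i+r-1}}.
\]
Because $A_{i+1,i+r-1}\subseteq A_{i,i+r-1}$ and the latter is a left ideal, the errors remain in $A_{i,i+r-1}$ after being multiplied on the left by $h_i(u+v)$ or by scalars. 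Substituting and collecting, the terms $h_i(u+v)H_{i+1,i+r}(u+v+1) = H_{i,i+r}(u+v)$ and $\tfrac{[v]}{[u][u+v]}H_{i+1,i+r}(1)$ appear as required, while the residue condenses to
\[
\tfrac{[v]}{[u+1][u+v+1][u+v]}\bigl([u+v]\,h_i(u+v) - [u+v+1]\bigr)\,H_{i+2,i+r}(1).
\]

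Now I invoke the defining form \eqref{h_u}, which gives $[u+v]h_i(u+v) = [u+v+1] - [u+v]\,a_i$, so the bracket simplifies to $-[u+v]\,a_i$ and the residue becomes $-\tfrac{[v]}{[u+1][u+v+1]}\,a_i\,H_{i+2,i+r}(1)$. Since $a_i$ commutes with every $h_k(\cdot)$ for $k\geq i+2$, this can be rewritten as $X a_i$ for $X = -\tfrac{[v]}{[u+1][u+v+1]}H_{i+2,i+r}(1)$, which lies in $A_{i,i}\subseteq A_{i,i+r-1}$, completing the inductive step.

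\emph{Main obstacle.} The bookkeeping of the two distinct invocations of the inductive hypothesis and the tracking of the correction terms is where care is needed; in particular one must check that the $A_{i+1,i+r-1}$-valued residues from the inductive step are absorbed into $A_{i,i+r-1}$ after left multiplication by $h_i(u+v)$, and that the leftover scalar combination collapses via the linear relation $[u+v]h_i(u+v) = [u+v+1] - [u+v]a_i$ to produce a single $a_i$-factor that can be moved all the way to the right through the commuting generators $h_{i+2},\dots,h_{i+r}$.
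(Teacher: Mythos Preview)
Your proof is correct and is in fact more direct than the route taken in the paper.

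The paper first establishes an \emph{exact} expansion (its Lemma~\ref{le:Hshift}, equation \eqref{s1})
\[
H_{i,i+r}(u) = H_{i,i+r}(u+v) + \sum_{k=0}^r \frac{[v]}{[u+k][u+v+k]}\, H_{i,i+k-1}(u+v)\, H_{i+k+1,i+r}(u+k+1),
\]
and an evaluation $H_{i,i+r}(u)\equiv [u+r+1]/[u]$ modulo $A_{i,i+r}$ (Lemma~\ref{lem:HonDelta}). The proposition is then obtained by applying \eqref{s1} twice (with parameters $(u,v)$ and then $(u+1,-u)$) and observing, via the evaluation lemma, that all cross-terms cancel modulo $A_{i,i+r-1}$. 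Your argument bypasses the full expansion entirely: a single split $H_{i,i+r}(u)=h_i(u)H_{i+1,i+r}(u+1)$, two invocations of the induction hypothesis with well-chosen parameters, and the elementary relation $[w]h_i(w)=[w+1]-[w]a_i$ suffice to collapse the residue to $-\tfrac{[v]}{[u+1][u+v+1]}H_{i+2,i+r}(1)\,a_i$, which lies in $A_{i,i}\subseteq A_{i,i+r-1}$ by commutativity of $a_i$ with $h_k$ for $k\ge i+2$.

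What each approach buys: yours is shorter and needs no auxiliary lemmas. The paper's route pays a higher setup cost but produces \eqref{s1} as an independent identity (valid without any modulo), which is then reused in the proof of Lemma~\ref{lem4} and in later manipulations; so the extra machinery is not wasted in the larger context of the paper.
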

For the proof of Proposition \ref{prop:honpsi} we need the following two simple lemmas.
\begin{lemma}
\label{le:Hshift}
One has
\begin{enumerate}
\item[1.] $h_i(u)H_{i+1,i+r}(u+1) = H_{i,i+r}(u)$;
\vspace{1mm}
\item[2.] $H_{i,i+r}(u)h_{i+r+1}(u+r+1) = H_{i,i+r+1}(u)$;
\item[3.] for generic values of $\; u\, $ and $\, v$
\begin{multline}
\hspace{9mm}H_{i,i+r}(u) = H_{i,i+r}(u+v) \\
 + \sum_{k=0}^r \frac{[v]}{[u+k][u+v+k]}
H_{i,i+k-1}(u+v) H_{i+k+1,i+r}(u+k+1)\, .
\label{s1}
\end{multline}
\end{enumerate}
\end{lemma}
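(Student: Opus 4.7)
My plan is to prove the three parts in order. Parts 1 and 2 follow immediately from unwinding the definition \eqref{eq:Hdef} of
\[
H_{i,i+r}(u) \;=\; h_i(u)\, h_{i+1}(u+1)\, \cdots\, h_{i+r}(u+r).
\]
For part 1, the factor $H_{i+1,i+r}(u+1) = h_{i+1}(u+1)\,h_{i+2}(u+2)\cdots h_{i+r}(u+r)$ is exactly $H_{i,i+r}(u)$ with the leading $h_i(u)$ stripped off; prepending $h_i(u)$ restores it. Part 2 is the symmetric observation: right-multiplying by $h_{i+r+1}(u+r+1)$ adjoins one more factor to the telescoping product. Both parts are trivial consequences of associativity.

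For part 3, I will proceed by induction on $r\geq 0$. The base case $r=0$ uses the conventions $H_{i,i-1}(u+v)=H_{i+1,i}(u+1)=1$ (empty products) and reduces to the scalar identity
\[
h_i(u) \;=\; h_i(u+v) + \frac{[v]}{[u][u+v]},
\]
which is precisely identity \eqref{huv} after setting its $v$ equal to $u+v$.

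For the inductive step, assume the formula holds for $r$ and rewrite $H_{i,i+r+1}(u) = H_{i,i+r}(u)\, h_{i+r+1}(u+r+1)$ using part 2. Substituting the inductive hypothesis for the first factor splits the expression into (i) a leading piece $H_{i,i+r}(u+v)\, h_{i+r+1}(u+r+1)$ and (ii) a sum over $k=0,\ldots,r$. In each sum term, part 2 (applied with the starting index shifted to $i+k+1$) absorbs the tail $h_{i+r+1}(u+r+1)$ into $H_{i+k+1,i+r}(u+k+1)$, producing $H_{i+k+1,i+r+1}(u+k+1)$, which is exactly the form demanded by the $r+1$ version of the identity. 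For the leading piece, I will invoke the base case once more to split
\[
h_{i+r+1}(u+r+1) \;=\; h_{i+r+1}(u+v+r+1) + \frac{[v]}{[u+r+1][u+v+r+1]};
\]
the first summand combines via part 2 with $H_{i,i+r}(u+v)$ into $H_{i,i+r+1}(u+v)$ (the new leading term), while the second summand supplies precisely the missing $k=r+1$ contribution, using the empty-product convention $H_{i+r+2,i+r+1}(u+r+2)=1$. The sum then runs over $k=0,\ldots,r+1$ as required, closing the induction.

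The proof is essentially bookkeeping; the only nontrivial input is the rational identity \eqref{huv}, used twice (in the base case, and to split off the new tail factor in the inductive step). The only point requiring care is that the index ranges of $H$ shift consistently when the block is extended by one factor on the right, and that the new term produced at $k=r+1$ matches the convention for the empty product.
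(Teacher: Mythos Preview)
Your proof is correct and follows essentially the same approach as the paper's own proof: parts 1 and 2 are dispatched directly from the definition \eqref{eq:Hdef}, and part 3 is handled by induction on $r$, with the base case reducing to \eqref{huv} and the inductive step obtained by right-extending via part 2, applying the hypothesis, and then splitting the trailing factor $h_{i+r+1}(u+r+1)$ with \eqref{huv} to produce both the new leading term $H_{i,i+r+1}(u+v)$ and the missing $k=r+1$ summand.
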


\begin{proof}
The first two parts of the lemma follow immediately from the definition of $H_{i,i+r}(u)$ in
\eqref{eq:Hdef}.
To prove \eqref{s1} we use induction on $r$. For
$r=0$ equation \eqref{s1} reduces to \eqref{huv}. Now we shall make the inductive step by assuming
\eqref{s1} is true for some $r$, and prove it for $r+1$:
\begin{align*}
H_{i,i+r+1}(u)\, &=\, H_{i,i+r}(u)h_{i+r+1}(u+r+1)
\, = H_{i,i+r}(u+v)h_{i+r+1}(u+r+1)\\
&\hspace{-30pt}\hphantom{=} + \sum_{k=0}^r \frac{[v]}{[u+k][u+v+k]}
H_{i,i+k-1}(u+v) H_{i+k+1,i+r}(u+k+1)h_{i+r+1}(u+r+1)
\\[1mm]
&\hspace{-30pt}=\, H_{i,i+r}(u+v)\left(h_{i+r+1}(u+v+r+1) + \frac{[v]}{[u+r+1][u+v+r+1]}\right)\\
&\hspace{-30pt}\hphantom{=} + \sum_{k=0}^r \frac{[v]}{[u+k][u+v+k]}
H_{i,i+k-1}(u+v) H_{i+k+1,i+r+1}(u+k+1)
\\[1mm]
&
\hspace{-30pt}
=\, H_{i,i+r+1}(u+v)
+ \sum_{k=0}^{r+1} \frac{[v]}{[u+k][u+v+k]}
H_{i,i+k-1}(u+v) H_{i+k+1,i+r+1}(u+k+1),
\end{align*}
where we used \eqref{huv} and the induction assumption. This
completes the proof of Lemma~\ref{le:Hshift}.
\end{proof}

\begin{lemma}
\label{lem:HonDelta} One has
\be
\label{6.5}
H_{i,i+r}(u)\,  =\,
\frac{[u+r+1]}{[u]}\quad \mbox{\rm mod}\; A_{i,i+r}\, ,
\ee
In particular,  the coefficient $\psi_\Omega^{\rm A}$ \eqref{psi_alpha_D} of the maximal
Dyck path $\Omega^{\rm A}$ satisfies the relations
\be
\label{6.6}
H_{i,i+r}(u)\, \psi_\Omega^{\rm A}\, =\,
\frac{[u+r+1]}{[u]}\,
\psi_\Omega^{\rm A},
\ee
in case the indices $\,r$ and $i$ are chosen
within the limits $\,0\leq r< n-1\,$, $\,1\leq i < n-r$,
where $n=\lfloor\tfrac{N+1}{2}\rfloor$.
\end{lemma}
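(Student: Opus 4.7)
The plan is to prove the congruence \eqref{6.5} by induction on $r$ and then obtain \eqref{6.6} by applying both sides to $\psi_\Omega^{\rm A}$ and observing that the relevant $a_k$ annihilate this base function.

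For the base case $r=0$ of \eqref{6.5}, the definition $h_i(u) = [u+1]/[u] - a_i$ from \eqref{h_u} immediately gives $H_{i,i}(u) = h_i(u) \equiv [u+1]/[u] \pmod{A_{i,i}}$, matching $[u+r+1]/[u]$ at $r=0$.

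For the inductive step, suppose \eqref{6.5} holds for some $r$, so that $H_{i,i+r}(u) = [u+r+1]/[u] + Z$ with $Z \in A_{i,i+r}$. Using part~2 of Lemma~\ref{le:Hshift},
\[
H_{i,i+r+1}(u) = H_{i,i+r}(u)\,h_{i+r+1}(u+r+1) = \frac{[u+r+1]}{[u]}\,h_{i+r+1}(u+r+1) + Z\,h_{i+r+1}(u+r+1).
\]
Since $h_{i+r+1}(u+r+1) \equiv [u+r+2]/[u+r+1] \pmod{A_{i+r+1,i+r+1}}$, the first term gives $[u+r+2]/[u]$ up to an element of $A_{i+r+1,i+r+1} \subset A_{i,i+r+1}$. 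The key bookkeeping step is to verify $Z\,h_{i+r+1}(u+r+1) \in A_{i,i+r+1}$: writing a general basis element of $Z$ as $X a_k$ with $i \leq k \leq i+r$, either $k \leq i+r-1$ in which case $a_k$ commutes with $h_{i+r+1}$ and we obtain $Xh_{i+r+1}(u+r+1)\,a_k \in A_{k,k}$, or $k=i+r$ in which case $a_{i+r}\,h_{i+r+1}(u+r+1) = \tfrac{[u+r+2]}{[u+r+1]}a_{i+r} - a_{i+r}a_{i+r+1}$, both summands lying in $A_{i+r,i+r+1}$ (the second because $a_{i+r}a_{i+r+1}$ is an allowed expression $X\,a_{i+r+1}$). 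This completes the induction.

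Finally, \eqref{6.6} is obtained by applying \eqref{6.5} to $\psi_\Omega^{\rm A}$. The maximal Dyck path $\Omega^{\rm A}$ has a unique local maximum at position $n=\lfloor(N+1)/2\rfloor$, so by \eqref{eq:hmonpsi} we have $a_k\psi_\Omega^{\rm A} = -h_k(-1)\psi_\Omega^{\rm A} = 0$ for every $k \neq n$. Under the hypotheses $0 \leq r < n-1$ and $1 \leq i < n-r$, the range $i \leq k \leq i+r$ lies strictly below $n$, so every element of $A_{i,i+r}$ kills $\psi_\Omega^{\rm A}$, leaving $H_{i,i+r}(u)\psi_\Omega^{\rm A} = ([u+r+1]/[u])\psi_\Omega^{\rm A}$. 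The main obstacle is simply the ideal-theoretic bookkeeping in the inductive step, since $A_{i,j}$ is only the \emph{linear span} of right-multiples $Xa_k$ and not a priori stable under further right multiplication; once the two cases above are handled, the rest is a routine telescoping of the factors $[u+k+1]/[u+k]$.
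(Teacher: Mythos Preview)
Your proof is correct and follows essentially the same approach as the paper: both use the identity $h_i(u)=[u+1]/[u]-a_i$ (equivalently $h_i(u)=[u+1]/[u]+h_i(-1)$ via \eqref{huv}) and the telescoping of the scalar factors, then deduce \eqref{6.6} from \eqref{eq:hmonpsi2}. The only difference is presentational: the paper leaves the passage from the single-factor identity to the product \eqref{6.5} implicit, whereas you make the right-multiplication bookkeeping for $A_{i,i+r}$ explicit via an induction on $r$. Your case analysis ($k\le i+r-1$ versus $k=i+r$) is exactly the extra detail needed to justify that step rigorously.
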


\begin{proof} From \eqref{eq:hmonpsi2} and \eqref{huv} we find that
\be
\label{vspomogat}
h_{i}(u) \, = \frac{[u+1]}{[u]}
\, + h_i(-1)\, =\, \frac{[u+1]}{[u]}\quad \mbox{\rm mod}\; A_{i,i}\, ,
\ee
which implies formula \eqref{6.5}. Relation \eqref{6.6} for the coefficient $\psi_\Omega^{\rm A}$
then follows from \eqref{eq:hmonpsi2}.
\end{proof}

\begin{proof}[Proof of Proposition~\ref{prop:honpsi}]
By applying \eqref{s1}
twice, first with the arguments $\{u,v\}$,  and then with the arguments
$\{u,v\}$ substituted by $\{u+1,-u\}$, we find:
\begin{align}
H_{i,i+r}(u) &= H_{i,i+r}(u+v) + \frac{[v]}{[u][u+v]} H_{i+1,i+r}(u+1)\nonumber \\
&\hspace{10mm}
 + \sum_{k=1}^{r} \frac{[v]}{[u+k][u+v+k]} H_{i,i+k-1}(u+v)
H_{i+k+1,i+r}(u+k+1) \nonumber\\[1mm]
&= H_{i,i+r}(u+v) + \frac{[v]}{[u][u+v]} H_{i+1,i+r}(1)\nonumber \\
&\hspace{10mm}
 - \frac{[v]}{[u][u+v]}\sum_{k=0}^{r-1} \frac{[u]}{[u+k+1][k+1]} H_{i+1,i+k}(1)
H_{i+k+2,i+r}(u+k+2)\nonumber\\
&\hspace{10mm}  + \sum_{k=1}^{r}
\frac{[v]}{[u+k][u+v+k]} H_{i,i+k-1}(u+v)
H_{i+k+1,i+r}(u+k+1)\nonumber
\\[1mm]
&= H_{i,i+r}(u+v) + \frac{[v]}{[u][u+v]} H_{i+1,i+r}(1) \nonumber\\
& \hspace{-16mm} + \sum_{k=1}^{r} \frac{[v]}{[u+k]}H_{i+k+1,i+r}(u+k+1)
\left( \frac{1}{[u+v+k]} H_{i,i+k-1}(u+v) -\frac{1}{[u+v][k]}
H_{i+1,i+k-1}(1)\right) .
\label{eq:Hdiff}
\end{align}
It lasts to notice that, by Lemma~\ref{lem:HonDelta}, the operators
between parentheses in the last term of
\eqref{eq:Hdiff} become c-numbers modulo $A_{i,i+r-1}$  and
in fact cancel:
\begin{multline*}
\left( \frac{1}{[u+v+k]} H_{i,i+k-1}(u+v)
-\frac{1}{[u+v][k]} H_{i+1,i+k-1}(1)\right)\\[1mm]
=  \left( \frac{1}{[u+v+k]} \frac{[u+v+k]}{[u+v]}
-\frac{1}{[u+v][k]} \frac{[k]}{[1]}\right)\quad \mbox{\rm mod}\; A_{i,i+r-1}\\
\, =\, 0\quad \mbox{\rm mod}\; A_{i,i+r-1}\, .
\end{multline*}
Hence, \eqref{eq:Hdiff} implies \eqref{6.3}.
\end{proof}

Consider a path $\alpha$ whose all local minima between $i+1$ and $i+2r+1$
lie higher then $\alpha_{i+1}$. For such paths  Proposition \ref{prop:honpsi}
implies the following
\begin{cor}
\label{cor3}
Let $\alpha=(\alpha_0,\ldots,\alpha_N)$ be a Dyck path
satisfying conditions {\rm a)} and {\rm b)} on page \pageref{cond-ab}.
Assume additionally that $\alpha$ has no  local
minima at height $\alpha_{i+1}$ between $i+1$ and $i+2r+1$.
Let further $\alpha^-$  (respectively, $\alpha^{+1}$) denote the path
obtained from $\alpha$ by raising (resp., lowering) the height $\alpha_{i}$
(resp., $\alpha_{i+1}$) by two, see Figure~\ref{alphapm}.
Then for the coefficients $\psi_\alpha$, $\psi_{\alpha^-}$, $\psi_{\alpha^{+1}}$
defined  by \eqref{psi_a-A} we have:
\[
h_i(1) \psi_{\alpha}\,
=\, \psi_{\alpha^-} + \psi_{\alpha^{+1}}\, .
\]
\end{cor}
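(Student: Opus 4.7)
The plan is to reduce the identity $h_i(1)\psi_\alpha=\psi_{\alpha^-}+\psi_{\alpha^{+1}}$ to an algebraic manipulation with Baxterised elements and then identify each resulting term. Under conditions (a), (b), and the extra hypothesis, the Young diagram $Y_\alpha$ contains a diagonal strip of tiles sitting just above the local minimum at $i+1$, with $u$-values $1,2,\dots,r$, corresponding to the factor $H_{i+1,i+r}(1)=h_{i+1}(1)h_{i+2}(2)\cdots h_{i+r}(r)$ (the shaded strip of Figure~\ref{NonSimple-H}). Using $[h_j,h_k]=0$ for $|j-k|>1$, I will rearrange $H_\alpha$ so that this strip stands to the left of the remaining product $R$, giving $\psi_\alpha=H_{i+1,i+r}(1)\,R\,\psi_\Omega^{\rm A}$.

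The core step is the application of Proposition~\ref{prop:honpsi} with the indices shifted ($i\mapsto i+1$, $r\mapsto r-1$) and with $u=v=1$:
\[
H_{i+1,i+r}(1) \equiv H_{i+1,i+r}(2) + \frac{1}{[2]}H_{i+2,i+r}(1) \pmod{A_{i+1,i+r-1}}.
\]
Premultiplying by $h_i(1)$, I will use Lemma~\ref{le:Hshift}(1) in the form $h_i(1)H_{i+1,i+r}(2)=H_{i,i+r}(1)$ on the first term, and the commutativity $[h_i,h_k]=0$ for $k\ge i+2$ on the second, to obtain
\[
h_i(1)H_{i+1,i+r}(1) \equiv H_{i,i+r}(1) + \frac{1}{[2]}H_{i+2,i+r}(1)\,h_i(1) \pmod{h_i(1)\,A_{i+1,i+r-1}}.
\]
Applied to $R\,\psi_\Omega^{\rm A}$, the first term matches the factorised expression $\psi_{\alpha^-}=H_{\alpha^-}\psi_\Omega^{\rm A}$ of Theorem~\ref{th:facsol}: the diagram $Y_{\alpha^-}$ acquires one extra corner tile at position $i$, which extends the strip to $H_{i,i+r}(1)$ and simultaneously absorbs the uniform shift by $+1$ in the $u$-values of the tiles immediately above the strip. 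The second term matches $\psi_{\alpha^{+1}}$, corresponding to the diagram $Y_{\alpha^{+1}}$ obtained from $Y_\alpha$ by removing the bottom strip tile.

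The error terms in $h_i(1)\,A_{i+1,i+r-1}$, applied to $R\,\psi_\Omega^{\rm A}$, must be shown to vanish: for any $X\,h_k(-1)$ with $k\in\{i+1,\dots,i+r-1\}$, I will push $h_k(-1)$ through $R$ by repeated application of the Yang-Baxter equation (exactly as in Figure~\ref{fig:hmonH}), after which $h_k(-1)\,\psi_\Omega^{\rm A}=0$ by~\eqref{eq:hmonpsi2}, since $\Omega^{\rm A}$ has no local maximum at $k$. The main obstacle will be the identification of the second term with $\psi_{\alpha^{+1}}$: this requires careful bookkeeping of how removing the lowest strip tile alters the $u$-values of the surrounding tiles in $Y_{\alpha^{+1}}$, and an appeal to the extra hypothesis (no local minima at height $\alpha_{i+1}$ in $[i+1,i+2r+1]$), which is needed to ensure that the uphill-slope rule admits the unique preimage $\alpha^{+1}$, thereby reducing~\eqref{TLei} to the two-term relation asserted by the corollary.
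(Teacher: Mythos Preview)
Your approach coincides with the paper's: split off the strip $H_{i+1,i+r}(1)$, apply Proposition~\ref{prop:honpsi} with $u=v=1$, merge $h_i(1)H_{i+1,i+r}(2)=H_{i,i+r}(1)$ via Lemma~\ref{le:Hshift}, and kill the $A$-terms by the Yang--Baxter sliding of Figure~\ref{fig:hmonH}.

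There is, however, one small but essential step you leave out. After commuting, your second term is $\tfrac{1}{[2]}H_{i+2,i+r}(1)\,h_i(1)$, not the bare $H_{i+2,i+r}(1)$; as written it does \emph{not} yet match $\psi_{\alpha^{+1}}$ (and ``removing the bottom strip tile'' from $Y_\alpha$ would leave contents $2,\dots,r$, i.e.\ $H_{i+2,i+r}(2)$, not $H_{i+2,i+r}(1)$, so that description is not the right bookkeeping either). The paper resolves the dangling $h_i(1)$ with \eqref{vspomogat}: since $h_i(1)=s_i=[2]-a_i=[2]+h_i(-1)$, one has $\tfrac{1}{[2]}h_i(1)\equiv 1\pmod{A_{i,i}}$, so the second term reduces to $H_{i+2,i+r}(1)$ at the cost of enlarging the error ideal from $A_{i+1,i+r-1}$ to $A_{i,i+r-1}$. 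This is precisely the passage from the first to the second line of~\eqref{6.10}; the extra $h_i(-1)$ error term is then disposed of by the same YBE mechanism you already invoke. With this one line added, your argument is the paper's.
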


\begin{proof}
As we noted before, the product $h_i(1)\psi_\alpha$
contains the factor $h_i(1)H_{i+1,i+r}(1)$. Applying \eqref{6.3}
and \eqref{vspomogat} for $u=v=1$,
we can transform this factor in the following way
\begin{align}
h_i(1)H_{i+1,i+r}(1)&=\, h_i(1)H_{i+1,i+r}(2) \, +\,\frac{1}{[2]} h_i(1) H_{i+2,i+r}(1)\quad \mbox{\rm mod}\;
A_{i+1,i+r-1}\nonumber \\[1mm]
&=\, H_{i,i+r}(1)\, +\, H_{i+2,i+r}(1)\quad \mbox{\rm mod}\;
A_{i,i+r-1}\, .
\label{6.10}
\end{align}
Upon substitution of this result back into $h_i(1)\psi_\alpha$
the terms containing the expressions $H_{i,i+r}(1)$ and $H_{i+2,i+r}(1)$
both assume the form of the ansatz \eqref{psi_a-A}. They
correspond, respectively, to the paths
$\alpha^-$ and $\alpha^{^+1}$. The terms from $A_{i,i+r-1}$ vanish
due to the same mechanism as in Figure \ref{fig:hmonH}.
This calculation is graphically
displayed in  Figure~\ref{fig:hponpsi}.
\end{proof}
\begin{figure}[h]
\centerline{
\begin{picture}(350,160)
\put(0,100){\includegraphics[width=80pt]{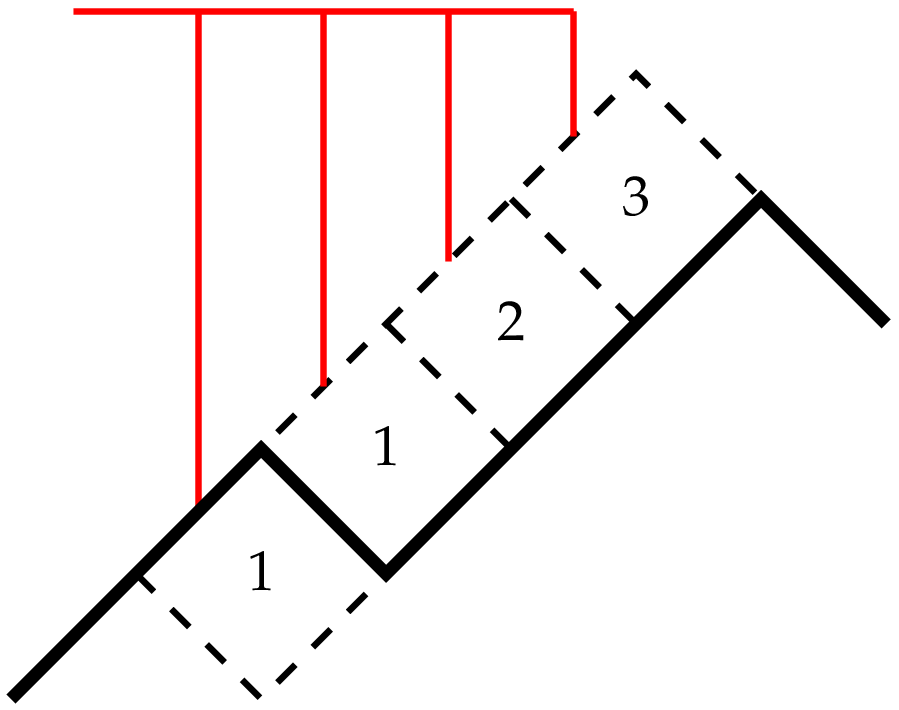}}
\put(80,125){$\scriptstyle\alpha$}
\put(100,130){$=$}
\put(120,100){\includegraphics[width=80pt]{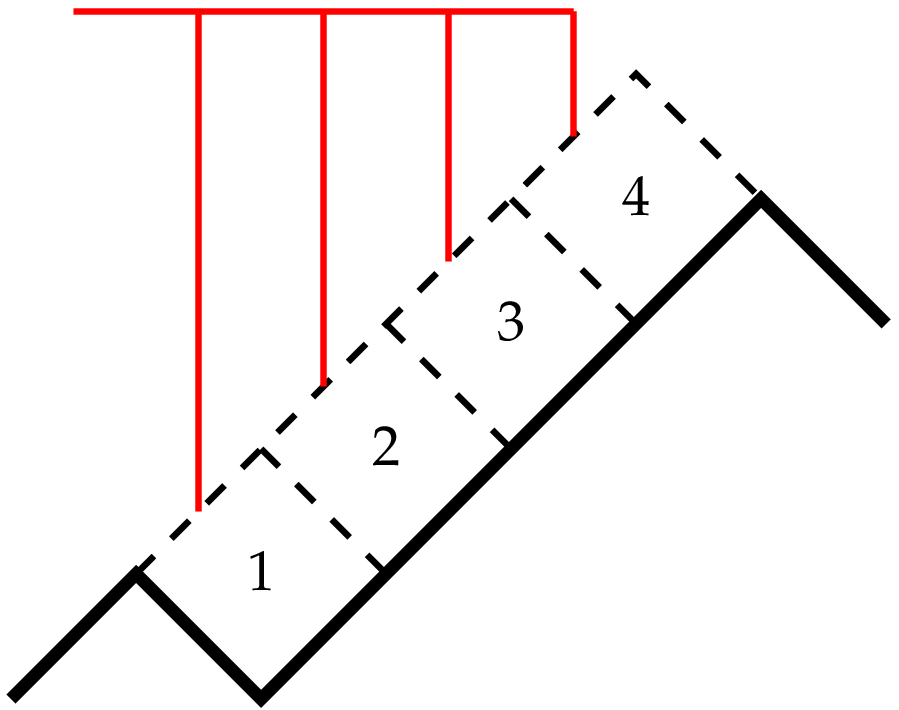}}
\put(210,130){${}+\frac{1}{[2]}$}
\put(240,100){\includegraphics[width=80pt]{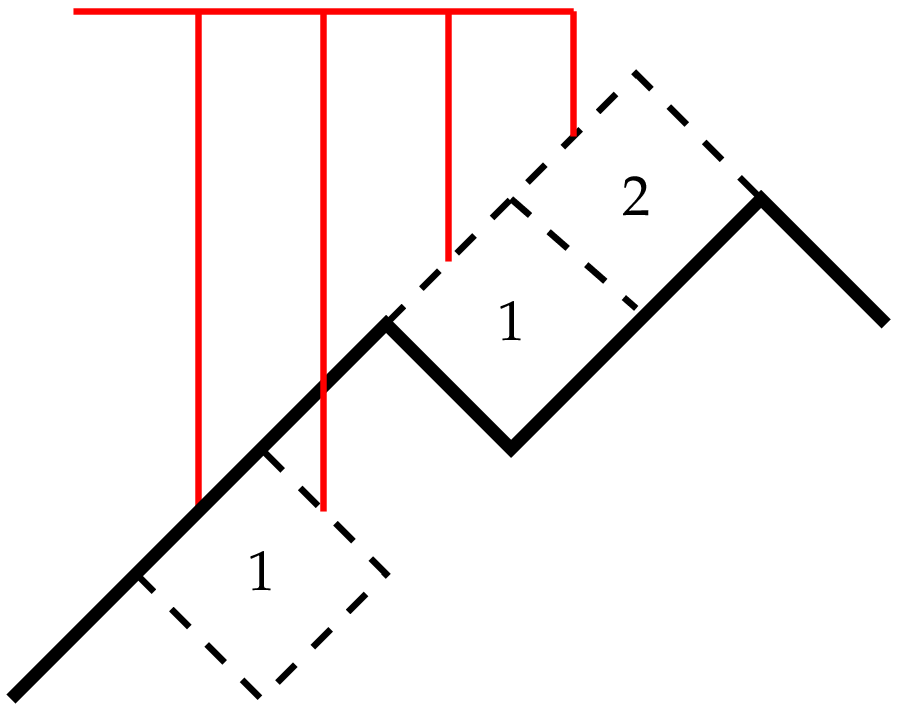}}
\put(100,30){$=$}
\put(120,0){\includegraphics[width=80pt]{hponpsi2.eps}}
\put(200,25){$\scriptstyle\alpha^{-}$}
\put(230,30){$+$}
\put(240,0){\includegraphics[width=80pt]{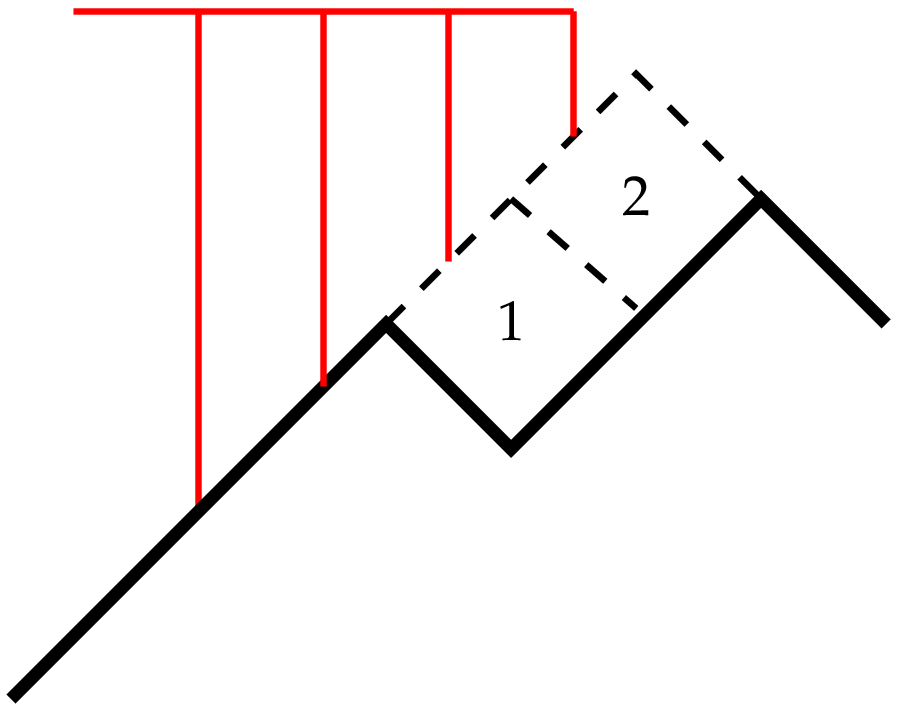}}
\put(320,25){$\scriptstyle\alpha^{+1}$}
\end{picture}
}
\caption{Diagrammatic presentation of
the Corollary \ref{cor3}.
The transformation \eqref{6.10} is displayed in details.}
\label{fig:hponpsi}
\end{figure}

Consider now a path $\alpha$ which has exactly one local minimum
between $i+1$ and $i+2r+1$ of the same height as the minimum at $i+1$, see Figure~\ref{NonSimple-H}.
In this case the following Corollary holds.

\begin{cor}
\label{cor:honpsi2}
Let $\alpha=(\alpha_0,\ldots,\alpha_N)$ be a Dyck path
satisfying conditions {\rm a)} and {\rm b)} on page \pageref{cond-ab}.
Assume additionally that $\alpha$ has
one local minimum placed at the point
$(i+2p+1)$, $0< p< r$, which has exactly the same height
as the minimum at $i+1$: $\alpha_{i+2p+1}=\alpha_{i+1}$, see Figure~\ref{NonSimple-H}.
Let further the paths
$\alpha^-$, $\alpha^{+1}$ be defined as in
Corollary~\ref{cor3}, and denote by $\alpha^{+2}$ the path
obtained from $\alpha$ by raising  the heights
$\alpha_{i+1},\dots ,\alpha_{i+2p+1}$ by two, see
Figure~\ref{alphapm}. Then for the coefficients $\psi_\alpha$,
$\psi_{\alpha^-}$, $\psi_{\alpha^{+1}}$ and $\psi_{\alpha^{+2}}$
defined  by \eqref{psi_a-A} we have:
\be
\label{6.11}
h_i(1) \psi_{\alpha}\, =\,
\psi_{\alpha^-} + \psi_{\alpha^{+1}} + \psi_{\alpha^{+2}}\, .
\ee
\end{cor}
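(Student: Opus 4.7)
The strategy is to extend the argument of Corollary~\ref{cor3} by keeping track of a correction term that, in that simpler setting, vanished on $\psi_{\Omega}^{\rm A}$ via the Yang--Baxter pushing mechanism of Figure~\ref{fig:hmonH}, but in the present setting survives and reproduces $\psi_{\alpha^{+2}}$. The starting point is identity \eqref{6.10} applied to the bottom diagonal strip $H_{i+1,i+r}(1)$ of $H_\alpha$:
\[
h_i(1)\,H_{i+1,i+r}(1) \;=\; H_{i,i+r}(1) \;+\; H_{i+2,i+r}(1) \;+\; R,\qquad R\in A_{i,i+r-1}.
\]

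First I would bring $H_\alpha$ into a form that isolates $H_{i+1,i+r}(1)$ on the left, followed by a remainder $H''$ collecting all factors with argument $\geq 2$ as well as the (right) diagonal strip descending from the common peak between the two bumps down to the second minimum at $i+2p+1$. Substituting the identity above and composing with $H''$, the first two contributions $H_{i,i+r}(1)\,H''\psi_{\Omega}^{\rm A}$ and $H_{i+2,i+r}(1)\,H''\psi_{\Omega}^{\rm A}$ match the factorised forms of $\psi_{\alpha^-}$ and $\psi_{\alpha^{+1}}$ respectively, exactly as in the proof of Corollary~\ref{cor3}: the paths $\alpha^-$ and $\alpha^{+1}$ share with $\alpha$ the upper-level tile structure encoded in $H''$ and differ from $\alpha$ only in the bottom strip, which goes from $H_{i+1,i+r}(1)$ to $H_{i,i+r}(1)$ or $H_{i+2,i+r}(1)$ respectively, in agreement with the tile-assignment rule used in Theorem~\ref{th:facsol}.

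The novel contribution comes from the remainder $R\,H''\psi_{\Omega}^{\rm A}$. Writing each summand as $X\,h_k(-1)$ for some $k\in[i,i+r-1]$, I would push $h_k(-1)$ rightward through $H''$ by repeated Yang--Baxter moves. For $k$ outside a small neighbourhood of $i+2p$ the propagation is unobstructed: $h_k(-1)$ eventually reaches $\psi_{\Omega}^{\rm A}$ and vanishes by \eqref{eq:hmonpsi2}, exactly as in Corollary~\ref{cor3}. For $k$ inside that neighbourhood, the descending diagonal strip of $H''$ corresponding to the second bump blocks the propagation, and a refined application of Proposition~\ref{prop:honpsi} to this secondary strip should show that the obstructed terms telescope into a single surviving contribution equal to $H_{\alpha^{+2}}\psi_{\Omega}^{\rm A}=\psi_{\alpha^{+2}}$.

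The main technical hurdle will be this last step: identifying precisely the surviving obstructed term and verifying that its coefficient is exactly $1$. This is likely to require an auxiliary lemma refining identity \eqref{s1} to the two-strip configuration present in $H''$, playing the role that Lemma~\ref{lem:HonDelta} plays in the one-strip case for Proposition~\ref{prop:honpsi}; once such a lemma is available, the computation should close by a telescoping argument analogous to the one governing \eqref{eq:Hdiff}.
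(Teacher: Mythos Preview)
Your starting point — identity \eqref{6.10} applied to the bottom strip — is correct, but the identification of what each resulting piece produces is wrong on both counts. First, the remainder $R\in A_{i,i+r-1}$ still vanishes on $H''\psi_{\Omega}^{\rm A}$, exactly as in Corollary~\ref{cor3}. The Yang--Baxter equation is an algebraic identity: pushing $h_k(-1)$ through the tiles of $H''$ is never ``blocked'' by the second bump — it is merely rerouted and emerges as some $h_{k'}(-1)$ acting on $\psi_{\Omega}^{\rm A}$, which is killed by \eqref{eq:hmonpsi2}. So $\psi_{\alpha^{+2}}$ cannot arise from $R$.

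Second, and this is where the extra term actually originates, $H_{i+2,i+r}(1)\,H''\psi_{\Omega}^{\rm A}$ is \emph{not} $\psi_{\alpha^{+1}}$ in the two-minimum setting. The content assignment of Theorem~\ref{th:facsol} is global: raising the first bump to form $\alpha^{+1}$ creates a new corner in $Y_{\alpha^{+1}}$ at the second minimum, shifting the contents at positions $i+p+1,\ldots,i+r$ up by one. Concretely, the bottom strip in $H_{\alpha^{+1}}$ is $H_{i+2,i+p}(1)\,H_{i+p+1,i+r}(p+1)$, whereas $H_{i+2,i+r}(1)=H_{i+2,i+p}(1)\,H_{i+p+1,i+r}(p)$. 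The paper applies Proposition~\ref{prop:honpsi} a second time (with $u=p$, $v=1$) to the sub-strip $H_{i+p+1,i+r}(p)$, obtaining
\[
H_{i+2,i+r}(1)=H_{i+2,i+p}(1)\,H_{i+p+1,i+r}(p+1)+\tfrac{1}{[p+1]}\,H_{i+p+2,i+r}(1)\quad\bmod\ A_{i+2,i+r-1}.
\]
The first term now correctly gives $\psi_{\alpha^{+1}}$; the second term, after a Yang--Baxter permutation with the rectangular block $H_{i+2p+1,i+p+r}^{\phantom{i+2p+1,}i+p+2}(1)$ already present in $H''$, yields $\psi_{\alpha^{+2}}$, the prefactor $\tfrac{1}{[p+1]}$ being cancelled by the evaluation of a downhill strip via Lemma~\ref{lem:HonDelta}.
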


\begin{proof}
We copy the transformation of $h_i(1)\psi_\alpha$ from the proof of Corollary
\ref{cor3} till the end of
\eqref{6.10}. As before, the  term $H_{i,i+r}(1)$ in
the last line of \eqref{6.10} gives rise to the
coefficient $\psi_{\alpha^-}\,$ in \eqref{6.11} whereas the term $A_{i,i+r-1}$
vanishes upon substitution into $h_i(1)\psi_\alpha$. This time however
the term $H_{i+2,i+r}(1)$ when substituted into $h_i(1)\psi_\alpha$ does not give an expression
fitting the ansatz  \eqref{psi_a-A}.
We continue its transformation using again \eqref{6.3} for $u=p$, $v=1$,
and \eqref{6.5} for $u=1$:
\begin{align}
H_{i+2,i+r}(1)\,&=\, H_{i+2,i+p}(1)H_{i+p+1,i+r}(p) \nonumber
\\[1mm]
&\hspace{-9mm}=\, H_{i+2,i+p}(1)\Bigl(H_{i+p+1,i+r}(p+1)\,+\, \frac{1}{[p][p+1]}\,  H_{i+p+2,i+r}(1)\Bigr)
\quad \mbox{\rm mod}\;
A_{i+p+1,i+r-1}\nonumber
\\[1mm]
&\hspace{-9mm}=H_{i+2,i+p}(1)H_{i+p+1,i+r}(p+1)\,+\,\frac{1}{[p+1]}\,H_{i+p+2,i+r}(1)
\quad \mbox{\rm mod}\;
A_{i+2,i+r-1}\, .
\label{6.12}
\end{align}
The first term in \eqref{6.12} upon substitution into $h_i(1)\psi_\alpha$
gives rise to the coefficient $\psi_{\alpha^{+1}}$, whereas the last term vanishes. It remains
to consider the effect of the second term.

Let us introduce a further extension of the notation \eqref{eq:Hdef},
\begin{align}
\label{6.13}
H_{i,i+r}^{\phantom{i,}i-s}(u)&\, :=\,
H_{i,i+r}(u) H_{i-1,i+r-1}(u+1)\times\dots\times
H_{i-s,i+r-s}(u+s)\quad \forall \, r,s\geq 0 ,
\\[1mm]
H_{i,i-1}^{\phantom{i,}i-s}&\, :=\, H_{i,i+r}^{\phantom{i,}i-1}\, :=\, 1\, .
\nonumber
\end{align}
Graphically $H_{i,i+r}^{\phantom{i,}i-s}(u)$ can be displayed as a rectangular
block of tiles of a size $(r+1)\times(s+1)$ with the bottom corner tile corresponding
to $h_i(u)$. We also use the following shorthand symbols for uphill and downhill strips
(the case of block with either $s=0$, or $r=0$):
\[
H_{i,i+r}^{\phantom{i,}i}(u)\, :=\, H_{i,i+r}(u)\, , \qquad
H_{i,i}^{\phantom{i,}i-s}(u)\, :=\, H_{i}^{i-s}(u)\, .
\]

We now notice that in the assumptions of the Corollary
the strip of tiles $H_{i+1,i+r}(1)$ in expression \eqref{psi_a-A} for $\psi_\alpha$
is in fact multiplied from the left by the block of tiles
$\raisebox{-2mm}{\rule{0pt}{6.5mm}}H_{i+2p+1,i+p+r}^{\phantom{i+2p+2,}i+p+2}(1)$.
Therefore, we can continue the transformation of
the second term in \eqref{6.12} by multiplying it from the left  by
the term $H_{i+2p+1,i+p+r}^{\phantom{i+2p+1,}i+p+2}(1)$. The transformation is essentially
a permutation of these two terms which makes use of the Yang-Baxter equation \eqref{ybegraphic}:
\begin{align}
H_{i+2p+1,i+p+r}^{\phantom{i+2p+1,}i+p+2}(1)&\left(
\frac{1}{[p+1]}\,H_{i+p+2,i+r}(1)\right)\nonumber
\\[1mm]
=\,& \frac{1}{[p+1]}\,H_{i+2p+2,i+p+r}(1)\,
H_{i+2p+1,i+p+r-1}^{\phantom{i+2p+1,}i+p+2}(2)\,
H_{i+p+r}^{i+r+1}(1)
\nonumber
\\[1mm]
=\,&
H_{i+2p+2,i+p+r}^{\phantom{i+2p+2,}i+p+2}(1)
\quad \mbox{\rm mod}\;
A_{i+r+1,i+p+r}\, .
\label{6.16}
\end{align}
Here in the last line we evaluate factor $H_{i+p+r}^{i+r+1}(1)$ using \eqref{6.5}.
The transformation \eqref{6.16} is illustrated in Figure~\ref{fig14}.
\begin{figure}[h]
\vspace{-4mm}
\includegraphics[width=0.9\textwidth]{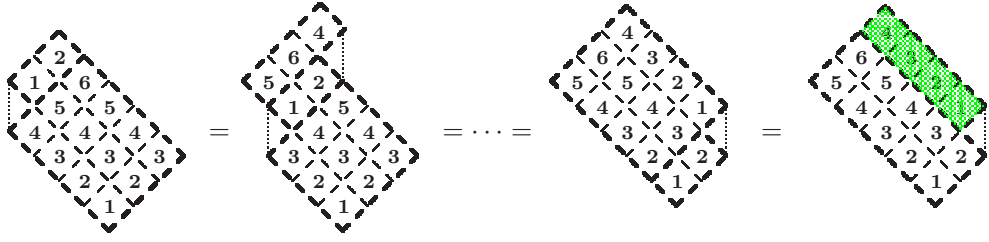}
\caption{
Diagrammatic presentation of the transformation \eqref{6.16}
for the case $p=4$ and $r=7$ (this situation occurs in the third line of the calculation
in Figure~\ref{fig15}).
Using the Yang-Baxter equation one moves the strip $H_{i+p+2,i+r}(1)$ \smallskip
down and right. \smallskip
The contents of the tiles
in the block $H_{i+2p+2,i+p+r}^{\hspace{30pt}i+p+2}(1)$
are shifted cyclically
in the uphill layers
during the permutation. The shaded downhill strip
$H_{i+p+r}^{i+r+1}(1)$ equals $[p+1]$
modulo $A_{i+r+1,i+p+r}$.
}
\label{fig14}
\end{figure}
\begin{figure}[h]
\centerline{\hspace{-15mm}
\includegraphics[width=0.85\textwidth]{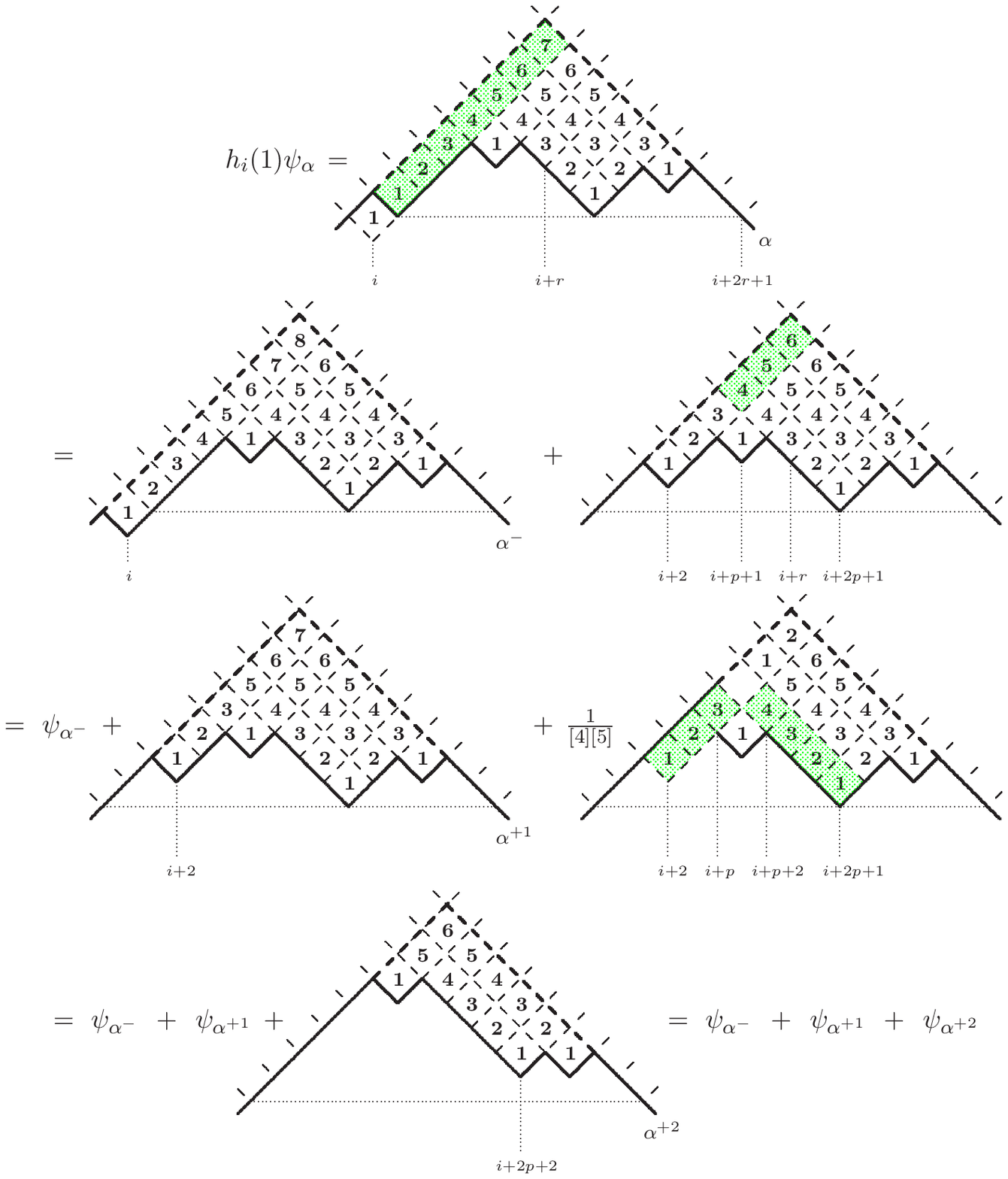}
\vspace{-1mm}
}
\caption{
Diagrammatic illustration for the proof of Corollary \ref{cor:honpsi2}.
The path $\alpha$ is taken from
Figure\,\ref{NonSimple-H}.
Each time the transformation is applied to the strips which are shown  shaded on the pictures.
The second line on the Figure represents transformation \eqref{6.10}.
The third line corresponds to the transformation \eqref{6.12}.
The shaded uphill strip in this line
can be evaluated as $\, [4]$ (cf. with the last equality in \eqref{6.12}). The
shaded downhill strip has to be moved up and right (see Fig.\,\ref{fig14})
and then evaluated as $[5]$.
\vspace{-1mm}}
\label{fig15}
\end{figure}

Substitution of the result of \eqref{6.16} back into $h_i(1)\psi_\alpha$
gives exactly the expression for the coefficient $\psi_{\alpha^{+2}}$.
The whole calculation is graphically
displayed in  Figure~\ref{fig15}.
\end{proof}

The general structure now is clear and we can formulate
\begin{prop}
\label{prop:honpsi3}
Let $\alpha=(\alpha_0,\ldots,\alpha_N)$ be a Dyck path
satisfying conditions {\rm a)} and {\rm b)} on page \pageref{cond-ab}.
Assume additionally that $\alpha$ has
$K\geq 1$ local minima placed at the points
$i+2p_k+1$, $0= p_1<\dots <p_K< r$, which have the same height
$h=\alpha_{i+1}=\alpha_{i+2p_k+1}\; \forall k$, see Figure~\ref{alphapm}.
Then for the coefficients $\psi_\alpha$,
$\psi_{\alpha^-}$, $\psi_{\alpha^{+k}}$, $k=1,\dots ,K,$
defined  by \eqref{psi_a-A} we have:
\begin{align*}
h_i(1) \psi_{\alpha} = \psi_{\alpha^-} + \sum_{k=1}^K
\psi_{\alpha^{+k}},
\end{align*}
where $\alpha^-$ and $\alpha^{+k}$ are the Dyck paths defined
in Figure~\ref{alphapm}.
\end{prop}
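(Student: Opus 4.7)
The plan is to proceed by induction on $K$, the number of local minima of $\alpha$ at height $h = \alpha_{i+1}$ in the interval $[i+1, i+2r+1]$. The base cases $K=1$ and $K=2$ are established in Corollaries~\ref{cor3} and~\ref{cor:honpsi2}, and the inductive step uses the same three ingredients as there: the shift identity~\eqref{6.3} of Proposition~\ref{prop:honpsi}, the scalar evaluation~\eqref{6.5} of uphill strips modulo the ideals $A_{\bullet,\bullet}$ (Lemma~\ref{lem:HonDelta}), and the Yang--Baxter permutation of rectangular blocks as performed in~\eqref{6.16}.

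By the assumption on $\alpha$, the factorised operator $H_\alpha$ in~\eqref{psi_a-A} contains the uphill strip $H_{i+1,i+r}(1)$, above which sit the rectangular blocks $B_k := H_{i+2p_k+1,\, i+p_k+r}^{\phantom{i+2p_k+1,}i+p_k+2}(1)$ for $k=2,\dots,K$, one per additional minimum at position $i+2p_k+1$. The first step is to apply~\eqref{6.3} with $u=v=1$, exactly as in~\eqref{6.10}, to obtain
\[
h_i(1)\, H_{i+1,i+r}(1) \,=\, H_{i,i+r}(1)\, +\, H_{i+2,i+r}(1)\quad \mbox{\rm mod}\; A_{i,i+r-1}.
\]
The first summand, combined with the remaining factors of $H_\alpha$ (including all blocks $B_k$), takes the form of the ansatz~\eqref{psi_a-A} for the path $\alpha^-$. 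The $A_{i,i+r-1}$ corrections annihilate $\psi_{\Omega}^{\rm A}$ via the mechanism of Figure~\ref{fig:hmonH}.

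The second summand is processed iteratively. At stage $k=2,\dots,K$, the current residual has the form $\propto H_{i+2p_{k-1}+2,\, i+r}(1)$ sitting below the blocks $B_k,\dots,B_K$; I split it via~\eqref{6.3} with $u=p_k - p_{k-1}$, $v=1$ into a piece proportional to $H_{i+2p_{k-1}+2,\, i+p_k}(1)\, H_{i+p_k+1,\, i+r}(p_k - p_{k-1}+1)$, which, after being combined with $B_k$ and subjected to a Yang--Baxter permutation of type~\eqref{6.16} that absorbs the resulting downhill strip through~\eqref{6.5}, reproduces the factorised ansatz~\eqref{psi_a-A} for the next path $\alpha^{+k-1}$; plus a smaller residual $\propto H_{i+2p_k+2,\, i+r}(1)$ that is handed to the next stage. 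After $K-1$ iterations one has produced $\psi_{\alpha^{+1}},\dots,\psi_{\alpha^{+K-1}}$, and the last residual combines with $B_K$ alone (via the analogous Yang--Baxter move) to yield $\psi_{\alpha^{+K}}$.

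The main obstacle will be the iterated Yang--Baxter bookkeeping: each time a residual strip is commuted through the stack of blocks $B_j$, $j<k$, sitting above it, additional downhill substrips are generated, and one has to verify both that these lie in an ideal of the form $A_{j',j''}$ whose right annihilator contains $\psi_{\Omega}^{\rm A}$, and that the accumulating spectral-parameter shifts conspire so that the surviving piece matches precisely the output of the tile-filling algorithm of Theorem~\ref{th:facsol} applied to $\alpha^{+k-1}$. This is a direct, if intricate, generalisation of the single cascade depicted in Figures~\ref{fig14} and~\ref{fig15}, and can be carried out by a secondary induction on $k$ inside the main induction on $K$.
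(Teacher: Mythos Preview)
Your overall strategy --- extract $\psi_{\alpha^-}$ first and then peel off $\psi_{\alpha^{+1}},\psi_{\alpha^{+2}},\dots$ one at a time --- is exactly the paper's, and the tools you cite (\eqref{6.3}, \eqref{6.5}, \eqref{6.16}) are all used there.  But the recursive structure you describe is not correct once $K\geq 3$, and a crucial lemma is missing from your toolkit.

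First a minor point: at stage $k=2$ it is the \emph{residual} $\tfrac{1}{[p_2+1]}H_{i+p_2+2,\,i+r}(1)$, not the first piece, that receives the Yang--Baxter treatment of~\eqref{6.16}; the first piece together with \emph{all} of $B_2,\dots,B_K$ already matches the tile-filling rule for $\alpha^{+1}$ and needs no further manipulation (this is literally what happens in~\eqref{6.12}).

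The real gap is in what is handed to stage $k=3$.  Commuting the residual strip through $B_2$ via~\eqref{6.16} replaces the pair by the single rectangular block $B_2'=H_{i+2p_2+2,\,i+p_2+r}^{\phantom{i+2p_2+2,}i+p_2+2}(1)$ sitting under $B_3,\dots,B_K$.  For $K=2$ this \emph{is} $\psi_{\alpha^{+2}}$; for $K\geq 3$ it is not, because its contents are uniformly off by one from what the tile-filling algorithm for $\alpha^{+2}$ prescribes.  So the object you pass to stage $k=3$ is a rectangular \emph{block}, not the strip $H_{i+2p_2+2,\,i+r}(1)$ that your scheme assumes, and the strip-shift identity~\eqref{6.3} no longer applies.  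What is needed is a block analogue of~\eqref{6.3}: a formula that raises by one the contents of an entire rectangle $H_{j,j+r'}^{\phantom{j,}j-s}(u)$ modulo the annihilating ideals.  The paper isolates exactly this as Lemma~\ref{lem4} (equation~\eqref{block}, illustrated in Figure~\ref{fig16}) and invokes it at every step $k\geq 3$; see Figures~\ref{fig17}--\ref{fig18}.  Your ``secondary induction on $k$'' would, if carried out, amount to reproving that lemma inside the main argument.
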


For the proof of Proposition~\ref{prop:honpsi3} we need a generalisation of
the formula \eqref{6.3} for the case of blocks (but now for $v=1$ only):
\begin{lemma}
\label{lem4}
For generic values of $\; u\, $  one has
\begin{align}
H_{i,i+r}^{\phantom{i,}i-s}(u) = H_{i,i+r}^{\phantom{i,}i-s}(u+1)\,
+\,\frac{1}{[u][u+1]}\,&
H_{i+1,i+r}(1)\, H_{i-1}^{i-s}(1)\, H_{i,i+r-1}^{\phantom{i,}i-s+1}(u+2)\nonumber
\\
&\hspace{7mm} \mbox{\rm mod}\; A_{i-s,i+r-s-1}\oplus A_{i+r-s+1,i+r}\, .
\label{block}
\end{align}
\end{lemma}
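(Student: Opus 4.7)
The plan is to induct on $s$, with the base case $s=0$ coming directly from Proposition~\ref{prop:honpsi} with $v=1$. Indeed, when $s=0$, the factors $H_{i-1}^{i}(1)$ and $H_{i,i+r-1}^{i+1}(u+2)$ equal $1$ by the conventions in \eqref{6.13}, the ideal $A_{i+r+1,i+r}$ vanishes, and \eqref{block} collapses to exactly \eqref{6.3} with $v=1$.

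For the inductive step I will decompose $H_{i,i+r}^{i-s}(u) = H_{i,i+r}(u)\cdot H_{i-1,i+r-1}^{i-s}(u+1)$ and apply \eqref{6.3} with $v=1$ to the bottom strip $H_{i,i+r}(u)$. This yields three pieces: (i)~$H_{i,i+r}(u+1)\,H_{i-1,i+r-1}^{i-s}(u+1)$, which is immediately $H_{i,i+r}^{i-s}(u+1)$ by the definition \eqref{6.13}; (ii)~the correction $\tfrac{1}{[u][u+1]}\,H_{i+1,i+r}(1)\cdot H_{i-1,i+r-1}^{i-s}(u+1)$; and (iii)~a residue $R \in A_{i,i+r-1}\cdot H_{i-1,i+r-1}^{i-s}(u+1)$. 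I plan to rearrange piece (ii) by repeated Yang-Baxter moves, following the template of Figure~\ref{fig14} from the proof of Corollary~\ref{cor:honpsi2}: the strip $H_{i+1,i+r}(1)$ is pushed row by row through the upper block, in each row peeling off one tile into a growing leftmost downhill strip (which aggregates to $H_{i-1}^{i-s}(1)$), cyclically shifting arguments in the uphill layers by one, and leaving a rightmost leftover strip that evaluates as a scalar via \eqref{6.5} modulo $A_{i+r-s+1,i+r}$. What survives after $s$ such passes is exactly $\tfrac{1}{[u][u+1]}\,H_{i+1,i+r}(1)\,H_{i-1}^{i-s}(1)\,H_{i,i+r-1}^{i-s+1}(u+2)$, as required.

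Finally, I will show that the residue $R$ lies in $A_{i-s,i+r-s-1} \oplus A_{i+r-s+1,i+r}$. For each term $Xa_k$ in $R$ with $k\in [i,i+r-1]$, the generator $a_k$ can be pushed through the upper block by commuting past distant tiles and invoking the Yang-Baxter equation past adjacent ones; each $a_k$ then migrates either leftward to a final position in $[i-s,i+r-s-1]$ or rightward into $[i+r-s+1,i+r]$, landing $R$ in the claimed ideal. The main obstacle is the combinatorial bookkeeping in piece (ii) --- tracking the spectral parameters produced by the iterated Yang-Baxter rearrangements, ensuring that the intermediate strips align with the inductive hypothesis applied to smaller sub-blocks, and verifying that all extra contributions are absorbed by one of the two allowed components of the modular ideal. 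As in Corollary~\ref{cor:honpsi2}, this is most cleanly carried out by a direct pictorial calculation, analogous to Figures~\ref{fig14} and \ref{fig15}.
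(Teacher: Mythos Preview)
Your inductive step has a genuine gap in piece~(i). From definition~\eqref{6.13} one has
\[
H_{i,i+r}^{\phantom{i,}i-s}(u+1)\;=\;H_{i,i+r}(u+1)\,H_{i-1,i+r-1}^{\phantom{i-1,}i-s}(u+2),
\]
whereas what you obtain after applying~\eqref{6.3} to the bottom strip is $H_{i,i+r}(u+1)\,H_{i-1,i+r-1}^{\phantom{i-1,}i-s}(u+1)$. The argument of the upper block has not been raised, so piece~(i) is \emph{not} yet $H_{i,i+r}^{\phantom{i,}i-s}(u+1)$. This is precisely the point at which the induction hypothesis must enter: applying it to $H_{i-1,i+r-1}^{\phantom{i-1,}i-s}(u+1)$ (with $i\to i-1$, $s\to s-1$, $u\to u+1$) shifts the argument to $u+2$ and produces an additional correction
\[
\frac{1}{[u+1][u+2]}\,H_{i,i+r}(u+1)\,H_{i,i+r-1}(1)\,H_{i-2}^{\,i-s}(1)\,H_{i-1,i+r-2}^{\phantom{i-1,}i-s+1}(u+3).
\]
Only after this extra term is combined with your piece~(ii) --- via a Yang--Baxter permutation of $H_{i,i+r}(u+1)\,H_{i,i+r-1}(1)$, evaluation of the rightmost tile modulo $A_{i+r,i+r}$, and the mirror image of~\eqref{6.3} for the downhill strips $H_{i-1}^{\,i-s}$ --- does the sum collapse to the single term $\tfrac{1}{[u][u+1]}\,H_{i+1,i+r}(1)\,H_{i-1}^{\,i-s}(1)\,H_{i,i+r-1}^{\phantom{i,}i-s+1}(u+2)$.

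Your proposed Yang--Baxter rearrangement of piece~(ii) alone cannot repair this, because the upper block it acts on still carries contents $u+1,u+2,\ldots$ rather than $u+2,u+3,\ldots$; Yang--Baxter moves reorder tiles but do not globally shift all spectral parameters of a block by one. The paper's proof follows exactly the route just described: the induction hypothesis is not auxiliary bookkeeping but the mechanism that shifts the block's contents, and the cross-term it generates is essential for the final recombination.
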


The graphical presentation of relation \eqref{block} given in
Figure \ref{fig16} is probably more clarifying.
\begin{figure}[h]
\vspace{-5mm}
\includegraphics[width=\textwidth]{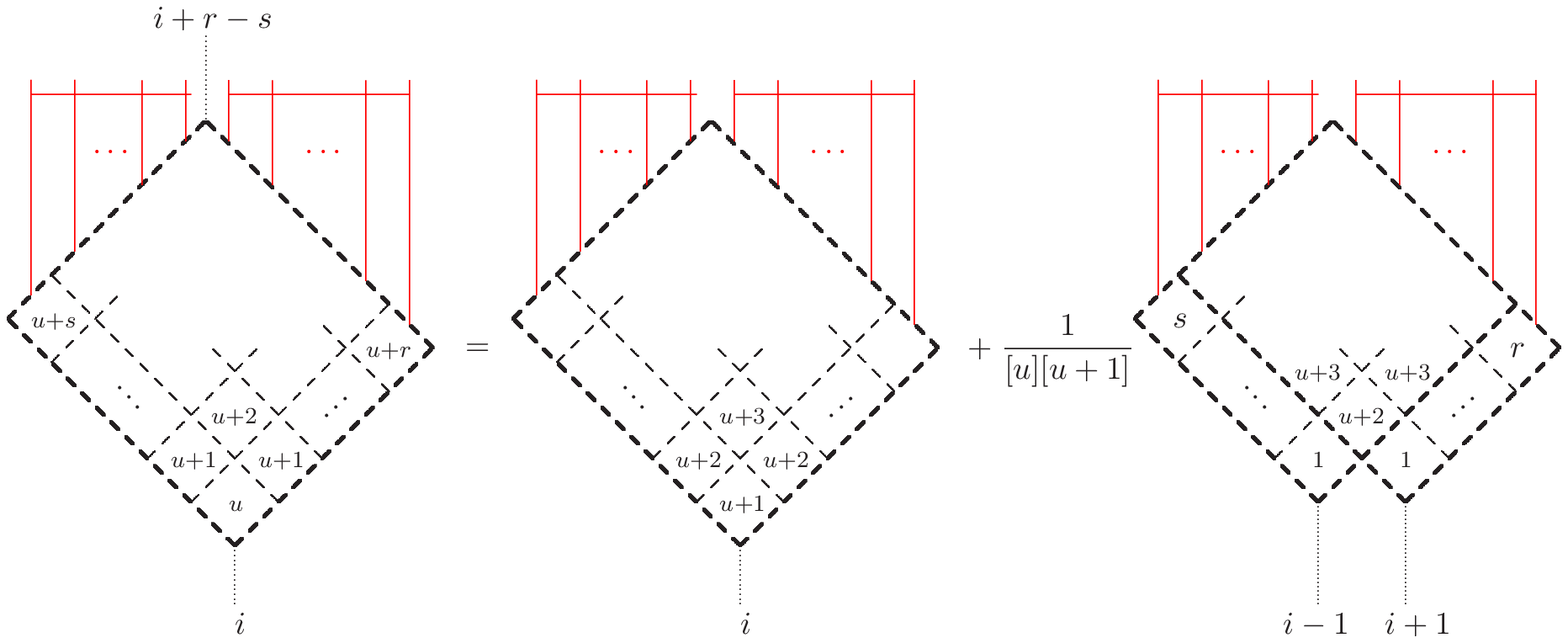}
\caption{
Diagrammatic presentation of  relation \eqref{block}.
The colored vertical lines on top of the tilted rectangles
symbolize annihilator of the terms $A_{i-s,i+r-s-1}$ and $A_{i+r-s+1,i+r}$.
In the expressions for $\psi_\alpha$ \eqref{psi_a-A}
this role is played by the coefficient $\psi^{\rm A}_\Omega$.
}
\label{fig16}
\end{figure}
\begin{proof}
We use induction on $s$. For $s=0$ relation \eqref{block} reduces to
\eqref{s1}.
We now check it for some $s>0$
assuming it is valid for smaller values of $s$:
\begin{align*}
H_{i,i+r}^{\phantom{i,}i-s}(u)\, =&\; H_{i,i+r}(u)H_{i-1,i+r-1}^{\phantom{i-1,}i-s}(u+1)
\\[1mm]
=&\;  \Bigl(H_{i,i+r}(u+1)\, +\, \frac{1}{[u][u+1]}\,
H_{i+1,i+r}(1)\Bigr)H_{i-1,i+r-1}^{\phantom{i-1,}i-s}(u+1)
\quad\mbox{\rm mod}\; A_{i-s,i+r-s-1}
\end{align*}
\begin{align*}
=&\;  H_{i,i+r}(u+1)\,H_{i-1,i+r-1}^{\phantom{i-1,}i-s}(u+2)
\\
& +\, \frac{1}{[u+1][u+2]}\Bigl(H_{i,i+r}(u+1)\,H_{i,i+r-1}(1)\Bigr) H_{i-2}^{i-s}(1)\,
H_{i-1,i+r-2}^{\phantom{i-1,}i-s+1}(u+3)
\\
& +\, \frac{1}{[u][u+1]}\, H_{i+1,i+r}(1)
H_{i-1,i+r-1}^{\phantom{i-1,}i-s}(u+1)\;\;\mbox{\rm mod}\; A_{i-s,i+r-s-1}\oplus A_{i+r-s+1,i+r-1}
\\[1mm]
=&\; H_{i,i+r-1}^{\phantom{i,}i-s+1}(u+1)
\\
&+\,\frac{1}{[u][u+1]}\, H_{i+1,i+r}(1)\Bigl(\frac{[u]}{[u+1]}H_{i-2}^{i-s}(1)
\, +\,H_{i-1}^{i-s}(u+1)\Bigr) H_{i,i+r-1}^{\phantom{i,}i-s+1}(u+2)
\\
&\hspace{76mm}\mbox{\rm mod}\; A_{i-s,i+r-s-1}\oplus A_{i+r-s+1,i+r}
\end{align*}
Here in the second line we used \eqref{6.3} for $v=1$ and take into account
the fact
\[
A_{i,I+r-1}\, H_{i-1,i+r-1}^{\phantom{i-1,}i-s}(u+1)\,\subset\, A_{i-s,i+r-s-1}\, .
\]
When passing to the third line we used the induction assumption and then
permuted two terms in parentheses in the fourth line using the Yang-Baxter equation
\eqref{ybegraphic}. The result of this permutation contains the rightmost factor $h_{i+r}(u+1)$
which can be evaluated as $\, [u+1]/[u]$ modulo $A_{i+r}$.
Finally, we notice that by obvious symmetry arguments
the  mirror images of relations \eqref{6.3} are valid for
the downhill strips $H_i^j(u)$.
Therefore, the term taken in parentheses in the last line equals $H_{i-1}^{i-s}(1)$.
\end{proof}
\begin{proof}[Proof of Proposition~\ref{prop:honpsi3}]
The simpler cases $K=1,2$ were
already considered in Corollaries \ref{cor3} and \ref{cor:honpsi2}.
In general the calculation of $h_i(1)\psi_\alpha$ can be carried out in the following steps:
\medskip

\noindent
{\em Step 1.\,} Using the transformation \eqref{6.10} we extract the term $\psi_{\alpha^-}$
from $h_i(1)\psi_\alpha$. The residual term equals $\psi_{\alpha^{+1}}$ in case
$K=1$, see Corollary~\ref{cor3}.
\medskip

\noindent
{\em Step 2.\,} In case $K>1$ the residue needs further transformation.
Namely, to fit the ansatz \eqref{psi_a-A}
one has to rise by one the arguments in all tiles
of the strip contained between
the uphill lines starting at height $h=\alpha_{i+1}$ at points $i-1$ and $i+2p_1+1\equiv i+1$
and the downhill lines starting at the same height $h$ at points $i+2p_2+1$ and $i+2r+1$
(see the dashed strip in the picture in the second line of Figure~\ref{fig15}). Acting in this
way we extract the coefficient $\psi_{\alpha^{+1}}$ from the first step residue, see \eqref{6.12}, and the rest, in case $K=2$, can be easily transformed to the form of
$\psi_{\alpha^{+2}}$, see Figure~\ref{fig15}.
\medskip

\noindent
{\em Step 3.\,} In case $K>2$ the residual term of the second step does not fit the
the ansatz \eqref{psi_a-A} and has to be further transformed.
This time one has to rise by one the arguments in the block of tiles
contained between the uphill lines crossing the points $i-1$ and $i+2p_2+1$
at the height $h$ and the downhill lines crossing the points $i+2p_3+1$ and $i+2r+1$
at the same height. An example of the second step residue is given in Figure~\ref{fig17}.
\begin{figure}[t]
\includegraphics[width=0.5\textwidth]{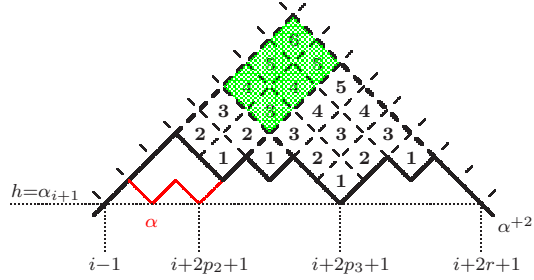}
\caption{The case $K=3$,
a typical diagram of  the second step residue.
The term which needs further transformation is the shaded block of tiles.
}
\label{fig17}
\end{figure}
\begin{figure}[t]
\includegraphics[width=0.9\textwidth]{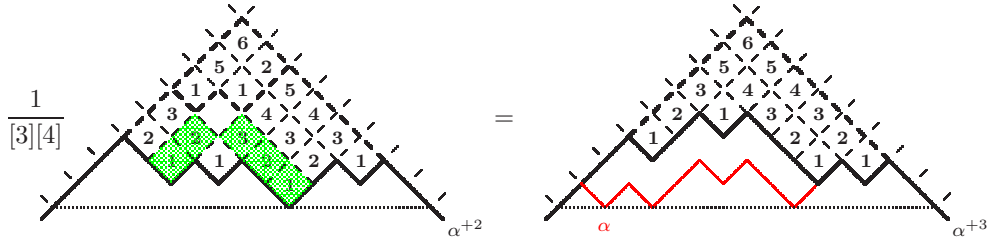}
\caption{
Transformation of the third step residue in case when the second step residue is
given by Figure~\ref{fig17}.
Using the Yang-Baxter equation \eqref{ybegraphic}
shaded uphill/downhill strip can be moved
up and left/right (see  explanation
in Figure~\ref{fig14}) and then evaluated with the use of \eqref{6.5}.
Since $K=3$ in this example, the third step residue  equals
$\psi_{\alpha^{+3}}$.
}
\label{fig18}
\end{figure}
We can raise the arguments in the block using the result of Lemma
\ref{lem4}, see \eqref{block} and Figure~\ref{fig16}. The first term from the right hand side
of \eqref{block} gives rise to the coefficient $\psi_{\alpha^{+2}}$.
The second term is the third step residue  which can be further simplified using
the Yang-Baxter equation \eqref{ybegraphic} and the evaluation relation \eqref{6.5}.
In case $K=3$ the result of the transformation coincides with $\psi_{\alpha^{+3}}$.
For the example of Figure~\ref{fig17} the transformation is illustrated in
Figure~\ref{fig18}.
\medskip

{}From now on the consideration acquires its full generality and we continue the transformation
until it ends up at {\em Step $K$}.
\end{proof}
Up to now we have finished the proof of the bulk qKZ equation \eqref{TLei}
for the coefficients $\psi_\alpha$ whose corresponding Dyck paths $\alpha$
have a local maximum at $i$ and a neighbouring local minimum at $i+1$. Consideration of the cases where
$\alpha$ has a neighbouring local minima at $i-1$, or both at $i-1$ and $i+1$ is a repetition
of the same arguments.
\medskip

It lasts to check the type A boundary qKZ relations
\eqref{qKZTL_TypeB2} and \eqref{qKZTL_TypeB3}. By Corollary \ref{cor:psi0} these
conditions are verified by the coefficient $\psi_\Omega^{\rm A}$ of the maximal Dyck path. We then notice that none of the factors of
the operator $H_\alpha$ \eqref{H_a} affect the coordinates $x_1$ and $x_N$
and hence commute with the boundary reflections $\pi_0$ and $\pi_N$, see \eqref{pi0} and\eqref{piN}.
Therefore, each of the coefficients $\psi_\alpha = H_\alpha \psi_\Omega^{\rm A}$ \eqref{psi_a-A}
also satisfies the conditions \eqref{qKZTL_TypeB2} and \eqref{qKZTL_TypeB3}.
\medskip

This completes the proof of Theorem~\ref{th:facsol}.

\subsection{Proof of Theorem~2
}
\label{sec:ProofFacSolB}

The proof is analogous to that of Theorem~\ref{th:facsol} for type A,
except that now we also have to make use of the reflection equation \eqref{rea-h-graphic} for $\bar h_0$.
Again, following the preliminary analysis of the Section~\ref{subsec3.3.1} we divide the proof of
\eqref{qKZTL_TypeB1} into two parts.
\medskip

\noindent
{\bf  1.}\; {\em In case $\alpha$ does not have a local maximum at $i$}
we have to show that \eqref{proj} is satisfied for $\psi_\alpha$ given by \eqref{psi_a-B}.
The working is identical to that in type A, see Section~\ref{sec:ProofFacSolA},
except for the case when $h_i(-1)$ acts on an uphill slope starting at the left boundary.
In this case we additionally use \eqref{rea-h-graphic}
to reflect $h_1(-1)$ at the boundary, see the illustration on Figure~\ref{fig23}.

\begin{figure}[h]
\includegraphics[width=0.4\textwidth]{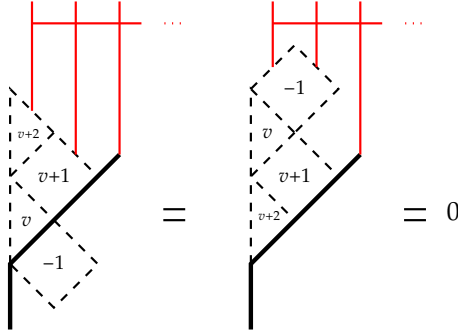}
\caption{
Reflection of $h_1(-1)$ at the origin. The first equality follows from the reflection
equation \eqref{rea-h-graphic} and the second one
is a property of $\psi^{\rm B}_\Omega$ \eqref{psi_alpha_B}:
$h_1(-1)\psi^{\rm B}_\Omega =0$.
}
\label{fig23}
\end{figure}

\noindent
{{\bf  2.}\;\em If $ \alpha$ has a local maximum at $i$}, then
we need to prove that Theorem~\ref{th:facsolB} implies \eqref{TLeiB}.
Here again, the working is identical to that in type A except for the case where
$ \alpha$ has a local minimum at $i-1$ and has no lower local minima
between $0$ and $i-1$. In this case, \eqref{TLeiB} contains
the term $\psi_{\alpha^{+0}}$ which originates from reflections at the left boundary.
The proof still follows basically the same lines as in type A although the calculations
become quite elaborate. Therefore we decided to collect the necessary technical tools
in the Lemma below and then to illustrate the idea of the proof on a few examples in pictures.

Let us introduce the following notation:
\begin{align*}
H_{0,i}(u)\, &:=\,\bar h_0(u)\, H_{1,i}(u+1)\, ,\qquad\quad
H^0_i(u)\, :=\,  H^1_i(u)\, \bar h_0(u+i)\, ,
\\[1mm]
T_{0,i}(u)\, &:=\, H_{0,i}(u)\times H_{0,i-1}(u+2)\times \dots
\times H_{0,1}(u+2i-2)\times \bar h_0(u+2i)\, .
\end{align*}
Pictorially $H_{0,i}(u)$ and  $H^0_i(u)$ can be displayed as
uphill and downhill strips starting with the half-tile at the left boundary,
and $T_{0,i}(u)$ is a right triangle whose hypotenuse lies on the left boundary vertical line.
We also extend the domain of definition for $\bar h_0(u)$ \eqref{half-tile} demanding that
\be
\lb{h_0-reg}
\bar h_0(0) \, :=\, s_0\, -\, \frac{[\omega]}{2[\omega+1]}\, .
\ee
For the newly introduced quantities the following analogues of
equation \eqref{6.10} and Lemma\,\ref{lem4} hold
\begin{figure}[t]
\includegraphics[width=0.7\textwidth]{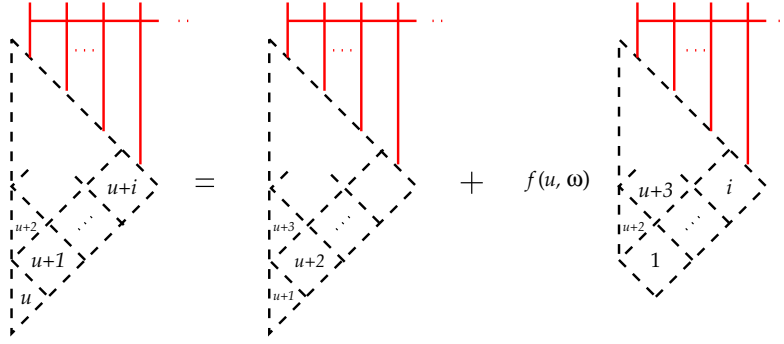}
\caption{
Diagrammatic presentation of \eqref{block}.
The coloured vertical lines on top of the tilted triangles
annihilate the term $A_{1,i}$.
}
\label{bound_rem}
\end{figure}
\begin{lemma}
\label{lem5}
One has\medskip
\begin{enumerate}
\item[]\vspace{-6mm}
\be
\lb{step-1}
\mbox{\rm 1.}\quad h_i(1)\, H_{i-1}^0(1)\, =\, H_{i}^0(1)\, +\, H_{i-2}^0(1)\, +\, c_0(i)\quad
\mbox{\rm mod}\, A_{1,i}\, \quad \forall\, i=1,2,\dots ,
\ee\\[-3mm]
where $c_0(i)=1$ if $i$ is odd, $c_0(i)=-\tfrac{[\omega]}{[\omega+1]}$ if $i$ is even, and we assume
$H_{-1}^0(1)=0$;
\vspace{2mm}
\item[2.] for nonnegative integers $\; i\, $ and $\; u\, $
\be
T_{0,i}(u)\, =\, T_{0,i}(u+1)\, +\, f(u,\omega)\,
H_{1,i}(1)\, T_{0,i-1}(u+2)\, ,
\label{triangle-up}
\ee
where\,\footnote{In principle one can choose a regularisation
for $\bar h_0(0)$ which is different from \eqref{h_0-reg}, but then the prescription for
$f(0,\omega)$ should be changed correspondingly.}\vspace{-2mm}
\be
\lb{f(u,w)}
f(u,\omega)=\left\{
\renewcommand{\arraystretch}{2.2}
\begin{array}{rl}
\displaystyle-\,\frac{[\omega]}{2[\omega+1]}\quad &\mbox{if\; $u=0$,}
\\[2mm]
\displaystyle\frac{[p][p-\omega]}{[2p][2p+1][\omega+1]}\quad &\mbox{if\; $u=2p$\; is even positive,}
\\[2mm]
\displaystyle\frac{[p][p+\omega]}{[2p-1][2p][\omega+1]}\quad &\mbox{if\; $u=2p-1$\; is odd.}
\end{array}
\right.\vspace{1mm}
\ee
and we assume $H_{1,0}(1)=T_{0,-1}(u)=1$.
\end{enumerate}
\end{lemma}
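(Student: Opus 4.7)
The plan is to prove parts 1 and 2 in sequence, with part 1 entering the inductive step of part 2 in the same way that the shift formula \eqref{6.3} feeds into Lemma 4 in the type A argument. Both identities are boundary analogues of \eqref{6.10} and \eqref{block}; both proceed by induction on $i$, with the additional ingredient compared to type A being the reflection equation \eqref{rea-h-graphic} and the explicit formula \eqref{half-tile} for $\bar h_0$.

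For part 1, the base case $i=1$ reduces to $h_1(1)\bar h_0(1)\equiv H_1^0(1)+1 \pmod{A_{1,1}}$. Substituting $h_1(1)=s_1$ and $\bar h_0(1)=s_0$ from \eqref{h_u} and \eqref{half-tile}, and using the congruence $s_1\equiv [2] \pmod{A_{1,1}}$ (from $a_1+s_1=[2]$), a short calculation yields the claim. For the inductive step, I would use the factorisation $H_i^0(1)=h_i(1)\,H_{i-1}^0(2)$ to write $h_i(1)H_{i-1}^0(1)-H_i^0(1) = h_i(1)\bigl(H_{i-1}^0(1)-H_{i-1}^0(2)\bigr)$. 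The inductive hypothesis, rephrased as an argument-shift identity via \eqref{huv} and Yang--Baxter rearrangements \eqref{ybegraphic}, reduces the inner difference to the expected $H_{i-2}^0$-contribution plus a boundary constant. The parity of $c_0(i)$ emerges when the terminal half-tile is evaluated via \eqref{half-tile}, the spectral parameter $i$ alternating between even and odd values.

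For part 2, the base case $i=0$ is the scalar identity $\bar h_0(u)=\bar h_0(u+1)+f(u,\omega)$. Using \eqref{half-tile} (and \eqref{h_0-reg} at $u=0$), together with the $q$-number identity
\[
[a][b]-[c][d]\;=\;\bigl[\tfrac{(a-b)+(c-d)}{2}\bigr]\bigl[\tfrac{(c-d)-(a-b)}{2}\bigr]\qquad\text{when }a+b=c+d,
\]
one obtains $\bar h_0(2p)-\bar h_0(2p+1)=[p][p-\omega]/([2p][2p+1][\omega+1])$ and analogously for odd $u$, matching \eqref{f(u,w)} exactly. For the inductive step, the decomposition $T_{0,i}(u)=H_{0,i}(u)\,T_{0,i-1}(u+2)$ gives
\[
T_{0,i}(u)-T_{0,i}(u+1) \;=\; \bigl(H_{0,i}(u)-H_{0,i}(u+1)\bigr)T_{0,i-1}(u+2) + H_{0,i}(u+1)\bigl(T_{0,i-1}(u+2)-T_{0,i-1}(u+3)\bigr).
\]
The first difference is handled by splitting $H_{0,i}(u)=\bar h_0(u)H_{1,i}(u+1)$, applying the base case to the boundary factor and \eqref{6.3} to the uphill part, with part 1 and the reflection equation \eqref{rea-h-graphic} used to absorb the cross-terms; the second difference is supplied by the inductive hypothesis. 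The residual terms collapse to the claimed $f(u,\omega)H_{1,i}(1)T_{0,i-1}(u+2)$.

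The principal difficulty will be the bookkeeping associated with the boundary: tracking how the parity of the spectral parameter alters the form of $\bar h_0$, and verifying that all non-trivial residual terms land in the intended annihilator subspaces. The reflection equation introduces parity-dependent asymmetries absent from the type A analysis, and reconciling these with the uniform-looking right-hand side of \eqref{triangle-up} requires careful coordination of the explicit $q$-number arithmetic with the topological rearrangement of tiles.
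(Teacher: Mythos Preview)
Your overall plan is close to the paper's, but two architectural choices diverge from it and would cost you effort.

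For part~1, the paper does \emph{not} use induction on $i$. It obtains \eqref{step-1} directly from the mirror (downhill) version of \eqref{6.3}/\eqref{6.10}, the evaluation \eqref{6.5}, and the explicit form \eqref{half-tile} of $\bar h_0$: one applies the mirror of \eqref{6.10} to $h_i(1)H_{i-1}^1(1)$, then adjusts the terminal half-tile $\bar h_0$ by adding and subtracting constants via \eqref{half-tile}, with \eqref{6.5} evaluating the leftover strip. Your inductive scheme may be salvageable, but the step ``inductive hypothesis rephrased as an argument-shift identity'' is vague---the hypothesis concerns $h_{i-1}(1)H_{i-2}^0(1)$, not the difference $H_{i-1}^0(1)-H_{i-1}^0(2)$, and bridging the two is essentially redoing the mirror of \eqref{6.3} anyway.

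For part~2, both you and the paper induct on $i$ with the decomposition $T_{0,i}(u)=H_{0,i}(u)\,T_{0,i-1}(u+2)$, and your telescoping of $T_{0,i}(u)-T_{0,i}(u+1)$ is sound. However, part~1 does \emph{not} enter the paper's argument here. The boundary-specific ingredient beyond \eqref{6.3}, \eqref{half-tile}, Yang--Baxter and reflection is the quadratic relation $\bar h_0(u)\bar h_0(u+2)=a\,\bar h_0(u+2)+b$ (for scalars $a,b$), which handles the pair of half-tiles that meet after rearrangement. The paper also flags that spurious intermediate terms of shape $H_{2,i+1}(1)H_{1,i}(u+3)T_{0,i-1}(u+4)$ appear and must be shown to cancel; this cancellation is effected via \eqref{s1}, not via part~1. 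Your ``residual terms collapse'' is the right expectation, but the mechanism is the one just described rather than an appeal to \eqref{step-1}.
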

Equation \eqref{triangle-up} is displayed graphically in Figure~\ref{bound_rem}.

\begin{proof} Equations \eqref{step-1} follow easily from
(the mirror images of) \eqref{6.3}, \eqref{6.5}, and from \eqref{half-tile}.
Relations \eqref{triangle-up} can be proved by induction.
The considerations are standard and we just briefly comment on them.

For $i=0$,  \eqref{triangle-up} follows from \eqref{half-tile}.
To check the induction step $i\rightarrow i+1$ one makes a decomposition
\[
T_{0,i+1}(u)\, =\, \bar h_0(u)\, H_{1,i+1}(u+1)\, T_{0,i}(u+2)
\]
and applies formulas \eqref{6.3}, \eqref{half-tile} and the induction assumption to rise
consecutively the contents by one of the factors $H_{1,i+1}(u+1)$, $\bar h_0(u)$ and $T_{0,i}(u+2)$.
Recollecting the (half-)tiles in the resulting expressions with the help of the Yang-Baxter equation
\eqref{ybegraphic} and the reflection equation \eqref{rea-h-graphic},  and using the evaluation formulas
\eqref{6.5} and \eqref{half-tile} together with its consequence
\[
\bar h_0(u)\bar h_0(u+2)\, =\, a\, \bar h_0(u+2)\, +\, b\qquad\mbox{(for some numbers $a$ and $b$),}
\]
one finally reproduces the term $T_{0,i+1}(u+1)$ and a combination of terms $H_{1,i+1}(u+1)T_{0,i}(u+2)$
and $H_{2,i+1}(1)T_{0,i}(u+2)$. The latter can be simplified to $H_{1,i+1}(1) T_{0,i}(u+2)$,
thanks to the relations \eqref{s1}. The unwanted terms $H_{2,i+1}(1)H_{1,i}(u+3)T_{0,i-1}(u+4)$
appearing at the intermediate steps cancel in the final expression.

We mention here that all manipulations with q-numbers, which one
needs during the transformation, can be easily done with the help of
the following identities
\be
\lb{uvk}
[u+k][v\pm k]\, =\, [u] [v] \, +\, [k][v\pm (u+k)]\, .
\ee
\end{proof}\vspace{-5mm}

\noindent
We now continue the proof of \eqref{TLeiB}.

Consider the action of $h_i(1)$ on the Ballot path $\alpha$ which has a local minimum at
$i-1$ and all the local minima in between $0$ and $i-1$ are higher then $\alpha_{i-1}$.
In this case, applying relation \eqref{step-1}, we find
\be
\lb{one-contact}
h_i(1)\psi_\alpha\, =\, \psi_{\alpha^-}\, +\, \psi_{\alpha^{+1}}\, +\, \psi_{\alpha^{+0}}\, ,
\ee
where the paths $\alpha^-$, $\alpha^{+0/1}$ are defined in Figure~\ref{alphapm0} and their
corresponding coefficients are given by \eqref{psi_a-B}. Note that, according to
Lemma~\ref{lem5}.1, in the case
$i=1$ the term $\psi_{\alpha^{+1}}$ should be absent from the right hand side of \eqref{one-contact}.
Altogether these prescriptions are identical to those of \eqref{TLeiB}. The transformation
\eqref{one-contact} is illustrated in Figure \ref{bound_simple}.

\begin{figure}[t]
\includegraphics[width=0.8\textwidth]{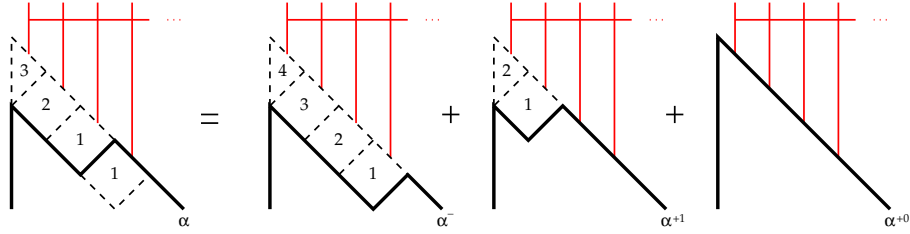}
\caption{
Diagrammatic presentation of  relation \eqref{one-contact}.
In the case shown on Figure $i=3$ and, hence, the coefficient in front of $\psi_{\alpha^{+0}}$
equals $1$.
}
\label{bound_simple}
\end{figure}

We now consider the case where the path $ \alpha$ contains $m\geq 1$ local minima
between $0$ and $i-1$ which are of the same height as the minimum at $i-1$.
The proof can be carried out in $K=m+2$ steps (cf. the proof of Proposition\,\ref{prop:honpsi3}). To explain the first three steps we consider the case $m=1$, i.e., a path
$ \alpha$ with the two local minima placed at $i-1$ and $i-2p-1$, $p>0\,$:
$\alpha_{i-2p-1}=\alpha_{i-1}$. Examples of such paths are given in Figure~\ref{bound_gen}.
\begin{figure}[h]
\begin{eqnarray}
\lefteqn{\hspace{-2mm}\mbox{\small {\bf a)} $i$ odd and $p=\tfrac{i-1}{2}$}
\hspace{3mm}\mbox{\small {\bf b)} $i$ even and $p=\small\tfrac{i}{2} -1$}
\hspace{3mm}\mbox{\small {\bf c)} $i=8$, $p=2$}}
\includegraphics[width=0.75\textwidth]{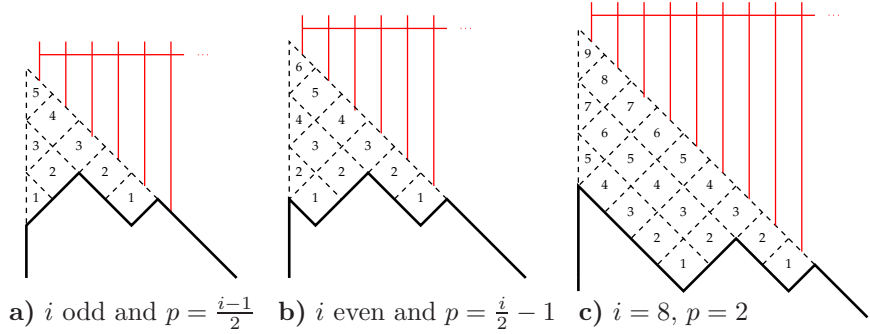}
\nonumber
\end{eqnarray}
\caption{
Ballot paths $\alpha$ with two local minima of the same height, one at $i-1$
and another one at $i-2p-1$.  In these cases one calculates $h_i(1)\psi_\alpha$ in $K=3$ steps.
}
\label{bound_gen}
\end{figure}
The calculation of $h_i(1)\psi_{\alpha}$ for the path shown on Figure~\ref{bound_gen}\,c) is
illustrated in Figures~\ref{bound_hpsi}--\ref{bound_R2}.
\medskip

\noindent
{\em Step 1.\, Extraction of coefficient $\psi_{\alpha^-}$ from $h_i(1)\psi_\alpha$, see Figure~\ref{bound_hpsi}.}

We use \eqref{6.10} to raise by one the contents in the shaded strip $H_{i-1}^1(1)$ on
Figure~\ref{bound_hpsi}. The result is a sum of two terms. The second term is a residue
of the first step which is to be further transformed at the second step. We denote it by $R_1$.
Here we transform the first term
raising by one the content of its top half-tile $\bar h_0(i)$ (shown shaded on Figure~\ref{bound_hpsi}) with the help
of \eqref{triangle-up}.
This results in a sum of $\psi_{\alpha^-}$ and a term proportional to $\psi_{\alpha^{+0}}$, see the second line in Figure~\ref{bound_hpsi}.
The factor $[9]$ in the coefficient  of $\psi_{\alpha^{+0}}$ comes from the evaluation of
the top strip $H_i^1(1)$.
\begin{figure}[h]
\raisebox{42pt}{$h_i(1)\psi_{\alpha}\, =\,$}
\includegraphics[width=0.8\textwidth]{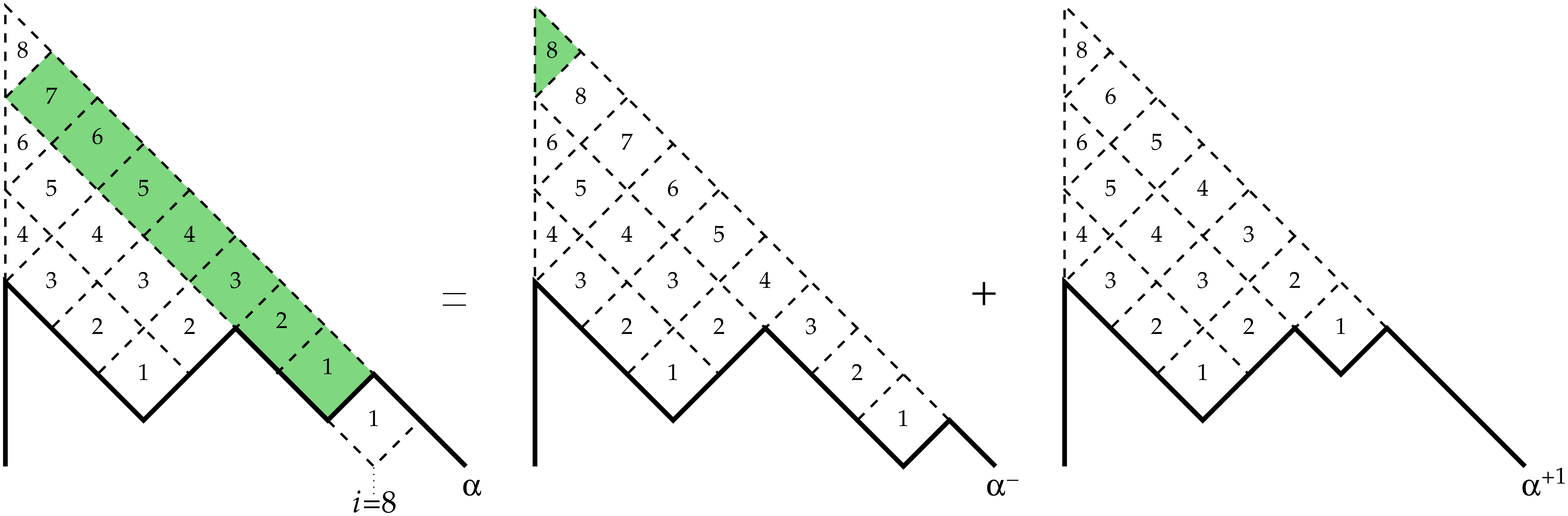}
\vspace{0.5mm}
\[
=\, \psi_{\alpha^-}\, +\, [9]\,\frac{[4][4-\omega]}{[8][9][\omega+1]}\,
\psi_{\alpha^{+0}}\, +\, R_1
\]
\caption{
Extraction of the coefficient $\psi_{\alpha^-}$ from the expression $h_i(1)\psi_{\alpha}$.
}
\label{bound_hpsi}
\end{figure}
\begin{figure}[h]
\begin{eqnarray*}
\lefteqn{\hspace{55mm} \raisebox{46pt}{$\displaystyle \psi_{\alpha^{+1}}\, +\,\frac{1}{[3]}$}}
\raisebox{45pt}{$R_1\; =\;$}
\includegraphics[width=0.7\textwidth]{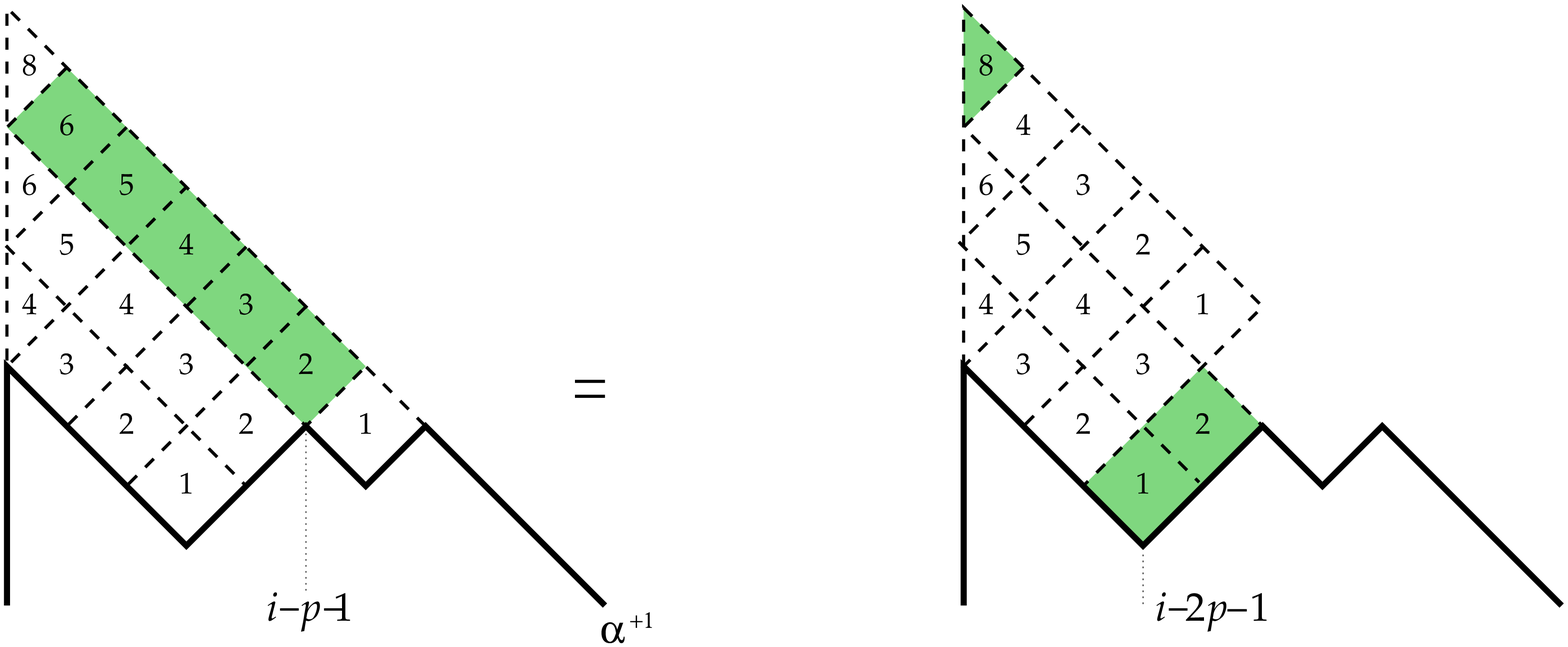}
\nonumber
\end{eqnarray*}
\vspace{-0.5mm}
\[
=\, \psi_{\alpha^{+1}}\, +\, \frac{[5]}{[3]}\,\Bigl(\frac{1}{[2]} -\frac{[4][\omega+4]}{[8][\omega+1]}\Bigr)\,
\psi_{\alpha^{+0}}\, +\, R_2
\]
\caption{
Extraction of coefficient $\psi_{\alpha^{+1}}$ from $R_1$.
}
\label{bound_R1}
\end{figure}

\noindent
{\em Step 2.\, Extraction of coefficient $\psi_{\alpha^{+1}}$ from the first step residue, see Figure~\ref{bound_R1}.}

Again, we use \eqref{6.10} to raise by one the contents in the shaded strip $H_{i-p-1}^1(p)$ on
Figure~\ref{bound_R1}. The result is a sum of $\psi_{\alpha^{+1}}$ and a term which
we continue transforming. Using the definition of $\bar h_0$ \eqref{half-tile}
we lower the content of the top half-tile (shown shaded)
from $i$ to $i-2p-2$ (from $8$ to $2$ in the particular case shown on
Figure~\ref{bound_R1})\,:
\be
\lb{h0-8-2}
\bar h_0(8)\, =\,\left(\tfrac{1}{[2]}-\tfrac{[4][\omega+4]}{[8][\omega+1]}\right)\, +\,\bar h_0(2)\, .
\ee
The constant term resulting from this procedure
gives a contribution proportional to $\psi^{\alpha^{+0}}$, see the second line of Figure~\ref{bound_R1}.
The factor $[5]$ appearing in the coefficient of $\psi_{\alpha^{+0}}$ is due to the evaluation of the
top strip $H_4^1(1)$ (in general one evaluates $H_{i-p-2}^1(1)\rightarrow [i-p-1]$).

Lowering of the content in the top half-tile allows us to reorder the (half-)tiles of the last term
in the first line of Figure~\ref{bound_R1}. Namely,
analogously to the case considered in Figure~\ref{fig14} we can push the uphill strip
$H_{i-2p-1,i-p-2}(1)$ (shown shaded) up and left using the Yang-Baxter equation
until it touches the left boundary. Then we reflect the strip at the boundary as shown
on Figure~\ref{boundary_blockperm}, which is possible because we changed the content of $\bar h_0$ from 8 to 2 in \eqref{h0-8-2}. Finally, after reflection, the strip can be evaluated, cancelling the
numeric factor in front of the picture. We call the result of this transformation
a residue of the second step and denote it by $R_2$.
\begin{figure}[t]
\includegraphics[width=0.5\textwidth]{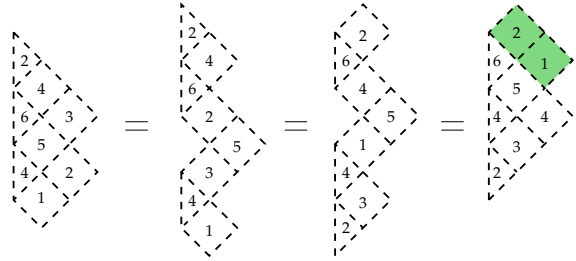}
\caption{Reflection of a strip at the boundary. Here we use the Yang-Baxter equation \eqref{ybegraphic}
in the first and the last equalities and the reflection equation \eqref{rea-h-graphic}
in the second equality. The shaded downhill strip equals $[3]$ modulo $A_{1,2}$.
}
\label{boundary_blockperm}
\end{figure}
\medskip

\noindent
{\em Step 3.\, Extraction of coefficient $\psi_{\alpha^{+2}}$ from the second step residue, see Figure~\ref{bound_R2}.}

We use \eqref{triangle-up}, see also Figure~\ref{bound_rem} to increase the contents of the
triangle $T_{0,p}(i-2p-2)$ (shown shaded on the figure). The result is a sum of $\psi_{\alpha^{+2}}$
and a term which in fact is proportional to $\psi_{\alpha^{+0}}$. To prove this, one has to push
up the downhill strip $H_{i-2p-2}^1(1)$ (shown shaded on the figure) and then evaluate it
in the same way as it was done in the transformation \eqref{6.10}, see Figure~\ref{fig14}.
\begin{figure}[t]
\begin{eqnarray*}
\lefteqn{\hspace{44.5mm} \raisebox{46pt}{$=\,\psi_{\alpha^{+2}} +\frac{[1-\omega ]}{[2][3][\omega +1]}$}}
\raisebox{45pt}{$R_2\; =\;$}
\includegraphics[width=0.75\textwidth]{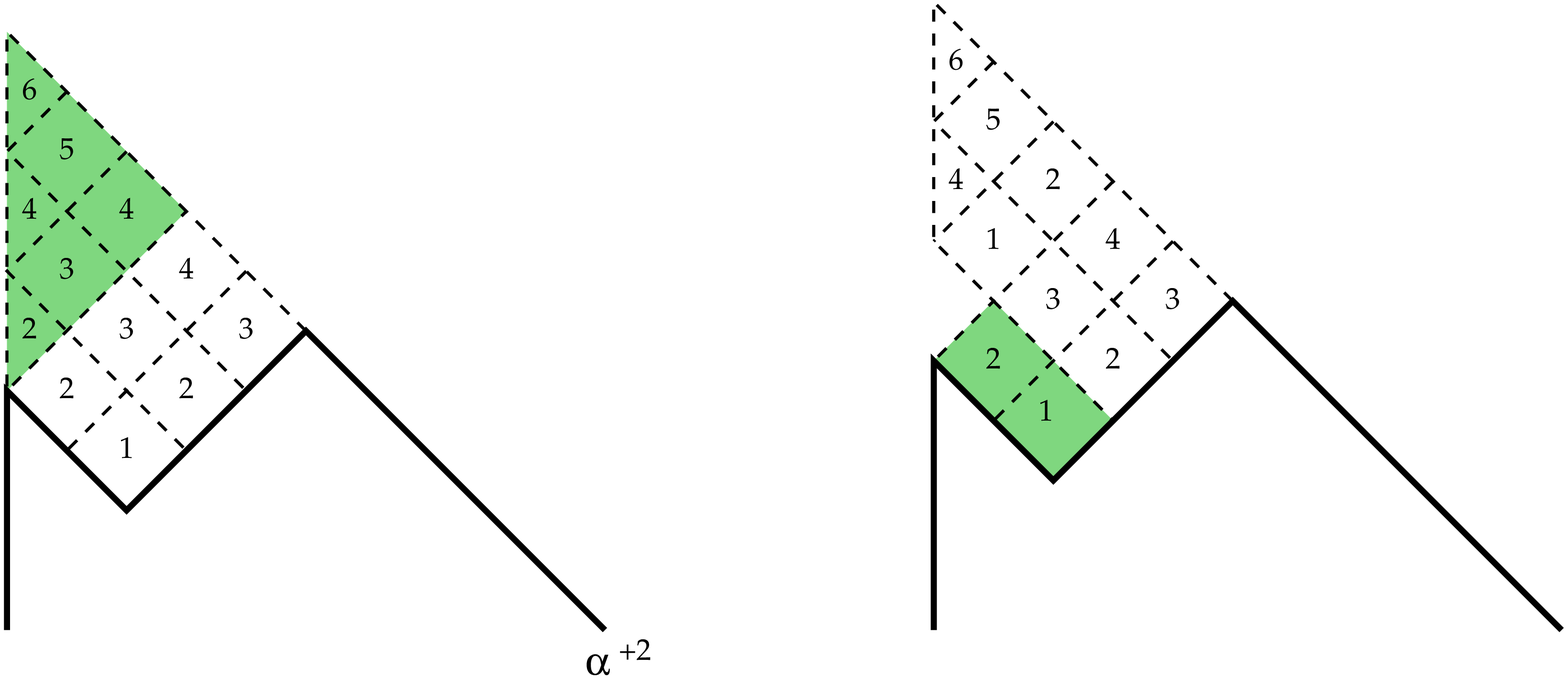}
\nonumber
\end{eqnarray*}
\vspace{-3mm}
\[
=\, \psi_{\alpha^{+2}}\, +\, \frac{[1-\omega]}{[2][\omega+1]}\,
\psi_{\alpha^{+0}}
\]
\caption{
Extraction of coefficient $\psi_{\alpha^{+2}}$ from $R_2$.
}
\label{bound_R2}
\end{figure}

Finally, collecting  the terms $\psi_{\alpha^{+0}}$ from all three steps
we find
\be
\lb{K=1case}
h_i(1)\psi_\alpha\, =\, \psi_{\alpha^-}\, +\, c_0(i)\,\psi_{\alpha^{+0}}\, +\,
\psi_{\alpha^{+1}}\, +\, \psi_{\alpha^{+2}}\, ,
\ee
where $c_0(i)=-\tfrac{[\omega]}{[\omega+1]}$ for the particular case considered in
Figures~\ref{bound_hpsi}--\ref{bound_R2}. This value holds for all cases with $i$ even,
while $c_0(i)=1$ for all cases with $i$ odd. In general, the coefficient $c_0(i)$ can be calculated with the help of \eqref{uvk}.

Equations \eqref{K=1case} coincide with the prescriptions of \eqref{TLeiB}
in case the path $\alpha$ contains $m=1$ local minimum between $0$ and $i-1$
of the same height as the minimum at $i-1$. Before we proceed to cases with $m\geq 2$
let us comment on two particular cases with $m=1$:
these are the cases a) and b) on Figure~\ref{bound_gen}
where the local minimum of the height $\alpha_{i-1}$ appears at $0$, or at $1$. Similar
exceptional cases appear for all values of $m$.

a){\em \, $\,i$ odd and $p=\frac{i-1}{2}$. } In this case the residue $R_2$ vanishes
so that the calculation of $h_i(1)\psi_\alpha$ finishes in two steps.
The term $\psi_{\alpha^{+2}}$ does not appear in \eqref{K=1case} which is in agreement
with  \eqref{TLeiB}. The contributions to $\psi_{\alpha^{+0}}$ from the first two steps sum up
to give the correct value of the coefficient $c_0(i)=1$. The mechanism how the residue $R_2$
vanishes for the path shown on Figure~\ref{bound_gen}\,a) is explained in Figure~\ref{bound_R2anni}.
\begin{figure}[t]
\begin{eqnarray*}
\lefteqn{\hspace{46mm} \raisebox{30pt}{$=\; \frac{[2]}{[3]}$}
\hspace{32mm} \raisebox{30pt}{$=\; 0$}}
\raisebox{30pt}{$R_2\; =\;\frac{1}{[3]}\;$}
\includegraphics[width=0.5\textwidth]{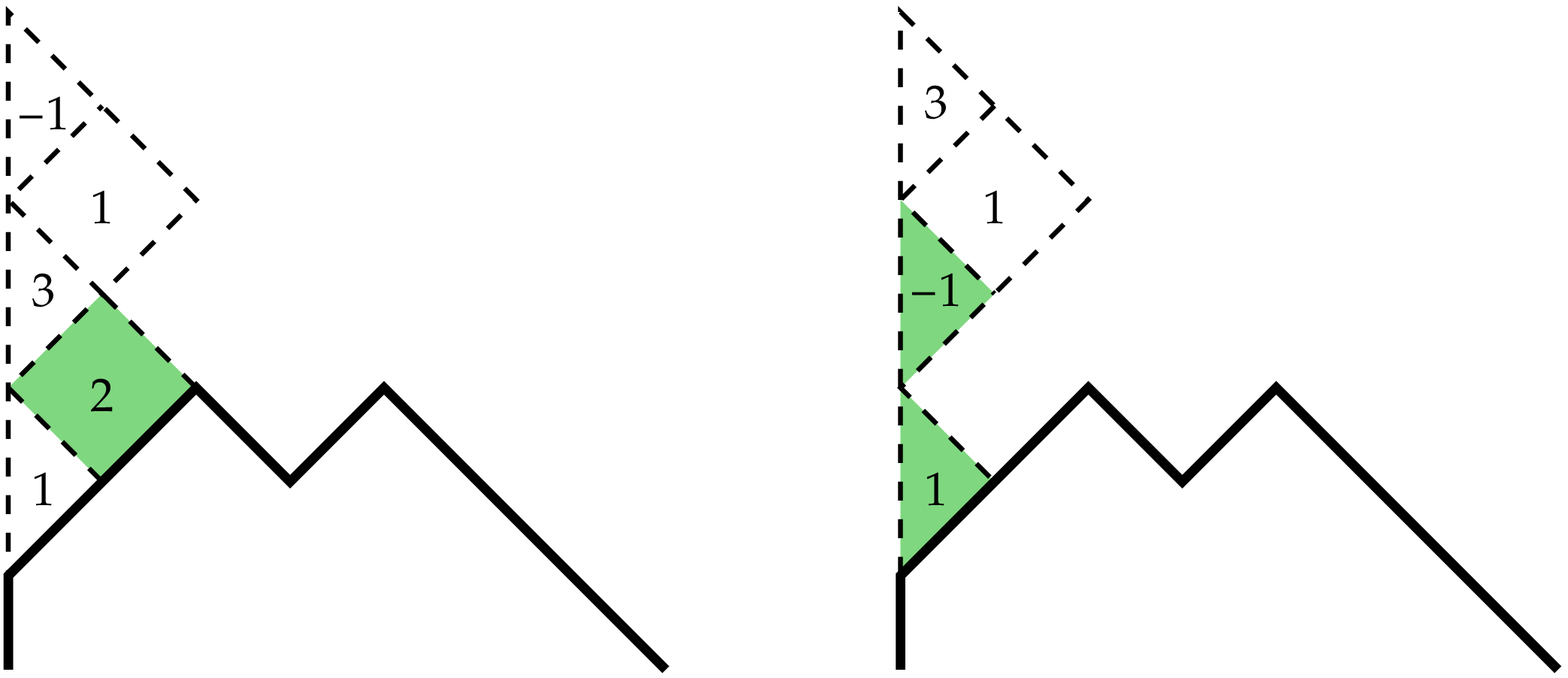}
\nonumber
\end{eqnarray*}
\vspace{-6mm}
\caption{
Vanishing of $R_2$ in $h_i(1)\psi_\alpha$ for the path $\alpha$ given by Figure~\ref{bound_gen}\,a).
In the second step we already changed the content of the top half-tile from $i=5$ to $i-2p-2=-1$.
Now we apply the reflection equation \eqref{rea-h-graphic} to move the shaded tile $h_1(2)$
up and evaluate it
using \eqref{vspomogat}. The two shaded
half-tiles then meet together and annihilate: $\bar h_0(1)\bar h_0(-1)=s_0(- a_0)=0$.
}
\label{bound_R2anni}
\end{figure}

b){\em\, 
$\,i$ even and $p=\frac{i}{2}-1$. } In this case, in the second step,
the content of the top half-tile has to be changed from $i$ to $i-2p-2=0$.
This is why we extended in \eqref{h_0-reg} the domain of definition for $\bar h(u)$
and derived \eqref{triangle-up} and \eqref{f(u,w)} for the case $u=0$.
With these extensions the calculation of $h_i(1)\psi_\alpha$ goes the standard way.
\smallskip
\begin{figure}[t]
\vspace{-3mm}
\begin{eqnarray*}
\raisebox{50pt}{$\displaystyle h_i(1)\psi_\alpha\; =\;\;\;$}
\includegraphics[width=0.3\textwidth]{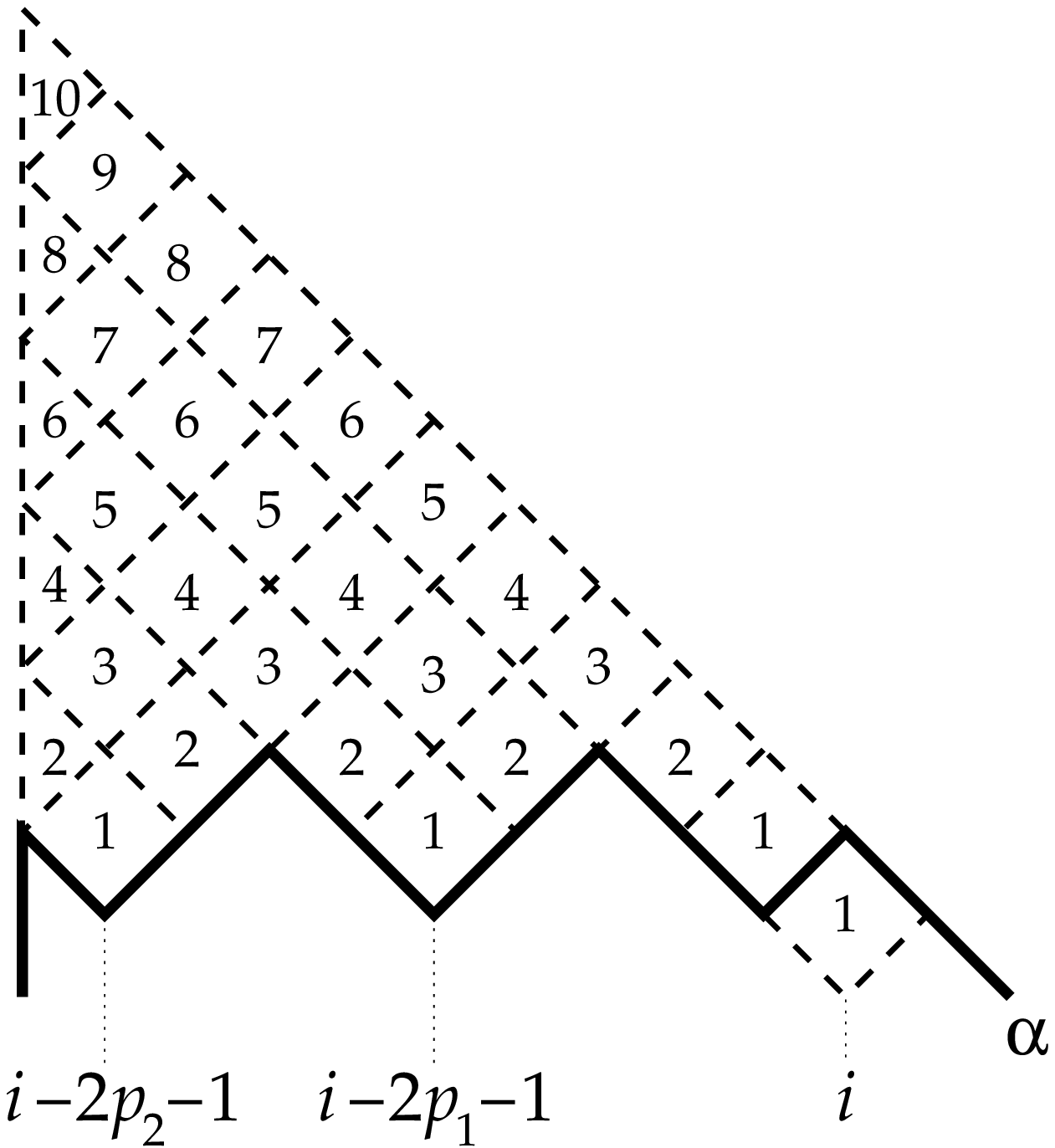}
\hspace{-8mm}
\raisebox{50pt}{$\displaystyle =\;\psi_{\alpha^-}\, - \,\frac{[\omega]}{[\omega +1]}\psi_{\alpha^{+0}}
\, +\,\psi_{\alpha^{+1}}\, +\, R'_2$}
\nonumber
\end{eqnarray*}
\vspace{-6mm}
\caption{
The result of the transformations described in Figures\,\ref{bound_hpsi}--\ref{bound_R2}
in the case $m\geq 2$.
The path $\alpha$  has local minima of the same height at points
$i-2p_k-1$, $k=1,\dots ,m$, and at $i-1$.
In the particular example shown here we have $i=10$, $m=2$, $p_1=2$ and $p_2=4$.
The term $R'_2$
in the right hand side comes in place of
$\psi_{\alpha^{+2}}$ in Figure~\ref{bound_R2}.
}
\label{bound_hpsigen}
\end{figure}

Consider now a path with $m\geq 2$ local minima preceding the minimum at $i-1$ which all
have the same height $\alpha_{i-1}$ (recall that we do not care about higher preceding minima and do not allow lower ones). In this case the transformations of the third step described earlier are not enough to extract the term $\psi_{\alpha^{+2}}$ and so we continue the transformation.
We explain this for the case of the path shown on Figure~\ref{bound_hpsigen}.
\medskip

\begin{figure}[t]
\begin{eqnarray*}
\lefteqn{\hspace{37.5mm}
\raisebox{66pt}{$ =\psi_{\alpha^{+2}}\, + \tfrac{1}{[2][3]}$}
\hspace{33mm}
\raisebox{66pt}{$ +\,\tfrac{[3][3+\omega ]}{[5][6][\omega +1]}$}
}
\raisebox{66pt}{$\displaystyle R'_2\; =\;$}
\includegraphics[width=0.9\textwidth]{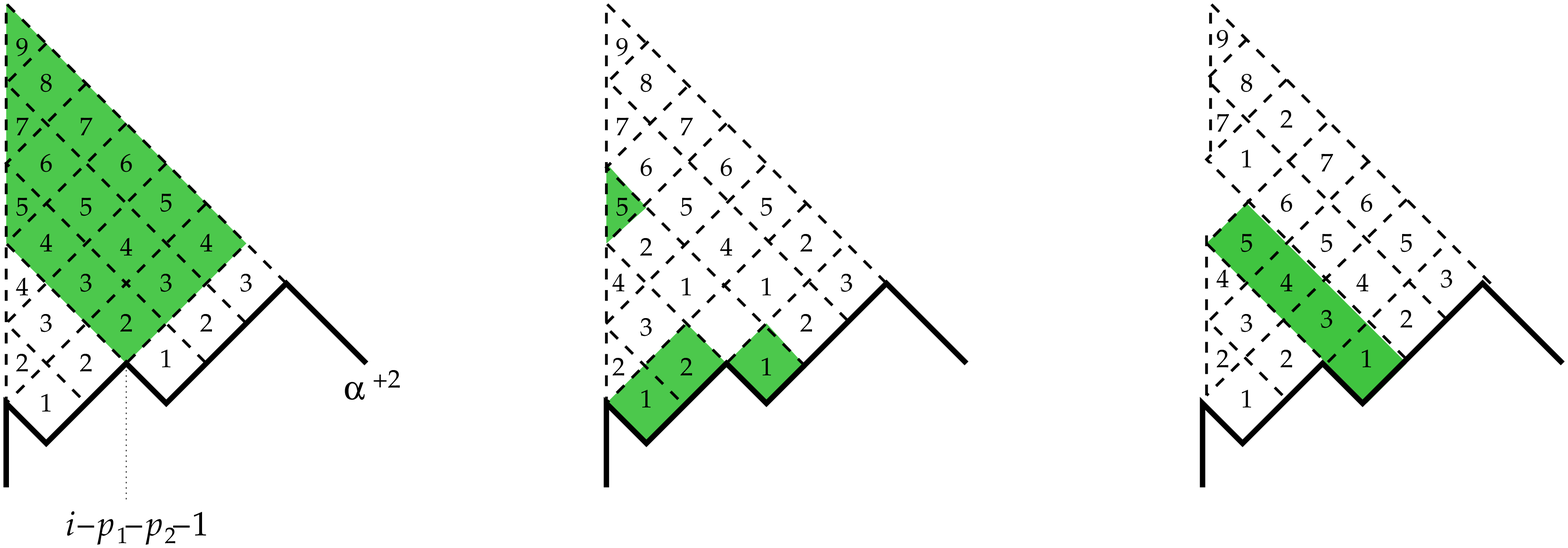}
\nonumber
\end{eqnarray*}
\vspace{-3mm}
\[
=\, \psi_{\alpha^{+2}}\,
+\, \frac{[\omega]}{2[\omega +1]}
\,\chi_{\alpha^{+0}}\, +\,R_3
\]
\caption{
Extraction of $\psi_{\alpha^{+2}}$ from $R'_2$ in the case $m\geq 2$.
}
\label{bound_Rp2}
\end{figure}
\begin{figure}[h]
\begin{eqnarray*}
\nonumber
\lefteqn{\hspace{100mm}
\raisebox{50pt}{$\displaystyle =\,\psi_{\alpha^{+3}}\, - \,\frac{[\omega ]}{2[\omega+1]}\,\chi_{\alpha^{+0}}$}
}
\raisebox{50pt}{$\displaystyle \chi_{\alpha^{+0}}\, =\;$}
\includegraphics[width=0.3\textwidth]{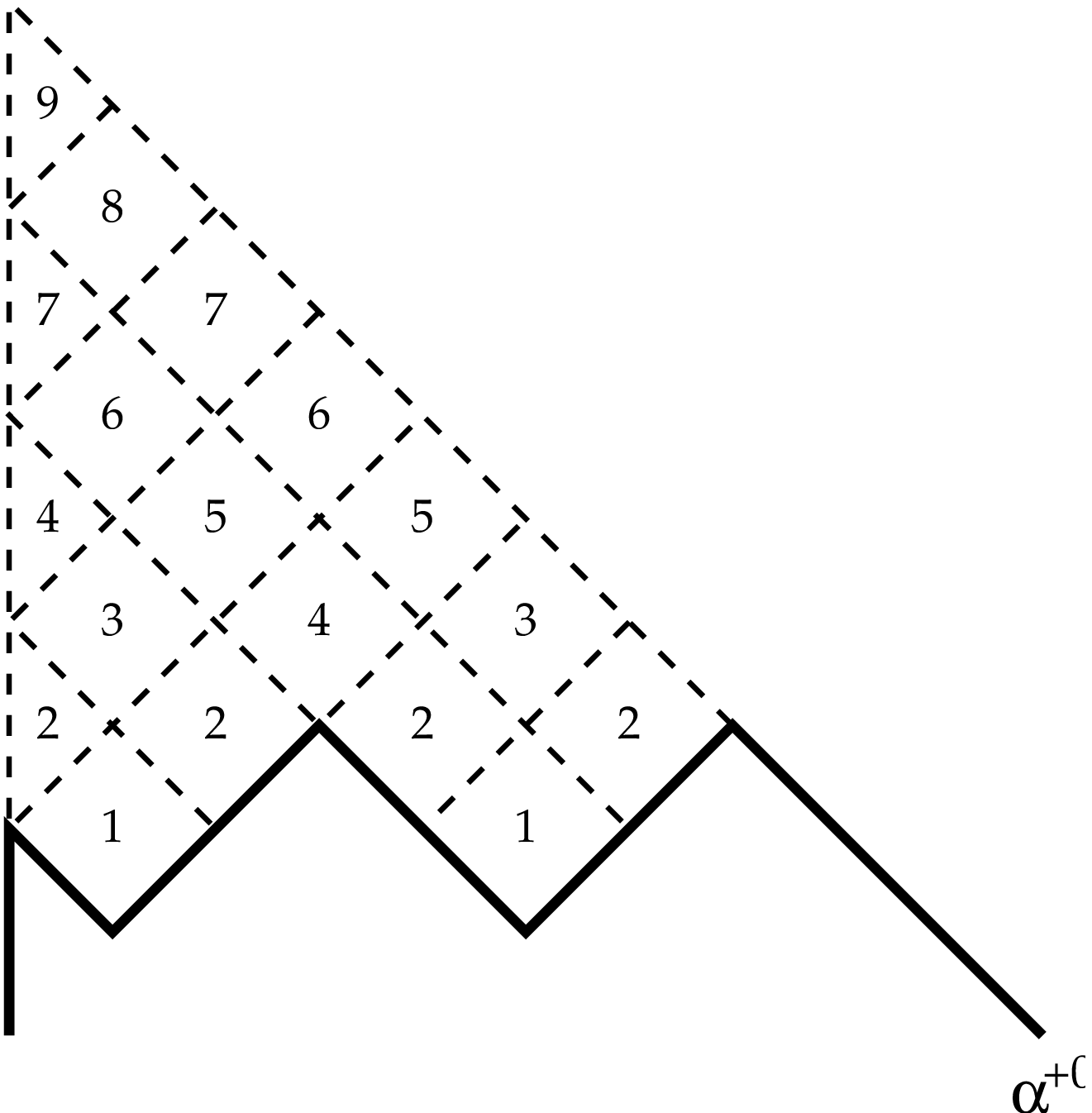}
\raisebox{50pt}{$\displaystyle R_3\, =\;$}
\includegraphics[width=0.3\textwidth]{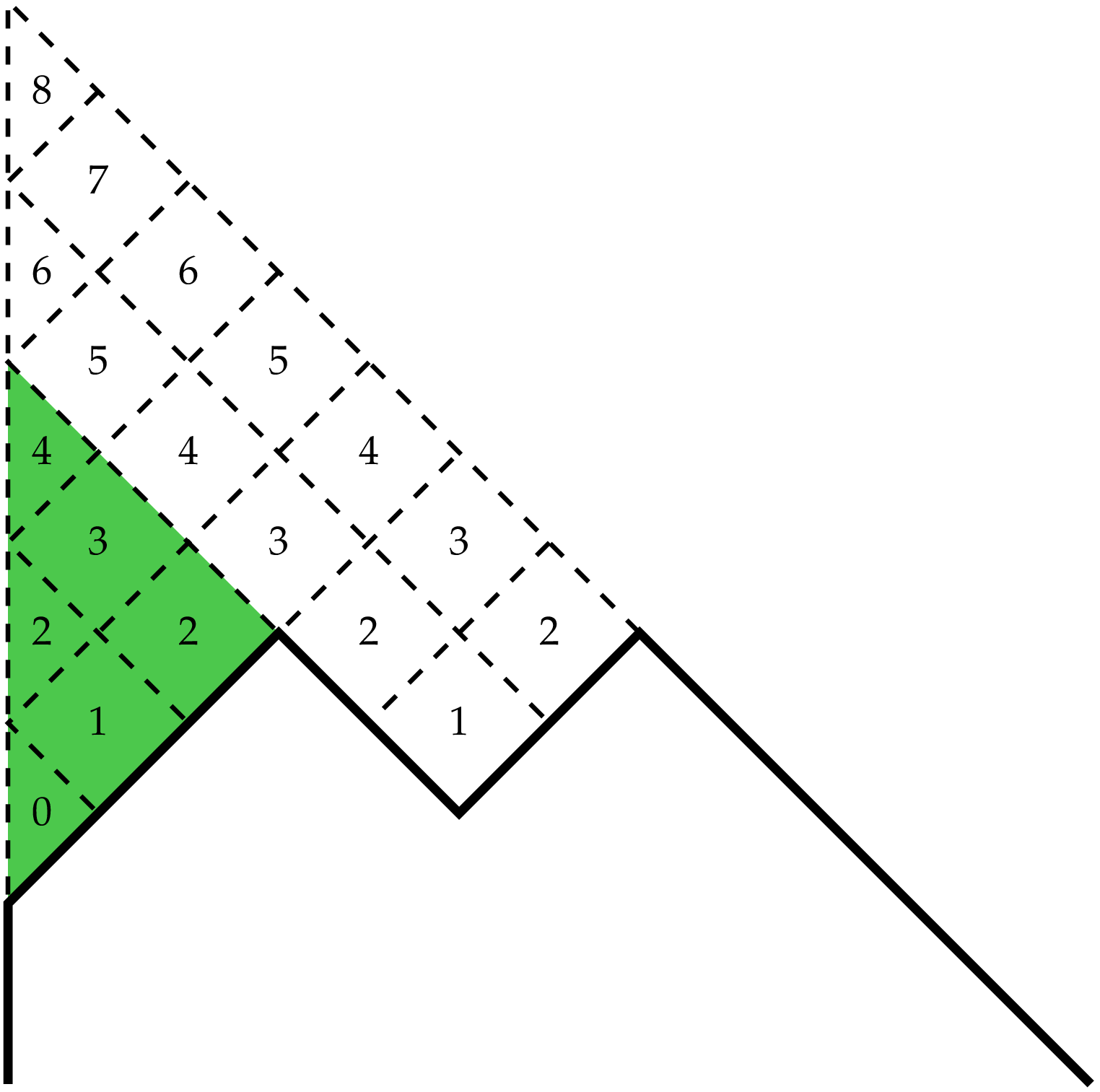}
\hspace{20mm}
\end{eqnarray*}
\caption{
Definition of the term $\chi_{\alpha^{+0}}$ and calculation of the third step residue $R_3$
in the case $m=2$.
}
\label{bound_chi0_R3}
\end{figure}

\noindent
{\em Continuation of the Step 3.\,} As can be seen on Figure~\ref{bound_hpsigen}, the terms
$\psi_{\alpha^-}$, $\psi_{\alpha^{+0}}$ and $\psi_{\alpha^{+1}}$ are already fixed. The term
$R'_2$ displayed in Figure~\ref{bound_Rp2} appears in the place of $\psi_{\alpha^{+2}}$ and we now continue its transformation. To extract the term $\psi_{\alpha^{+2}}$ we increase by one the contents of the
(half-)tiles in the shaded trapezium in the first equality on Figure~\ref{bound_Rp2}.
This trapezium is a composition  of a rectangle and a triangle and we consecutively use \eqref{block} and \eqref{triangle-up} to increase their contents, see also Figures~\ref{fig16} and \ref{bound_rem}). As a result, besides $\psi_{\alpha^{+0}}$ we get two more terms whose pictures are
shown in the second equality on Figure~\ref{bound_Rp2}.
Using the by now standard procedures of lowering the content of the boundary half-tile
(from $i-2p_1-1$ to $i-2p_2-2$ in general,
and from $5$ to $0$ in the specific example on Figure~\ref{bound_Rp2})
and pushing up, reflecting at the boundary and evaluating the strips of tiles,
we extract the third step residue $R_3$ from the middle picture in Figure~\ref{bound_Rp2}.
All the other terms can be reduced to the same form $\chi_{\alpha^{+0}}$.
Both terms $R_3$ and $\chi_{\alpha^{+0}}$ are shown in Figure~\ref{bound_chi0_R3}
(note that $\chi_{\alpha^{+0}}$ is composed of the same factors as $\psi_{\alpha^{+0}}$
but the contents may be different).
\smallskip

\noindent
{\em  Step 4.\,}
Further transformation of $R_3$ is identical to the calculation
of $R_2$, see  Figure~\ref{bound_R2}, and the result, for the case $m=2$, is presented
in Figure~\ref{bound_chi0_R3}. We obtain the term $\psi_{\alpha^{+3}}$
and the term $\chi_{\alpha^{+0}}$ which cancels similar term in the preceding transformation
(cf. the second line in Figure~\ref{bound_Rp2} and the right hand side in Figure~\ref{bound_chi0_R3}).

Collecting the terms in Figures~\ref{bound_hpsigen}--\ref{bound_chi0_R3}
we eventually find that for the case $m=2$
the factorised formulas \eqref{psi_a-B} indeed satisfy the type B qKZ equations
in the bulk \eqref{TLeiB}. Consideration of the cases with $m>3$ goes along the same lines.
\medskip

It lasts to check the type B boundary qKZ relation
\eqref{TLe0} (the boundary relation \eqref{qKZTL_TypeB3}
is valid due to the same arguments used
in the proof of Theorem~\ref{th:facsol}).
Indeed, rewriting \eqref{TLe0} as
\[
 \bar h_0(1)\, \psi_\alpha\, =\,
 \psi_{\alpha^{-0}}\,
\]
one makes the assertion obvious.
\medskip

This completes the proof of Theorem~\ref{th:facsolB}.

\appendix

\section{Factorised solutions for type B}
\label{facsolB}

\subsection{Case $N=2$}

Here we have two paths,
$\Omega^B =\raisebox{0pt}{\includegraphics[height=20pt]{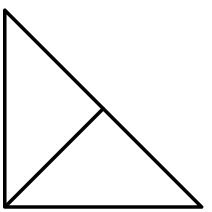}}$ and
$\raisebox{0pt}{\includegraphics[height=10pt]{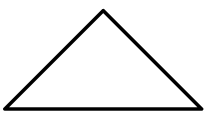}}$. From
the preliminary analysis we know that
$\psi_{\,\includegraphics[width=8pt]{2_210.eps}}$ is given by
\eqref{psi_alpha_B} and satisfies equation
\be
\lb{N=2a}
-a_1 \psi_{\,\includegraphics[width=8pt]{2_210.eps}}\, =\,
h_1(-1) \psi_{\,\includegraphics[width=8pt]{2_210.eps}}\, =\, 0.
\ee
Now we may apply the boundary generator to obtain the component function
$\psi_{\,\includegraphics[width=8pt]{2_010.eps}}$, and from \eqref{TLe0} we find
\be
\lb{N=2b}
s_0 \psi_{\,\includegraphics[width=8pt]{2_210.eps}}\, =\,
\psi_{\,\includegraphics[width=8pt]{2_010.eps}}.
\ee
Notice, that acting by $h_1(1)$  on $\psi_{\,\includegraphics[width=8pt]{2_010.eps}}$ we
can get back to $\psi_{\,\includegraphics[width=8pt]{2_210.eps}}$, see \eqref{TLeiB},
\[
h_1(1)\psi_{\,\includegraphics[width=8pt]{2_010.eps}}\, =\,
\psi_{\,\includegraphics[width=8pt]{2_210.eps}},
\]
which can be equivalently written as
\be
\lb{N=2d}
h_1(1) \Bigl( s_0 -\frac{1}{[2]}\Bigr)\psi_{\,\includegraphics[width=8pt]{2_210.eps}}\, =\, 0,
\ee
where we used \eqref{N=2a}, \eqref{N=2b} and relation \eqref{huv}.
Equation \eqref{N=2d} is the truncation condition
to be satisfied by $\psi_{\,\includegraphics[width=8pt]{2_210.eps}}$.

\subsection{Case $N=3$} In this case there are three paths:
$\Omega^B =\raisebox{0pt}{\includegraphics[height=20pt]{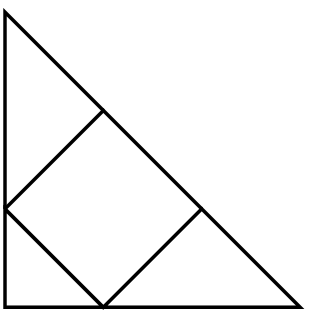}}$,
$\raisebox{0pt}{\includegraphics[height=20pt]{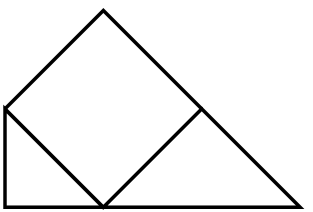}}$ and
$\raisebox{0pt}{\includegraphics[height=20pt]{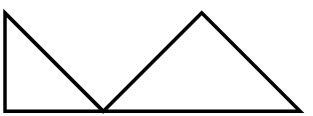}}$.

As before, we start with the component function
$\psi_{\Omega}^{\rm B}=\psi_{\,\includegraphics[width=12pt]{3_3210.eps}}$
given by \eqref{psi_alpha_B} and satisfying relations
\be
\lb{N=3aa}
h_1(-1)\psi_{\,\includegraphics[width=12pt]{3_3210.eps}}\, =\,
h_2(-1)\psi_{\,\includegraphics[width=12pt]{3_3210.eps}}\, =\, 0.
\ee
Then we act with the boundary generator and obtain
\be
\lb{N=3a}
s_0 \psi_{\,\includegraphics[width=12pt]{3_3210.eps}}\, =\,
\psi_{\,\raisebox{-4pt}{\smash{\includegraphics[width=12pt]{3_1210.eps}}}}.
\ee
Next, we apply $h_1(1)$ and find
\be
\lb{N=3b}
h_1(1) \psi_{\,\raisebox{-4pt}{\smash{\includegraphics[width=12pt]{3_1210.eps}}}}\, =\,
\psi_{\,\includegraphics[width=12pt]{3_3210.eps}}\, +\,
\psi_{\,\raisebox{-4pt}{\smash{\includegraphics[width=12pt]{3_1010.eps}}}}.
\ee
This can be rewritten to give the following expression cf. \eqref{N=2d},
\be
\lb{N=3bb}
\psi_{\,\raisebox{-4pt}{\smash{\includegraphics[width=12pt]{3_1010.eps}}}}\, =\,
h_1(1) \Bigl( s_0 -\frac{1}{[2]} \Bigr)\psi_{\,\includegraphics[width=12pt]{3_3210.eps}},
\ee
where we have used \eqref{N=3aa}, \eqref{N=3a} and \eqref{huv}. Finally, we act by operators $s_0$ and $h_2(1)$ to the rightmost term in \eqref{N=3b} and obtain
\begin{align*}
s_0 \psi_{\,\raisebox{-4pt}{\smash{\includegraphics[width=12pt]{3_1010.eps}}}}& =\, 0,
\\
h_2(1) \psi_{\,\raisebox{-4pt}{\smash{\includegraphics[width=12pt]{3_1010.eps}}}}&=\,
\psi_{\,\raisebox{-4pt}{\smash{\includegraphics[width=12pt]{3_1210.eps}}}}\, -\,
\tfrac{[\omega]}{[\omega+1]} \psi_{\,\includegraphics[width=12pt]{3_3210.eps}}.
\end{align*}
The latter relations can be rewritten as the truncation conditions on
$\psi_{\,\includegraphics[width=12pt]{3_3210.eps}}$:
\begin{align}
\lb{N=3e}
s_0 h_1(2) \Bigl(s_0 -\tfrac{[\omega +2]}{[3][\omega +1]} \Bigr)
\psi_{\,\includegraphics[width=12pt]{3_3210.eps}}& =\, 0,
\\
h_2(1) h_1(2)\Bigl(s_0-\tfrac{[\omega +2]}{[3][\omega +1]} \Bigr)
\psi_{\,\includegraphics[width=12pt]{3_3210.eps}}& =\, 0,
\nonumber
\end{align}
where we used \eqref{N=3aa}, \eqref{N=3a}, \eqref{N=3b} and, again, \eqref{huv}
to find factorised expressions.
\medskip

Analysing the factorised expressions for the component functions in the cases $N=2,3$
we see that a proper definition for the dashed boundary half-tile is

\[
\raisebox{-18pt}{\includegraphics[height=36pt]{h0.eps}}\, =\, \bar h_0(k)\, :=\,
h_0(k)|_{\nu=\omega+p_k}\, =\,  s_0\, -\,
\frac{\left[\lfloor{k/2}\rfloor\right]
\left[\omega +\lfloor{(k+1)/2}\rfloor\right]}{[k][\omega+1]},
\]
where $p_k = k \mbox{~mod 2}$.
Note that the Baxterised boundary element ${\bar h}_0(u)$ is defined for integer values of its
spectral parameter $u\in {\mathbb Z}$ as only such values appear in our considerations.

Using this notation, the expressions for the coefficients $\psi_\alpha$ and the truncation conditions
take a simple form. For example, formulas \eqref{N=3a}, \eqref{N=3bb}, \eqref{N=3e}
read
\[
\psi_{\,\raisebox{-4pt}{\smash{\includegraphics[width=12pt]{3_1210.eps}}}} =
\bar h_0(1) \psi_{\,\includegraphics[width=12pt]{3_3210.eps}}, \qquad
\psi_{\,\raisebox{-4pt}{\smash{\includegraphics[width=12pt]{3_1010.eps}}}} =
h_1(1) \bar h_0(2)\psi_{\,\includegraphics[width=12pt]{3_3210.eps}},\qquad
\bar h_0(1) h_1(2) \bar h_0(3)
\psi_{\,\includegraphics[width=12pt]{3_3210.eps}} = 0.
\]
\newpage

\section{Type A solutions}
\label{se:solutions}

Using the factorised expressions of Theorem~\ref{th:facsol}, we
have computed polynomial solutions of the qKZ equation for type A
from Proposition~\ref{prop4} in the limit $x_i\rightarrow 0$ up to $N=10$. These solutions are,
surprisingly, polynomials in $\tau^2$ with positive coefficients,
up to an overall factor which is a power of $\tau$. The
complete solution is determined up to an overall normalisation we have chosen so that
\[
\psi^{\rm A}_{\Omega} = \tau^{\lfloor N/2\rfloor (\lfloor N/2\rfloor-1)/2}.
\]

Let $\alpha=(\alpha_0,\alpha_1,\ldots,\alpha_N)\in \mathcal{D}_{N,p}$ be a Dyck path of length $N$ whose minima lie on or above height $\tilde{p}-1$. Then we define $c_{\alpha,p}$ as the signed sum of boxes between $\alpha$ and $\Omega(N,p)$, where the boxes at height $\tilde{p}+h$ are assigned $(-1)^{h}$. An example is given in the main text in Figure~\ref{fig:cpmdef}. The explicit expression for $c_{\alpha,p}$ is given by
\[
c_{\alpha,p} = \frac{(-1)^{\tilde{p}+1}}{2} \left(\sum_{i=1}^{\lfloor N/2\rfloor} (\alpha_{2i}-\Omega_{2i}(N,p))
 -\sum_{i=0}^{\lfloor (N-1)/2\rfloor} (\alpha_{2i+1}-\Omega_{2i+1}(N,p))  \right).
\]

We furthermore define the subset $\mathcal{D}_{N,p}$ of Dyck paths of length $N$ whose
local minima lie on or above height $\tilde{p}=\lfloor (N-1)/2\rfloor-p$, i.e.
\[
\mathcal{D}_{N,p} = \left\{ \alpha\in\mathcal{D}_N |\ \alpha_i\geq \min(\Omega_i,\tilde{p})\right\}.
\]
These definitions allow us to define the partial sums
\[
S_{\pm}(N,p) = \sum_{\alpha \in \mathcal{D}_{N,p}} \tau^{\pm c_{\alpha,p}} \psi_{\alpha} ,
\]
for which we formulate some conjectures in the main text.

\bigskip

\subsection{$N=4$}

\renewcommand{\arraystretch}{1.4}
\[
\begin{array}{c|lc}
\alpha & \psi_\alpha & \tau^{\pm c_{\alpha,1}}  \\ \hline
\includegraphics[width=24pt]{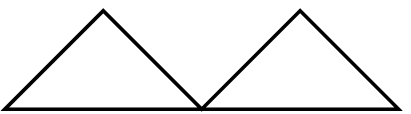} & 1+\tau^2 & 1   \\
\includegraphics[width=24pt]{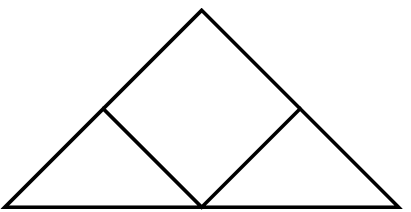} & \tau & \tau^{\pm 1}
\end{array}
\]

\smallskip

\[
\begin{array}{ll}
S_-(4,0) = \tau & S_+(4,0) = \tau\\
S_-(4,1) = 2+\tau^2 & S_+(4,1) = 1+2\tau^2
\end{array}
\]

\subsection{$N=5$}

\[
\begin{array}{c|lcc}
\alpha & \psi_\alpha & \tau^{\pm c_{\alpha,2}} & \tau^{\pm c_{\alpha,1}}\\ \hline
\includegraphics[width=30pt]{5_010101.eps} & \tau^2(2+\tau^2)& 1 & \\
\includegraphics[width=30pt]{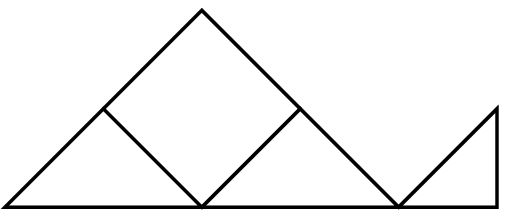} & \tau^3 & \tau^{\pm 1} & \\
\includegraphics[width=30pt]{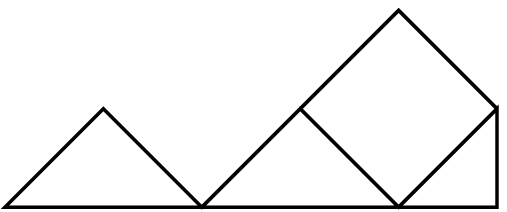} & \tau(2+\tau^2) & \tau^{\pm 1} & \\
\includegraphics[width=30pt]{5_012121.eps} & 1+2\tau^2 & \tau^{\pm 2} & 1 \\
\includegraphics[width=30pt]{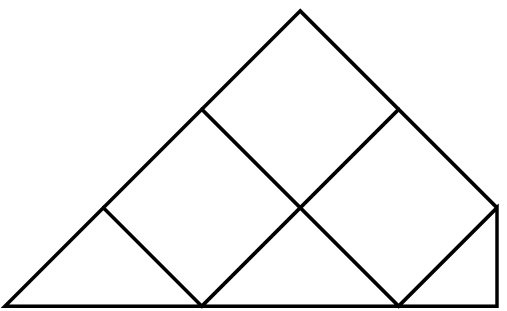} & \tau & \tau^{\pm 1} & \tau^{\pm 1}
\end{array}
\]

\smallskip

\[
\begin{array}{ll}
S_-(5,0) = \tau & S_+(5,0) = \tau\\
S_-(5,1) = 2(1+\tau^2)& S_+(5,1) = 1+3\tau^2 \\
S_-(5,2) = \tau^{-2}(1+5\tau^2 + 4\tau^4 + \tau^6) & S_+(5,2) = \tau^2(6+5\tau^2)
\end{array}
\]

\subsection{$N=6$}

\[
\begin{array}{c|lcc}
\alpha & \psi_\alpha & \tau^{\pm c_{\alpha,2}} & \tau^{\pm c_{\alpha,1}} \\ \hline
\includegraphics[width=36pt]{6_0101010.eps} & 1+5\tau^2+4\tau^4+\tau^6 & 1&  \\
\includegraphics[width=36pt]{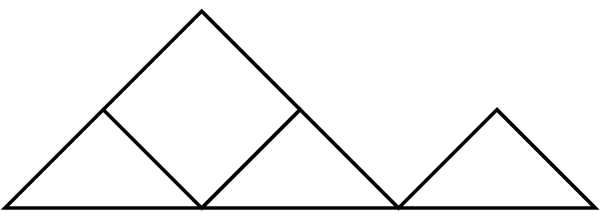} & \tau(2+2\tau^2+\tau^4) & \tau^{\pm 1} & \\
\includegraphics[width=36pt]{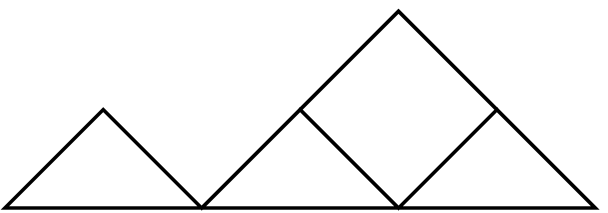} & \tau(1+3\tau^2+\tau^4) & \tau^{\pm 1}  & \\
\includegraphics[width=36pt]{6_0121210.eps} & 2\tau^2(1+\tau^2) & \tau^{\pm 2} & 1 \\
\includegraphics[width=36pt]{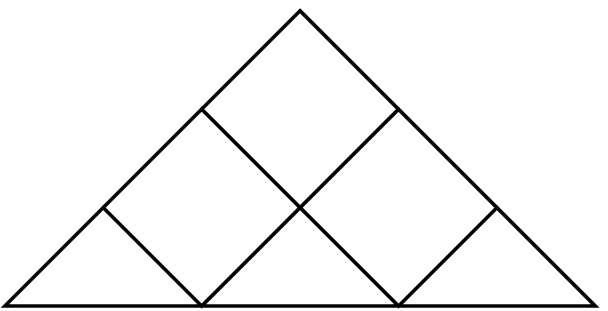} & \tau^3 & \tau^{\pm 1} & \tau^{\pm 1}
\end{array}
\]

\smallskip

\[
\begin{array}{ll}
S_-(6,0) = \tau^3 & S_+(6,0) = \tau^3\\
S_-(6,1) = \tau^2(3+2\tau^2) & S_+(6,1) = \tau^2(2+3\tau^2) \\
S_-(6,2) = 6+13\tau^2+6\tau^4+\tau^6 & S_+(6,2) = 1+8\tau^2+12\tau^4+5\tau^6
\end{array}
\]

\bigskip

\noindent
From now on we abbreviate polynomials of the form $P(\tau) = \tau^p \sum_{k=0}^r a_k \tau^{2k}$ by
\[
P(\tau) = \tau^p(a_0,a_1,\ldots,a_r).
\]
For example,
\[
6\tau^2+13\tau^4+6\tau^6+\tau^8 \equiv \tau^2(6,13,6,1)
\]

\vfill

\subsection{$N=7$}
\[
\begin{array}{c|lccc}
\alpha & \psi_\alpha & \tau^{\pm c_{\alpha,3}} & \tau^{\pm c_{\alpha,2}} & \tau^{\pm c_{\alpha,1}}\\ \hline
\includegraphics[width=42pt]{7_01010101.eps} & \tau^3(6,13,6,1) & 1 \\
\includegraphics[width=42pt]{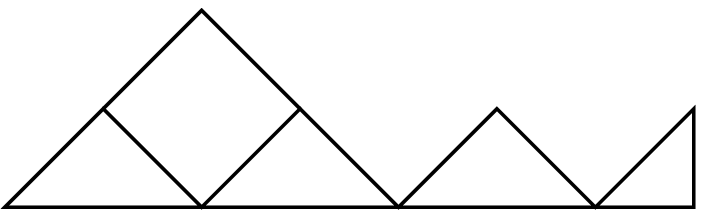} & \tau^4(5,4,1) & \tau^{\pm 1}  \\
\includegraphics[width=42pt]{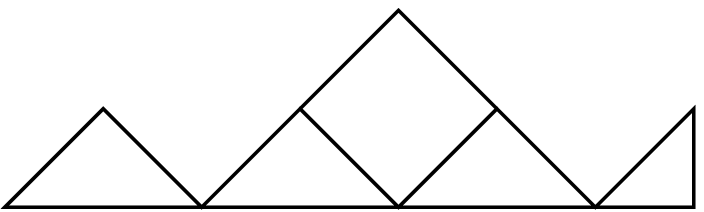} & \tau^4(3,4,1) & \tau^{\pm 1}  \\
\includegraphics[width=42pt]{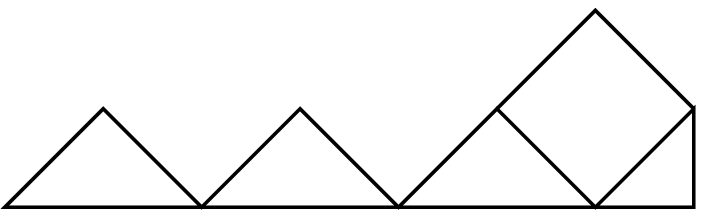} & \tau^2(6,13,6,1) & \tau^{\pm 1} \\
\includegraphics[width=42pt]{7_01212101.eps} & \tau^5(3,2) & \tau^{\pm 2}  \\
\includegraphics[width=42pt]{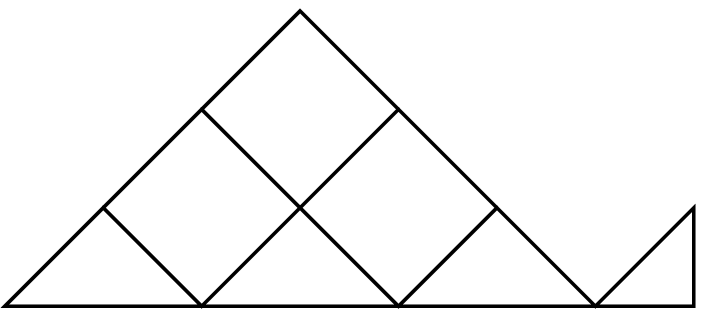} & \tau^6 & \tau^{\pm 1} \\
\includegraphics[width=42pt]{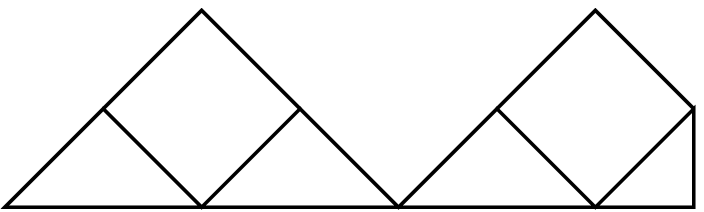} & \tau^3(5,3,1) & \tau^{\pm 2} \\
\includegraphics[width=42pt]{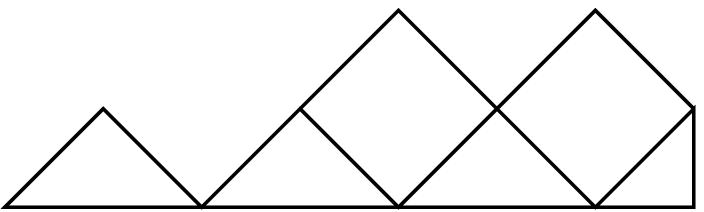} & \tau(3,11,10,2) & \tau^{\pm 2} \\
\includegraphics[width=42pt]{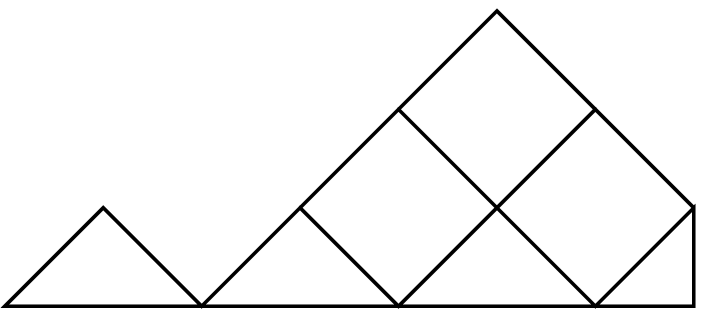} & \tau^2(3,5,1) & \tau^{\pm 1} \\
\includegraphics[width=42pt]{7_01212121.eps} & (1,8,12,5) & \tau^{\pm 3} & 1 \\
\includegraphics[width=42pt]{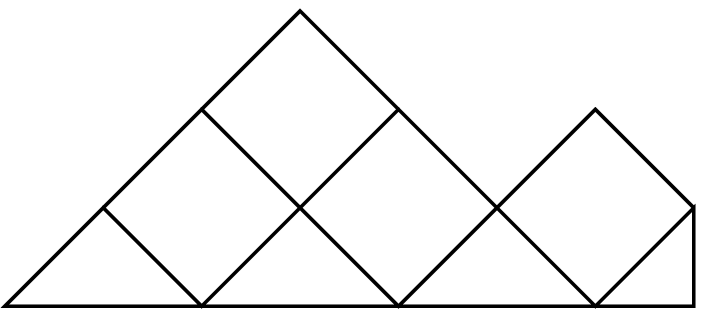} & \tau(2,3,3) & \tau^{\pm 2} & \tau^{\pm 1} \\
\includegraphics[width=42pt]{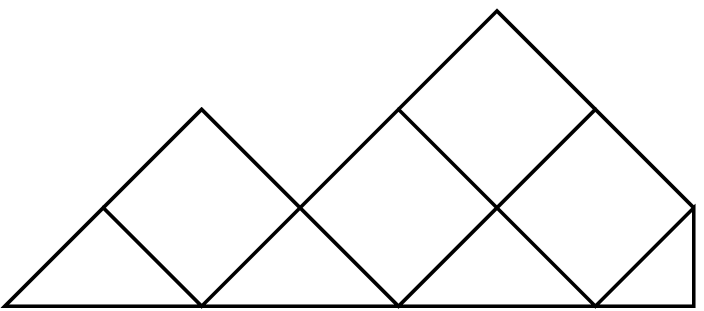} & \tau(1,6,3) & \tau^{\pm 2} & \tau^{\pm 1}\\
\includegraphics[width=42pt]{7_01232321.eps} & \tau^2(2,3) & \tau^{\pm 1} & \tau^{\pm 2} & 1\\
\includegraphics[width=42pt]{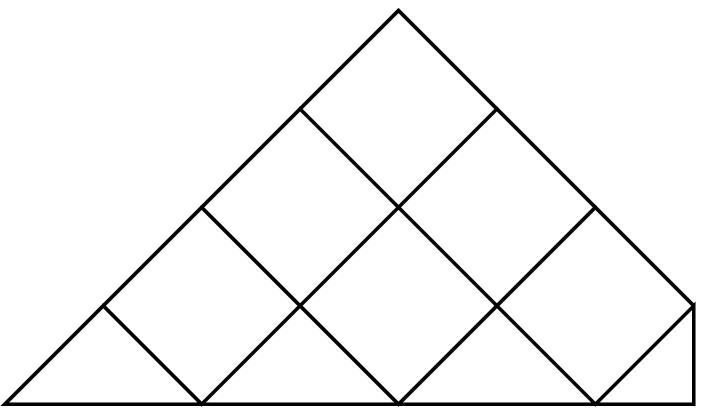} & \tau^3 & \tau^{\pm 2} & \tau^{\pm 1} & \tau^{\pm 1}
\end{array}
\]
%
%
\[
\begin{array}{ll}
S_-(7,0) = \tau^3 & S_+(7,0) = \tau^3\\
S_-(7,1) = \tau^2(3,3) & S_+(7,1) = \tau^2(2,4)\\
S_-(7,2) = (6,21,18,5) & S_+(7,2) = (1,11,24,14)\\
S_-(7,3) = \tau^{-3}(1,14,49,62,34,9,1) & S_+(7,3) = \tau^3(24,76,56,14)
\end{array}
\]

\subsection{$N=8$}
\[
\begin{array}{c|lccc}
\alpha & \psi_\alpha & \tau^{\pm c_{\alpha,3}} & \tau^{\pm c_{\alpha,2}} & \tau^{\pm c_{\alpha,1}}\\ \hline
\includegraphics[width=48pt]{8_010101010.eps} & (1,14,49,62,34,9,1) & 1 \\
\includegraphics[width=48pt]{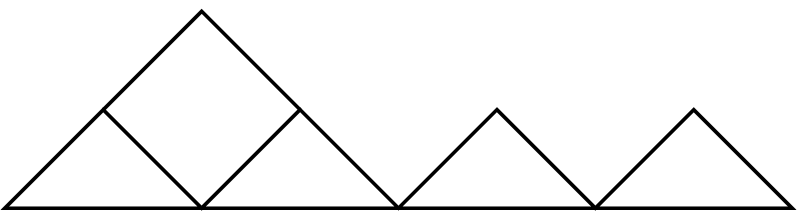} & \tau(3,15,29,20,7,1) & \tau^{\pm 1}  \\
\includegraphics[width=48pt]{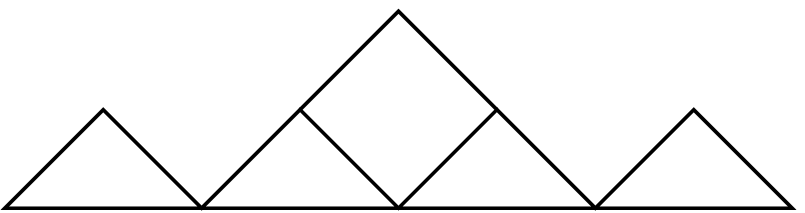} & \tau(2,15,27,19,7,1) & \tau^{\pm 1}  \\
\includegraphics[width=48pt]{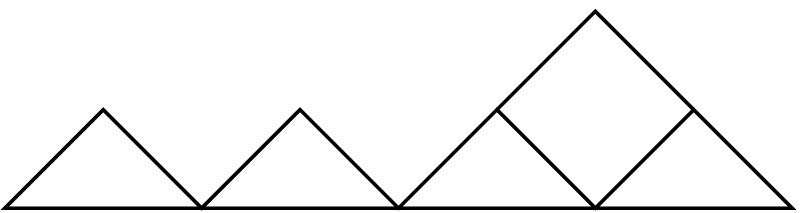} & \tau(1,12,28,25,8,1) & \tau^{\pm 1} \\
\includegraphics[width=48pt]{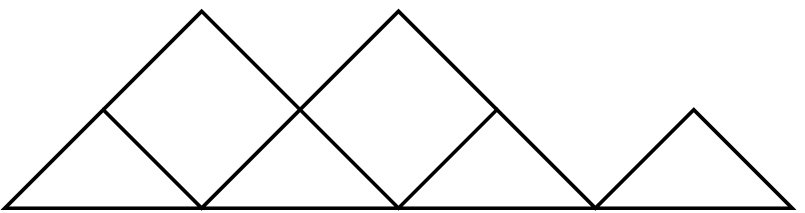} & \tau^2(6,21,18,9,2) & \tau^{\pm 2}  \\
\includegraphics[width=48pt]{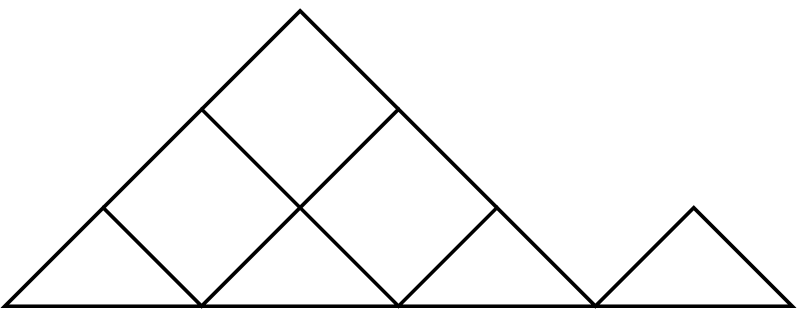} & \tau^3(5,5,3,1) & \tau^{\pm 1} \\
\includegraphics[width=48pt]{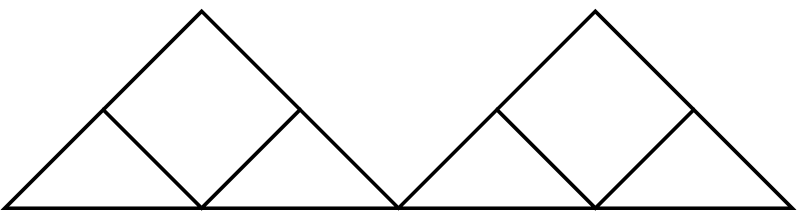} & \tau^2(3,9,12,5,1) & \tau^{\pm 2} \\
\includegraphics[width=48pt]{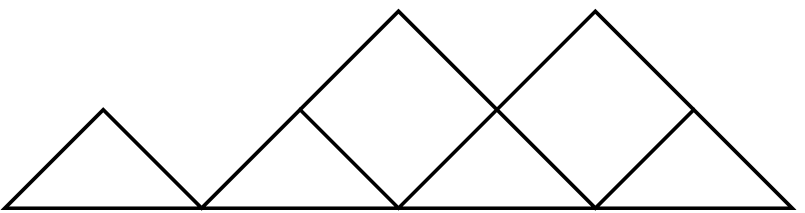} & \tau^2(2,15,24,13,2) & \tau^{\pm 2} \\
\includegraphics[width=48pt]{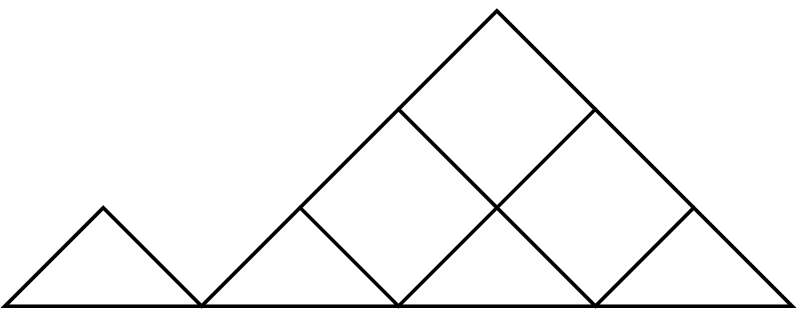} & \tau^3(1,6,6,1) & \tau^{\pm 1} \\
\includegraphics[width=48pt]{8_012121210.eps} & \tau^3(6,21,18,5) & \tau^{\pm 3} & 1 \\
\includegraphics[width=48pt]{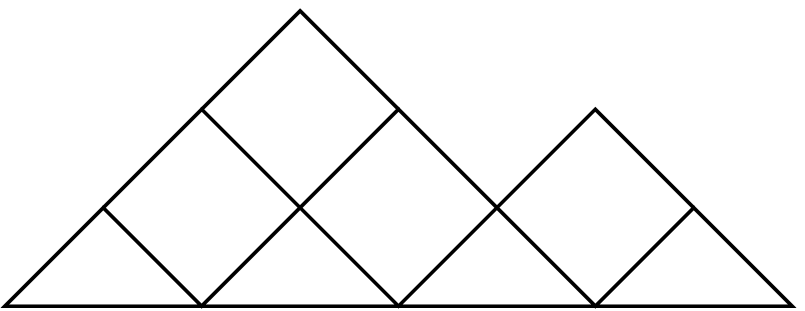} & \tau^4(5,6,3) & \tau^{\pm 2} & \tau^{\pm 1} \\
\includegraphics[width=48pt]{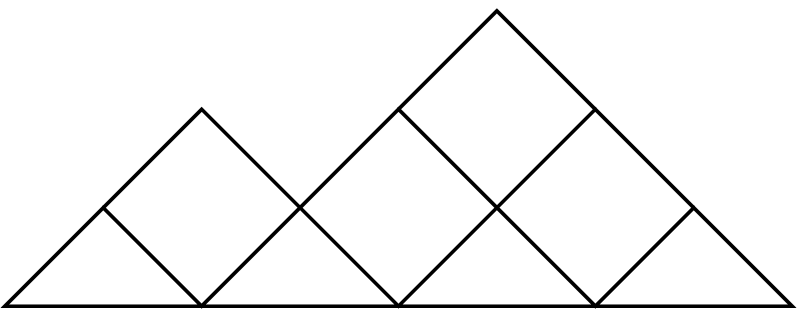} & \tau^4(3,8,3) & \tau^{\pm 2} & \tau^{\pm 1}\\
\includegraphics[width=48pt]{8_012323210.eps} & \tau^5(3,3) & \tau^{\pm 1} & \tau^{\pm 2} & 1\\
\includegraphics[width=48pt]{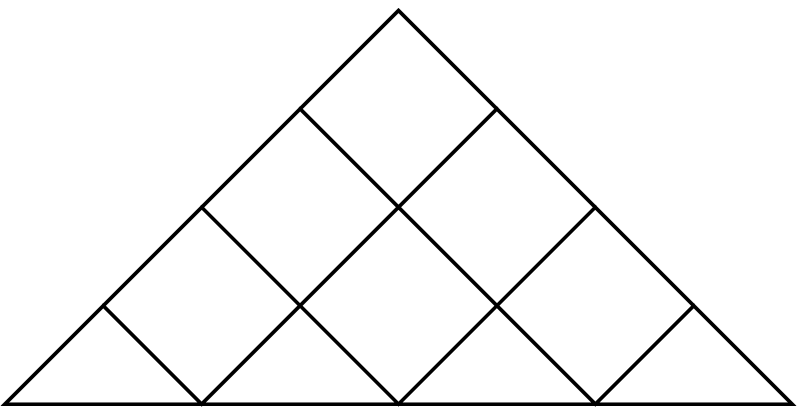} & \tau^6 & \tau^{\pm 2} & \tau^{\pm 1} & \tau^{\pm 1}
\end{array}
\]
%
%
\[
\begin{array}{ll}
S_-(8,0) = \tau^6 & S_+(8,0) = \tau^6\\
S_-(8,1) = \tau^5(4,3) & S_+(8,1) = \tau^5(3,4) \\
S_-(8,2) = \tau^3(17,39,24,5) & S_+(8,2) = \tau^3(6,29,36,14)\\
S_-(8,3) = (24,136,234,176,63,12,1) & S_+(8,3) = (1,20,108,219,200,84,14)
\end{array}
\]

\subsection{$N=9$}
\[
\begin{array}{c|lc}
\alpha & \psi_\alpha & \tau^{\pm c_{\alpha,4}} \\ \hline
\includegraphics[width=54pt]{9_0101010101.eps} & \tau^4(24,136,234,176,63,12,1) & 1 \\
\includegraphics[width=54pt]{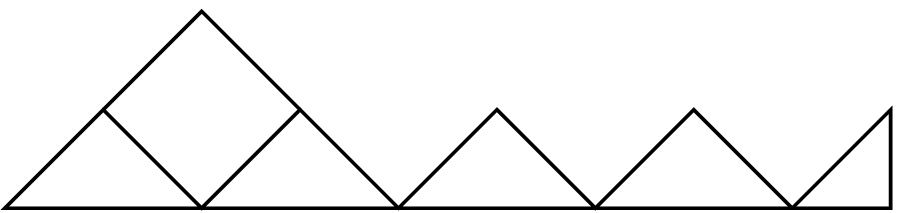} & \tau^5(28,84,94,43,10,1) & \tau^{\pm 1}  \\
\includegraphics[width=54pt]{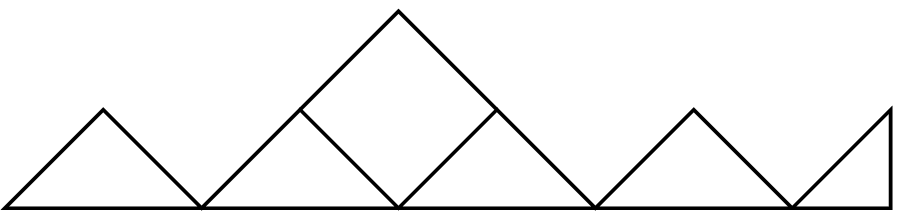} & \tau^5(20,72,84,41,10,1) & \tau^{\pm 1} \\
\includegraphics[width=54pt]{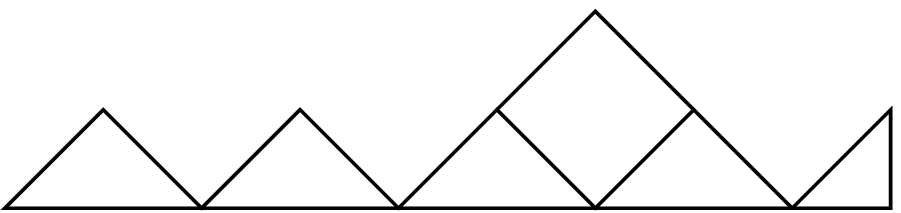} & \tau^5(12,58,74,41,10,1) & \tau^{\pm 1} \\
\includegraphics[width=54pt]{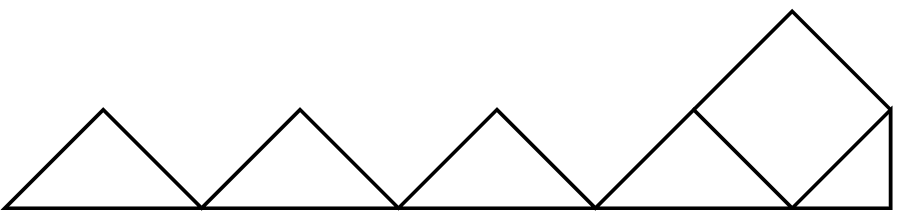} & \tau^3(24,136,234,176,63,12,1) & \tau^{\pm 1} \\
\includegraphics[width=54pt]{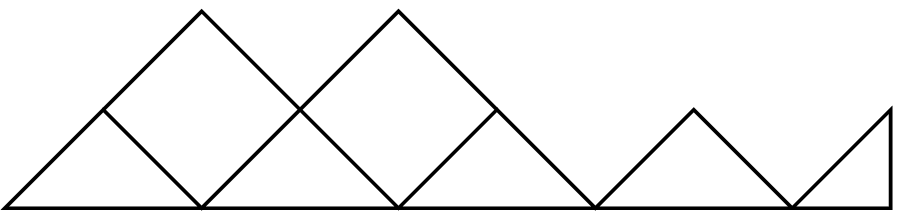} & \tau^6(28,65,45,15,2) & \tau^{\pm 2} \\
\includegraphics[width=54pt]{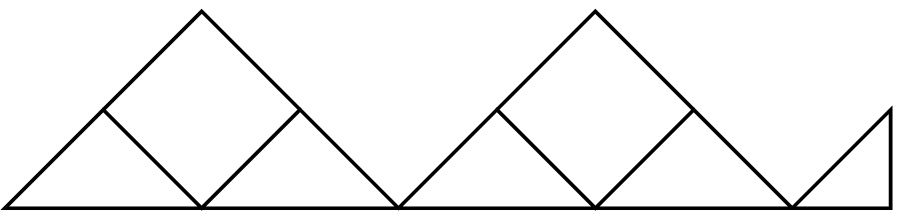} & \tau^6(14,31,23,7,1) & \tau^{\pm 2} \\
\includegraphics[width=54pt]{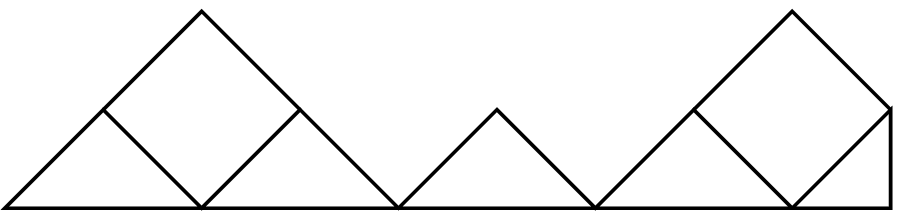} & \tau^4(28,84,90,40,10,1) & \tau^{\pm 2} \\
\includegraphics[width=54pt]{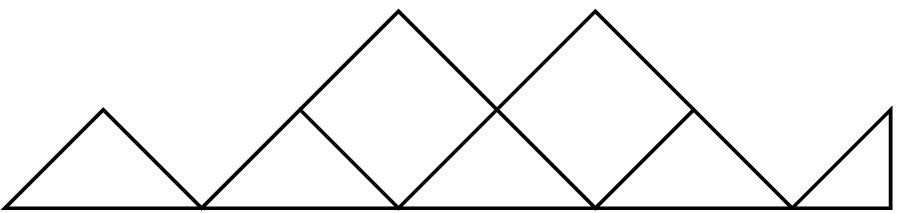} & \tau^6(12,41,41,16,2) & \tau^{\pm 2} \\
\includegraphics[width=54pt]{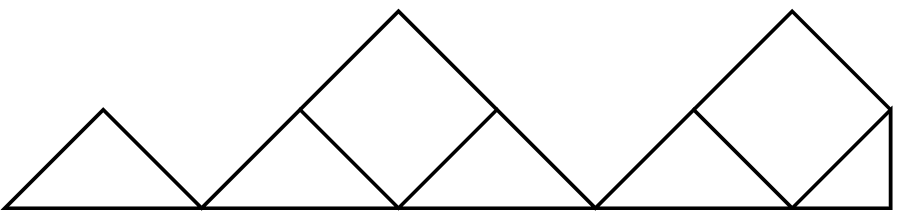} & \tau^4(20,68,74,34,9,1) & \tau^{\pm 2} \\
\includegraphics[width=54pt]{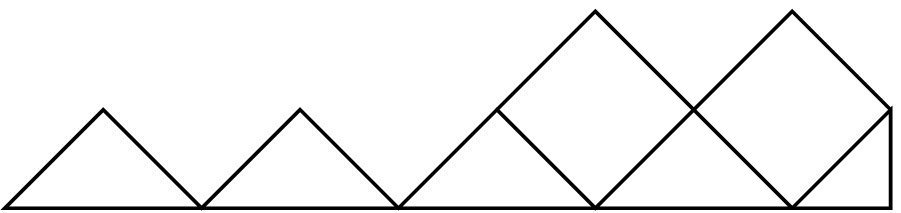} & \tau^2(12,86,208,213,103,22,2) & \tau^{\pm 2} \\
\includegraphics[width=54pt]{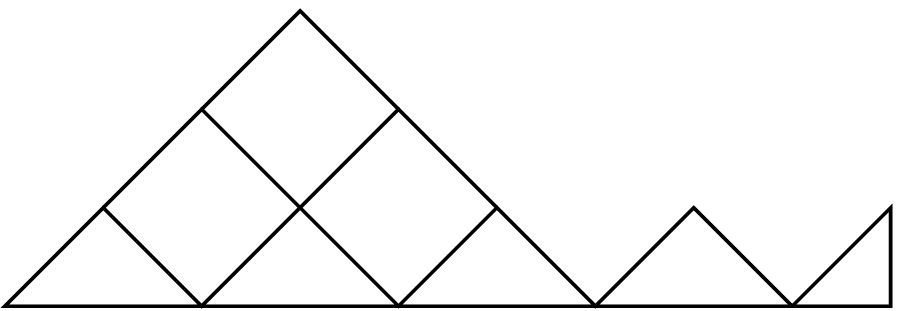} & \tau^7(14,13,6,1) & \tau^{\pm 1} \\
\includegraphics[width=54pt]{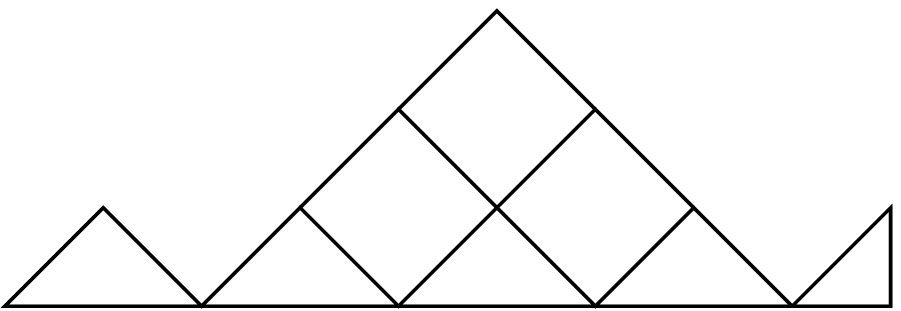} & \tau^7(4,10,7,1) & \tau^{\pm 1}
\\
\includegraphics[width=54pt]{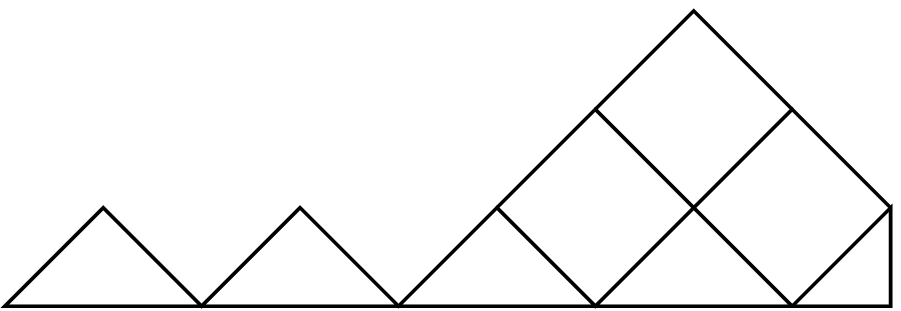} & \tau^3(12,62,88,51,11,1) & \tau^{\pm 1}
\end{array}
\]

\vfill\newpage

\[
\begin{array}{c|lc}
\alpha & \psi_\alpha & \tau^{\pm c_{\alpha,4}} \\ \hline
\includegraphics[width=54pt]{9_0121212101.eps} & \tau^7(17,39,24,5) & \tau^{\pm 3} \\
\includegraphics[width=54pt]{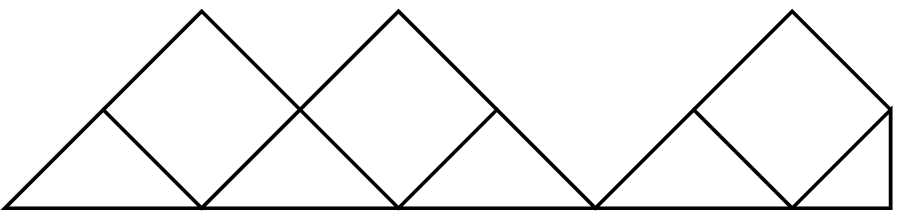} & \tau^5(28,59,33,12,2) & \tau^{\pm 3} \\
\includegraphics[width=54pt]{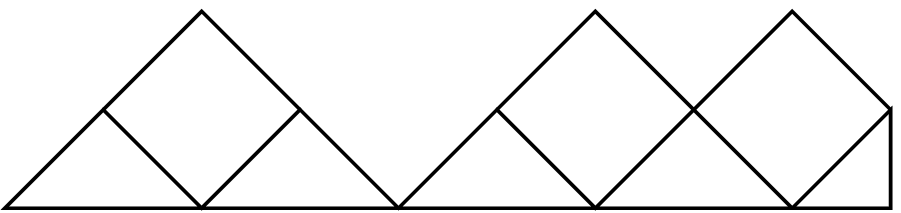} & \tau^3(14,56,84,54,15,2) & \tau^{\pm 3} \\
\includegraphics[width=54pt]{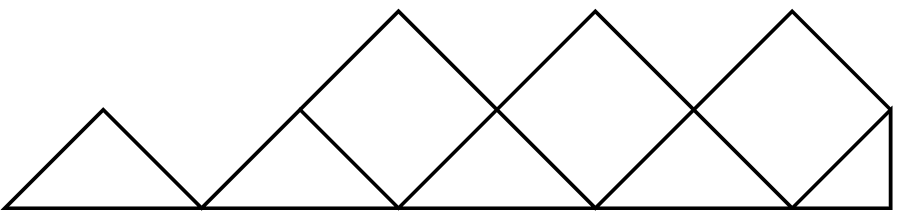} & \tau(4,46,160,230,154,47,5) & \tau^{\pm 3} \\
\includegraphics[width=54pt]{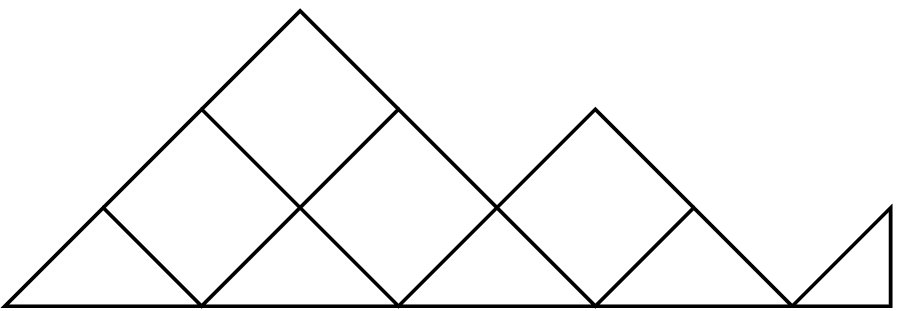} & \tau^8(9,9,3) & \tau^{\pm 2} \\
\includegraphics[width=54pt]{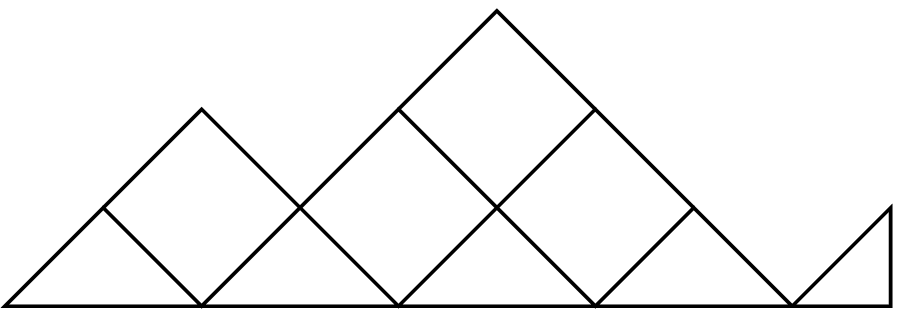} & \tau^8(6,10,3) & \tau^{\pm 2} \\
\includegraphics[width=54pt]{9_0123232101.eps} & \tau^9(4,3) & \tau^{\pm 1} \\
\includegraphics[width=54pt]{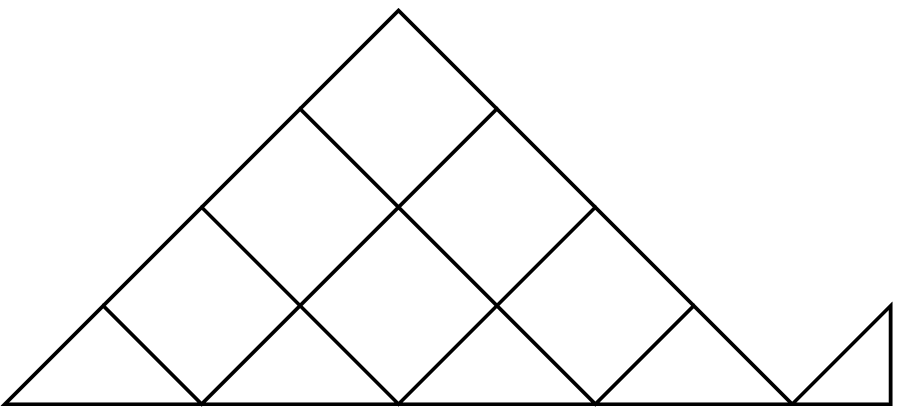} & \tau^{10} & \tau^{\pm 2} \\
\includegraphics[width=54pt]{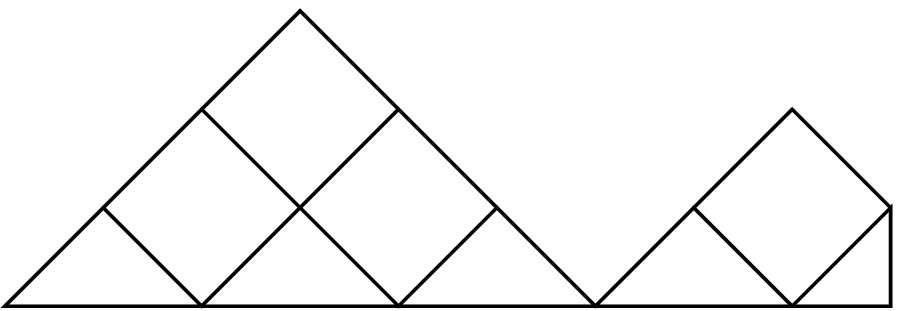} & \tau^6(14,9,4,1) & \tau^{\pm 2} \\
\includegraphics[width=54pt]{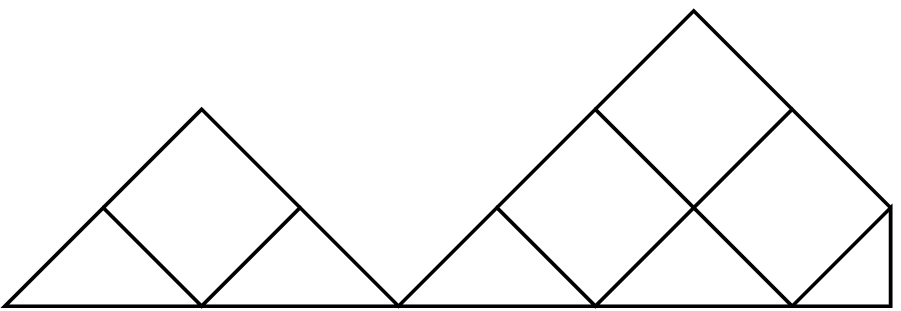} & \tau^4(14,28,25,7,1) & \tau^{\pm 2} \\
\includegraphics[width=54pt]{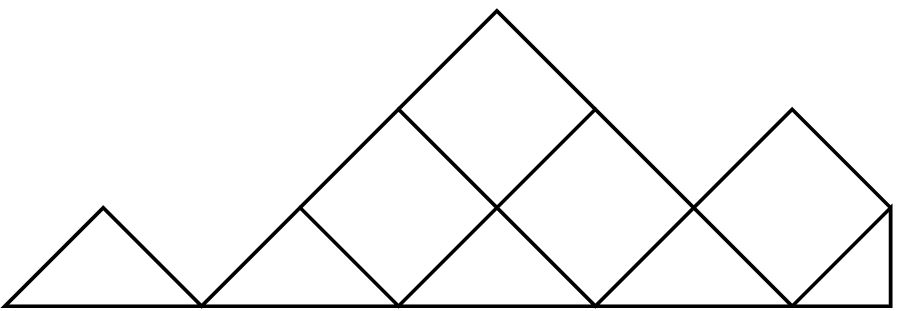} & \tau^2(8,40,68,61,26,3) & \tau^{\pm 2} \\
\includegraphics[width=54pt]{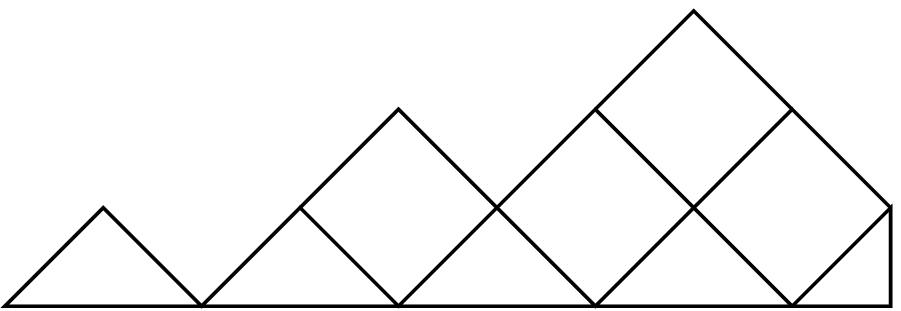} & \tau^2(4,38,96,84,28,3) & \tau^{\pm 2} \\
\includegraphics[width=54pt]{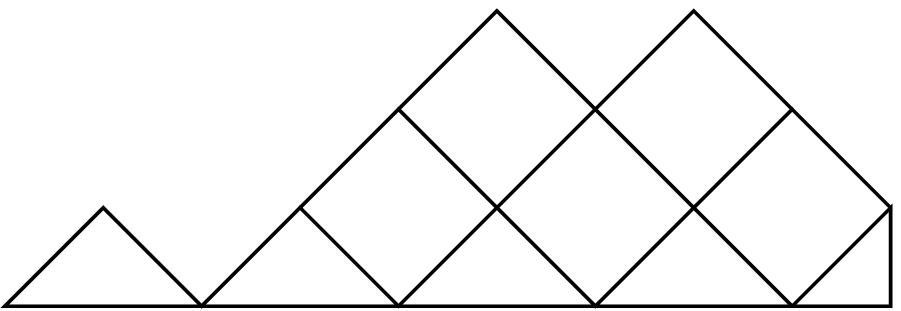} & \tau^3(8,40,56,27,3) & \tau^{\pm 1} \\
\includegraphics[width=54pt]{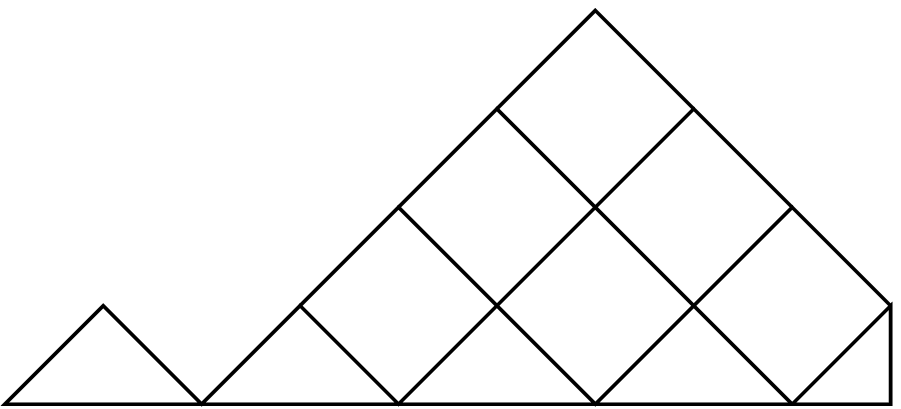} & \tau^4(4,14,9,1) & \tau^{\pm 2}
\end{array}
\]

\vfill\newpage

\[
\begin{array}{c|lcccc}
\alpha & \psi_\alpha & \tau^{\pm c_{\alpha,4}} & \tau^{\pm c_{\alpha,3}} & \tau^{\pm c_{\alpha,2}} & \tau^{\pm c_{\alpha,1}}\\ \hline
\includegraphics[width=54pt]{9_0121212121.eps} & (1,20,108,219,200,84,14) & \tau^{\pm 4} & 1\\
\includegraphics[width=54pt]{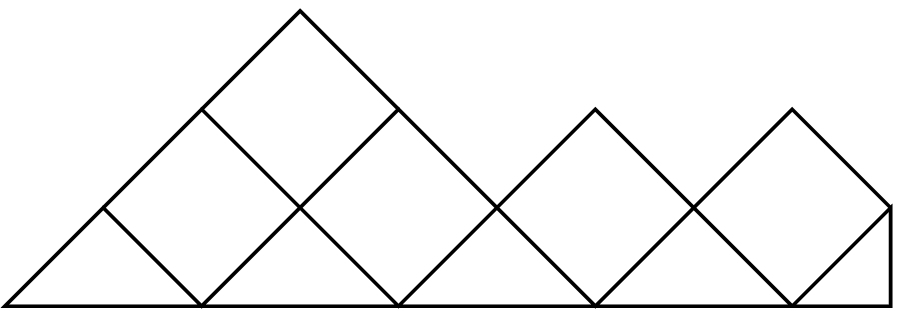} & \tau(3,19,58,69,38,9) & \tau^{\pm 3} & \tau^{\pm 1}\\
\includegraphics[width=54pt]{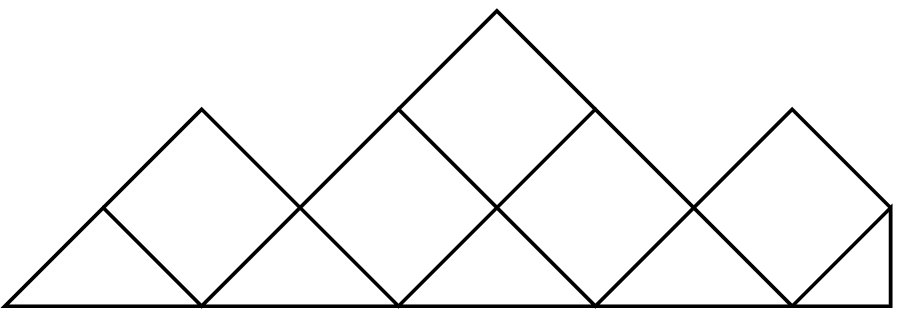} & \tau(2,27,67,75,47,10) & \tau^{\pm 3} & \tau^{\pm 1}\\
\includegraphics[width=54pt]{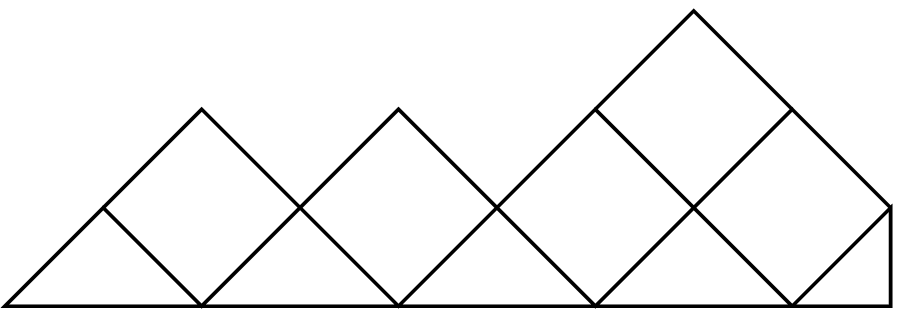} & \tau(1,18,75,106,51,9) & \tau^{\pm 3} & \tau^{\pm 1}\\
\includegraphics[width=54pt]{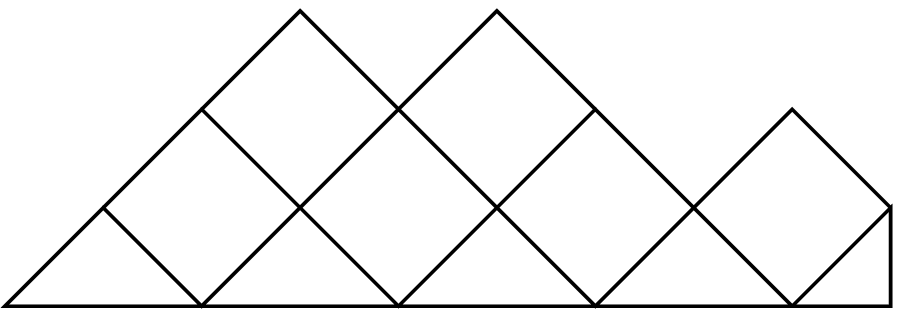} & \tau^2(6,29,36,30,11) & \tau^{\pm 3} & \tau^{\pm 2}\\
\includegraphics[width=54pt]{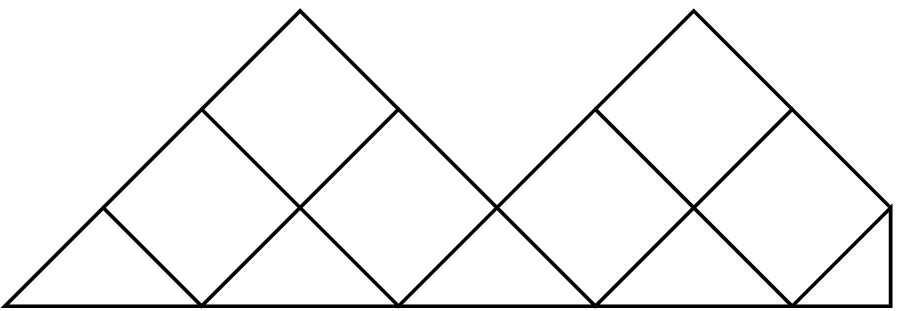} & \tau^2(3,13,33,21,6) & \tau^{\pm 2} & \tau^{\pm 2}\\
\includegraphics[width=54pt]{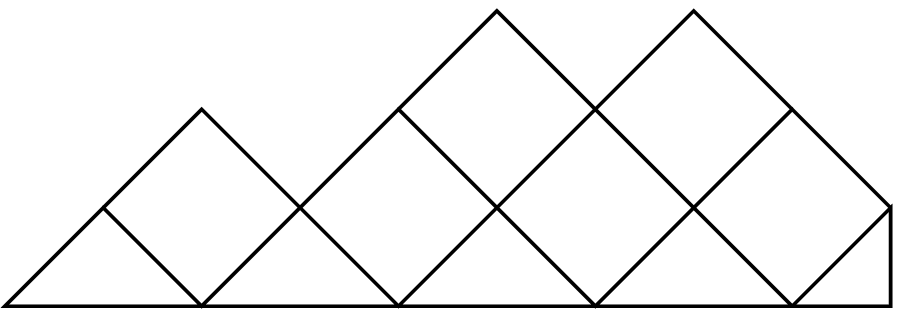} & \tau^2(2,27,64,51,11) & \tau^{\pm 2} & \tau^{\pm 2}\\
\includegraphics[width=54pt]{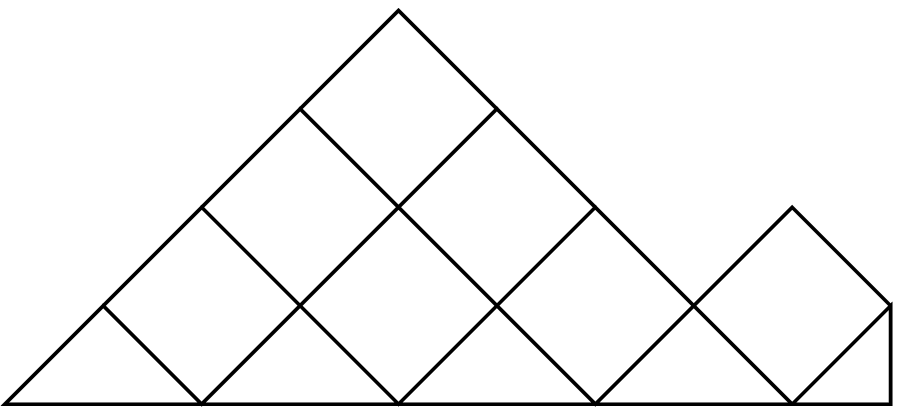} & \tau^3(5,7,6,4) & \tau^{\pm 3} & \tau^{\pm 1} \\
\includegraphics[width=54pt]{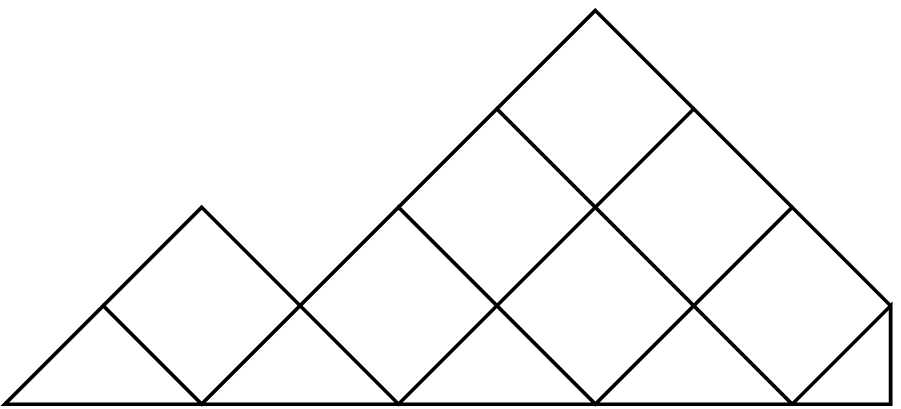} & \tau^3(1,12,17,4) & \tau^{\pm 3} & \tau^{\pm 1} \\
\includegraphics[width=54pt]{9_0123232321.eps} & \tau^3(6,29,36,14) & \tau^{\pm 1} & \tau^{\pm 3} & 1 \\
\includegraphics[width=54pt]{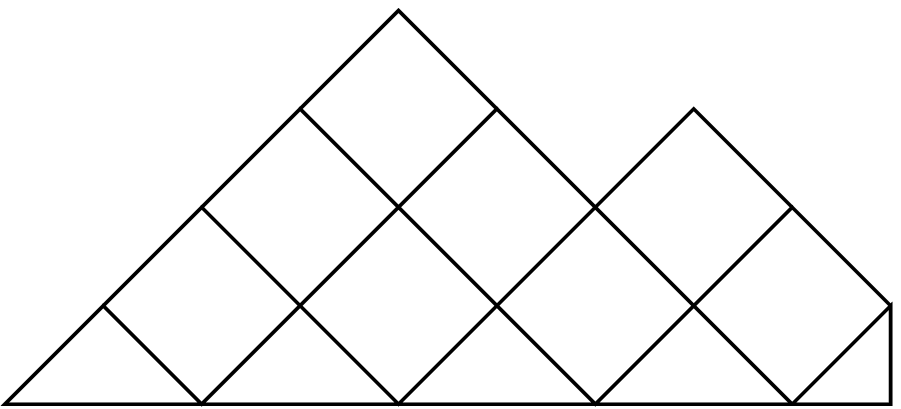} & \tau^4(5,8,6) & \tau^{\pm 2} & \tau^{\pm 2} & \tau^{\pm 1}\\
\includegraphics[width=54pt]{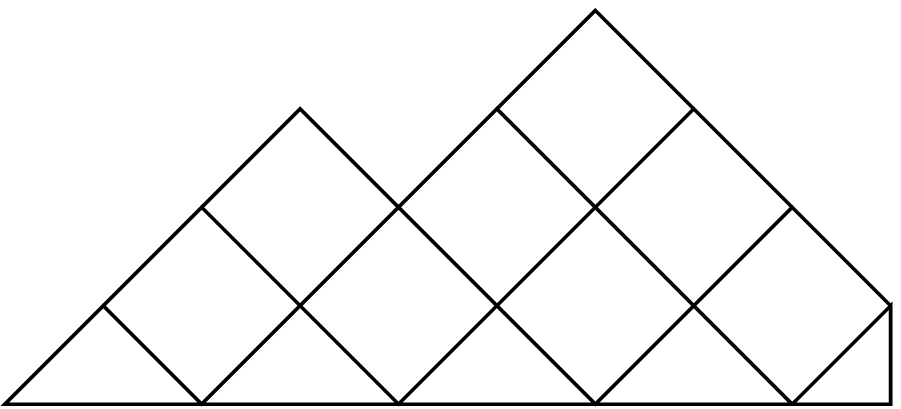} & \tau^4(3,12,6) & \tau^{\pm 2} & \tau^{\pm 2} & \tau^{\pm 1}\\
\includegraphics[width=54pt]{9_0123434321.eps} & \tau^5(3,4) & \tau^{\pm 3} & \tau^{\pm 1} & \tau^{\pm 2} & 1\\
\includegraphics[width=54pt]{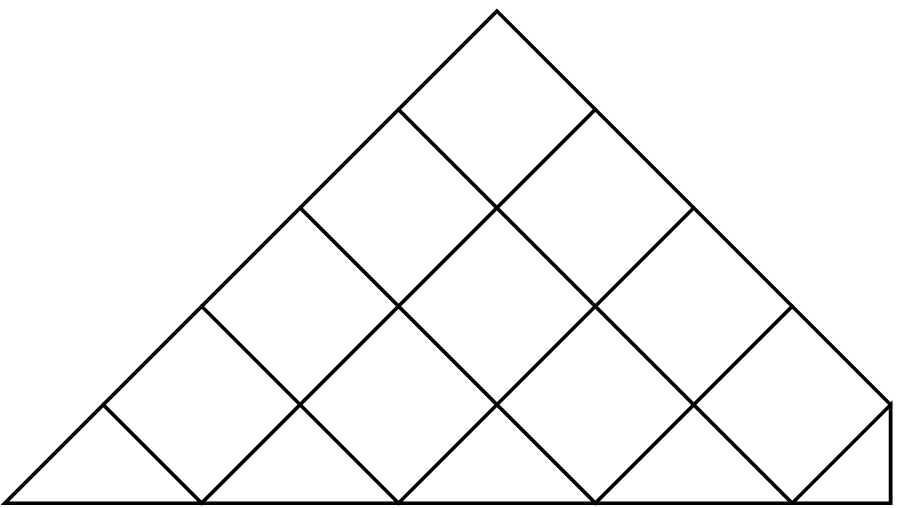} & \tau^6 & \tau^{\pm 2} & \tau^{\pm 2} & \tau^{\pm 1} & \tau^{\pm 1}\\
\end{array}
\]
\bigskip

\begin{align*}
S_-(9,0) &= \tau^6 \nonumber\\
S_-(9,1) &= \tau^5(4,4)\nonumber\\
S_-(9,2) &= \tau^3(17,54,48,14) \\
S_-(9,3) &= (24,196,520,624,372,112,14)\nonumber\\
S_-(9,4) &= \tau^{-4}(1,30,273,1042,2006,2121,1321,501,117,16) \nonumber\\[5mm]
S_+(0,0) &= \tau^6\nonumber\\
S_+(9,1) &= \tau^5(3,5) \nonumber\\
S_+(9,2) &= \tau^3(6,37,60,30)\\
S_+(9,3) &= (1,26,189,524,660,378,84) \nonumber\\
S_+(9,4) &= \tau^4(120,920,2242,2440,1305,360,42)\nonumber
\end{align*}
\vfill\newpage
\subsubsection{$N=10$}

\[
\begin{array}{c|lc}
\alpha & \psi_\alpha & \tau^{\pm c_{\alpha,4}} \\ \hline
\includegraphics[width=54pt]{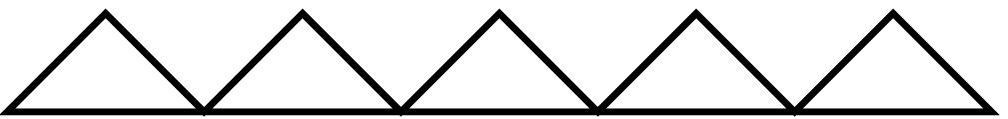} & (1,30,273,1042,2006,2121,1321,501,117,16,1) & 1\\
\includegraphics[width=54pt]{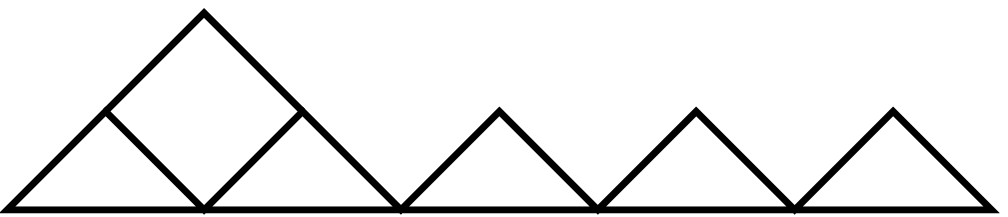} & \tau(4,56,294,738,977,735,327,89,14,1) & \tau^{\pm 1} \\
\includegraphics[width=54pt]{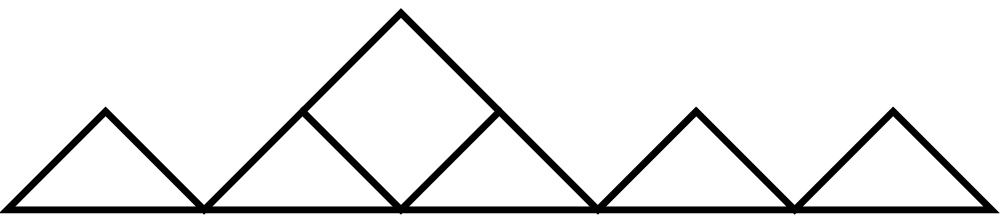} & \tau(3,49,269,683,912,691,312,87,14,1) & \tau^{\pm 1}\\
\includegraphics[width=54pt]{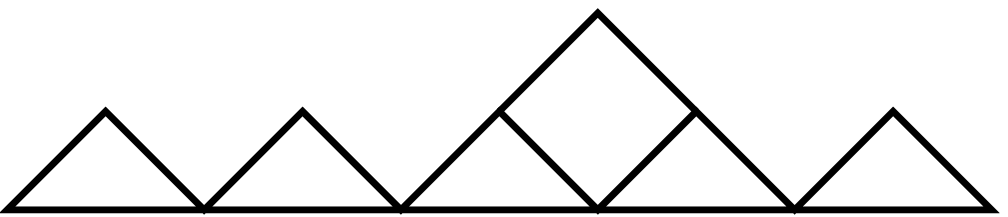} & \tau(2,47,267,686,915,688,313,88,14,1) & \tau^{\pm 1}\\
\includegraphics[width=54pt]{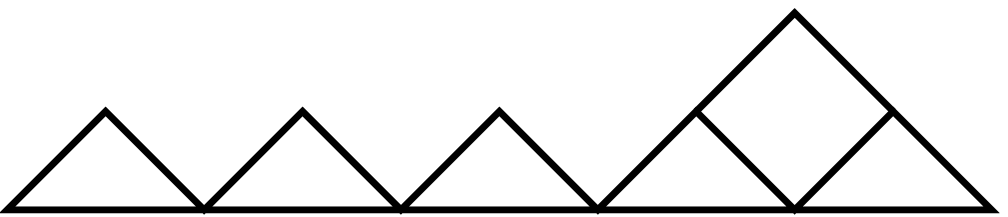} & \tau(1,28,220,669,996,820,384,101,15,1) & \tau^{\pm 1}\\
\includegraphics[width=54pt]{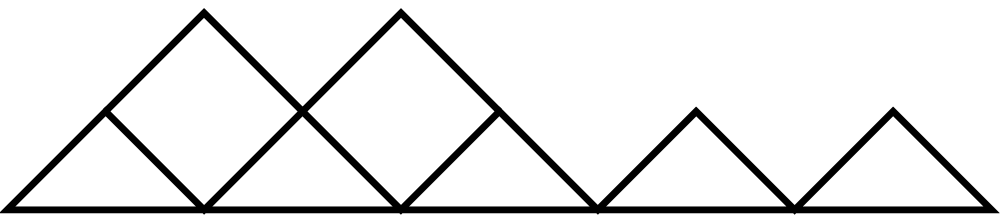} & \tau^2(12,116,396,684,348,117,23,2) & \tau^{\pm 2} \\
\includegraphics[width=54pt]{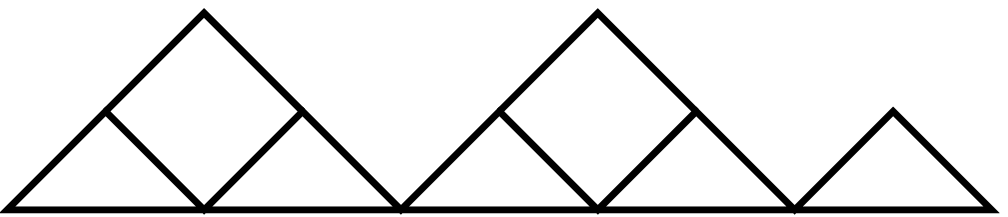} & \tau^2(8,60,206,350,329,176,58,11,1) & \tau^{\pm 2}\\
\includegraphics[width=54pt]{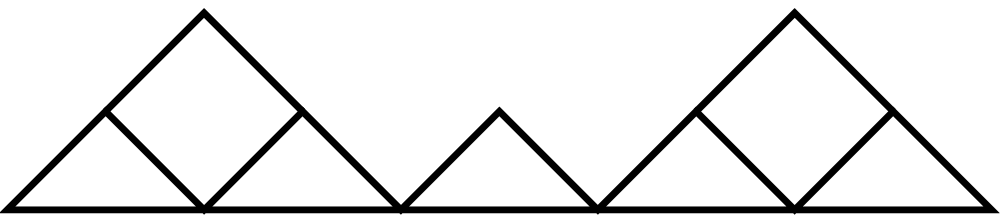} & \tau^2(4,48,210,394,403,230,72,13,1) & \tau^{\pm 2}\\
\includegraphics[width=54pt]{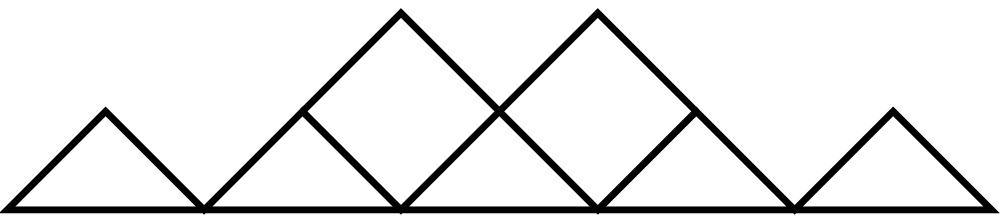} & \tau^2(6,89,368,665,618,342,120,24,2) & \tau^{\pm 2}\\
\includegraphics[width=54pt]{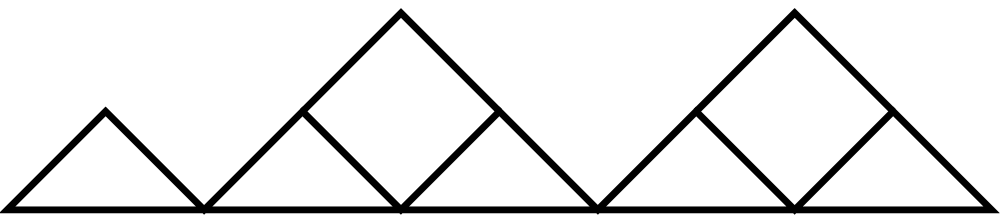} & \tau^2(3,43,184,343,349,200,64,12,1) & \tau^{\pm 2}\\
\includegraphics[width=54pt]{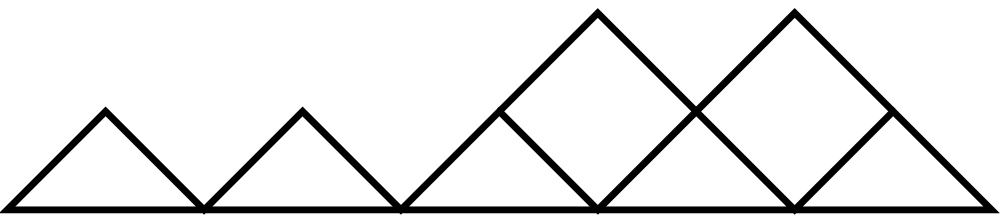} & \tau^2(2,47,264,632,744,469,159,27,2) & \tau^{\pm 2}\\
\includegraphics[width=54pt]{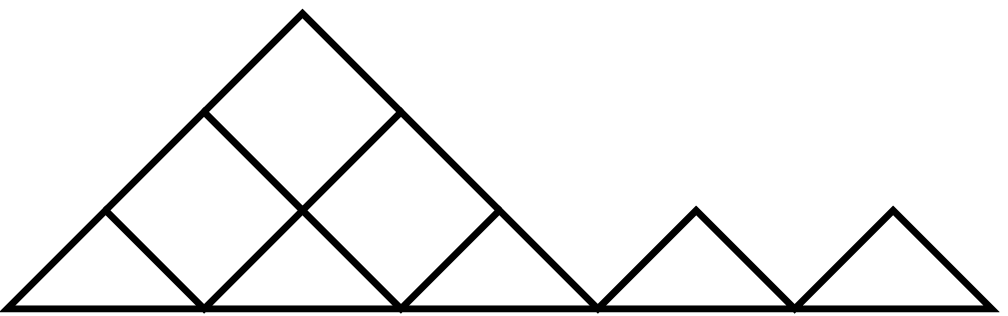} & \tau^3(14,70,151,180,111,43,10,1) & \tau^{\pm 1}\\

\includegraphics[width=54pt]{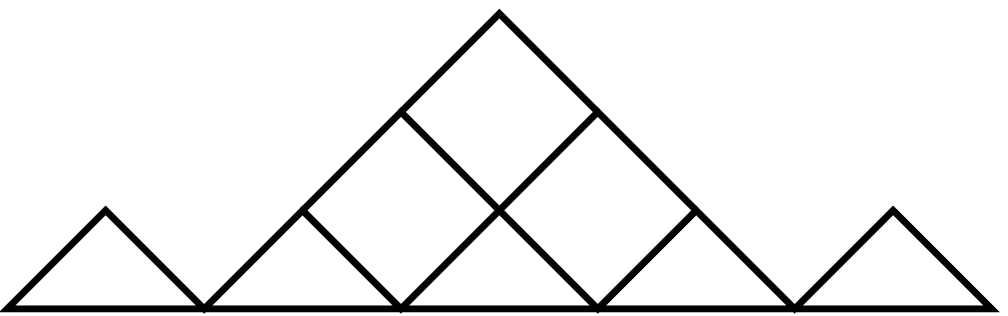} & \tau^3(5,57,156,174,111,46,11,1) & \tau^{\pm 1} \\
\includegraphics[width=54pt]{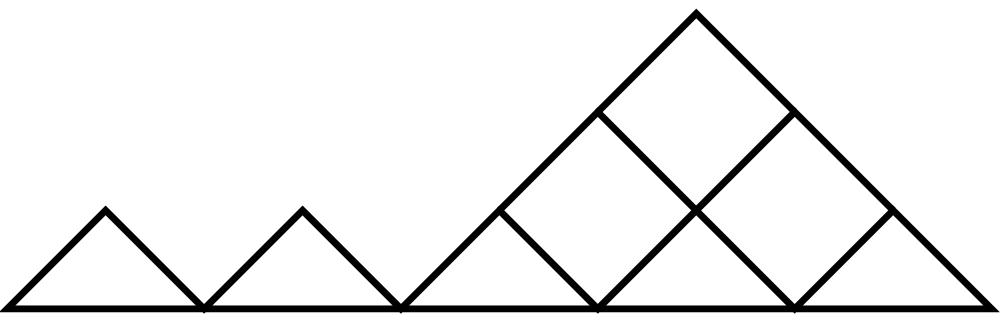} & \tau^3(1,22,102,97,172,72,13,1) & \tau^{\pm 1}
\end{array}
\]

\vfill\newpage

\[
\begin{array}{c|lc}
\alpha & \psi_\alpha & \tau^{\pm c_{\alpha,4}}  \\ \hline
\includegraphics[width=54pt]{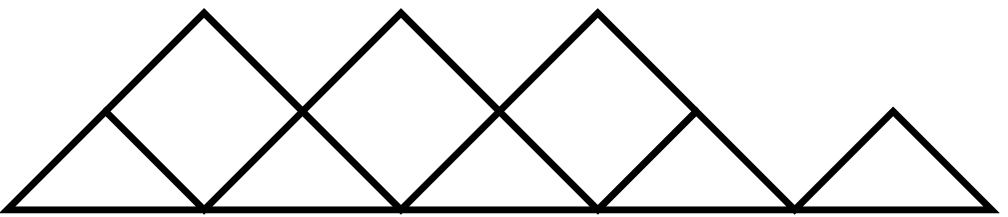} & \tau^3(24,196,520,624,408,174,44,5) & \tau^{\pm 3}\\
\includegraphics[width=54pt]{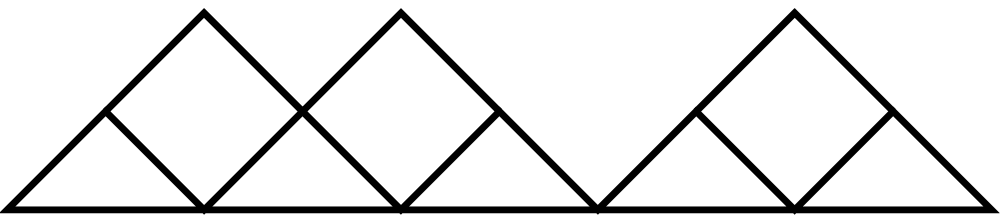} & \tau^3(12,92,216,276,198,76,18,2) & \tau^{\pm 3}\\
\includegraphics[width=54pt]{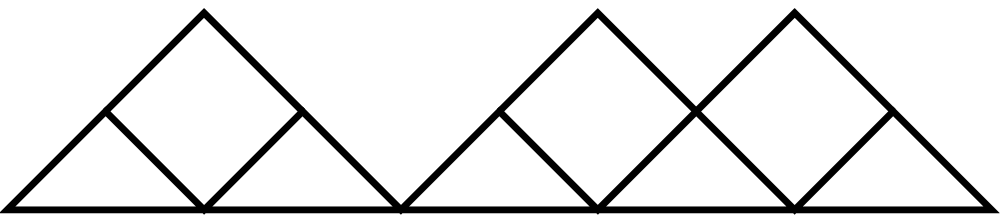} & \tau^3(8,60,194,286,226,94,20,2) & \tau^{\pm 3}\\
\includegraphics[width=54pt]{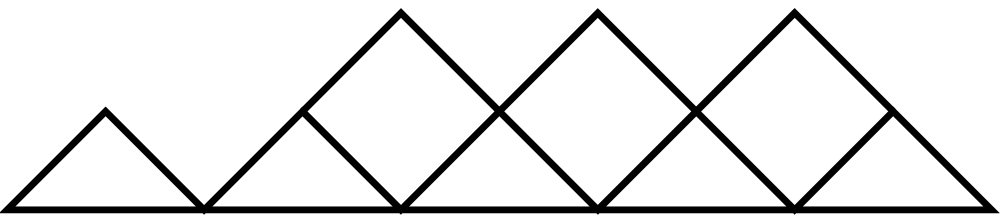} & \tau^3(6,89,368,649,564,256,58,5) & \tau^{\pm 3}\\
\includegraphics[width=54pt]{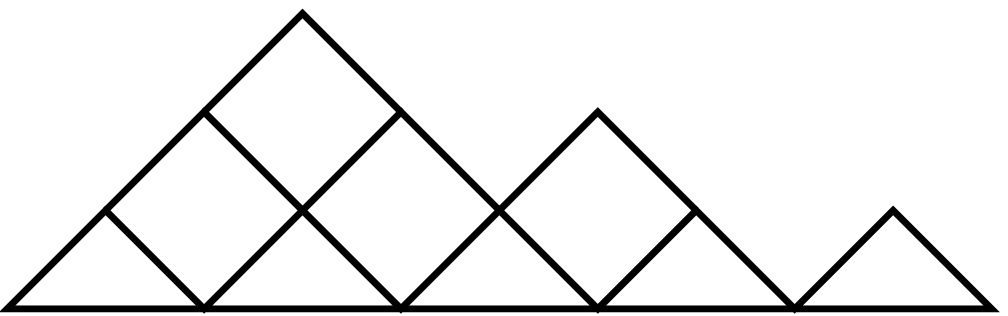} & \tau^4(28,112,187,140,69,21,3) & \tau^{\pm 2}\\
\includegraphics[width=54pt]{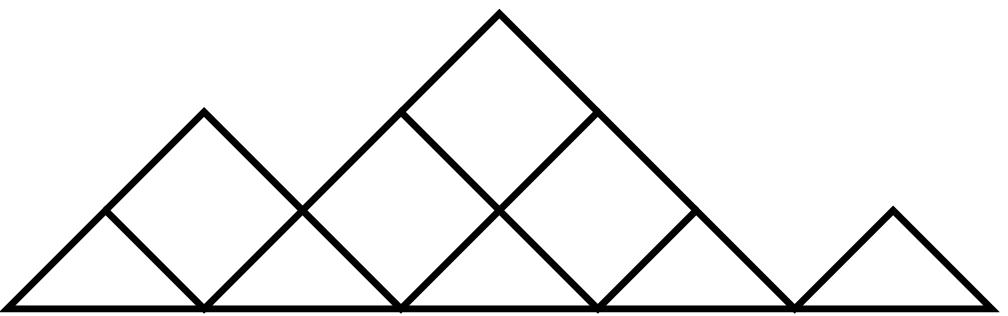} & \tau^4(20,118,189,142,71,22,3) & \tau^{\pm 2}\\
\includegraphics[width=54pt]{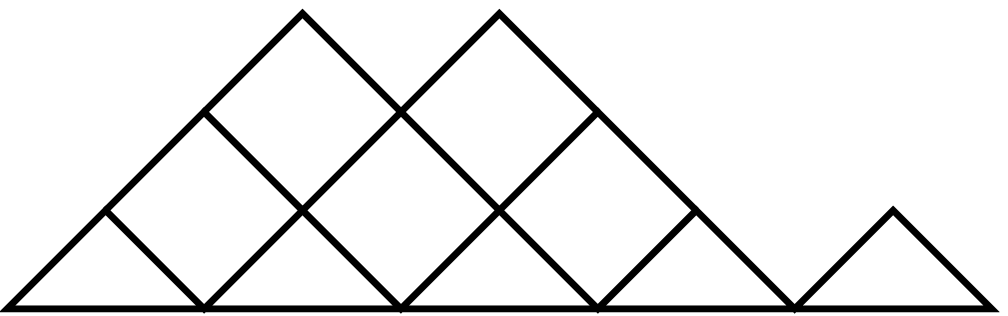} & \tau^5(28,84,73,42,16,3) & \tau^{\pm 1}\\
\includegraphics[width=54pt]{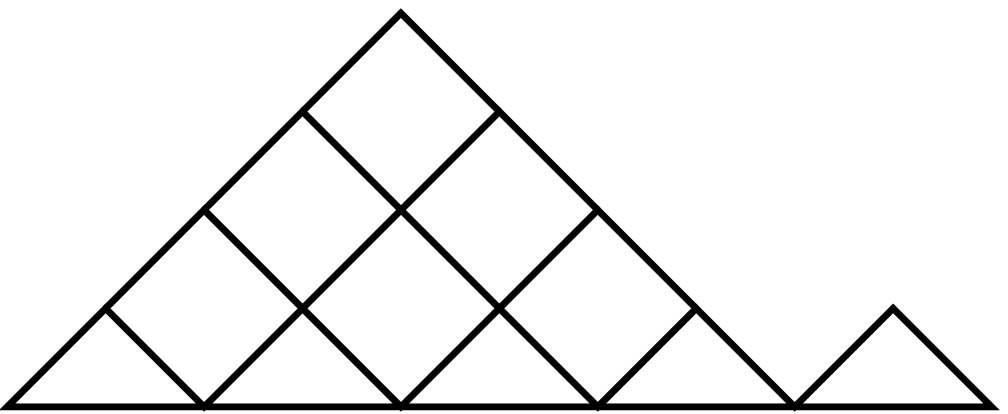} & \tau^6(14,14,9,4,1) & \tau^{\pm 2} \\
\includegraphics[width=54pt]{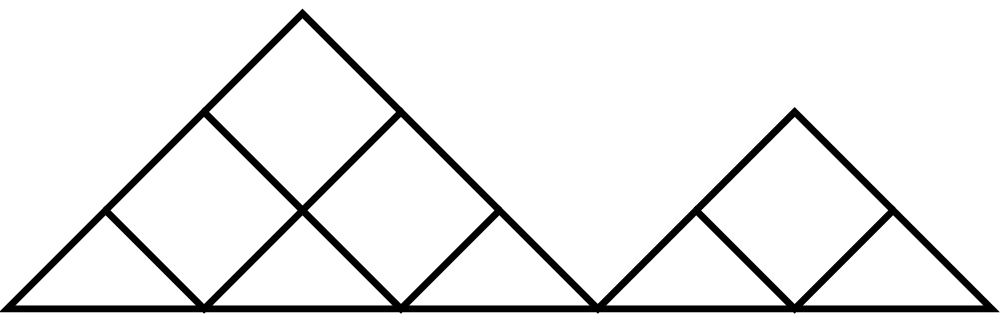} & \tau^4(14,42,64,57,25,7,1) & \tau^{\pm 2}\\
\includegraphics[width=54pt]{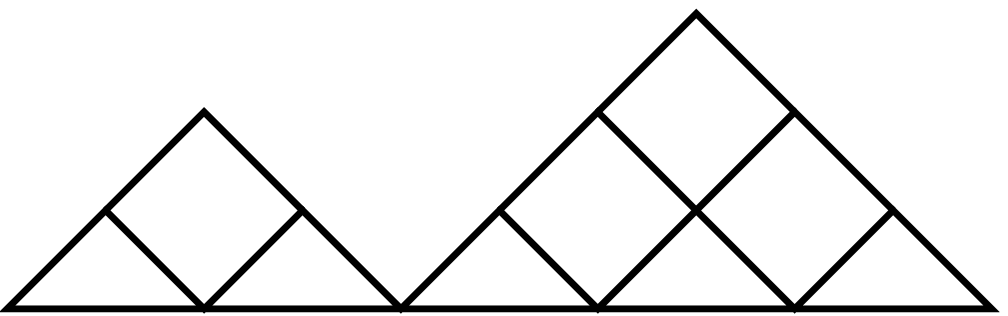} & \tau^4(4,24,63,70,39,9,1) & \tau^{\pm 2}\\
\includegraphics[width=54pt]{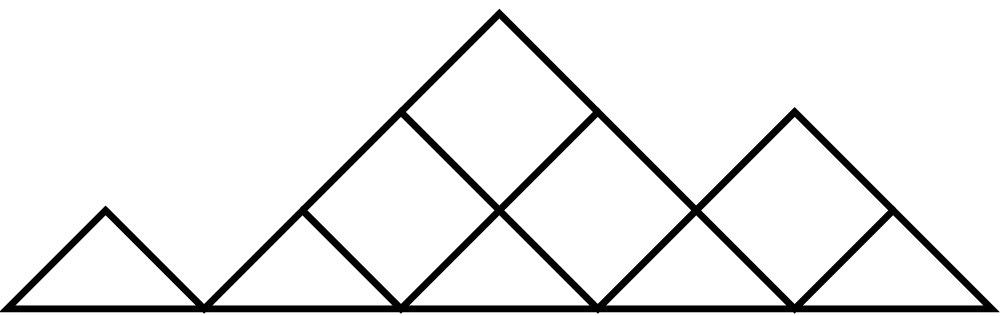} & \tau^4(5,58,166,190,111,32,3) & \tau^{\pm 2}\\
\includegraphics[width=54pt]{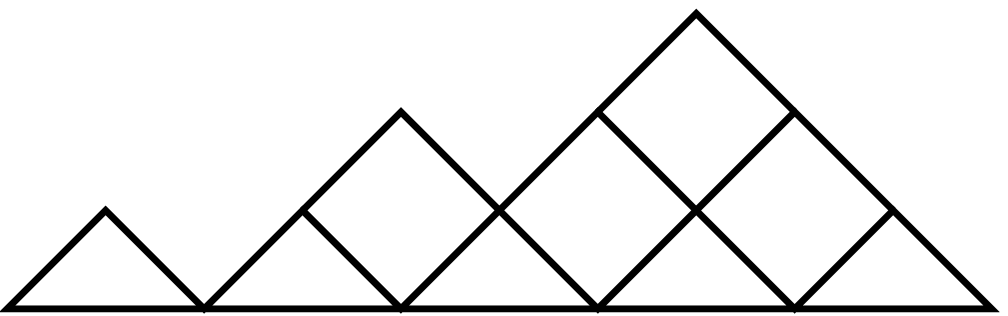} & \tau^4(3,42,147,206,126,33,3) & \tau^{\pm 2}\\
\includegraphics[width=54pt]{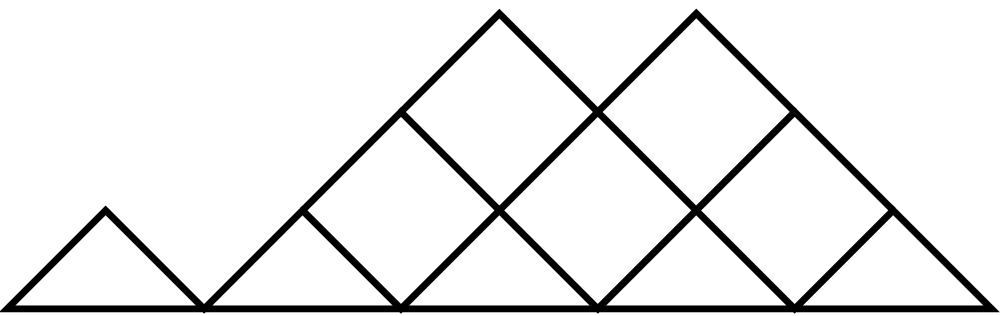} & \tau^5(3,34,90,85,31,3) & \tau^{\pm 1}\\
\includegraphics[width=54pt]{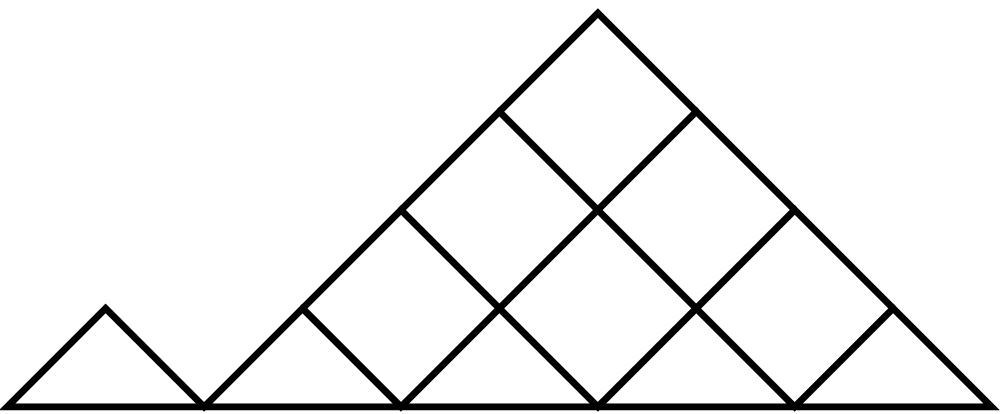} & \tau^6(1,10,20,10,1) & \tau^{\pm 2}
\end{array}
\]

\vfill\newpage

\[
\begin{array}{c|lccccc}
\alpha & \psi_\alpha  & \tau^{\pm c_{\alpha,4}}  & \tau^{\pm c_{\alpha,3}}  & \tau^{\pm c_{\alpha,2}}  & \tau^{\pm c_{\alpha,1}}  \\ \hline
\includegraphics[width=54pt]{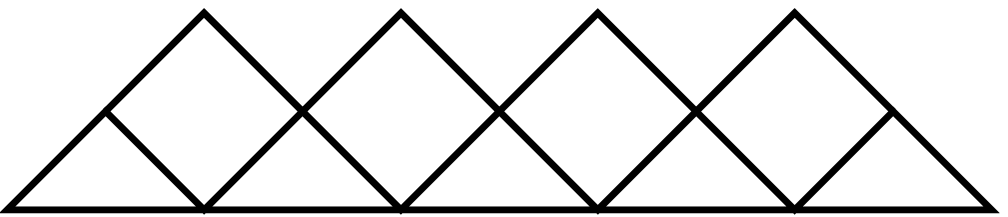} & \tau^4(24,196,520,6224,372,112,14) & \tau^{\pm 4} & 1\\
\includegraphics[width=54pt]{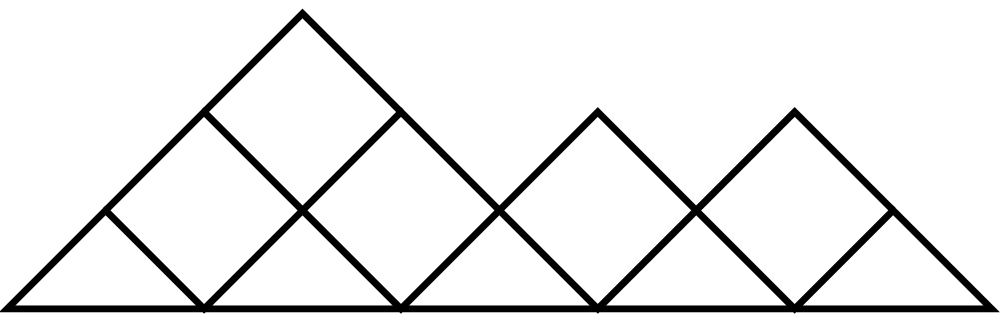} & \tau^5(28,112,191,144,55,9) & \tau^{\pm 3} & \tau^{\pm 1}\\
\includegraphics[width=54pt]{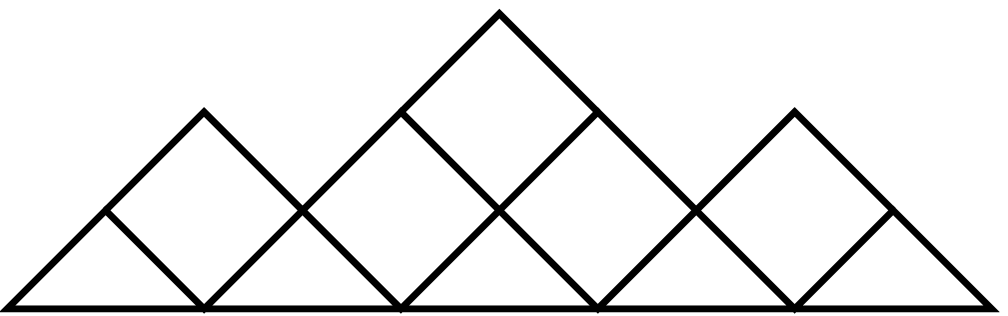} & \tau^5(20,122,209,162,165,10) & \tau^{\pm 3} & \tau^{\pm 1}\\
\includegraphics[width=54pt]{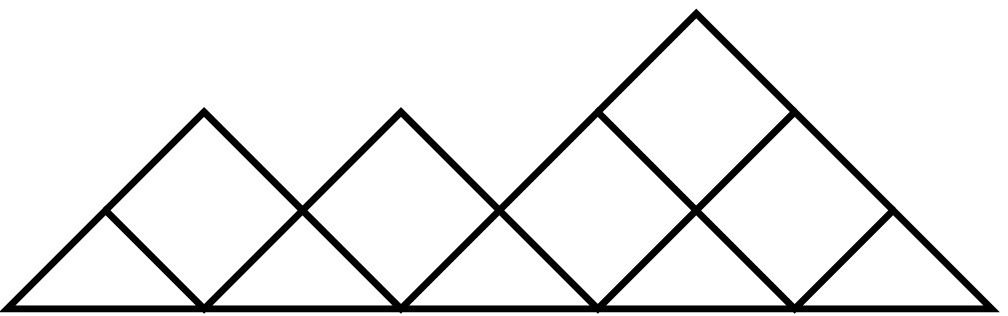} & \tau^5(12,88,192,174,64,9) & \tau^{\pm 3} & \tau^{\pm 1}\\
\includegraphics[width=54pt]{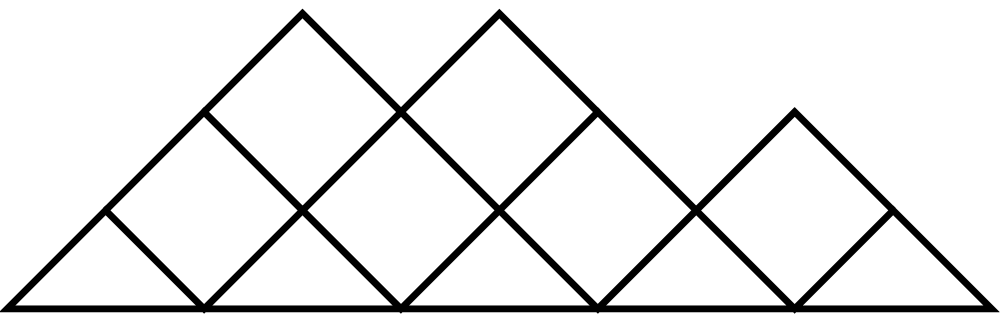} & \tau^6(28,90,80,48,11) & \tau^{\pm 3} & \tau^{\pm 2}\\
\includegraphics[width=54pt]{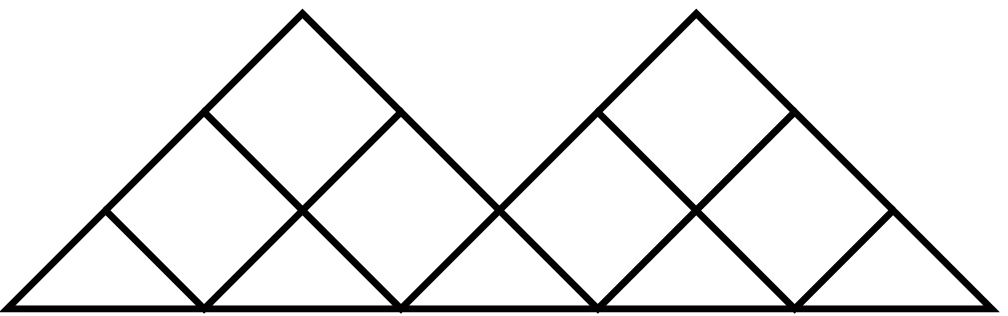} & \tau^6(14,45,60,29,6) & \tau^{\pm 2} & \tau^{\pm 2}\\
\includegraphics[width=54pt]{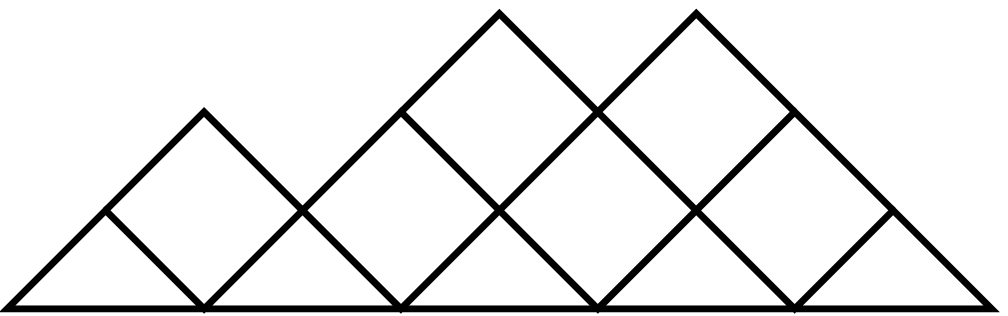} & \tau^6(12,71,110,63,11) & \tau^{\pm 2} & \tau^{\pm 2}\\
\includegraphics[width=54pt]{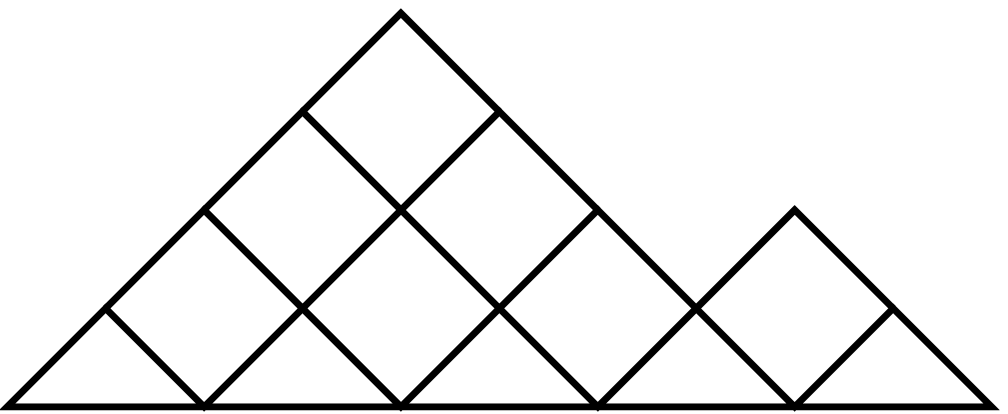} & \tau^7(14,18,12,4) & \tau^{\pm 3} & \tau^{\pm 1}\\
\includegraphics[width=54pt]{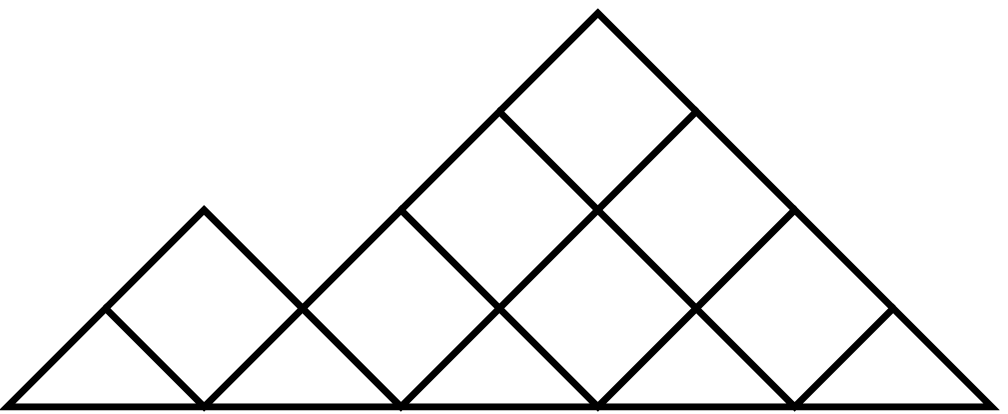} & \tau^7(4,20,20,4) & \tau^{\pm 3} & \tau^{\pm 1}\\
\includegraphics[width=54pt]{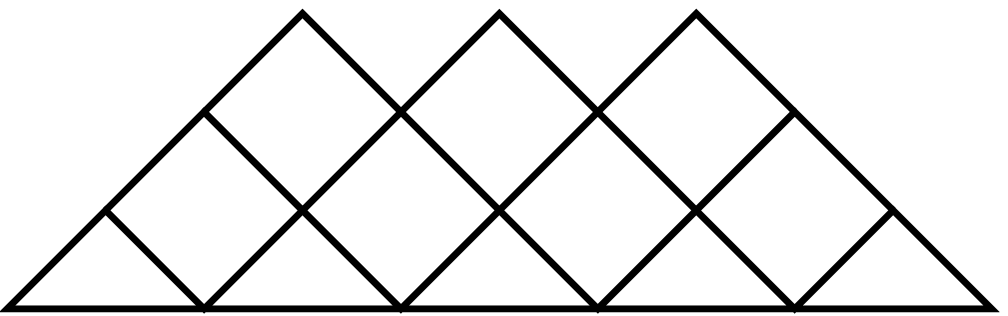} & \tau^7(17,54,48,14) & \tau^{\pm 1} & \tau^{\pm 3} & 1\\
\includegraphics[width=54pt]{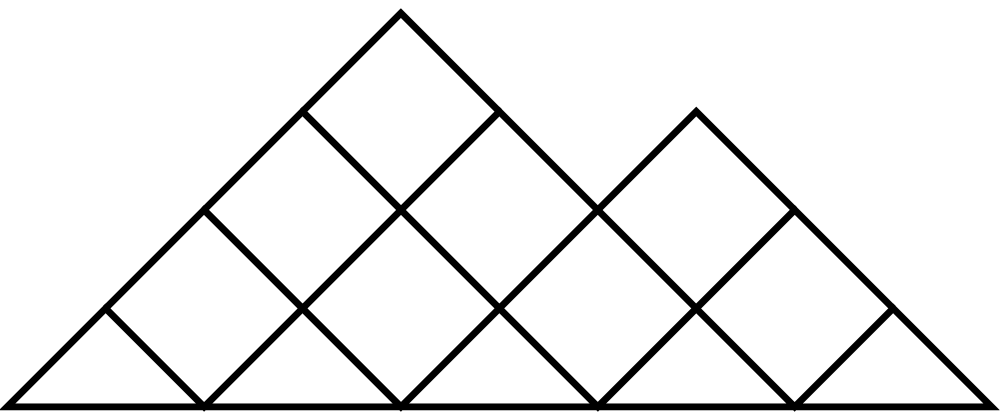} & \tau^8(9,12,6) & \tau^{\pm 2} & \tau^{\pm 2} & \tau^{\pm 1}\\
\includegraphics[width=54pt]{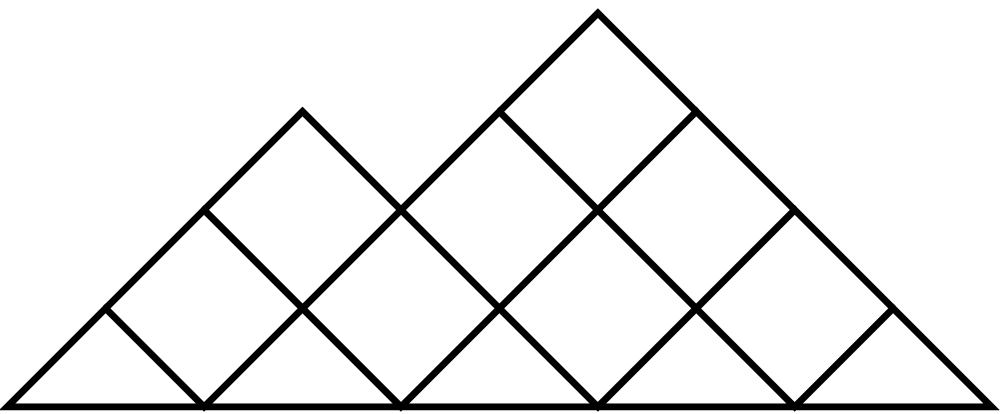} & \tau^8(6,15,6) & \tau^{\pm 2} & \tau^{\pm 2} & \tau^{\pm 1}\\
\includegraphics[width=54pt]{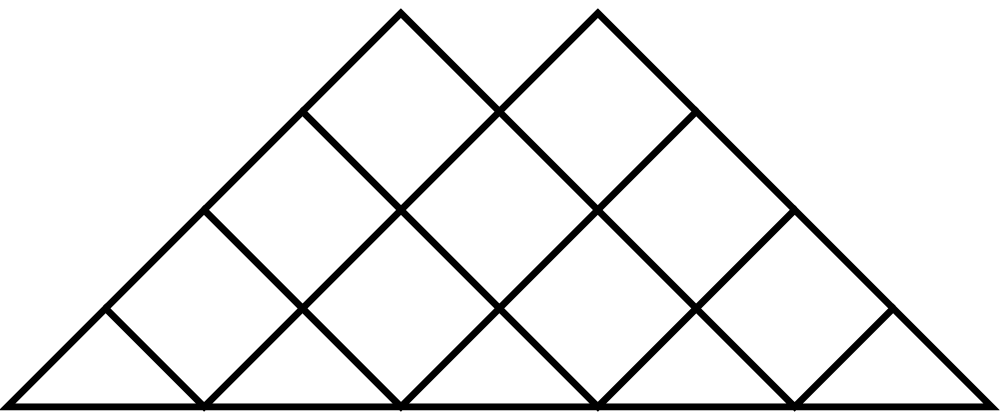} & \tau^9(4,4) & \tau^{\pm 3} & \tau^{\pm 1} & \tau^{\pm 2} & 1\\
\includegraphics[width=54pt]{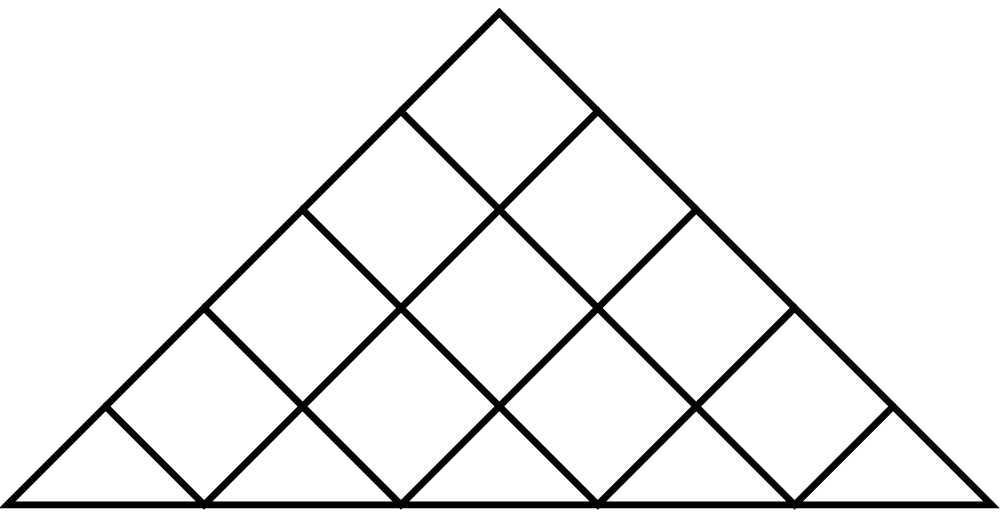} & \tau^{10} & \tau^{\pm 2} & \tau^{\pm 2} & \tau^{\pm 1} & \tau^{\pm 1}\\
\end{array}
\]

\vfill\newpage

\begin{align*}
S_-(10,0) &= \tau^{10} \nonumber\\
S_-(10,1) &= \tau^{9}(5,4)\nonumber\\
S_-(10,2) &= \tau^7(36,86,60,14) \\
S_-(10,3) &= \tau^4(155,811,1490,1306,592,140,14) \nonumber\\
S_-(10,4) &= (120,1400,5754,11584,13071,8900,3805,1044,186,20,1) \nonumber\\[5mm]
S_+(10,0) &= \tau^{10}\nonumber\\
S_+(10,1) &= \tau^9(4,5) \nonumber\\
S_+(10,2) &= \tau^7(17,69,80,30)\\
S_+(10,3) &= \tau^{4}(24,256,914,1496,1230,504,84) \nonumber\\
S_+(10,4) &= (1,40,508,2799,7940,12652,12026,6967,2430,480,42)\nonumber
\end{align*}

\vfill
\newpage
\section{Type B solutions}
\lb{B:solutions}

Using the factorised expressions of Theorem~\ref{th:facsolB}, we have computed polynomial solutions of the qKZ equation for type B from Proposition~\ref{prop6} in the limit $x_i\rightarrow 0$ up to $N=6$. In the following variables, 
\[
\tau'^2 =2-\tau=2+[2]=[2]_{q^{1/2}}^2,\qquad a = -\left.\frac{[\omega+1]}{[\frac{\omega-\delta}{2}][\frac{\omega+\delta}{2}]}\right|_{\omega=-1/2},
\]
see also Remark~\ref{rem:notation}, and up to an overall normalisation, these solutions become polynomials with positive coefficients. We choose the normalisation such that
\[
\psi_{\Omega}^{\rm B} = a^{\lfloor N/2\rfloor}.
\]

\bigskip

\subsection{$N=2$}
\[
\begin{array}{c|l}
\alpha & \psi_\alpha \\\hline
\includegraphics[width=12pt]{2_010.eps} & 1 \\
\includegraphics[width=12pt]{2_210.eps} & a
\end{array}
\]

\subsection{$N=3$}
\[
\begin{array}{c|l}
\alpha & \psi_\alpha \\ \hline
\includegraphics[width=18pt]{3_1010.eps} & 1+\tau'^2+a \\
\includegraphics[width=18pt]{3_1210.eps} &2 \\
\includegraphics[width=18pt]{3_3210.eps} & a
\end{array}
\]

\subsection{$N=4$}
\[
\begin{array}{c|l}
\alpha & \psi_\alpha \\ \hline
\includegraphics[width=24pt]{4_01010.eps} & 5+\tau'^2+a(2+\tau'^2)\\
\includegraphics[width=24pt]{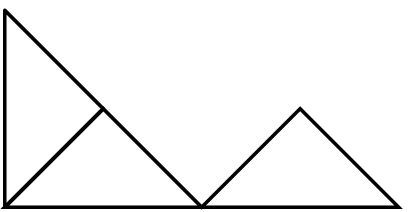} & a(3+2\tau'^2+\tau'^4)+a^2(2+\tau'^2) \\
\includegraphics[width=24pt]{4_01210.eps} & 2+\tau'^2\\
\includegraphics[width=24pt]{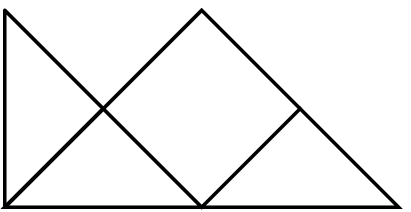} & 2a(2+\tau'^2)+2a^2\\
\includegraphics[width=24pt]{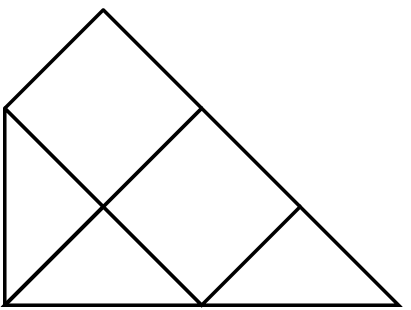} & 3a\\
\includegraphics[width=24pt]{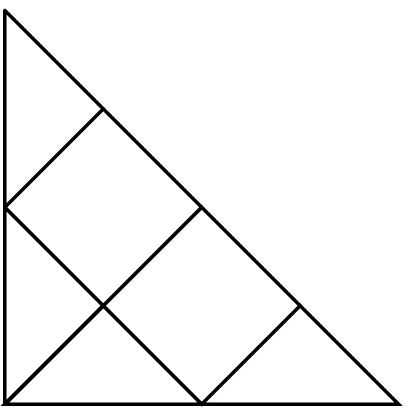} & a^2
\end{array}
\]
\vfill\newpage

\subsection{$N=5$}
\[
\begin{array}{c|l}
\alpha & \psi_\alpha \\ \hline
\includegraphics[width=30pt]{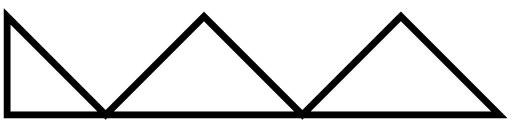} & 9+17\tau'^2+6\tau'^4+\tau'^6 +a(16+17\tau'^2+3\tau'^4)+a^2(7+2\tau'^2)\\
\includegraphics[width=30pt]{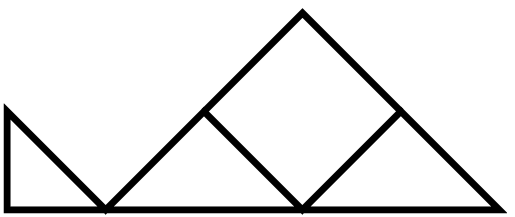} & 2+5\tau'^2+3\tau'^4+\tau'^6 +a(4+6\tau'^2+2\tau'^4) +a^2(2+\tau'^2)\\
\includegraphics[width=30pt]{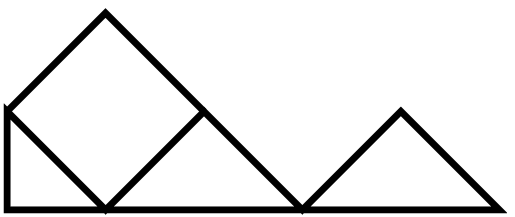} & 12+17\tau'^2+4\tau'^4+a(12+5\tau'^2+\tau'^4)\\
\includegraphics[width=30pt]{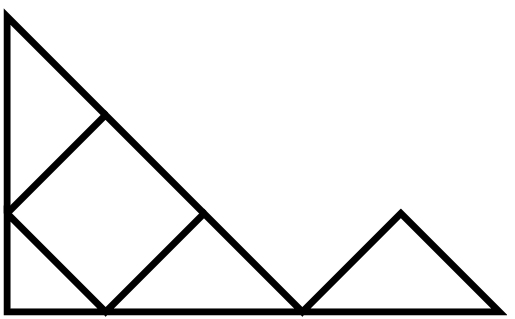} & a(1+\tau'^2)(5+3\tau'^2+\tau'^4)+a^2(5+3\tau'^2+\tau'^4)\\
\includegraphics[width=30pt]{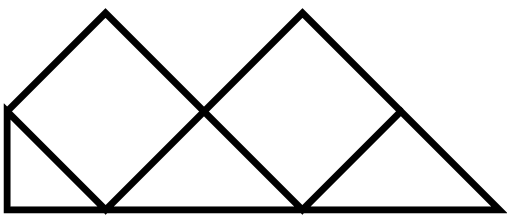} & 24+8\tau'^2+\tau'^4+a(8+3\tau'^2)\\
\includegraphics[width=30pt]{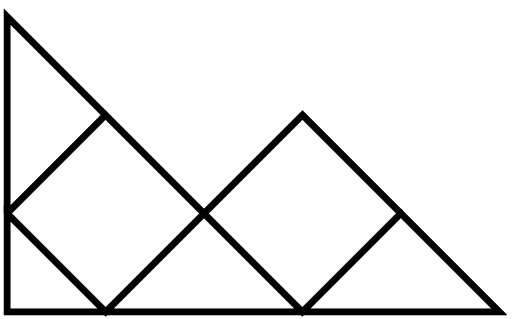} & a(12+7\tau'^2 +2\tau'^4)+2a^2(3+\tau'^2)\\
\includegraphics[width=30pt]{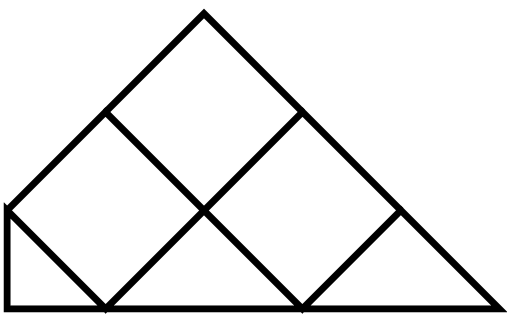} & 8+3\tau'^2\\
\includegraphics[width=30pt]{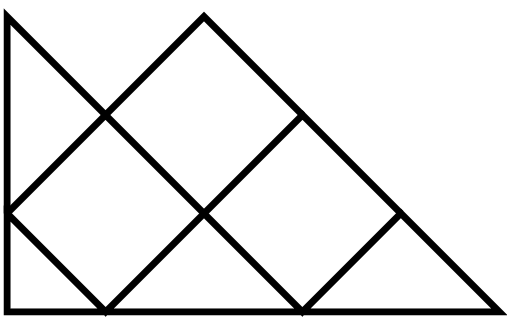} & 3a(3+\tau'^2)+3a^2\\
\includegraphics[width=30pt]{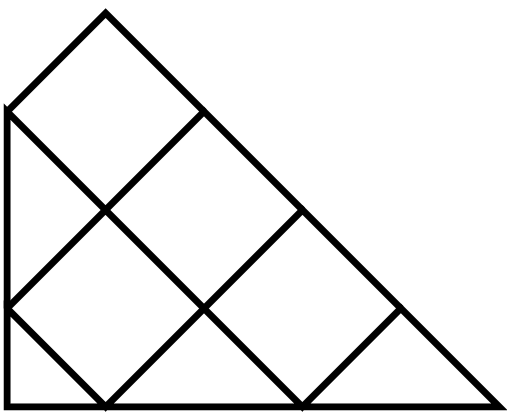} & 4a\\
\includegraphics[width=30pt]{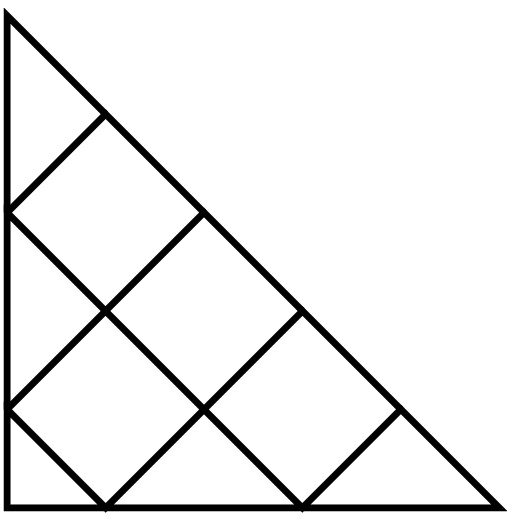} & a^2
\end{array}
\]
\vfill\newpage

\subsection{$N=6$}
\[
\begin{array}{c|l}
\alpha & \psi_\alpha \\ \hline
\includegraphics[width=36pt]{6_0101010.eps} & 149 +107\tau'^2+27\tau'^4+3\tau'^6 + a(126+131\tau'^2+45\tau'^4+9\tau'^6+\tau'^8) \\
& +a^2(32+36\tau'^2+9\tau'^4+\tau'^6) \\
\includegraphics[width=36pt]{6_0101210.eps} & 58+57\tau'^2+14\tau'^4+\tau'^6 +a(32+44\tau'^2+21\tau'^4+6\tau'^6+\tau'^8)\\
& + a^2(1+\tau'^2)(8+4\tau'^2+\tau'^4)\\
\includegraphics[width=36pt]{6_0121010.eps} & 52+50\tau'^2+21\tau'^4+6\tau'^6+\tau'^8 +a(32+36\tau'^2+9\tau'^4+\tau'^6)\\
\includegraphics[width=36pt]{6_0121210.eps} & 2(20+24\tau'^2+7\tau'^4+\tau'^6) +a(1+\tau'^2)(8+4\tau'^2+\tau'^4)\\
\includegraphics[width=36pt]{6_0123210.eps} & (1+\tau'^2)(8+4\tau'^2+\tau'^4)\\
\includegraphics[width=36pt]{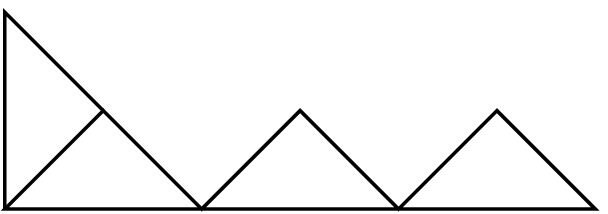} & a(81+101\tau'^2+70\tau'^4+26\tau'^6+7\tau'^8+\tau'^{10})+ a^2(94+127\tau'^2+72\tau'^4+17\tau'^6+2\tau'^8) \\
& +a^3(32+36\tau'^2+9\tau'^4+\tau'^6)\\
\includegraphics[width=36pt]{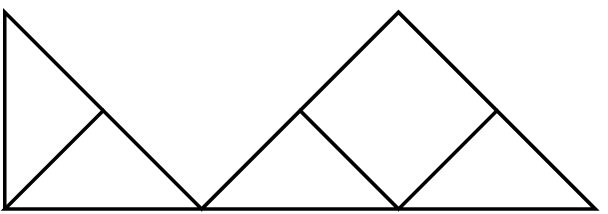} & a(26+37\tau'^2+25\tau'^4+11\tau'^6+4\tau'^8+\tau'^10) +a^2(12+20\tau'^2+14\tau'^4+5\tau'^6+\tau'^8) \\
& +a^3(1+\tau'^2)(8+4\tau'^2+\tau'^4)\\
\includegraphics[width=36pt]{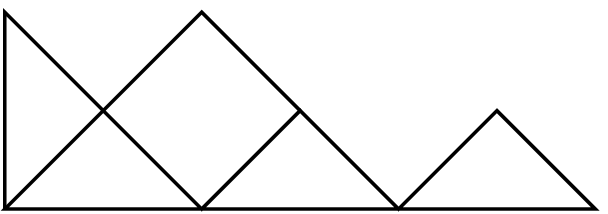} & a(72+104\tau'^2+81\tau'^4+25\tau'^6+4\tau'^8)+ a^2(84+107\tau'^2+55\tau'^4+9\tau'^6) \\
& +a^3(2+\tau'^2)(12+5\tau'^2)\\
\includegraphics[width=36pt]{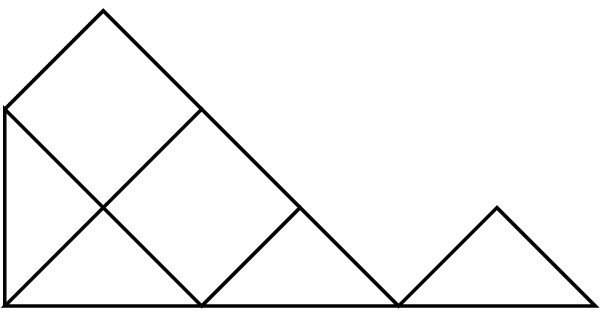} & a(39+58\tau'^2+49\tau'^4+10\tau'^6)+a^2(2+\tau'^2)(17+8\tau'^2+\tau'^4) \\
\includegraphics[width=36pt]{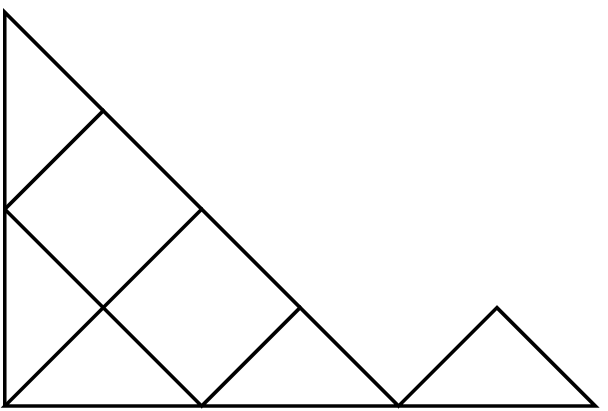} & a^2(11+19\tau'^2+19\tau'^4+7\tau'^6+\tau'^8)+a^3(2+\tau'^2)(5+3\tau'^2+\tau'^4)\\
\includegraphics[width=36pt]{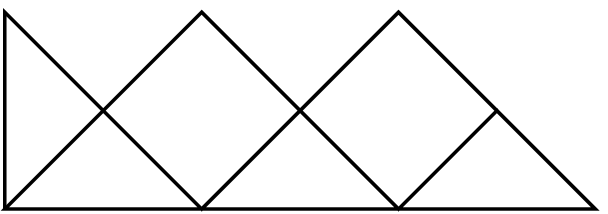} & a(104+128\tau'^2+44\tau'^4+9\tau'^6+\tau'^8) + a^2(104+92\tau'^2+22\tau'^4+2\tau'^6) \\
&+a^3(32+11\tau'^2+\tau'^4)\\
\includegraphics[width=36pt]{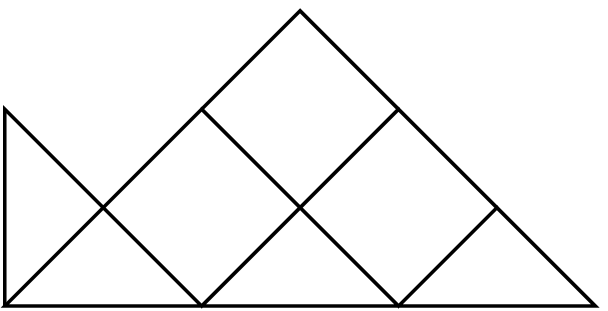} & 3a(1+\tau'^2)(8+4\tau'^2+\tau'^4)+a^2(3+2\tau'^2)(8+3\tau'^2) +a^3(14+3\tau)\\
\end{array}
\]

\vfill\newpage
\[
\begin{array}{c|l}
\alpha & \psi_\alpha \\ \hline
\includegraphics[width=36pt]{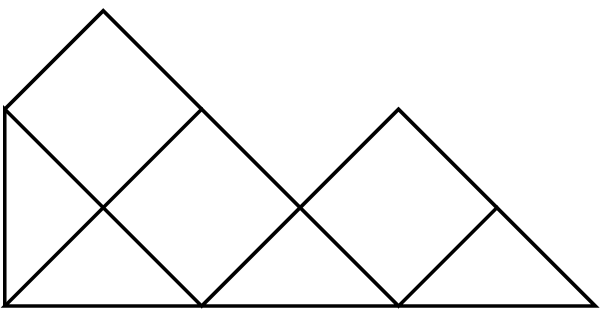} & a(84+78\tau'^2+19\tau'^4+\tau'^6)+a^2(50+20\tau'^2+3\tau'^4)\\
\includegraphics[width=36pt]{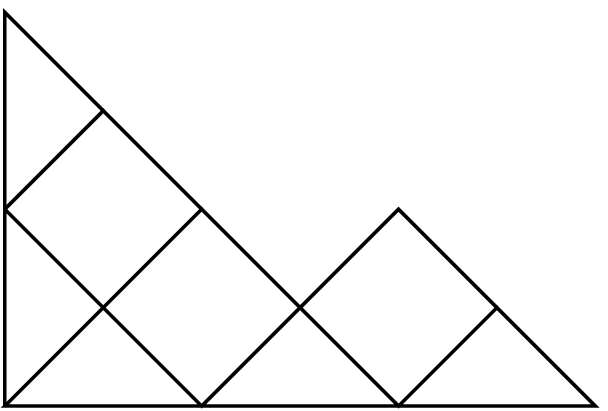} & 2a^2(2+\tau'^2)(7+4\tau'^2+\tau'^4)+2a^3(24+6\tau'^2+2\tau'^4)\\
\includegraphics[width=36pt]{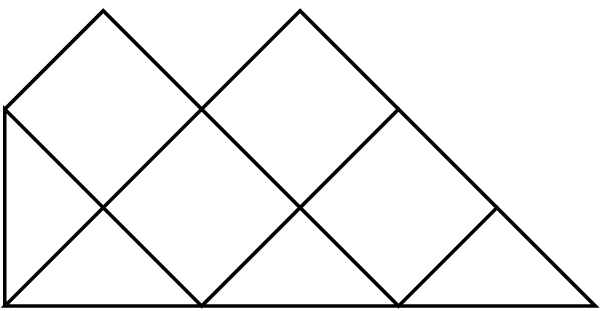} & a(75+26\tau'^2+3\tau'^4)+2a^2(10+3\tau'^2)\\
\includegraphics[width=36pt]{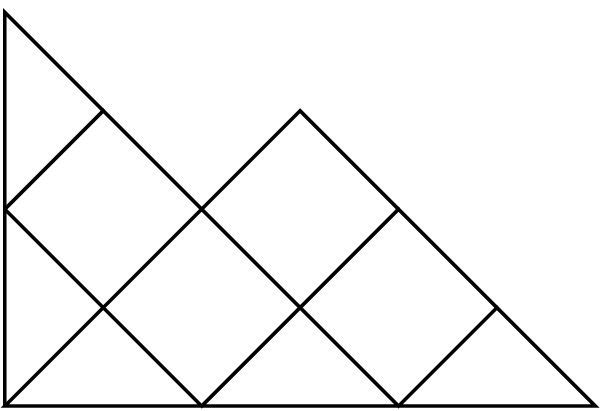} & a^2(31+15\tau'^2 +3\tau'^4)+3a^3(4+\tau'^2)\\
\includegraphics[width=36pt]{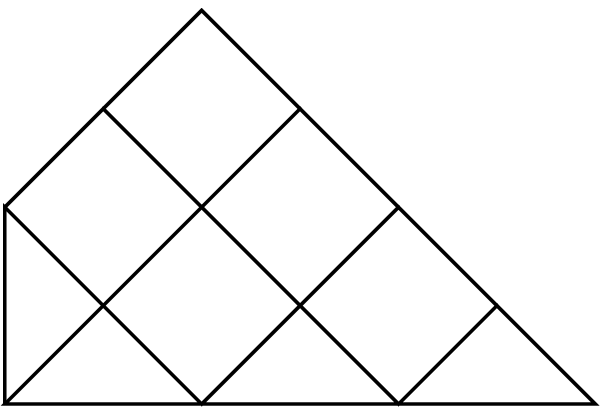} & 2a(10+3\tau'^2)\\
\includegraphics[width=36pt]{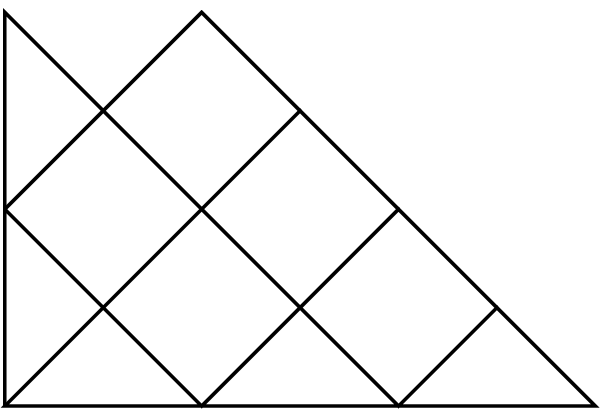} & 4a^2(4+\tau'^2)+4a^3\\
\includegraphics[width=36pt]{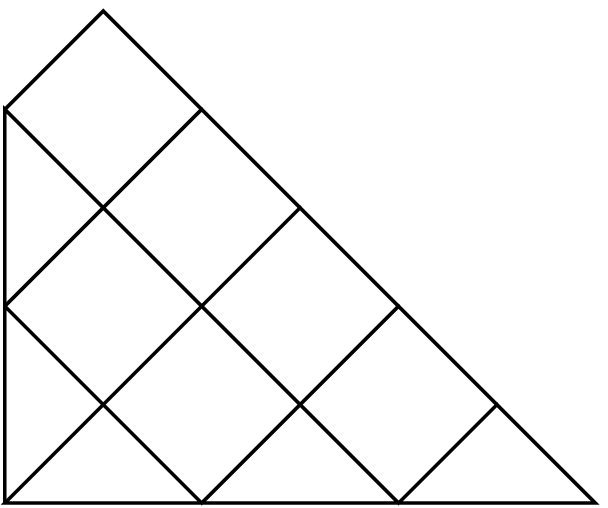} & 5a^2\\
\includegraphics[width=36pt]{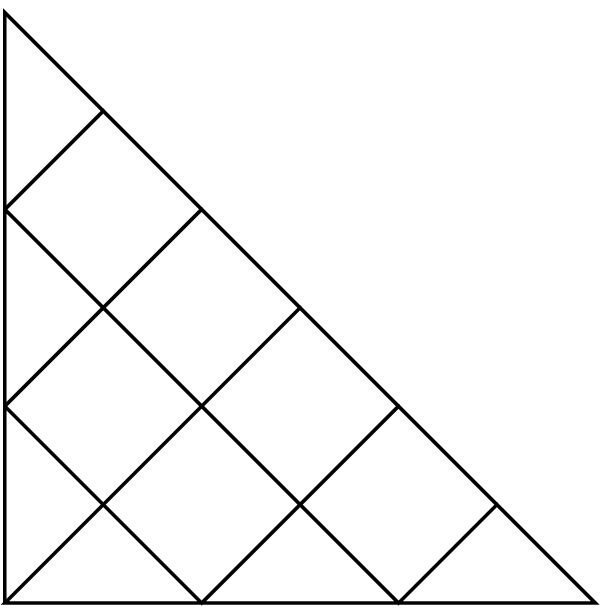} & a^3
\end{array}
\]
%

\bigskip

\end{document}